%% This is file `Thesis.tex', modified from `skeleton.tex',
%% generated with the docstrip utility.
%%
%% The original source files were:
%%
%% nuthesis.dtx  (with options: `skeleton')
%% 

%%
%% For common degrees, you can use the class options:
%% phd, edd, ms, ma
%% phd is the default
\documentclass[print]{nuthesis}

\usepackage{graphicx,wrapfig}
\usepackage{epstopdf}
\usepackage{caption}
\usepackage{listings}
\usepackage{epsfig}
\usepackage{enumerate}	
\usepackage{amsfonts}		
\usepackage{amssymb}
\usepackage{amsmath}
\usepackage{amsthm}
\usepackage{color}
\usepackage{hyperref}
\hypersetup{linktoc=all}
\usepackage[all]{xy}
\usepackage{color}
%\usepackage{tikz}
%\usetikzlibrary{matrix,arrows,decorations.pathmorphing}
%
\newtheorem{theorem}{Theorem}
\newtheorem{lemma}{Lemma}
\newtheorem{proposition}{Proposition}
\newtheorem{corollary}{Corollary}

\newenvironment{definition}[1][Definition.]{\begin{trivlist}
\item[\hskip \labelsep {\bfseries #1}]}{\end{trivlist}}

\newenvironment{example}[1][Example.]{\begin{trivlist}
\item[\hskip \labelsep {\bfseries #1}]}{\end{trivlist}}

\newenvironment{question}[1][Question:]{\begin{trivlist}
\item[\hskip \labelsep {\bfseries #1}]}{\end{trivlist}}

\newenvironment{remark}[1][Remark:]{\begin{trivlist}
\item[\hskip \labelsep {\bfseries #1}]}{\end{trivlist}}

\newcommand{\F}{\mathbb F}
\newcommand{\R}{\mathbb R}
\newcommand{\B}{\mathcal B}
\newcommand{\U}{\mathcal U}
\newcommand{\PP}{\mathcal P}
\newcommand{\E}{\mathcal E}

\newcommand{\C}{\mathcal C}
\newcommand{\D}{\mathcal D}
\newcommand{\p}{\textbf p}
\newcommand{\q}{\textbf q}
\newcommand{\od}{\stackrel{\text{def}}{=}}

\newcommand{\supp}{\operatorname{supp}}
\def\A{\mathcal{A}}

\begin{document}
%% Start formatting the first few special pages
%% frontmatter is needed to set the page numbering correctly
\frontmatter

\title{The Neural Ring: using algebraic geometry to analyze neural codes }
\author{Nora Esther Youngs}
\adviser{Professor Carina Curto}
\adviserAbstract{Carina Curto}
\major{Mathematics}
\degreemonth{August}
\degreeyear{2014}
%%
%% For most people the defaults will be correct, so they are commented
%% out. To manually set these, just uncomment and make the needed
%% changes.
%% \college{Your college}
%% \city{Your City}
%%
%% For most people the following can be changed with a class
%% option. To manually set these, just uncomment the following and
%% make the needed changes.
%% \doctype{Thesis or Dissertation}
%% \degree{Your degree}
%% \degreeabbreviation{Your degree abbr.}
%%
%% Now that we know everything we need, we can generate the title page
%% itself.
%%
\maketitle
%%
%% You have a maximum of 350 words for your abstract, which includes
%% your title, name, etc.
%%
%% Required
\begin{abstract}
Neurons in the brain represent external stimuli via neural codes.   These codes often arise from stimulus-response maps, associating
to each neuron a convex receptive field.  An important problem confronted by the brain is to infer properties of a represented stimulus space
without knowledge of the receptive fields, using only the intrinsic structure of the neural code.  How does the brain do this?  To address this question,
it is important to determine what stimulus space features can - in principle - be extracted from neural codes.  This motivates us to define the neural ring
and a related neural ideal, algebraic objects that encode the full combinatorial data of a neural code.  We find that these objects can be expressed in a "canonical form'' that directly translates to a minimal description of the receptive field structure intrinsic to the neural code. We consider the algebraic properties of homomorphisms between neural rings, which naturally relate to maps between neural codes.  We show that maps between two neural codes are in bijection with ring homomorphisms between the respective neural rings, and define the notion of neural ring homomorphism, a special restricted class of ring homomorphisms which preserve neuron structure.  We also find connections to Stanley-Reisner rings, and use ideas similar to those in the theory of monomial ideals to obtain an algorithm for computing the canonical form associated to any neural code, providing the groundwork for inferring stimulus space features from neural activity alone.
\end{abstract}

%% Optional
%% \begin{copyrightpage}
%% \end{copyrightpage}

%% Optional
% \begin{dedication}
% \begin{center}
%To Mama and Papa (the Paradox!),\\
%
%and Thea and Merike,\\
%
% for supporting me though 23 long years of education.\\
% 
% I love you!
% 
% 
%\vspace{2in}
%
% And to Lauren and Becky.\\
% 
%  Les Haricots Toujours!
% \end{center}
% \end{dedication}

%% Optional
%% \begin{acknowledgments}
%% \end{acknowledgments}

%% Optional
%% \begin{grantinfo}
%% \end{grantinfo}
%% The ToC is required
%% Uncomment these if need be

%% The ToC is required
\tableofcontents
%% Uncomment these if need be
%\listoffigures
%\listoftables
%%
%% ``Real'' beginning of the document.
%% mainmatter is needed to set the page numbering correctly
%%   mainmatter is needed after the ToC, (LoF, and LoT LoTR) to set the
%%   page numbering correctly for the main body
\mainmatter

%% Thesis goes here

\chapter{Introduction}

Building accurate representations of the world is one of the basic functions of the brain.  It is well-known that when a stimulus is paired with pleasure
or pain, an animal quickly learns the association. Animals also learn, however, the (neutral) relationships between stimuli of the same type.
For example, a bar held at a 45-degree angle appears more similar to one held at 50 degrees than to a perfectly vertical one.  
Upon hearing a triple of distinct pure tones, one seems to fall ``in between'' the other two.  An explored environment is perceived not as a collection of
disjoint physical locations, but as a spatial map.  In summary, we do not experience the world as a stream of unrelated stimuli; rather, our brains
organize different types of stimuli into highly structured {\em stimulus spaces}.  

The relationship between neural activity and stimulus space \textit{structure} has, nonetheless, received remarkably little attention.  
In the field of neural coding, 
much has been learned about the coding properties of individual neurons
by investigating stimulus-response functions, such as place fields \cite{OKeefeDostrovsky,PathIntegration}, orientation tuning curves \cite{WatkinsBerkley74, Ben-Yishai1995}, and other examples of ``receptive fields'' obtained by measuring neural activity in response to experimentally-controlled stimuli.  Moreover, numerous studies have shown that neural 
activity, together with knowledge of the appropriate stimulus-response functions, can be used to accurately estimate a newly presented
stimulus \cite{Brown98,Deneve99,Ma}.  This paradigm is being actively extended and revised to include information
present in populations of neurons, spurring debates on the role of correlations in neural coding 
\cite{NirenbergLatham03,Averbeck2006,SchneidmanBialek2006}.
In each case, however, the underlying structure of the stimulus space is
assumed to be known, and is not treated as itself emerging from the activity of neurons.  This approach is particularly problematic when
one considers that the brain does {\em not} have access to stimulus-response functions, and
must represent the world {\em without} the aid of dictionaries that lend meaning to neural activity \cite{gap}.  In coding theory parlance, the brain 
does not have access to the encoding map, and must therefore represent stimulus spaces via the intrinsic structure of the neural code.

How does the brain do this?  In order to eventually answer this question, we must first tackle a simpler one:  
\medskip

\noindent\textbf{Question:} \textit{What can be 
inferred about the underlying stimulus space from neural activity alone?} I.e., what stimulus space features are encoded in the {\em intrinsic structure} of the neural code, and can thus be extracted without knowing the individual stimulus-response functions?
\medskip

\noindent Recently we have shown that, in the case of hippocampal place cell codes, certain topological features of 
the animal's environment can be inferred from the neural code alone, without knowing the place fields \cite{gap}.  
As will be explained in the next section, this information can be extracted from a simplicial
complex associated to the neural code.  \textit{What other stimulus space features can be inferred from the neural code?}
For this, we turn to algebraic geometry.
Algebraic geometry provides a useful framework for inferring geometric and topological characteristics
of spaces by associating rings of functions to these spaces.  All relevant features of the underlying space are encoded in the intrinsic structure of the ring, where coordinate functions become indeterminates, and the space itself is defined in terms of ideals in the ring.  Inferring features of a space from properties of functions -- without specified domains -- is similar to the task confronted by the brain, so it is natural to expect that this framework may shed light on our question.

Here we introduce the \textit{neural ring}, an algebro-geometric object that can be associated to any combinatorial neural code.
Much like the simplicial complex of a code, the neural ring encodes information about the underlying stimulus space in a way that discards specific knowledge of receptive field maps, and thus gets closer to the essence of how the brain might represent stimulus spaces.  Unlike the simplicial complex, the neural ring retains the full combinatorial data of a neural code, packaging this data in a more computationally tractable manner.  We find that this object, together with a closely related
{\em neural ideal}, can be used to algorithmically extract a compact, minimal description of the \textit{receptive field structure} dictated by the code.  This enables us to more directly
tie combinatorial properties of neural codes to features of the underlying stimulus space, a critical step towards answering our motivating question.

Although the use of an algebraic construction such as the neural ring is quite novel in the context of neuroscience, the neural code (as we define it) is at its core a combinatorial
object, and there is a rich tradition of associating algebraic objects to combinatorial ones \cite{MillerSturmfels}.  The most well-known example is perhaps the Stanley-Reisner ring \cite{StanleyBook}, which turns out to be closely related to the neural ring.  Within mathematical biology, associating polynomial ideals to combinatorial data has also been fruitful.  Recent examples include inferring wiring diagrams in gene-regulatory networks \cite{Laubenbacher2007, Alan2012} and applications to chemical reaction networks \cite{ShiuSturmfels2010}.   Our work also has parallels to the study of design ideals in algebraic statistics \cite{alg-stats-book}. 

From a data analysis perspective, it is useful to consider the codes as related to one another, not merely as isolated objects. A single code can give rise to a host of relatives through natural operations such as adding codewords or dropping neurons. Understanding how these relationships translate to structural information will allow us to extract information from multiple codes simultaneously.  From an algebraic perspective, relationships between neural rings stem from ring homomorphisms. We characterize the set of homomorphisms between neural rings, relating each to a code map via the pullback.

The organization of this dissertation is as follows. In Chapter 2, we discuss in greater depth the type of neural codes which motivate this work, and explore some previous results which give partial answers to our open questions.  In Chapter 3, we introduce our main object of study, the neural ring, an algebraic object which stores information from neural codes. We investigate some of its properties, and in Chapter 4 we determine a preferred ``canonical" presentation that allow us to extract stimulus space features.  In Chapter 5, we give two variations on an algorithm for obtaining this  canonical form.  In Chapter 6, we consider the primary decomposition of the neural ideal and its interpretations.  Finally, in Chapter 7, we consider the maps which relate one neural ring to another, and the relationship between these maps and the functions  which relate one neural code to another. Much of Chapters 1-6 appears in our recent paper \cite{NeuralRing2013}; however, substantial changes have been made to the algorithm for obtaining the canonical form.  Finally, in the appendix, we show all possible examples on 3 neurons, to show the wide variety of possibilities on even a small set of neurons, and present Matlab code for some of our algorithms.

\chapter{Neural Codes} \label{chap:neural-codes}

In this chapter, we introduce the basic objects of study: neural codes, receptive field codes, and convex receptive field codes.  We then discuss various ways in which
the structure of a convex receptive field code can constrain the underlying stimulus space.  
These constraints emerge most obviously from the simplicial complex of a neural code, but (as will be made clear) there are also constraints that arise from aspects
of a neural code's structure that go well beyond what is captured by the simplicial complex of the code.  First, we give a few basic definitions.

\begin{definition}
Given a set of neurons labelled $\{1, \dots, n\} \od [n]$, we define a {\em neural
  code} $\C \subset \{0,1\}^n$ as a set of binary patterns of neural activity.  
An element of a neural code is called a {\em codeword}, $c = (c_1,\ldots,c_n) \in \C$, and corresponds to a subset of
neurons
$$\supp(c) \od \{i \in [n] \mid c_i = 1\} \subset [n].$$ 
Similarly, the entire code $\C$ can be identified with a set of subsets of neurons,
$$\supp \C \od \{\supp(c) \mid c \in \C\} \subset 2^{[n]},$$
where $2^{[n]}$ denotes the set of all subsets of $[n]$.
Because we discard the
details of the precise timing and/or rate of neural activity, what we
mean by {neural code} is often referred to in the neural coding
literature as a {\em combinatorial code} \cite{BialekBerry,Bialek2008}.
\end{definition}

For simplicity's sake, we will henceforth dispense with vector notation and reduce to the simpler binary notation; e.g., the codeword $(1,0,1)$ will be written $101$.

\begin{example} Consider the code $\C = \{000,100,010, 110, 001\}$.  Here, we have $\supp(\C) = \{\emptyset, \{1\}, \{2\}, \{1,2\}, \{3\} \}$.  As a neural code, we interpret this as a set of activity patterns for 3 neurons, where we have observed the following:
\begin{itemize}[-]
\item At some point, no neurons were firing $(\emptyset$).
\item Each neuron fired alone at some point ($\{1\}, \{2\}$, and $\{3\}$).
\item At some point, neurons 1 and 2 fired together, while 3 was silent $(\{1,2\}$).
\end{itemize} 
\end{example}

\begin{definition} A set of subsets $\Delta \subset 2^{[n]}$
is an (abstract) {\em simplicial complex} if $\sigma \in \Delta$ and $\tau \subset \sigma$ implies $\tau \in \Delta$.  We will say that a neural code $\C$ is a simplicial complex if $\supp \C$ is a simplicial complex.  In cases where the code is \textit{not} a simplicial complex, we can complete the code
to a simplicial complex by simply adding in missing subsets of codewords.  This allows us to define the {\em simplicial complex of the code} as
$$\Delta(\C) \od \{\sigma \subset [n] \mid \sigma \subseteq \supp(c) \text{ for some } c \in \C\}.$$
Alternatively, $\Delta(\C)$ can be defined as the smallest simplicial complex that contains $\supp \C$.
\end{definition}

\begin{example} Our code in the previous example, $\C =\{000,100, 010, 110, 001\}$ is a simplicial complex.     

However, the code $\D = \{000,100,010, 110, 011\}$ is not, because the set $\{2,3\}$ is in $\supp(\D)$, but its subset $\{3\}$ is not.  We can take the simplicial complex of the code $\D$ by adding in the necessary subsets, to obtain $\Delta(\D) = \{000,100,010,110,  011, 001\}$.
\end{example}

\section{Receptive field codes (RF codes)}
Neurons in many brain areas have activity patterns that can be characterized by receptive fields.\footnote{In the vision literature, 
the term ``receptive field'' is reserved for subsets of the visual field; we use the term in a more general sense, applicable to any modality.}
Abstractly, a \textit{receptive field} is a map $f_i:X \rightarrow \R_{\geq 0}$ from a space of stimuli, $X$, to the average firing rate of a single neuron, $i$, in response to each stimulus.   Receptive fields are computed by correlating neural responses to independently measured external stimuli.  We follow a common abuse of language, where both the map and its support (i.e., the subset $U_i \subset X$ where $f_i$ takes on
positive values) are referred to as ``receptive fields.''  \textit{Convex} receptive fields are convex subsets of the stimulus space, for $X \subset \R^d$. 

\begin{definition} A subset $B\subset \R^n$ is \textit{convex} if, given any pair of points $x,y\in B$, the point $z=tx+(1-t)y$ is contained in $B$ for any $t\in [0,1].$
\end{definition}  

The paradigmatic examples are orientation-selective neurons in visual cortex \cite{WatkinsBerkley74, Ben-Yishai1995} and 
hippocampal place cells \cite{OKeefeDostrovsky,PathIntegration}.  

Orientation-selective neurons have \textit{tuning curves} that reflect a neuron's preference for a particular angle . When an animal is presented with stimuli in the form of bars at a certain angle, these neurons have a marked preference for one particular angle.  The neuron fires at higher and higher rates as the angle of the bars approaches the preferred angle, producing a tuning curve (see Figure 1A).

 Place cells are neurons that have \textit{place fields}; i.e., each neuron has a preferred (convex) region of the animal's physical environment where it has a high firing rate (see Figure 1B).  When the animal occupies that particular region, the neuron fires markedly more frequently; when the animal is in any other area.  the neuron's firing rate is comparatively very low. 
 
  Both tuning curves and place fields are examples of receptive fields. In both cases, the receptive field for each  neuron is convex (an interval of angles about the preferred angle, or a place field) but not all receptive fields are convex.  Grid cells are another type of neuron with a receptive field, very like place cells, but their receptive field consists of a set of distinct regions which form a triangular grid, and thus in this case the receptive field is not convex, or even connected \cite{2005GridCells}. 

\begin{figure}[h]\label{fig1}
\begin{center}
\includegraphics[width=5in]{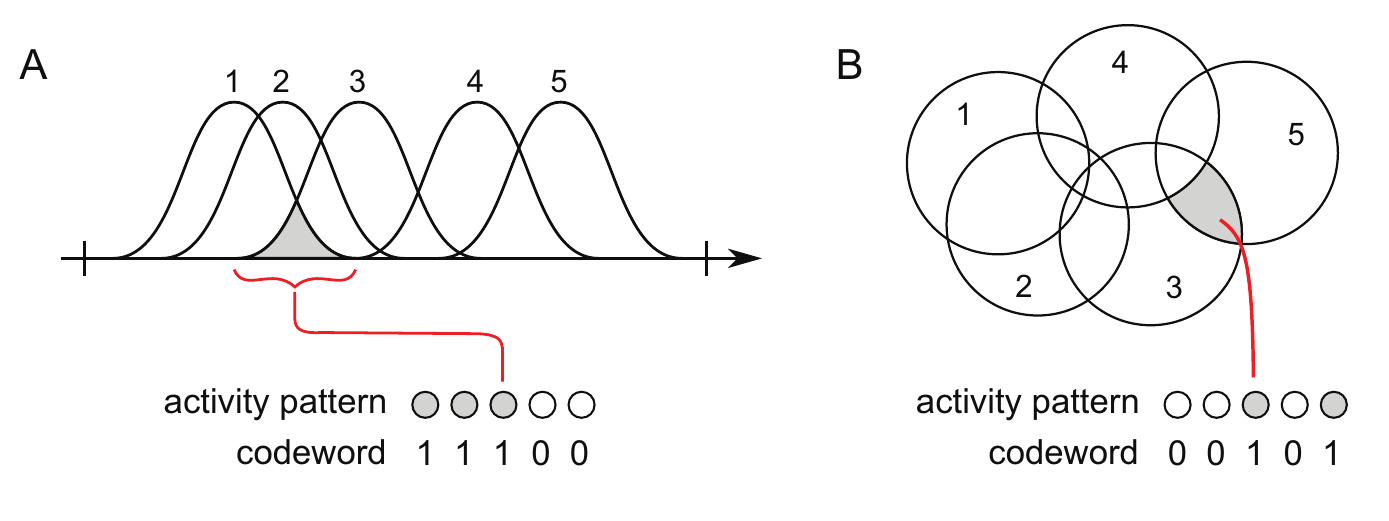}
\end{center}
\vspace{-.25in}
\caption{\small {Receptive field overlaps determine codewords in 1D and 2D RF codes.}  (A) Neurons in a 1D RF code have receptive fields that overlap on a line segment (or circle, in the case of orientation-tuning).  Each stimulus on the line
corresponds to a binary codeword.  Gaussians depict graded firing rates for neural responses; this additional information is discarded by the RF code.  (B) Neurons in a 2D RF code, such as a place field code, have receptive fields that partition a two-dimensional stimulus space into non-overlapping regions, as illustrated by the shaded area.  All stimuli within one of these regions will activate the same set of neurons, and hence have the same corresponding codeword. }
\end{figure}

A receptive field code (RF code) is a neural code that corresponds to the brain's representation of the 
stimulus space covered by the receptive fields.  When a stimulus lies in the intersection of several receptive
fields, the corresponding neurons may co-fire while the rest
remain silent.  The active subset $\sigma$ of neurons can be identified with a binary
codeword $c \in \{0,1\}^n$ via $\sigma = \supp(c)$.  Unless otherwise noted, a stimulus space $X$ need only be a topological space.
However, we usually have in mind $X \subset \R^d$, and this becomes important when we consider \textit{convex} RF codes.

\begin{definition}
Let $X$  be a stimulus space (e.g., $X \subset \R^d$), and let $\U = \{U_1,\ldots,U_n\}$ be a collection of open sets, 
with each $U_i \subset X$ the receptive field of the $i$-th neuron in a population of $n$ neurons.  
The {\em receptive field code (RF code)} $\C(\U) \subset \{0,1\}^n$ is the set of all binary codewords corresponding to stimuli in $X$: 
$$\C(\U) \od \{ c \in \{0,1\}^n \mid (\bigcap_{i \in \supp(c)} U_i) \setminus (\bigcup_{j \notin \supp(c)} U_j) \neq \emptyset \}.$$
If $X \subset \R^d$ and each of the $U_i$s is also a \textit{ convex} subset of $X$, then we say that $\C(\U)$ is a {\em convex} RF code.
\end{definition}

Our convention is that $\bigcap_{i \in \emptyset} U_i = X$ and $\bigcup_{i \in \emptyset} U_i = \emptyset$.
This means that if $\bigcup_{i=1}^n U_i \subsetneq X$, then $\C(\U)$ includes the all-zeros codeword corresponding to an ``outside'' point not covered by the receptive fields; 
on the other hand, if
$\bigcap_{i=1}^n U_i \neq \emptyset$, then $\C(\U)$ includes the all-ones codeword. 
Figure 1 shows examples of convex receptive fields covering one- and two-dimensional stimulus spaces, 
and examples of codewords corresponding to regions defined by the receptive fields.

Returning to our discussion in the Introduction, we have the following question:
If we can assume $\C = \C(\U)$ is a RF code, then \textit{ what can be learned about the underlying stimulus space $X$ from knowledge only of $\C$, and not of $\U$?} 
The answer to this question will depend critically on whether or not we can assume that the RF code is convex.  In particular, if we don't make any assumptions about the receptive fields beyond openness, then any code can be realized as a RF code in any dimension.  Thus, without some kind of assumption like convexity, the answer to the above question is ``nothing useful."

\begin{lemma}\label{lemma:RFform}
Let $\C \subset \{0,1\}^n$ be a neural code.  Then, for any $d \geq 1$, there exists a stimulus space $X \subset \R^d$ and a collection of open sets $\U = \{U_1,\ldots,U_n\}$ (not necessarily convex), with $U_i \subset X$ for each $i \in [n]$, such that $\C = \C(\U)$.
\end{lemma}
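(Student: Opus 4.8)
The plan is to construct, for an arbitrary code $\C$ and arbitrary target dimension $d$, a very explicit stimulus space $X\subset\R^d$ together with open sets $U_1,\dots,U_n$ realizing $\C$. The simplest idea: work with a finite set of "sample points", one for each codeword, and then fatten everything slightly to get open sets. Concretely, enumerate $\C = \{c^{(1)}, \dots, c^{(m)}\}$ and pick $m$ distinct points $p_1,\dots,p_m \in \R^d$ (e.g. on a line inside $\R^d$, so this works for every $d\geq 1$). For each codeword $c^{(k)}$, surround $p_k$ by a small open ball $B_k = B(p_k,\varepsilon)$ with $\varepsilon$ chosen small enough that the closed balls are pairwise disjoint. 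Then set
\[
X \od \bigcup_{k=1}^m B_k, \qquad U_i \od \bigcup_{k \,:\, c^{(k)}_i = 1} B_k .
\]
Each $U_i$ is a union of open balls, hence open in $\R^d$ (and in $X$), and $X$ is open.

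The heart of the argument is then to verify $\C(\U) = \C$, and this is where a little care is needed, though it is not deep. First, for the inclusion $\C \subseteq \C(\U)$: given $c^{(k)}\in\C$, the point $p_k$ lies in $U_i$ exactly when $c^{(k)}_i = 1$, by construction and by disjointness of the $B_k$'s (the only ball containing $p_k$ is $B_k$). Hence $p_k \in \bigcap_{i\in\supp(c^{(k)})}U_i$ and $p_k\notin\bigcup_{j\notin\supp(c^{(k)})}U_j$, so the defining set for $c^{(k)}$ is nonempty and $c^{(k)}\in\C(\U)$. Conversely, for $\C(\U)\subseteq\C$: take any point $x\in X$; then $x\in B_k$ for a unique $k$, and for this $k$ one checks $x\in U_i \iff c^{(k)}_i=1$ (again using disjointness of the balls, so $B_k$ is the only ball meeting a neighborhood of $x$ that matters — more precisely $x$ lies in $U_i$ iff $x$ lies in some $B_\ell$ with $c^{(\ell)}_i=1$, and the only such $\ell$ is $k$). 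Therefore the codeword of $x$ is exactly $c^{(k)}\in\C$, so $\C(\U)\subseteq\C$.

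One boundary case deserves a remark: the all-zeros codeword. If $0\notin\C$ we must make sure $0\notin\C(\U)$; with the construction above this is automatic, since $\bigcup_i U_i = X$ (every ball $B_k$ is contained in some $U_i$ as long as $\supp(c^{(k)})\neq\emptyset$), so there is no "outside" point — unless $0\in\C$, in which case $p_k$ for the zero codeword sits in a ball $B_k$ that is contained in $X$ but in none of the $U_i$, exactly as desired. So the convention $\bigcap_{i\in\emptyset}U_i = X$ makes the zero codeword land in $\C(\U)$ precisely when we included a sample point for it, and not otherwise.

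The only "obstacle" is really just bookkeeping: choosing $\varepsilon$ small enough that the $m$ closed balls are pairwise disjoint (possible since the $p_k$ are finitely many distinct points), and being careful with the empty-intersection/empty-union conventions so the edge cases ($0\in\C$, $[n]\in\C$, or $\C=\emptyset$ — note if $\C=\emptyset$ we can take $X=\emptyset$) come out right. No convexity is claimed or used, which is the whole point of the lemma: it shows that without a geometric hypothesis like convexity, the RF-code formalism imposes no constraint whatsoever, motivating the restriction to convex RF codes in the sequel.
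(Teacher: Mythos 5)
Your construction (one sample point per codeword, pairwise disjoint open balls, $U_i$ the union of balls for codewords with $c_i=1$, $X$ the union of all balls) is exactly the paper's proof, which uses disjoint open neighborhoods $N_c$ of distinct points $x_c$ and defines $U_i$ and $X$ the same way; your verification of $\C(\U)=\C$ and of the all-zeros edge case is correct and just spells out what the paper leaves as "by construction."
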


\begin{proof}
Let $\C\subset\{0,1\}^n$ be any neural code, and order the elements of $\C$ as $\{c^1,\ldots,c^m\}$, where $m = |\C|$. For each $c\in \C$, choose a distinct point $x_c\in \R^d$ and an open neighborhood $N_c$ of $x_c$ such that no two neighborhoods intersect.
Define $U_j \od \bigcup_{j\in \supp(c^k)} N_{c^k}$, let $\U=\{U_1,\ldots,U_n\}$, and $X=\bigcup_{i=1}^m N_{c^i}$.  Observe that if the all-zeros codeword is in $\C$, then
$N_{\textbf 0} = X \setminus \bigcup_{i=1}^{n} U_i$ corresponds to the ``outside point'' not covered by any of the $U_i$s.  By construction, $\C = \C(\U).$
\end{proof}

Although any neural code $\C \subseteq \{0,1\}^n$ can be realized as a RF code, it is \textit{not} true that any code can be realized as a \textit{convex} RF code.  Counterexamples
can be found in codes having as few as three neurons.

\begin{lemma}\label{lemma:convex-counterexample}
The neural code $\C = \{0,1\}^3 \setminus \{111, 001\}$ on three neurons cannot be realized as a convex RF code.
\end{lemma}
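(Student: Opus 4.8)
The plan is to translate the hypothesis $\C=\C(\U)$ into purely set-theoretic constraints on three open convex sets $U_1,U_2,U_3\subset\R^d$ and then contradict the connectedness of $U_3$. Recall the basic dictionary: a codeword with support $\sigma$ lies in $\C(\U)$ exactly when $\bigl(\bigcap_{i\in\sigma}U_i\bigr)\setminus\bigl(\bigcup_{j\notin\sigma}U_j\bigr)\neq\emptyset$. Only four of the membership/non-membership conditions imposed by this particular $\C$ will be needed.

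First I would record the two non-membership facts. Since $001\notin\C$ (the support $\{3\}$ is absent), we get $U_3\setminus(U_1\cup U_2)=\emptyset$, i.e. $U_3\subseteq U_1\cup U_2$. Since $111\notin\C$ (the support $\{1,2,3\}$ is absent), we get $U_1\cap U_2\cap U_3=\emptyset$. Combining these, the two open sets $A\od U_1\cap U_3$ and $B\od U_2\cap U_3$ satisfy $A\cup B=U_3$ (first fact) and $A\cap B=U_1\cap U_2\cap U_3=\emptyset$ (second fact). Thus $\{A,B\}$ is a partition of $U_3$ into two disjoint open subsets.

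Next I would use the two membership facts to ensure neither piece is empty. Because $101\in\C$ (support $\{1,3\}$), we have $U_1\cap U_3\setminus U_2\neq\emptyset$, so $A\neq\emptyset$; similarly $011\in\C$ (support $\{2,3\}$) gives $B\neq\emptyset$. Hence $U_3=A\sqcup B$ with $A,B$ nonempty, open, and disjoint, so $U_3$ is disconnected. But $U_3$ is convex, and every convex subset of $\R^d$ is connected (indeed path-connected via straight-line segments). This contradiction shows no such collection $\U$ of open convex sets exists, so $\C$ is not a convex RF code.

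The argument is short once the right consequence is spotted; the only real step is noticing that the simultaneous absence of the singleton codeword $001$ and the top codeword $111$ pins $U_3$ between being covered by $U_1\cup U_2$ and avoiding $U_1\cap U_2$ — which is precisely a disconnection of $U_3$ that its convexity forbids. I would also remark that the codewords $000,100,010,110$ are never used, so in fact \emph{any} code on three (or more) neurons whose supports include $\{1,3\}$ and $\{2,3\}$ but neither $\{3\}$ nor $\{1,2,3\}$ fails to be convex. There is no substantive obstacle beyond identifying this four-codeword obstruction; convexity enters only through connectedness.
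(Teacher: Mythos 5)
Your proof is correct and follows essentially the same route as the paper: both derive from $001,111\notin\C$ that $U_3$ is covered by $U_1\cup U_2$ with $U_1\cap U_2\cap U_3=\emptyset$, use $101,011\in\C$ for nonemptiness, and reach a contradiction with convexity via connectedness. The only cosmetic difference is that you disconnect $U_3$ itself (convex $\Rightarrow$ connected), whereas the paper disconnects a line segment joining a point of $(U_1\cap U_3)\setminus U_2$ to a point of $(U_2\cap U_3)\setminus U_1$.
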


\begin{wrapfigure}{r}{.33\linewidth}
\vspace{-.3in}
 \includegraphics[width=2in]{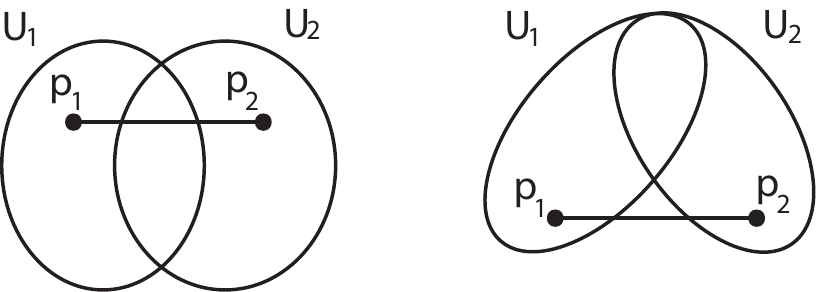} 
 \caption{\small Two cases in the proof of Lemma~\ref{lemma:convex-counterexample}. }
 \vspace{0in}
\end{wrapfigure}

\text{             }

\vspace{-.3in}
\begin{proof}

Assume the converse, and let $\U = \{U_1,U_2, U_3\}$ be a set of convex open sets in  $\R^d$ such that $\C = \C(\U)$.  The code
necessitates that $U_1 \cap U_2 \neq \emptyset$ (since $110 \in \C$), $(U_1 \cap U_3) \setminus U_2 \neq \emptyset$ (since $101 \in \C$), and 
$(U_2 \cap U_3) \setminus U_1 \neq \emptyset$ (since $011 \in \C$).  Let $p_1 \in (U_1 \cap U_3) \setminus U_2$ and $p_2 \in (U_2 \cap U_3) \setminus U_1.$
Since $p_1,p_2 \in U_3$ and $U_3$ is convex, the line segment $\ell = (1-t)p_1 + t p_2$ for $t \in [0,1]$ must also be contained in $U_3$.

Every point in $\ell$ is in $U_3$.  However, as there are no points in $U_1\cap U_2\cap U_3$ or in $U_3\backslash ( U_1\cup U_2$), then all points on $\ell$ are in either $U_1$ or $U_2$ but no point may be in both. Thus  $U_1\cap \ell$, $U_2\cap \ell$ are disjoint nonempty open sets which cover $\ell$, thus they disconnect $\ell$.  But as $\ell$ is a line segment, it should be connected in the subspace topology.  This is a contradiction, so no such realization can exist.

Figure 2 illustrates the impossibility of such a realization. There are really only two possibilities.  Case 1: $\ell$ passes through $U_1 \cap U_2$ (see Figure 2, left).  This implies $U_1 \cap U_2 \cap U_3 \neq \emptyset$, and hence $111 \in \C$, a contradiction.  Case 2: $\ell$ does not intersect $U_1 \cap U_2$.  Since $U_1, U_2$ are open sets, this implies $\ell$ passes outside of $U_1 \cup U_2$ (see Figure 2, right), and hence $001 \in \C$, a contradiction.  
\end{proof}

\section{Stimulus space constraints arising from convex RF codes}

%Somewhere here we should put in information about interval graphs, Euler diagrams (1 and 2 dimensional thoughts)

It is clear from Lemma~\ref{lemma:RFform} that there is essentially no constraint on the stimulus space for realizing a code as a RF code.  However, if we demand that $\C$
is a \textit{convex} RF code, then the overlap structure of the $U_i$s sharply constrains the geometric and topological properties of the underlying stimulus space $X$.   
To see how this works, we first consider the simplicial complex of a neural code, $\Delta(\C)$.
Classical results in convex geometry and topology provide constraints on the underlying stimulus space $X$ for convex RF codes, based on the structure of $\Delta(\C)$.  
We will discuss these next.  We then turn to the question of constraints that arise from combinatorial properties of a neural code $\C$ that are \textit{not} captured
by $\Delta(\C)$.

\subsection{Helly's theorem and the Nerve theorem}\label{sec:helly-nerve}

Here we briefly review two classical  and well-known theorems in convex geometry and topology, Helly's theorem and the Nerve theorem, as they apply to convex RF codes.
Both theorems can be used to relate the structure of the simplicial complex of a code, $\Delta(\C)$, to topological features of the underlying stimulus space $X$.  

Suppose $\U = \{U_1,\ldots,U_n\}$ is a finite collection of convex open subsets of $\R^d$, with dimension $d<n$.  
We can associate to $\U$ a simplicial complex $N(\U)$ called the \textit{ nerve} of $\U$.  A subset $\{i_1,..,i_k\}\subset [n]$ belongs to $N(\U)$ if and only if  the appropriate intersection $\bigcap_{\ell=1}^kU_{i_\ell} $  is  nonempty. 
If we think of the $U_i$s as receptive fields, then $N(\U) = \Delta(\C(\U))$.  In other words, the nerve of the cover corresponds to the simplicial complex of the associated (convex) RF code.
\medskip

\noindent \textbf{ Helly's theorem.}
{\em Consider $k$ convex subsets, $U_1,\ldots,U_k \subset \R^d,$ for $d<k$.
If the intersection of every $d+1$ of these sets is nonempty, then the full intersection $\bigcap_{i=1}^k U_i$ is also nonempty.}
\medskip

\noindent A nice exposition of this theorem and its consequences can be found in \cite{helly-review}.
One straightforward consequence is that the nerve $N(\U)$ is completely determined by its $d$-skeleton, and corresponds to the largest simplicial complex with that $d$-skeleton.  For example, if $d = 1$, then $N(\U)$ is a clique complex (fully determined by its underlying graph).  Since $N(\U) = \Delta(\C(\U))$, Helly's theorem imposes constraints on the minimal dimension of the stimulus space $X$ when $\C = \C(\U)$ is assumed to be a convex RF code.     For example, if we have some collection of codewords, and there are three neurons (or more) where each pair of neurons is seen to fire together but there is no word where all fire together, then the minimal dimension of the stimulus space where this code could be realized as a convex receptive field code is 2.  

%More uses of 
\medskip

\noindent \textbf{Nerve theorem.} 
{\em The homotopy type of $X(\U) \od \bigcup_{i=1}^n U_i$ is equal to the homotopy type of the nerve of the cover, $N(\U)$.  In particular, $X(\U)$ and $N(\U)$ have exactly the same homology groups.}
\medskip

\noindent The Nerve theorem  is an easy consequence of \cite[Corollary 4G.3]{Hatcher}.  This is a powerful theorem relating the simplicial complex of a RF code, $\Delta(\C(\U)) = N(\U)$, to topological features of the underlying space, such as homology groups and other homotopy invariants.  
In \cite{gap}, this theorem is used in the context of two-dimensional RF codes (specifically, place field codes for place cells in rat hippocampus) to show that topological features of the animal's environment could be inferred from the observed neural code, without knowing the place fields.
Note, however, that the similarities between $X(\U)$ and $N(\U)$ only go so far.  In particular, $X(\U)$ and $N(\U)$ typically have very different {dimension}.  It is also important to keep in mind that the Nerve theorem concerns the topology of  $X(\U) = \bigcup_{i=1}^n U_i$.  In our setup, if the stimulus space $X$ is larger, so that  $\bigcup_{i=1}^n U_i \subsetneq X$, then the Nerve theorem tells us only about the homotopy type of $X(\U)$, not of $X$.  Since the $U_i$ are open sets, however, conclusions about the dimension of $X$ can still be inferred.

In addition to Helly's theorem and the Nerve theorem, there is a great deal known about $\Delta(\C(\U))=N(\U)$ for collections of convex sets in $\R^d$.  In particular, the $f$-vectors of such simplicial complexes have been completely characterized by G. Kalai in \cite{kalai1,kalai2}.

\subsection{Beyond the simplicial complex of the neural code}\label{sec:beyond}

We have just seen how the simplicial complex of a neural code, $\Delta(\C)$, yields constraints on the stimulus space $X$ if we assume $\C$ can be realized as a convex RF code.  Consider the example described in Lemma~\ref{lemma:convex-counterexample}.   Nothing from Helly's theorem expressly said that $\C$ could not be realized in $\R^2$; indeed, $\Delta(\C)$ can be realized easily.  Yet we have proven it is impossible to realize the code $\C$ in any dimension at all.  This implies that other kinds of constraints on $X$ may emerge from the combinatorial structure of a neural code, even if there is no obstruction stemming from $\Delta(\C)$. 

\begin{wrapfigure}{r}{.6\linewidth}
\centering
%\vspace{-.05in}
   \includegraphics[width=2.75in]{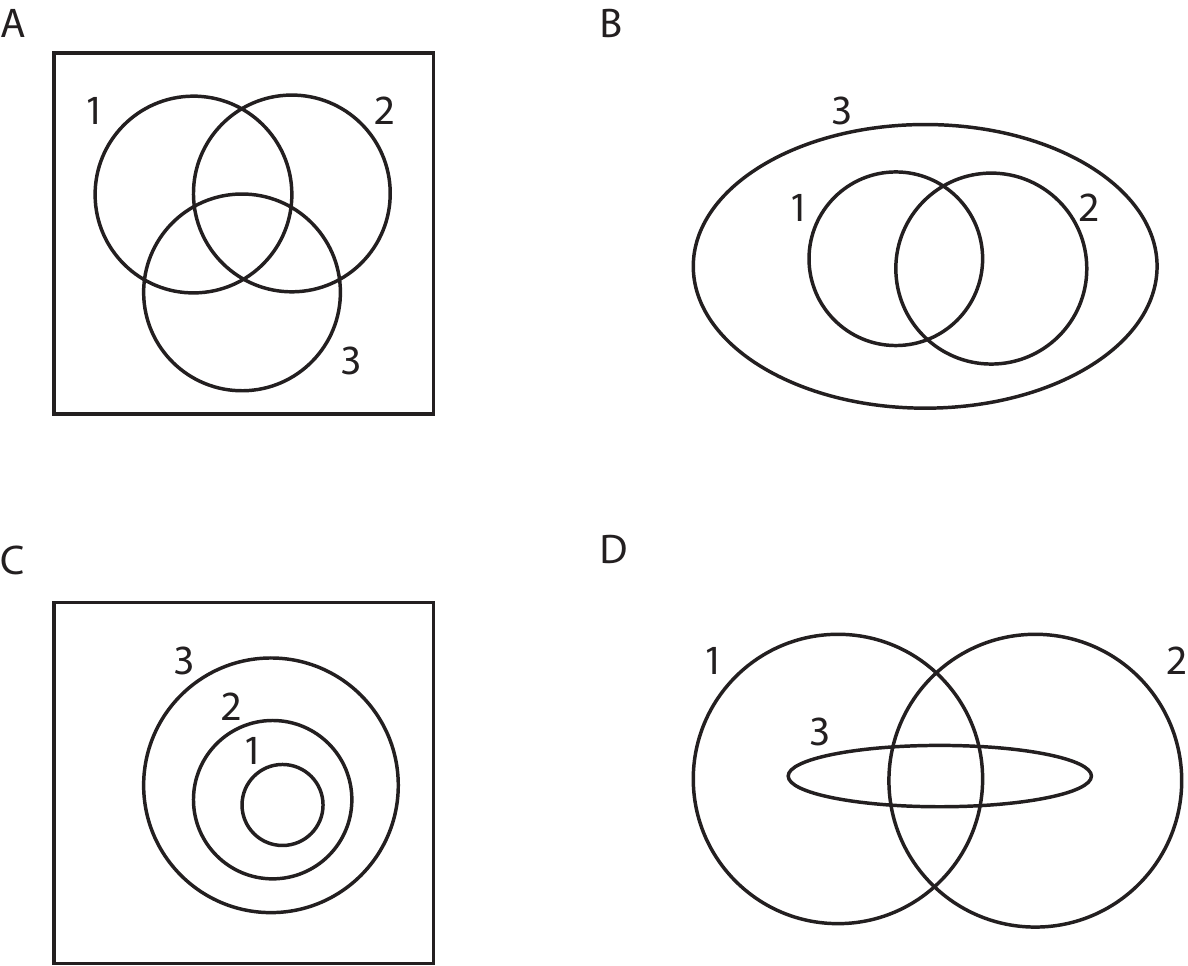} 
   \caption{\small Four arrangements of three convex receptive fields, $\U = \{U_1, U_2, U_3\}$, each having $\Delta(\C(\U)) = 2^{[3]}$.  
   Square boxes denote the stimulus space $X$ in cases where
   $U_1\cup U_2 \cup U_3 \subsetneq X$.
   (A) $\C(\U) = 2^{[3]}$, including the all-zeros codeword $000$.  (B) $\C(\U) = \{111, 101, 011, 001\}$, with $X = U_3$.
   (C) $\C(\U) = \{111, 011, 001, 000\}$.  (D) $\C(\U) = \{111, 101, 011, 110, 100, 010\},$ and $X = U_1 \cup U_2$.
The minimal embedding dimension for the codes in panels A and D is $d=2$, while for panels B and C it is $d=1$.}
% \vspace{-.2in}
 %  \label{fig:example}
\end{wrapfigure}

%Show how we can embedd A or D in dimension 1?

In Figure 3 we show four possible arrangements of three convex receptive fields in the plane.  Each convex RF code has the same corresponding simplicial complex $\Delta(\C) = 2^{[3]}$, since $111 \in \C$ for each code.  Nevertheless, the arrangements clearly have different combinatorial properties.  In Figure 3C, for instance, we have $U_1 \subset U_2 \subset U_3$, while Figure 3A has no special containment relationships among the receptive fields.  This ``receptive field structure'' (RF structure) of the code has impliciations for the underlying stimulus space.

Let $d$ be the minimal integer for which the code can be realized as a convex RF code in $\R^d$; we will refer to this as the {\em minimal embedding dimension} of $\C$.  Note that the codes in Figure 3A,D have $d=2$, whereas the codes in Figure 3B,C have $d=1$.  The simplicial complex, $\Delta(\C)$, is thus not sufficient
to determine the minimal embedding dimension of a convex RF code, but this information \textit{ is} somehow present in the RF structure of the code.   Similarly, in 
Lemma~\ref{lemma:convex-counterexample} we saw that $\Delta(\C)$ does not provide
sufficient information to determine whether or not $\C$ can be realized as a convex RF code; after working out
the RF structure, however, it was easy to see that the given code was not realizable.

\subsection{The receptive field structure (RF structure) of a neural code}

As we have just seen, the intrinsic structure of a neural code contains information about
the underlying stimulus space that cannot be inferred from the simplicial complex of the code alone. 
This information is, however, present in what we have loosely referred to as the ``RF structure'' of the code.
We now explain more carefully what we mean by this term.

Given a set of receptive fields $\U = \{U_1,\ldots,U_n\}$ in a stimulus space $X$, there are certain containment relations between intersections and unions of the $U_i$s that
are ``obvious,'' and carry no information about the particular arrangement in question.  These relationships are merely a result of unavoidable set relationships.  For example, $U_1 \cap U_2 \subseteq U_2 \cup U_3 \cup U_4$ is always guaranteed to be true, because it follows from $U_2 \subseteq U_2.$  On the other hand, a relationship such as $U_3 \subseteq U_1 \cup U_2$ (as in Figure 3D) is \textit{not} always present, and thus reflects something about the structure of a particular receptive field arrangement.

Let $\C \subset \{0,1\}^n$ be a neural code, and let $\U = \{U_1,\ldots,U_n\}$ be any arrangement of receptive fields in a stimulus space $X$ such that $\C = \C(\U)$ (this is guaranteed to exist by Lemma~\ref{lemma:RFform}).  The \textit{ RF structure} of $\C$ refers to the set of relations among the $U_i$s that are not ``obvious,'' 
and have the form:
$$\bigcap_{i \in \sigma} U_i \subseteq \bigcup_{j \in \tau} U_j, \;\; \text{ for } \;\; \sigma \cap \tau = \emptyset.$$
In particular, this includes any empty intersections $\bigcap_{i \in \sigma} U_i = \emptyset$ (here $\tau = \emptyset$).
In the examples in Figure 3, the panel A code has no unusual RF structure relations and is as general as possible; while panel B has $U_1 \subset U_3$ and $U_2 \subset U_3$; panel C has $U_1 \subset U_2 \subset U_3$; and panel D has $U_3 \subset U_1 \cup U_2$.

Our central goal is to develop a method to algorithmically extract a minimal description of the RF structure directly from a neural code $\C$, without first
realizing it as $\C(\U)$ for some arrangement of receptive fields.  We view this as a first step towards inferring stimulus space features that cannot be obtained from the simplicial complex $\Delta(\C)$.  To do this we turn to an algebro-geometric framework, that of neural rings and ideals.  These objects are defined in Section~\ref{chap:neural-ring} so as to capture the full combinatorial data of a neural code, but in a way that allows us to naturally and algorithmically infer a compact description of the desired RF structure, as shown in Chapter~\ref{chap:RF-structure}.

\chapter{Neural Rings and Neural Ideals} \label{chap:neural-ring}

In this chapter we define the neural ring $R_\C$ and a closely-related neural ideal, $J_\C$.  First, we briefly review some basic algebraic geometry background 
needed throughout the following sections.

\section{Basic algebraic geometry background}

The following definitions are standard (see, for example, \cite{cox-little-oshea}). 

\begin{definition}[Rings and ideals.]
Let $R$ be a commutative ring. 
A subset $I\subseteq R$ is an \textit{ideal} of $R$ if it has the following properties:
\begin{enumerate}
\item[(i)] $I$ is a subgroup of $R$ under addition.
\item[(ii)] If $a\in I$, then $ra\in I$ for all $r \in R$.
\end{enumerate}
An ideal $I$ is said to be \textit{generated by} a set $A$, and we write $I=\langle A\rangle$, if $$I=\{  r_1a_1+\cdots + r_n a_n \, |\, a_i\in A, r_i\in R, \text{ and } n\in \mathbb{N}\}.$$ 
In other words, $I$ is the set of all finite combinations of elements of $A$ with coefficients in $R$.

An ideal $I \subset R$ is \textit{proper} if $I\subsetneq R$.  An ideal $I \subset R$ is \textit{prime} if it is proper and has the following property: if $rs\in I$ for some $r,s\in R$, then 
$r\in I$ or $s\in I$.  An ideal $m \subset R$ is \textit{maximal} if it is proper and if for any ideal $I$ such that $m\subseteq I\subseteq R$, either $I=m$ or $I=R$.
 An ideal $I \subset R$ is \textit{radical} if $r^n\in I$ implies $r\in I$, for any $r \in R$ and $n \in  \mathbb{N}$.  An ideal $I \subset R$ is \textit{primary} if $rs\in I$ implies $r\in I$ or $s^n \in I$ for some $n \in  \mathbb{N}$.  A \textit{primary decomposition} of an ideal $I$ expresses $I$ as an intersection of finitely many primary ideals.
\end{definition}

\begin{definition}[Ideals and varieties.]
Let $k$ be a field, $n$ the number of neurons, and  $k[x_1,\ldots,x_n]$ a polynomial ring with one indeterminate $x_i$ for each neuron.
We will consider $k^n$ to be the neural activity space, where each point $v = (v_1,\ldots,v_n) \in k^n$ is a vector tracking the state $v_i$ of each neuron.   Note that any
polynomial $f \in k[x_1,\ldots,x_n]$ can be evaluated at a point $v \in k^n$ by setting $x_i = v_i$ each time $x_i$ appears in $f$.  We will denote this value $f(v)$.

Let $J \subset k[x_1,\ldots,x_n]$ be an ideal, and define the variety
$$V(J) \od \{v \in k^n \mid f(v) = 0 \text{ for all } f \in J\}.$$
Similarly, given a subset $S \subset k^n$, we can define the ideal of functions that vanish on this subset as
$$I(S) \od \{f \in k[x_1,\ldots.,x_n] \mid f(v) = 0 \text{ for all } v \in S\}.$$
The ideal-variety correspondence \cite{cox-little-oshea} gives us the usual order-reversing relationships: $I \subseteq J \Rightarrow V(J) \subseteq V(I)$, and
 $S \subseteq T \Rightarrow I(T) \subseteq I(S)$.  Furthermore, $V(I(V)) = V$ for any variety $V$, but
 it is not always true that $I(V(J)) = J$ for an ideal $J$ (see Section~\ref{sec:lemma-proofs}).
We will regard neurons as having only two states, ``on'' or ``off,'' and thus choose $k = \F_2 = \{0,1\}$.  
\end{definition}

\section{Definition of the neural ring}

Let $\C \subset \{0,1\}^n = \F_2^n$ be a neural code, and define the ideal $I_\C$ of $\F_2[x_1,\ldots,x_n]$ corresponding to the set of polynomials that vanish on all codewords in $\C$:
$$I_\C \od I(\C) = \{f \in \F_2[x_1,\ldots,x_n] \mid f(c) = 0 \text{ for all } c \in \C\}.$$
By design, $V(I_\C) \supseteq \C$; we will show that in fact $V(I_\C) = \C$ and hence $I(V(I_\C)) = I_\C$.  To see this, define an ideal $m_v = \langle x_1-v_1,...,x_n-v_n\rangle$ for every $v\in \F_2^n$; note that $V(m_v) = \{v\}$.  Then, for a code $\C\subset\F_2^n$, define the ideal $J = \bigcap_{v\in \C} m_v$.  As this intersection is finite, $\C = V(J)$, and thus we have $$V(I_\C) = V(I(\C)) =V(I(V(J_\C))) = V(J_\C) = \C.$$
Note that the ideal generated by the  {\em Boolean relations},
$$\B \od \langle x_1^2-x_1,\ldots,x_n^2-x_n \rangle,$$
is automatically contained in $I_\C$, irrespective of $\C$.

The {\em neural ring} $R_\C$ corresponding to the code $\C$ is the quotient ring
$$R_\C \od \F_2[x_1,\ldots,x_n]/I_\C,$$
together with the set of indeterminates $x_1,\ldots,x_n$.  We say that two neural rings are {\em equivalent} if there
is a bijection between the sets of indeterminates that yields a ring homomorphism.
\medskip

\noindent \textbf{Remark.} Due to the Boolean relations, any element $y \in R_\C$ satisfies $y^2 = y$ (cross-terms vanish because $2=0$ in $\F_2$), so the neural ring is a \textit{Boolean ring} isomorphic to $\F_2^{|\C|}$.  It is important to keep in mind, however, that $R_\C$ comes equipped with a privileged set of functions, $x_1,\ldots,x_n$; this allows the ring to keep track of considerably more structure than just the size of the neural code.  The importance of using this presentation will be clear as we begin to extract receptive field information.

\section{The spectrum of the neural ring}\label{sec:spec}
We can think of $R_\C$ as the ring of functions of the form $f:\C \rightarrow \F_2$ on the neural code, where each function assigns a $0$ or $1$
to each codeword $c \in \C$ by evaluating $f \in \F_2[x_1,\ldots,x_n]/I_\C$ through the substitutions $x_i = c_i$ for $i = 1,\ldots,n$.   To see this, note that two polynomials are in the same equivalence class in $R_\C$ if and only if they evaluate the same on every $c\in \C$. That is,  $f=g$ in $R_\C \Leftrightarrow f-g\in I_\C \Leftrightarrow f(c) - g(c) = 0$ for all $c\in \C,$ i.e., $f(c) = g(c)$ for all $c\in \C$. Quotienting the original
polynomial ring by $I_\C$ ensures that there is only one zero function in $R_\C$.

The spectrum of the neural ring, $\mathrm{Spec}(R_\C)$, consists 
of all prime ideals in $R_\C$.  We will see shortly that the elements of $\mathrm{Spec}(R_\C)$ are in one-to-one correspondence with the
elements of the neural code $\C$.  Indeed, our definition of $R_\C$ was designed for this to be true.

For any point $v \in \{0,1\}^n$ of the neural activity space, let
$$m_v \od I(v) =  \{f \in \F_2[x_1,\ldots,x_n] \mid f(v) = 0 \}$$
be the maximal ideal of $\F_2[x_1,\ldots,x_n]$ consisting of all functions that vanish on $v$.  
We can also write $m_v = \langle x_1-v_1,\ldots,x_n-v_n\rangle$ (see Lemma~\ref{lemma:mv} in Section~\ref{sec:lemma-proofs}).
Using this, we can characterize the spectrum of the neural ring.

\begin{lemma}  \label{lemma:spec}
$\mathrm{Spec}(R_\C)  = \{ \bar{m}_v \mid v \in \C\},$ where $\bar{m}_v$ is the quotient of $m_v$ in $R_\C$.
\end{lemma}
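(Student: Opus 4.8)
The plan is to establish the bijection by carefully tracking what happens to maximal ideals under the quotient by $I_\C$, and then showing that in a Boolean ring every prime is maximal, so there are no other primes to worry about.

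First I would recall the standard correspondence for quotient rings: prime ideals of $R_\C = \F_2[x_1,\ldots,x_n]/I_\C$ are in order-preserving bijection with prime ideals of $\F_2[x_1,\ldots,x_n]$ that contain $I_\C$, via $P \mapsto \bar P$. So it suffices to identify the primes of the polynomial ring containing $I_\C$. Next I would note that since $\F_2$ is a finite (hence algebraically closed in the relevant weak sense — every element is a root of $x^2-x$) field and $\B \subseteq I_\C$, every prime $P \supseteq I_\C$ contains each $x_i^2 - x_i = x_i(x_i-1)$; since $P$ is prime this forces $x_i \in P$ or $x_i - 1 \in P$ for each $i$, i.e. $P$ contains $m_v$ for the point $v \in \F_2^n$ with $v_i \in \{0,1\}$ chosen accordingly. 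But each $m_v = \langle x_1 - v_1,\ldots,x_n-v_n\rangle$ is already maximal (the quotient is $\F_2$), so $P = m_v$. Thus the primes of $\F_2[x_1,\ldots,x_n]$ containing $\B$ are exactly $\{m_v \mid v \in \F_2^n\}$, and they are all maximal.

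Then I would impose the stronger condition $I_\C \subseteq P = m_v$. Using the ideal–variety correspondence from the excerpt, $I_\C = I(\C)$ and $m_v = I(v)$, so $I_\C \subseteq m_v \iff I(\C) \subseteq I(\{v\}) \iff \{v\} \subseteq V(I(\C)) = V(I_\C) = \C$, where the last equality is the one established just before Lemma~\ref{lemma:spec}. Hence $m_v \supseteq I_\C$ precisely when $v \in \C$. Passing to the quotient, $\mathrm{Spec}(R_\C) = \{\bar m_v \mid v \in \C\}$, as claimed, and moreover each $\bar m_v$ is maximal, consistent with the earlier remark that $R_\C \cong \F_2^{|\C|}$.

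The one subtlety — the main thing to be careful about rather than a deep obstacle — is arguing that \emph{every} prime containing $I_\C$ arises this way, i.e. that there are no ``exotic'' primes; this is exactly where the Boolean relations $\B \subseteq I_\C$ do the work, guaranteeing the prime is pinned down coordinate-by-coordinate. An alternative, perhaps cleaner, route is to invoke directly that $R_\C$ is a Boolean ring (from the Remark), that in a Boolean ring every prime ideal is maximal and the quotient by it is $\F_2$, and that a finite Boolean ring isomorphic to $\F_2^{|\C|}$ has exactly $|\C|$ maximal ideals — then just match them up with codewords via the evaluation maps $\mathrm{ev}_v : R_\C \to \F_2$, $f \mapsto f(v)$, whose kernels are the $\bar m_v$. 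I would present the first argument as primary since it is self-contained and makes the identification $\bar m_v \leftrightarrow v$ transparent.
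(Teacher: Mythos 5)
Your proof is correct, and it takes a genuinely different route from the paper's at both key steps. The paper first reduces to maximal ideals via the Boolean-ring observation $\mathrm{Spec}(R_\C)=\mathrm{maxSpec}(R_\C)$, and then shows that any maximal ideal $m\supseteq I_\C$ satisfies $V(m)\neq\emptyset$ by invoking the Strong Nullstellensatz in finite fields (if $V(m)=\emptyset$ then $m=I(V(m))=\F_2[x_1,\ldots,x_n]$), whence $m=m_v$ for some $v$; you instead handle \emph{all} primes at once by the elementary observation that a prime containing $\B\subseteq I_\C$ must contain $x_i$ or $1-x_i$ for each $i$, hence contains and (by maximality of $m_v=\langle x_1-v_1,\ldots,x_n-v_n\rangle$, i.e.\ Lemma~\ref{lemma:mv}) equals some $m_v$ --- this avoids the Nullstellensatz at this step and makes the Boolean-ring remark unnecessary. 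For the containment criterion $m_v\supseteq I_\C\iff v\in\C$, the paper argues directly, with the nontrivial direction witnessed by the characteristic function $\rho_v\in I_\C$, $\rho_v(v)=1$ (a choice that previews the generators of the neural ideal $J_\C$); you instead deduce it from $V(I_\C)=\C$, the fact established just before the lemma, which is clean but means the Nullstellensatz machinery is still used indirectly through that earlier fact rather than being eliminated. One small caveat: your parenthetical that $\F_2$ is ``algebraically closed in the relevant weak sense'' is unnecessary and slightly misleading --- your argument uses only $\B\subseteq I_\C$ and primeness --- and you should make explicit (as you essentially do) that you are using $m_v=I(v)=\langle x_1-v_1,\ldots,x_n-v_n\rangle$, which is exactly Lemma~\ref{lemma:mv}. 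Your alternative sketch via counting maximal ideals of the finite Boolean ring $\F_2^{|\C|}$ would also work, but the coordinate-by-coordinate argument you lead with is the better one to present, for the reason you give.
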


\noindent The proof is given in Section~\ref{sec:lemma-proofs}.  Note that because $R_\C$ is a Boolean ring, the maximal ideal spectrum and the prime ideal spectrum coincide.

\section{The neural ideal \& an explicit set of relations for the neural ring}

The definition of the neural ring is rather impractical, as it does not give us explicit relations for generating $I_\C$ and $R_\C$.
Here we define another ideal, $J_\C$, via
an explicit set of generating relations.  Although $J_\C$ is closely related to $I_\C$, it
turns out that $J_\C$ is a more convenient object to study, which is why we will use the term \textit{neural ideal} to refer to $J_\C$ rather than $I_\C$.

For any $v \in \{0,1\}^n$, consider the function $\rho_v \in \F_2[x_1,\ldots,x_n]$ defined as
$$\rho_v \od \prod_{i=1}^n(1-v_i-x_i) = \prod_{\{i\,|\,v_i=1\}}x_i\prod_{\{j\,|\,v_j=0\}}(1-x_j) = 
\prod_{i \in \supp(v)}x_i\prod_{j \notin \supp(v)}(1-x_j).$$
Note that $\rho_v(x)$ can be thought of as a characteristic function for $v$, since it satisfies $\rho_v(v) = 1$ and $\rho_v(x) = 0$ for any other $x \in \F_2^n$.  Now consider the ideal $J_\C \subseteq  \F_2[x_1,\ldots,x_n]$ generated by all functions $\rho_v$, for $v \notin \C$:
$$J_\C \od \langle \{ \rho_v \mid v \notin \C\}\rangle.$$
We call $J_\C$ the {\em neural ideal} corresponding to the neural code $\C$.  If $\C = 2^{[n]}$ is the complete code, we simply set $J_\C = 0$, the zero ideal.  $J_\C$ is related to $I_\C$ as follows, giving us explicit relations for the neural ring.

\begin{lemma} \label{lemma:explicit-relations}
Let $\C \subset \{0,1\}^n$ be a neural code.  Then,
$$I_\C = J_\C + \B = \big \langle \{ \rho_v \mid v \notin \C\}, \{x_i(1-x_i) \mid i \in [n]\} \big \rangle,$$
where $\B = \langle \{x_i(1-x_i) \mid i \in [n]\} \rangle$ is the ideal generated by the Boolean relations, and $J_\C$
is the neural ideal.
\end{lemma}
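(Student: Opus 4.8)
The plan is to prove the two asserted equalities separately: first the ring-theoretic identity $I_\C = J_\C + \B$, and then observe that the right-hand generating set is just a concrete rewriting, since $x_i(1-x_i) = x_i^2 - x_i$ in $\F_2[x_1,\ldots,x_n]$, so $\B = \langle \{x_i(1-x_i)\mid i \in [n]\}\rangle$ by definition. The real content is therefore the equality $I_\C = J_\C + \B$, and I would establish it by the usual two inclusions.

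For the inclusion $J_\C + \B \subseteq I_\C$: we already noted in the text that $\B \subseteq I_\C$ irrespective of $\C$, so it suffices to check that each generator $\rho_v$ with $v \notin \C$ lies in $I_\C$, i.e.\ vanishes on every codeword $c \in \C$. This is immediate from the characteristic-function property of $\rho_v$ recorded just above the statement: $\rho_v(x) = 0$ for every $x \in \F_2^n$ with $x \neq v$, and in particular for every $c \in \C$ since $v \notin \C$. Hence $\rho_v \in I(\C) = I_\C$, and since $I_\C$ is an ideal we get $J_\C + \B \subseteq I_\C$.

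For the reverse inclusion $I_\C \subseteq J_\C + \B$: take any $f \in I_\C$. Working modulo $\B$, every variable satisfies $x_i^2 = x_i$, so $f$ is congruent mod $\B$ to a \emph{squarefree} (multilinear) polynomial $\tilde f$; concretely, reduce each exponent $\geq 2$ down using $x_i^2 \equiv x_i$. The set of squarefree monomials has the same cardinality $2^n$ as $\F_2^n$, and the map sending a squarefree polynomial $g$ to its evaluation vector $(g(v))_{v \in \F_2^n}$ is a linear bijection — indeed the $\rho_v$ form a dual basis, so $g = \sum_{v \in \F_2^n} g(v)\,\rho_v$ for any squarefree $g$. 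Since $f \in I_\C$ means $f$ vanishes on $\C$, and $\tilde f \equiv f \pmod{\B}$ with $\B \subseteq I_\C$ forcing $\tilde f$ to vanish on $\C$ as well, we get $\tilde f = \sum_{v \notin \C} \tilde f(v)\,\rho_v \in J_\C$. Therefore $f = \tilde f + (f - \tilde f) \in J_\C + \B$, completing the inclusion.

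The main obstacle — really the only place any care is needed — is justifying the ``reduce mod $\B$ to a squarefree representative and expand in the $\rho_v$ basis'' step cleanly: one must verify that reduction by the Boolean relations does land inside a $2^n$-dimensional space on which evaluation at the $2^n$ points of $\F_2^n$ is an isomorphism, and that $\{\rho_v\}_{v\in\F_2^n}$ is exactly the basis dual to point-evaluation (which is where $\rho_v(v)=1$, $\rho_v(w)=0$ for $w\neq v$ gets used). Everything else is a routine ideal-membership check. I would state the squarefree-representative and dual-basis facts as a short sublemma (or cite that $\F_2[x_1,\ldots,x_n]/\B \cong \F_2^{\F_2^n}$ as rings of functions) and then the argument above is essentially immediate.
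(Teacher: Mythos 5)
Your proposal is correct, but it takes a different route from the paper. The paper's proof is a two-line application of machinery it has already set up: since $\rho_v$ is the characteristic function of $v$, one has $V(J_\C)=\C$, hence $I_\C = I(\C) = I(V(J_\C))$, and the Strong Nullstellensatz in Finite Fields (quoted as a known theorem) gives $I(V(J_\C)) = J_\C + \langle x_1^2-x_1,\ldots,x_n^2-x_n\rangle = J_\C + \B$. You instead prove both inclusions by hand: the easy inclusion $J_\C + \B \subseteq I_\C$ via the same characteristic-function property, and the reverse inclusion by reducing $f$ modulo $\B$ to a squarefree representative $\tilde f$ and expanding $\tilde f$ in the dual basis $\{\rho_v\}_{v\in\F_2^n}$, so that vanishing on $\C$ forces $\tilde f = \sum_{v\notin\C}\tilde f(v)\rho_v \in J_\C$. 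Your dual-basis/interpolation argument is sound (the evaluation map on multilinear polynomials is a surjection between $\F_2$-spaces of equal dimension $2^n$, hence bijective, which is exactly the sublemma you flag), and in effect it is a self-contained proof of the $q=2$ case of the finite-field Nullstellensatz that the paper merely cites. What the paper's approach buys is brevity and a uniform framework (the same Nullstellensatz is reused for Lemma~\ref{lemma:spec} and Lemma~\ref{lemma:mv}); what yours buys is an elementary, basis-level argument that needs no black box and makes transparent why the $\rho_v$ with $v\notin\C$ together with the Boolean relations suffice.
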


\noindent The proof is given in Section~\ref{sec:lemma-proofs}.

\section{Proof of Lemmas~\ref{lemma:spec} and~\ref{lemma:explicit-relations}}\label{sec:lemma-proofs}

To prove Lemmas~\ref{lemma:spec} and~\ref{lemma:explicit-relations}, we need a version of the Nullstellensatz for finite fields.
The original ``Hilbert's Nullstellensatz'' applies when $k$ is an algebraically closed field. It states that if $f \in k[x_1,\ldots,x_n]$ vanishes on $V(J)$, then $f \in \sqrt{J}$.  In other words,
$$I(V(J)) = \sqrt{J}.$$
Because we have chosen $k = \F_2 = \{0,1\}$, we have to be a little careful about the usual ideal-variety correspondence, as there are some subtleties introduced in the case of finite fields.   In particular, $J = \sqrt{J}$ in $\F_2[x_1,\ldots,x_n]$ does not imply $I(V(J)) = J$.

The following lemma and theorem are well-known.
Let $\F_q$ be a finite field of size $q$, and $\F_q[x_1,\ldots,x_n]$ the $n$-variate polynomial ring
over $\F_q$.

\begin{lemma}\label{lemma:radical}
For any ideal $J \subseteq \F_q[x_1,\ldots,x_n]$, the ideal $J+\langle x_1^q-x_1, \ldots, x_n^q-x_n \rangle$ is a radical ideal.
\end{lemma}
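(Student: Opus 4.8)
The plan is to show that $R \od \F_q[x_1,\ldots,x_n]/(J + \langle x_1^q - x_1, \ldots, x_n^q - x_n\rangle)$ is a reduced ring, which is equivalent to the ideal being radical. First I would dispose of the case $J = \F_q[x_1,\ldots,x_n]$ trivially (the whole ring is radical), so assume the ideal is proper. The key structural observation is that $\langle x_1^q - x_1, \ldots, x_n^q - x_n\rangle \subseteq J + \langle x_1^q - x_1, \ldots, x_n^q - x_n\rangle$, so every variable $x_i$ satisfies $x_i^q = x_i$ in $R$; hence by taking products of the polynomial identity $t^q - t = \prod_{a \in \F_q}(t - a)$ we get that $R$ is a finite-dimensional $\F_q$-algebra, and in fact $R$ is a quotient of $\F_q[x_1,\ldots,x_n]/\langle x_i^q - x_i\rangle$, which is isomorphic as a ring to the ring of all functions $\F_q^n \to \F_q$ (a product of $q^n$ copies of $\F_q$ via the CRT / Lagrange interpolation argument, since the $x_i^q - x_i$ have distinct roots and the ideals $m_v$ are pairwise comaximal).

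The main step is then: a quotient of a finite product of fields is again a finite product of fields, hence reduced. Concretely, $\F_q[x_1,\ldots,x_n]/\langle x_i^q - x_i : i \in [n]\rangle \cong \prod_{v \in \F_q^n} \F_q$, and any quotient of this ring by an ideal $I$ corresponds to a subproduct $\prod_{v \in S} \F_q$ for some subset $S \subseteq \F_q^n$ (ideals of a finite product of fields are exactly products of the factors, being sums of the minimal idempotents they contain). A finite product of fields has no nonzero nilpotents: if $y^m = 0$ then each coordinate satisfies $y_v^m = 0$ in a field, so $y_v = 0$. Therefore $R$ is reduced, which means $J + \langle x_1^q - x_1, \ldots, x_n^q - x_n\rangle$ is radical.

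An alternative, more hands-on route avoids the product decomposition: show directly that if $f^2 \in J + \langle x_i^q - x_i\rangle$ then $f \in J + \langle x_i^q - x_i\rangle$ (it suffices to treat squares, since in characteristic dividing $q$ one can bootstrap, or one just iterates the prime-power case). Working modulo the Boolean-type relations, every polynomial is congruent to one that is $\F_q$-multilinear-ish — more precisely, of degree $< q$ in each variable — and on such representatives vanishing of $f$ on the finite set $V\big(J + \langle x_i^q - x_i\rangle\big) = V(J) \cap \F_q^n$ forces $f$ itself into the ideal, because over $\F_q$ with the relations $x_i^q = x_i$ imposed, a polynomial is determined by its values on $\F_q^n$ and $I(V(\cdot))$ of such an ideal returns the ideal itself. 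This is essentially the finite-field Nullstellensatz statement the authors are building toward, so I would lean on the cleaner idempotent/CRT argument to keep the logical dependencies straight.

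I expect the main obstacle to be purely expository: making precise and citing (or proving in a line) the fact that $\F_q[x_1,\ldots,x_n]/\langle x_i^q - x_i\rangle \cong \prod_{v\in\F_q^n}\F_q$ and that ideals of a finite product of fields are subproducts. Both are standard, but one must be careful that the pairwise comaximality $m_v + m_w = (1)$ for $v \neq w$ (which gives the CRT decomposition) genuinely uses that the $x_i^q - x_i$ factor into distinct linear factors over $\F_q$ — this is where finiteness of the field, as opposed to just any field, is essential.
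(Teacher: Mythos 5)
Your main argument is correct, and it is worth noting that the paper itself offers no proof of this lemma: it is stated as a well-known fact (alongside the finite-field Nullstellensatz), so there is no in-paper argument to compare against. Your CRT route is the standard one and is sound: $\langle x_1^q-x_1,\ldots,x_n^q-x_n\rangle = \bigcap_{v\in\F_q^n} m_v$ with the $m_v$ pairwise comaximal, so $\F_q[x_1,\ldots,x_n]/\langle x_i^q-x_i\rangle \cong \prod_{v\in\F_q^n}\F_q$; since $J+\langle x_i^q-x_i\rangle$ contains $\langle x_i^q-x_i\rangle$, the quotient $R$ is a quotient of this finite product of fields, hence itself a (possibly empty) product of fields via the idempotent argument, hence reduced, which is exactly the statement that the ideal is radical (and the improper case is harmless, since the paper's definition of radical does not require properness). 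Your second, "hands-on" sketch is indeed essentially the Strong Nullstellensatz in finite fields and would be circular in this context, but you correctly identify that and rest the proof on the CRT/idempotent argument, so there is no gap.
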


\begin{theorem}[Strong Nullstellensatz in Finite Fields]  For an arbitrary finite field $\F_q$, let $J \subseteq \F_q[x_1,\ldots,x_n]$ be an ideal.  Then,
$$I(V(J)) = J + \langle x_1^q-x_1, \ldots, x_n^q-x_n \rangle.$$
\end{theorem}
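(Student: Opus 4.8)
The plan is to prove the two inclusions $I(V(J)) \supseteq J + \langle x_1^q - x_1, \ldots, x_n^q - x_n\rangle$ and $I(V(J)) \subseteq J + \langle x_1^q - x_1, \ldots, x_n^q - x_n\rangle$ separately. Write $J^* \od J + \langle x_1^q - x_1, \ldots, x_n^q - x_n\rangle$ for brevity. The first inclusion is the easy direction: every $f \in J$ vanishes on $V(J) \supseteq V(J^*)$, and by Fermat's little theorem each polynomial $x_i^q - x_i$ vanishes at every point of $\F_q^n$, hence on $V(J^*)$; since $I(V(J^*)) = I(V(J))$ once we know $V(J) = V(J^*)$ (which holds because the added generators vanish identically on $\F_q^n$), we get $J^* \subseteq I(V(J))$.

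The second inclusion is the substantive one, and here I would invoke Hilbert's classical Nullstellensatz together with Lemma~\ref{lemma:radical}. First pass to the algebraic closure $\overline{\F_q}$ and let $\tilde V(J^*)$ denote the variety of $J^*$ in $\overline{\F_q}^n$. The key observation is that the relations $x_i^q = x_i$ force every coordinate of every point of $\tilde V(J^*)$ to be a root of $x^q - x$, i.e.\ to lie in $\F_q$ (the fixed field of the Frobenius); therefore $\tilde V(J^*) = V(J^*) = V(J) \subseteq \F_q^n$. Now if $f$ vanishes on $V(J) = \tilde V(J^*)$, Hilbert's Nullstellensatz over $\overline{\F_q}$ gives $f \in \sqrt{J^*}$ in $\overline{\F_q}[x_1,\ldots,x_n]$; a standard descent argument (or simply applying the Nullstellensatz directly in $\F_q[x_1,\ldots,x_n]$, noting that $J^*$ defines the same vanishing condition) shows $f \in \sqrt{J^*}$ already in $\F_q[x_1,\ldots,x_n]$. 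Finally, Lemma~\ref{lemma:radical} tells us $J^*$ is radical, so $\sqrt{J^*} = J^*$, and we conclude $f \in J^*$, which is exactly $I(V(J)) \subseteq J^*$.

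The main obstacle is the careful handling of the field-extension issue: one must be sure that passing to $\overline{\F_q}$ does not enlarge the variety, and the point that makes this work is precisely that the generators $x_i^q - x_i$ pin the coordinates down to $\F_q$. Once that is in hand, everything reduces to the classical Nullstellensatz plus the (already assumed) radicality lemma, so the proof is short. I would also remark that this theorem immediately yields Lemma~\ref{lemma:explicit-relations} in the case $q = 2$: taking $J = J_\C$ and observing $V(J_\C) = \C$ by the definition of the $\rho_v$, we get $I_\C = I(\C) = I(V(J_\C)) = J_\C + \B$, and Lemma~\ref{lemma:spec} follows from the identification of maximal ideals of the Boolean ring $R_\C$ with points of $\C$.
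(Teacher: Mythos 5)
The paper itself does not prove this theorem: it is stated, together with Lemma~\ref{lemma:radical}, as a well-known fact and used as a black box, so there is no internal proof to compare against. Your argument is correct and is the standard derivation from the classical Nullstellensatz: the easy inclusion is as you say, and for the hard one the field equations $x_i^q-x_i$ pin every $\overline{\F_q}$-point of $J^* = J+\langle x_1^q-x_1,\ldots,x_n^q-x_n\rangle$ down to $\F_q^n$, so Hilbert's Nullstellensatz over $\overline{\F_q}$ gives $f\in\sqrt{J^*\,\overline{\F_q}[x_1,\ldots,x_n]}$, descent brings this back to $\F_q[x_1,\ldots,x_n]$, and Lemma~\ref{lemma:radical} (also assumed without proof in the paper, so no circularity) upgrades $\sqrt{J^*}=J^*$. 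Two small points. First, the descent step deserves its one-line justification: $J^*\overline{\F_q}[x_1,\ldots,x_n]\cap\F_q[x_1,\ldots,x_n]=J^*$ because $\overline{\F_q}$ is free, hence faithfully flat, over $\F_q$; thus $f^m$ in the extended ideal already lies in $J^*$. Second, your parenthetical alternative of ``applying the Nullstellensatz directly in $\F_q[x_1,\ldots,x_n]$'' does not work, since the strong Nullstellensatz genuinely fails over non-algebraically-closed fields: for $J=\langle x^2+x+1\rangle\subset\F_2[x]$ one has $V(J)=\emptyset$ and $I(V(J))=\F_2[x]$, while $\sqrt{J}=J$ is proper (of course $J+\langle x^2-x\rangle=\langle 1\rangle$, consistent with the theorem). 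So keep the descent route and drop the parenthetical. As an aside, there is also a fully self-contained finite-field proof that avoids $\overline{\F_q}$ altogether: given $f$ vanishing on $V(J)$, subtract from $f$ the combination $\sum_{a\notin V(J)} f(a)g_a(a)^{-1}g_a(x)\prod_{i}\bigl(1-(x_i-a_i)^{q-1}\bigr)$ with $g_a\in J$, $g_a(a)\neq 0$, and note the difference vanishes identically on $\F_q^n$ and hence lies in $\langle x_1^q-x_1,\ldots,x_n^q-x_n\rangle$; but given that Lemma~\ref{lemma:radical} is already granted, your route is shorter.
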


%\begin{theorem}[Weak Nullstellensatz in Finite Fields]  For an arbitrary finite field $\F_q$, given $m$ polynomials $f_1,\ldots,f_m \in \F_q[x_1,\ldots,x_n]$,
%$f_1,\ldots,f_m$ have no common zero in $\F_q^n$ if and only if $1 \in \langle f_1,\ldots,f_m \rangle +
%\langle x_1^q-x_1,\ldots,x_n^q-x_n \rangle \subseteq \F_q[x_1,\ldots,x_n]$.
%\end{theorem}

\subsection*{Proof of Lemma~\ref{lemma:spec}}

 We begin by describing the maximal ideals of $\F_2[x_1,\ldots,x_n]$.
Recall that
$$m_v \od I(v) =  \{f \in \F_2[x_1,\ldots,x_n] \mid f(v) = 0 \}$$
is the maximal ideal of $\F_2[x_1,\ldots,x_n]$ consisting of all functions that vanish on $v \in \F_2^n$.  We will use the notation $\bar{m}_v$
to denote the quotient of $m_v$ in $R_\C$, in cases where $m_v \supset I_\C$.

\begin{lemma}  \label{lemma:mv}
$m_v = \langle x_1-v_1, \ldots, x_n-v_n \rangle \subset \F_2[x_1,\ldots,x_n]$, and is a radical ideal.
\end{lemma}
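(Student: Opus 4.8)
The plan is to prove the two assertions separately: first the explicit description $m_v = \langle x_1-v_1,\ldots,x_n-v_n\rangle$, then the fact that this ideal is radical. Write $M = \langle x_1-v_1,\ldots,x_n-v_n\rangle$; I want to show $M = m_v$. The inclusion $M \subseteq m_v$ is immediate, since each generator $x_i - v_i$ vanishes at $v$, and $m_v$ is an ideal. For the reverse inclusion, the key step is a division/normal-form argument: given any $f \in \F_2[x_1,\ldots,x_n]$, repeatedly substitute $x_i = (x_i - v_i) + v_i$ and expand, collecting all terms that carry a factor $x_i - v_i$ into an element of $M$; what remains is the constant $f(v) \in \F_2$. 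Concretely, $f(x) = f(v) + g(x)$ with $g \in M$ — this is just a Taylor-type expansion about the point $v$, and over $\F_2$ there are no denominators to worry about. Hence if $f \in m_v$, then $f(v) = 0$, so $f = g \in M$. This gives $m_v \subseteq M$ and thus equality. (Alternatively, one can note $\F_2[x_1,\ldots,x_n]/M \cong \F_2$ via the evaluation map $x_i \mapsto v_i$, so $M$ is maximal, and since $M \subseteq m_v \subsetneq \F_2[x_1,\ldots,x_n]$, maximality forces $M = m_v$.)

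For radicality, the cleanest route is to invoke that $\F_2[x_1,\ldots,x_n]/m_v \cong \F_2$ is a field, hence an integral domain, so $m_v$ is prime and in particular radical. This uses the evaluation isomorphism established above. Alternatively — and perhaps more in the spirit of the surrounding section — radicality follows from Lemma~\ref{lemma:radical}: since each $x_i - v_i$ already satisfies $(x_i-v_i)^2 = x_i^2 - v_i^2 = x_i - v_i$ in $\F_2$ (because $v_i^2 = v_i$ and $2 = 0$), we have $x_i^2 - x_i = (x_i - v_i)^2 - (x_i - v_i) \in M$, so $M = M + \langle x_1^2 - x_1,\ldots,x_n^2-x_n\rangle$, which is radical by Lemma~\ref{lemma:radical}.

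I do not anticipate a genuine obstacle here; this is a foundational lemma. The one point requiring a little care is making the expansion argument for $m_v \subseteq M$ precise — one should phrase it either as an induction on the number of variables, or simply as the observation that the evaluation homomorphism $\ev_v: \F_2[x_1,\ldots,x_n] \to \F_2$, $f \mapsto f(v)$, is surjective with kernel exactly $m_v$ and contains $M$ in its kernel, and then identify $\ker \ev_v$ with $M$ by comparing the quotients (both are $\F_2$). The field-quotient viewpoint handles the description, the maximality, and the radicality all at once, so I would present that as the main argument and mention the Lemma~\ref{lemma:radical} route as the alternative consistent with the finite-field emphasis of this section.
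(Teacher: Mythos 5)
Your argument is correct, but it reaches the identity $m_v = \langle x_1-v_1,\ldots,x_n-v_n\rangle$ by a genuinely different route than the paper. The paper observes $V(A_v)=\{v\}$ for $A_v=\langle x_1-v_1,\ldots,x_n-v_n\rangle$, invokes the Strong Nullstellensatz in Finite Fields to write $m_v=I(V(A_v))=A_v+\langle x_1^2-x_1,\ldots,x_n^2-x_n\rangle$, and then notes that each Boolean generator already lies in $A_v$ (since $v_i\in\{0,1\}$), so $m_v=A_v$; radicality then comes from Lemma~\ref{lemma:radical}. You instead argue directly and elementarily: the Taylor-type expansion $f = f(v) + g$ with $g\in M$ (equivalently, the evaluation map $\ev_v$ has kernel $M$ and quotient $\F_2$) shows $M$ is maximal and hence equal to $m_v$, with no appeal to the Nullstellensatz; radicality then falls out for free since maximal ideals are prime, hence radical. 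Your alternative radicality argument — that $x_i^2-x_i=(x_i-v_i)^2-(x_i-v_i)\in M$, so $M=M+\B$ is radical by Lemma~\ref{lemma:radical} — is exactly the computation underlying the paper's proof, so you have both routes in hand. What each buys: your version is self-contained and arguably cleaner (one quotient computation gives the description, maximality, and radicality simultaneously), while the paper's version deliberately exercises the finite-field Nullstellensatz and Lemma~\ref{lemma:radical}, the same toolkit it reuses immediately afterward in the proofs of Lemmas~\ref{lemma:spec} and~\ref{lemma:explicit-relations}, so it keeps the section's machinery uniform. Either proof is acceptable; just make the expansion step precise (induction on variables or the kernel-of-evaluation phrasing), as you yourself flag.
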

\begin{proof}  Denote $A_v = \langle x_1-v_1, \ldots, x_n-v_n \rangle$, and observe that $V(A_v) = \{v\}$.
It follows that $I(V(A_v)) = I(v) =  m_v$.  On the other hand, using the Strong Nullstellensatz in Finite Fields we have
$$I(V(A_v)) = A_v + \langle x_1^2-x_1,\ldots,x_n^2-x_n \rangle = A_v,$$
where the last equality is obtained by observing that, since $v_i \in \{0,1\}$ and $x_i^2-x_i = x_i(1-x_i)$, each generator of $ \langle x_1^2-x_1,\ldots,x_n^2-x_n \rangle$ is already contained in $A_v$.
We conclude that $A_v = m_v$, and the ideal is radical by Lemma~\ref{lemma:radical}.
\end{proof}

\noindent In the proof of Lemma~\ref{lemma:spec}, we make use of the following correspondence: for any quotient ring $R/I$, the maximal ideals of $R/I$ are exactly the quotients $\bar m = m/I$, where $m$ is a maximal ideal of $R$ that contains $I$  \cite{atiyah-macdonald}.

\begin{proof}[Proof of Lemma~\ref{lemma:spec}]
First, recall that because $R_\C$ is a Boolean ring, $\mathrm{Spec}(R_\C) = \mathrm{maxSpec}(R_\C)$, the set of all maximal ideals of $R_\C$.  We
also know that any maximal ideal of $\F_2[x_1,\ldots,x_n]$ which contains $I_\C$ is of the form $m_v$ for $v\in \F_2^n$.  To see this, we only need show that for maximal ideal $m\supseteq I_\C$, we have $V(m) \neq \emptyset$ (since if $v\in V(m)$, then $m\subseteq m_v$, and as $m$ is maximal, $m=m_v$).  To show this, suppose that $V(m) = \emptyset$. Using the Strong Nullstellensatz, since $m\supseteq I_\C\supseteq \langle x_1^2-x_1,...,x_n^2-x_n\rangle$, we have $$m=m+  \langle x_1^2-x_1,...,x_n^2-x_n\rangle = I(V(m)) = I(\emptyset) = \F_2[x_1,...,x_n] $$ which is a contradiction. 

  By the correspondence stated above, to show that
$\mathrm{maxSpec}(R_\C)  = \{ \bar{m}_v \mid v \in \C\}$ it suffices to show $m_v \supset I_\C$ if and only if $v\in \C$. To see this, note that for each $v\in \C$, $I_\C\subseteq m_v$ because, by definition, all elements of $I_\C$ are functions that vanish on each $v \in \C$. On the other hand, if $v\notin \C$ then $m_v \not\supseteq I_\C$; in particular,
the characteristic function $\rho_v \in I_\C$ for $v \notin \C$, but  $\rho_v \notin m_v$ because $\rho_v(v) = 1$.
Hence, the maximal ideals of $R_\C$ are exactly those of the form $\bar m_v$ for $v\in \C$.
\end{proof} 

We have thus verified that the points in $\mathrm{Spec}(R_\C)$ correspond to codewords in $\C$.  This was expected given our original
definition of the neural ring, and  suggests that the relations on $\F_2[x_1,\ldots,x_n]$ imposed by $I_\C$ are simply relations ensuring that 
$V(\bar m_v) = \emptyset$ for all $v \notin \C$.  

\subsubsection*{Proof of Lemma~\ref{lemma:explicit-relations}}
Here we find explicit relations for $I_\C$ in the case of an arbitrary neural code. 
Recall that
$$\rho_v = \prod_{i=1}^n((x_i-v_i)-1) = \prod_{\{i\,|\,v_i=1\}}x_i\prod_{\{j\,|\,v_j=0\}}(1-x_j),$$
and that $\rho_v(x)$ can be thought of as a characteristic function for $v$, since it satisfies $\rho_v(v) = 1$ and $\rho_v(x) = 0$ for any other
$x \in \F_2^n$.  This immediately implies that
$$V(J_\C) = V(\langle \{ \rho_v \mid v \notin \C\}\rangle) = \C.$$
We can now prove Lemma~\ref{lemma:explicit-relations}.
\begin{proof}[Proof of Lemma~\ref{lemma:explicit-relations}]
Observe that $I_\C = I(\C) = I(V(J_\C))$, since $V(J_\C) = \C$.  On the other hand, the
Strong Nullstellensatz  in Finite Fields implies $I(V(J_\C)) = J_\C + \langle x_1^2-x_1,\ldots,x_n^2-x_n \rangle = J_\C + \B.$
\end{proof}

\chapter{How to infer RF structure using the neural ideal}\label{chap:RF-structure}

We begin by presenting an alternative set of relations that can be used to define the neural ring.  These relations
enable us to easily interpret elements of $I_\C$ as receptive field relationships, clarifying the connection between the neural ring and ideal and the RF structure of the code. %These relationships should be construed as saying: If this were a receptive field code, then these relationships between the receptive fields would hold. This is useful for two reasons: one, it can help us to determine a possible realization of receptive fields which give the code.  However, it can also to prove, when relationships occur that would be impossible for convex regions to fulfill, that a code is NOT a convex receptive field code.  Or, if the relationships are impossible for convex regions in certain dimensions, then it shows that no realization can exist in those dimensions.

We next introduce pseudo-monomials and pseudo-monomial ideals, and use these notions to obtain a minimal description of the neural ideal, which we call
the ``canonical form.''  Theorem~\ref{thm:canonical-form} enables us to use the canonical form of $J_\C$ in order to ``read off'' a minimal description of the RF structure of the code.
Finally, we present an algorithm that inputs a neural code $\C$ and outputs the canonical form $CF(J_\C)$, and illustrate its use in a detailed example.

\section{An alternative set of relations for the neural ring}
Let $\C \subset \{0,1\}^n$ be a neural code, and recall by Lemma~\ref{lemma:RFform} that $\C$ can always be realized as a RF code $\C = \C(\U)$, provided we don't require the $U_i$s to be convex.
Let $X$ be a stimulus space and $\U = \{U_i\}_{i=1}^n$ a collection of open sets in $X$, and consider the RF code $\C(\U)$.
The neural ring corresponding to this code is $R_{\C(\U)}.$

Observe that the functions $f \in R_{\C(\U)}$ can be evaluated at any point $p \in X$ by assigning
$$x_i(p) = \left\{\begin{array}{cc} 1 & \text{if}\; p \in U_i \\ 0 & \text{if}\; p \notin U_i \end{array}\right.$$
each time $x_i$ appears in the polynomial $f$.  The vector $(x_1(p),\ldots,x_n(p)) \in \{0,1\}^n$ represents the neural response to the stimulus $p$.
Note that if $p \notin \bigcup_{i=1}^n U_i$, then $(x_1(p),\ldots,x_n(p)) = (0,\ldots,0)$ is the all-zeros codeword.
For any $\sigma \subset [n]$, define
$$U_\sigma \od \bigcap_{i \in \sigma} U_i, \;\text{  and  }\; x_\sigma \od \prod_{i \in \sigma} x_i.$$
Our convention is that $x_\emptyset = 1$ and $U_{\emptyset} = X$, even in cases where $X \supsetneq \bigcup_{i = 1}^n U_i$.
Note that for any $p \in X$,
$$x_\sigma(p) = \left\{\begin{array}{cc} 1 & \text{if}\; p \in U_\sigma \\ 0 & \text{if}\; p \notin U_\sigma. \end{array}\right.$$

The relations in $I_{\C(\U)}$ encode the combinatorial data of $\U$.  For example, if $U_\sigma = \emptyset$ then we cannot have $x_\sigma = 1$ at any point of the stimulus space $X$, and
must therefore impose the relation $x_\sigma$
to ``knock off'' those points.
On the other hand, if $U_\sigma \subset U_i \cup U_j,$ then $x_\sigma = 1$ implies either $x_i = 1$ or $x_j = 1$, something that is guaranteed by imposing the relation
$x_\sigma(1-x_i)(1-x_j)$.  These observations lead us to an alternative ideal, $I_\U \subset \F_2[x_1,\ldots,x_n]$, defined directly
from the arrangement of receptive fields $\U = \{U_1,\ldots,U_n\}$:
$$I_\U \od \big \langle \big\{ x_\sigma \prod_{i \in \tau} (1-x_i) \mid U_\sigma \subseteq
\bigcup_{i \in \tau}U_i \big\} \big \rangle .$$
Note that if $\tau = \emptyset$, we only get a relation for $U_\sigma = \emptyset$, and this is $x_\sigma$.
If $\sigma = \emptyset$, then $U_\sigma = X$, and we only get relations of this type if $X$ is contained in the union of the $U_i$s.  This is equivalent to the requirement that there is no ``outside point'' corresponding to the all-zeros codeword.

Perhaps unsurprisingly, it turns out that $I_\U$ and $I_{\C(\U)}$ exactly coincide, so $I_\U$ provides an alternative set of relations that can be used to define $R_{\C(\U)}$.

\begin{theorem} \label{thm:ideal-equivalence}
$I_\U = I_{\C(\U)}.$
\end{theorem}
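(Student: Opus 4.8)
The plan is to prove the two inclusions $I_\U \subseteq I_{\C(\U)}$ and $I_{\C(\U)} \subseteq I_\U$ separately, exploiting the fact that $I_{\C(\U)} = I(\C(\U))$ consists precisely of the polynomials vanishing on every codeword, together with the ``geometric'' evaluation $x_i(p)$ at points $p \in X$ described just above the theorem. The key observation to use throughout is that every codeword $c \in \C(\U)$ arises as $(x_1(p),\dots,x_n(p))$ for some $p \in X$, and conversely every $p \in X$ produces a codeword of $\C(\U)$; moreover $x_\sigma(p) = 1$ iff $p \in U_\sigma$, so the two pictures — polynomial evaluation on $\F_2^n$ and evaluation at stimuli — are compatible.

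For the inclusion $I_\U \subseteq I_{\C(\U)}$, it suffices to check that each generator $g_{\sigma,\tau} = x_\sigma \prod_{i \in \tau}(1-x_i)$ with $U_\sigma \subseteq \bigcup_{i\in\tau}U_i$ vanishes on every $c \in \C(\U)$. Writing $c = (x_1(p),\dots,x_n(p))$ for an appropriate $p \in X$, if $g_{\sigma,\tau}(c) \neq 0$ then $x_\sigma(p) = 1$ and $x_i(p) = 0$ for all $i \in \tau$, i.e. $p \in U_\sigma$ but $p \notin U_i$ for all $i \in \tau$ — contradicting $U_\sigma \subseteq \bigcup_{i\in\tau} U_i$. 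Hence $g_{\sigma,\tau}(c) = 0$ for all $c$, so $g_{\sigma,\tau} \in I_{\C(\U)}$, and since $I_{\C(\U)}$ is an ideal, $I_\U \subseteq I_{\C(\U)}$.

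The reverse inclusion $I_{\C(\U)} \subseteq I_\U$ is the substantive direction and I expect it to be the main obstacle. The natural strategy is to use the explicit generators from Lemma~\ref{lemma:explicit-relations}: $I_{\C(\U)} = J_{\C(\U)} + \B = \langle \{\rho_v \mid v \notin \C(\U)\}, \{x_i(1-x_i) \mid i \in [n]\}\rangle$. The Boolean relations $x_i(1-x_i)$ are the special case $\sigma = \{i\}$, $\tau = \{i\}$ of the defining relations of $I_\U$ (since $U_i \subseteq U_i$), so $\B \subseteq I_\U$ is immediate. The crux is therefore to show $\rho_v \in I_\U$ for every $v \notin \C(\U)$. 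Fix such a $v$, let $\sigma = \supp(v)$ and $\tau = [n]\setminus\sigma$, so $\rho_v = x_\sigma \prod_{j \in \tau}(1-x_j)$. Since $v \notin \C(\U)$, by definition of the RF code $U_\sigma \setminus \bigcup_{j \in \tau} U_j = \emptyset$, i.e. $U_\sigma \subseteq \bigcup_{j\in\tau}U_j$ — but this is not literally a generator of $I_\U$ unless we are careful, because $\rho_v$ as written is exactly $g_{\sigma,\tau}$, and the condition $U_\sigma \subseteq \bigcup_{j\in\tau}U_j$ is precisely what we just derived, so in fact $\rho_v = g_{\sigma,\tau}$ is a generator of $I_\U$. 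This shows $J_{\C(\U)} \subseteq I_\U$, and combined with $\B \subseteq I_\U$ gives $I_{\C(\U)} = J_{\C(\U)} + \B \subseteq I_\U$. The only subtlety to handle carefully is the empty-set conventions: when $\sigma = \emptyset$ we must note $U_\emptyset = X$ and $x_\emptyset = 1$, so $v = 0 \notin \C(\U)$ corresponds to $X \subseteq \bigcup_i U_i$ and $\rho_0 = \prod_j(1-x_j) = g_{\emptyset,[n]} \in I_\U$; when $\tau = \emptyset$ (i.e. $v = (1,\dots,1)$) the condition $U_{[n]} = \emptyset$ gives the generator $x_{[n]} = \rho_v$. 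Putting both inclusions together yields $I_\U = I_{\C(\U)}$. $\blacksquare$
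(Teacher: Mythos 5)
Your proof is correct, but one half of it follows a genuinely different route from the paper's. The containment $I_{\C(\U)} \subseteq I_\U$ (matching each $\rho_v$, $v \notin \C(\U)$, with the generator of $I_\U$ coming from $U_{\supp(v)} \subseteq \bigcup_{j \notin \supp(v)} U_j$, plus the Boolean relations from $U_i \subseteq U_i$) is essentially the paper's argument for $J_{\C(\U)} \subseteq J_\U$. For the other containment, however, the paper works entirely at the level of generators: it strips the Boolean relations off $I_\U$ (Lemma~\ref{lemma:JU}), observes that a generator $x_\sigma \prod_{i\in\tau}(1-x_i)$ of $J_\U$ forces $\rho_v \in J_{\C(\U)}$ for \emph{every} $v$ with $\supp(v) \supseteq \sigma$ and $\supp(v) \cap \tau = \emptyset$, and then needs the inductive Lemma~\ref{lemma:induction-arg} to sum these $\rho_v$'s back up into $x_\sigma \prod_{i\in\tau}(1-x_i)$. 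You instead exploit the fact that $I_{\C(\U)} = I(\C(\U))$ is by definition the vanishing ideal, so it suffices to evaluate each generator of $I_\U$ on an arbitrary codeword $c$, realize $c$ as $(x_1(p),\ldots,x_n(p))$ for a stimulus $p$, and derive a contradiction with $U_\sigma \subseteq \bigcup_{i\in\tau} U_i$; this bypasses Lemmas~\ref{lemma:JU} and~\ref{lemma:induction-arg} entirely and is shorter. The trade-off is that your argument only yields the stated equality $I_\U = I_{\C(\U)}$, whereas the paper's generator-level argument establishes the finer equality $J_\U = J_{\C(\U)}$ of the pseudo-monomial ideals (without the Boolean relations), which is exactly what is invoked later in Lemma~\ref{lemma:gen-types} and in the proof of Theorem~\ref{thm:canonical-form}; if one wants that stronger statement, something like Lemma~\ref{lemma:induction-arg} is still needed.
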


Recall that for a given set of receptive fields $\U = \{U_1,\ldots,U_n\}$ in some stimulus space $X$, the ideal $I_\U \subset \F_2[x_1,\ldots,x_n]$ was 
defined as:
$$I_\U \od \big \langle \{ x_\sigma \prod_{i \in \tau} (1-x_i) \mid U_\sigma \subseteq
\bigcup_{i \in \tau}U_i \} \big \rangle .$$
The Boolean relations are present in $I_\U$ irrespective of $\U$, as it is always true that $U_i \subseteq U_i$ and this yields the relation $x_i(1-x_i)$ for each $i$.  By analogy with our definition of $J_\C$, it makes sense to define an ideal $J_\U$ which is obtained by stripping away the Boolean relations.  This will then be used in the proof of Theorem~\ref{thm:ideal-equivalence}.

Note that if $\sigma \cap \tau \neq \emptyset$, then for any $i \in \sigma \cap \tau$ we have $U_\sigma \subseteq U_i \subseteq \bigcup_{j \in \tau} U_i$, and the corresponding relation is a multiple of the Boolean relation $x_i(1-x_i)$.
We can thus restrict attention to relations in $I_\U$ that have $\sigma \cap \tau = \emptyset,$ so long as we include separately the Boolean relations.
These observations are summarized by the following lemma.

\begin{lemma}\label{lemma:JU}
$I_\U = J_\U + \langle x_1^2-x_1,\ldots,x_n^2-x_n \rangle,$
where
$$J_\U \od  \big \langle \{ x_\sigma \prod_{i \in \tau} (1-x_i) \mid \sigma \cap \tau = \emptyset \;\;\mathrm{ and }\;\; U_\sigma \subseteq \bigcup_{i \in \tau}U_i \} \big \rangle.$$
\end{lemma}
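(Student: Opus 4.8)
\textbf{Proof proposal for Lemma~\ref{lemma:JU}.}

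The plan is to prove the two asserted facts in sequence: first that $I_\U = J_\U + \langle x_1^2-x_1,\ldots,x_n^2-x_n\rangle$, and then that the generating set of $J_\U$ as written (restricting to $\sigma\cap\tau=\emptyset$) already produces the same ideal one would get from allowing arbitrary $\sigma,\tau$ minus the Boolean relations. Throughout, write $\B = \langle x_1^2-x_1,\ldots,x_n^2-x_n\rangle$ and note $x_i^2-x_i = -x_i(1-x_i)$ in $\F_2[x_1,\ldots,x_n]$, so $\B$ is exactly the ideal generated by the elements $x_i(1-x_i)$.

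First I would establish that every Boolean relation $x_i(1-x_i)$ lies in $I_\U$: since $U_i\subseteq U_i$ always holds, taking $\sigma=\tau=\{i\}$ shows $x_\sigma\prod_{j\in\tau}(1-x_j) = x_i(1-x_i)$ is among the generators of $I_\U$. Hence $\B\subseteq I_\U$. Next, for the reverse-type containment, take any generator $g = x_\sigma\prod_{i\in\tau}(1-x_i)$ of $I_\U$ with $U_\sigma\subseteq\bigcup_{i\in\tau}U_i$, and split into cases. If $\sigma\cap\tau=\emptyset$, then $g$ is already a generator of $J_\U$. If $\sigma\cap\tau\neq\emptyset$, pick $i\in\sigma\cap\tau$; then $x_i$ divides $x_\sigma$ and $(1-x_i)$ divides $\prod_{i\in\tau}(1-x_i)$, so $g$ is a polynomial multiple of $x_i(1-x_i)\in\B$, hence $g\in\B$. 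Either way $g\in J_\U+\B$, so $I_\U\subseteq J_\U+\B$. For the opposite inclusion: $J_\U\subseteq I_\U$ because every generator of $J_\U$ is (a special case of) a generator of $I_\U$, and $\B\subseteq I_\U$ by the observation above; therefore $J_\U+\B\subseteq I_\U$. Combining, $I_\U = J_\U+\B$.

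I should also justify the remark preceding the lemma, namely that passing to the restricted index set $\sigma\cap\tau=\emptyset$ loses nothing once $\B$ is included — but this is subsumed by the case analysis just given: any ``extra'' generator with $\sigma\cap\tau\neq\emptyset$ already lies in $\B$, so it adds nothing to $J_\U+\B$. Finally, the identification $\B = \langle x_1^2-x_1,\ldots,x_n^2-x_n\rangle$ is immediate from $x_i^2-x_i = x_i(x_i-1)$ and the fact that, working over $\F_2$, this differs from $x_i(1-x_i)$ only by a unit (indeed $x_i-1 = x_i+1 = 1-x_i$ in characteristic $2$), so the two descriptions of the Boolean ideal literally coincide. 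I do not anticipate a genuine obstacle here; the only place requiring a moment's care is the case split on whether $\sigma$ and $\tau$ are disjoint, and the verification that the elements $x_i(1-x_i)$ and $x_i^2-x_i$ generate the same ideal over $\F_2$ — both routine once spelled out.
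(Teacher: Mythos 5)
Your argument is correct and matches the paper's own justification: the paper also observes that $U_i\subseteq U_i$ puts each Boolean relation $x_i(1-x_i)$ among the generators of $I_\U$, and that any generator with $\sigma\cap\tau\neq\emptyset$ is a multiple of such a Boolean relation, so splitting the generators by whether $\sigma$ and $\tau$ are disjoint gives $I_\U = J_\U+\B$ exactly as you argue. Your added remark that $x_i^2-x_i$ and $x_i(1-x_i)$ coincide over $\F_2$ is a harmless bit of extra care; no gaps.
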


\begin{proof}[Proof of Theorem~\ref{thm:ideal-equivalence}]
We will show that $J_\U=J_{\C(\U)}$ (and thus that  $I_\U = I_{\C(\U)}$) by showing that each ideal contains the
generators of the other. 

First, we show that all generating relations of $J_{\C(\U)}$ are contained in $J_\U$.  Recall that the generators of $J_{\C(\U)}$ are of the form
$$\rho_v = \prod_{i \in \supp(v)}x_i\prod_{j \notin \supp(v)}(1-x_j) \;\; \text{for} \;\; v\notin \C(\U).$$
If $\rho_v$ is a generator of $J_{\C(\U)}$, then $v \notin \C(\U)$ and this implies (by the definition of $\C(\U)$)
that $U_{\supp(v)} \subseteq \bigcup_{j \notin \supp(v)} U_j$.  Taking $\sigma = \supp(v)$ and $\tau = [n] \setminus \supp(v)$, we have $U_{\sigma} \subseteq \bigcup_{j \in \tau} U_j$ with $\sigma \cap \tau = \emptyset$.  This in turn tells us (by the definition of $J_\U$) that $x_\sigma \prod_{j \in \tau} (1-x_j)$ is a generator of $J_\U$.  Since 
$\rho_v = x_\sigma \prod_{j \in \tau} (1-x_j)$ for our choice of $\sigma$ and $\tau$, we conclude that $\rho_v \in J_\U$.  Hence, $J_{\C(\U)} \subseteq J_\U$.

Next, we show that all generating relations of $J_\U$ are contained in $J_{\C(\U)}$.
If $J_\U$ has generator  $x_\sigma\prod_{i\in \tau}(1-x_i)$, then $U_\sigma\subseteq \bigcup_{i\in \tau} U_i$ and $\sigma \cap \tau = \emptyset$.  This in turn implies that $\bigcap_{i \in \sigma} U_i \setminus \bigcup_{j \in \tau} U_j = \emptyset$, and thus (by the definition of $\C(\U)$) we have $v\notin \C(\U)$ for any $v$ such that $\supp(v)\supseteq \sigma$ and $\supp(v)\cap \tau=\emptyset$.
It follows that $J_{\C(\U)}$ contains the relation $x_{\supp(v)}\prod_{j\notin \supp(v)}(1-x_j)$ for any such $v$.  This includes all relations of the form
  $x_\sigma \prod_{j\in \tau}(1-x_j) \prod_{k\notin \sigma\cup \tau}P_k$, where $P_k\in \{x_k,1-x_k\}$.  Taking 
  $f =  x_\sigma \prod_{j\in \tau}(1-x_j)$ in Lemma~\ref{lemma:induction-arg} (below), we can conclude that $J_{\C(\U)}$ contains $x_\sigma \prod_{j\in \tau}(1-x_j)$.  Hence, $J_\U \subseteq J_{\C(\U)}$.
  \end{proof}

\begin{lemma} \label{lemma:induction-arg}
For any $f\in k[x_1,\ldots,x_n]$ and $\tau\subseteq [n]$, the ideal $\langle
\big\{f\prod_{i\in \tau} P_i \mid P_i \in \{x_i, 1-x_i\}\big\}
\rangle = \langle f\rangle.$
\end{lemma}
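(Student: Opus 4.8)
The plan is to prove the set equality $\langle\{f\prod_{i\in\tau}P_i\}\rangle=\langle f\rangle$ by induction on $|\tau|$, where one containment is trivial and the other uses the identity $x_i+(1-x_i)=1$. The containment $\langle\{f\prod_{i\in\tau}P_i\}\rangle\subseteq\langle f\rangle$ is immediate, since every generator on the left is a polynomial multiple of $f$. So the real content is the reverse containment, $f\in\langle\{f\prod_{i\in\tau}P_i\mid P_i\in\{x_i,1-x_i\}\}\rangle$.

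First I would dispose of the base case $\tau=\emptyset$: the only generator is $f$ itself (empty product equals $1$), so both sides equal $\langle f\rangle$. For the inductive step, suppose the result holds for all subsets of size $k$, and let $|\tau|=k+1$; pick any $j\in\tau$ and write $\tau'=\tau\setminus\{j\}$. The generators of the left-hand ideal split into two families according to whether the $j$-th factor is $x_j$ or $1-x_j$: namely $\{f x_j\prod_{i\in\tau'}P_i\}$ and $\{f(1-x_j)\prod_{i\in\tau'}P_i\}$, as the $P_i$ for $i\in\tau'$ range over all choices. Adding the two generators with matching choices of $\{P_i\}_{i\in\tau'}$ gives $f x_j\prod_{i\in\tau'}P_i+f(1-x_j)\prod_{i\in\tau'}P_i=f\prod_{i\in\tau'}P_i$, so the ideal on the left contains every element $f\prod_{i\in\tau'}P_i$ with $P_i\in\{x_i,1-x_i\}$. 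Hence it contains $\langle\{f\prod_{i\in\tau'}P_i\mid P_i\in\{x_i,1-x_i\}\}\rangle$, which by the inductive hypothesis equals $\langle f\rangle$. Combined with the trivial containment, this gives equality, completing the induction.

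I do not anticipate a genuine obstacle here; the only thing to be careful about is the bookkeeping in the inductive step — making sure that the generators really do come in matching pairs indexed by the same choice of $\{P_i\}_{i\in\tau'}$, so that their sum telescopes to $f\prod_{i\in\tau'}P_i$ — and that the statement is genuinely over an arbitrary coefficient ring $k[x_1,\ldots,x_n]$ (no reduction mod $2$ or Boolean relation is used, so the identity $x_j+(1-x_j)=1$ suffices). One could alternatively prove it in one shot by observing that $\prod_{i\in\tau}(x_i+(1-x_i))=1$ and expanding the product, which expresses $f$ directly as a sum of the generators; I would mention this as a remark but present the induction as the main argument since it makes the pairing explicit.
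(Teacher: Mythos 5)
Your proposal is correct and follows essentially the same route as the paper's own proof: the trivial containment $\langle\{f\prod_{i\in\tau}P_i\}\rangle\subseteq\langle f\rangle$, followed by induction on $|\tau|$, removing one index $j$ and pairing the generators $fx_j\prod_{i\in\tau'}P_i$ and $f(1-x_j)\prod_{i\in\tau'}P_i$ so their sum recovers each generator of the smaller ideal. Your closing remark about expanding $\prod_{i\in\tau}\bigl(x_i+(1-x_i)\bigr)=1$ to write $f$ directly as a sum of generators is a valid one-line alternative not taken in the paper, but the main argument you present is the same.
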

\begin{proof}

First, denote
$I_f(\tau)\stackrel{\text{def}}{=}\langle\big\{f\prod_{i\in \tau}
P_i \mid P_i \in \{x_i, 1-x_i\}\big\} \rangle $.  We wish to prove
that $I_f(\tau)=\langle f \rangle$, for any $\tau \subseteq [n]$. Clearly, $I_f(\tau)\subseteq \langle f\rangle$,
since every generator of $I_f(\tau)$ is a multiple of $f$. We will prove
$I_f(\tau)\supseteq \langle f\rangle$ by induction on $|\tau|$.

If $|\tau|=0$, then $\tau=\emptyset$ and $I_f(\tau)=\langle f\rangle$. If $|\tau|=1$, so that $\tau=\{i\}$ for some
$i\in [n]$, then $I_f(\tau)=\langle f(1-x_i), fx_i\rangle$.  Note that
 $f(1-x_i)+fx_i=f$, so $f\in I_f(\tau)$, and thus
 $I_f(\tau)\supseteq \langle f\rangle$.

 Now, assume that for some $\ell \geq 1$ we have $I_f(\sigma)\supseteq \langle f\rangle$ for any $\sigma\subseteq [n]$
 with $|\sigma| \leq \ell$.  If $\ell \geq n$, we are done, so we need only show that if $\ell < n$, 
 then $I_f(\tau)\supseteq \langle f\rangle$ for any $\tau$ of size $\ell + 1$.
 Consider $\tau\subseteq [n]$ with $|\tau|=\ell+1$, and let $j \in \tau$ be
 any element. Define $\tau'=\tau\backslash
 \{j\}$, and note that $|\tau'|=\ell$. By our inductive assumption,
 $I_f(\tau')\supseteq\langle f\rangle$.  We will show that 
 $I_f(\tau)\supseteq I_f(\tau')$, and hence $I_f(\tau)\supseteq
 \langle f\rangle$.

 Let $g = f\prod_{i\in \tau'}P_i$ be any generator of
 $I_f(\tau')$ and observe that both $f(1-x_j)\prod_{i\in \tau'}P_i$ and
 $f x_j\prod_{i\in \tau'}P_i$ are both generators of $I_f(\tau)$.  It follows that 
 their sum, $g$, is also in $I_f(\tau)$, and hence $g \in I_f(\tau)$ 
 for any generator $g$ of  $I_f(\tau')$. We conclude that $I_f(\tau)\supseteq I_f(\tau')$, as desired.
\end{proof}

\section{Interpreting neural ring relations as receptive field relationships}\label{sec:types}

Theorem~\ref{thm:ideal-equivalence} suggests that we can interpret elements of $I_\C$ in terms of relationships between receptive fields.  

\begin{lemma}\label{lemma:corresp}
Let $\C \subset \{0,1\}^n$ be a neural code, and let $\U = \{U_1,\ldots,U_n\}$ be any collection of open sets (not necessarily convex) in a stimulus space $X$ 
such that $\C = \C(\U)$. 
Then, for any pair of subsets $\sigma,\tau \subset [n]$,
$$x_\sigma \prod_{i \in \tau} (1-x_i) \in I_\C \; \Leftrightarrow \; U_\sigma \subseteq \bigcup_{i \in \tau} U_i.$$
\end{lemma}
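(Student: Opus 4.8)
The plan is to read both implications directly off the definitions, using Theorem~\ref{thm:ideal-equivalence} to identify $I_\C$ with $I_\U$. Recall that
$$I_\U = \big\langle \big\{ x_\sigma \prod_{i\in\tau}(1-x_i) \mid U_\sigma \subseteq \bigcup_{i\in\tau} U_i \big\}\big\rangle,$$
and that Theorem~\ref{thm:ideal-equivalence} gives $I_\U = I_{\C(\U)} = I_\C$ whenever $\C = \C(\U)$. The backward implication is then essentially free: if $U_\sigma \subseteq \bigcup_{i\in\tau}U_i$, then $x_\sigma\prod_{i\in\tau}(1-x_i)$ is by definition one of the generators of $I_\U$, hence lies in $I_\U = I_\C$. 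Note that no hypothesis on $\sigma\cap\tau$ is needed here.

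For the forward implication I would argue the contrapositive directly from $I_\C = I(\C)$ and the definition of $\C(\U)$. Suppose $U_\sigma \not\subseteq \bigcup_{i\in\tau}U_i$; then there is a point $p \in X$ with $p \in U_i$ for all $i\in\sigma$ and $p\notin U_j$ for all $j\in\tau$. Let $c = (x_1(p),\ldots,x_n(p)) \in \{0,1\}^n$ be the codeword read off at $p$, so $\supp(c) = \{i \mid p\in U_i\}$. Then $p \in \big(\bigcap_{i\in\supp(c)}U_i\big)\setminus\big(\bigcup_{j\notin\supp(c)}U_j\big)$, which is therefore nonempty, so $c \in \C(\U) = \C$. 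By construction $c_i = 1$ for all $i\in\sigma$ and $c_j = 0$ for all $j\in\tau$, hence
$$\Big(x_\sigma\prod_{i\in\tau}(1-x_i)\Big)(c) = \prod_{i\in\sigma}c_i \cdot \prod_{j\in\tau}(1-c_j) = 1 \neq 0.$$
So $x_\sigma\prod_{i\in\tau}(1-x_i)$ does not vanish on all of $\C$ and is therefore not in $I_\C = I(\C)$. Contrapositively, $x_\sigma\prod_{i\in\tau}(1-x_i)\in I_\C$ forces $U_\sigma\subseteq\bigcup_{i\in\tau}U_i$.

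This argument has no real obstacle; the one place to be slightly careful is the claim that the pattern witnessed by $p$ is a genuine codeword of $\C(\U)$, which is just unwinding the definition of $\C(\U)$ as above. It is also worth recording the degenerate case $\sigma\cap\tau\neq\emptyset$ separately: there the left-hand side holds because $x_\sigma\prod_{i\in\tau}(1-x_i)$ is a multiple of some Boolean relation $x_i(1-x_i)$ with $i\in\sigma\cap\tau$ and hence lies in $\B\subseteq I_\C$, while the right-hand side holds since $U_\sigma\subseteq U_i\subseteq\bigcup_{j\in\tau}U_j$; so the equivalence is trivially true in that case and the substantive content is $\sigma\cap\tau=\emptyset$. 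Finally, Theorem~\ref{thm:ideal-equivalence} is not strictly necessary: the backward implication can also be obtained by the same evaluation method, since for any $c\in\C$ either $c_i=0$ for some $i\in\sigma$ (so $x_\sigma(c)=0$), or $c_i=1$ for all $i\in\sigma$, in which case the point realizing $c$ lies in $U_\sigma\subseteq\bigcup_{j\in\tau}U_j$, forcing $c_j=1$ for some $j\in\tau$ and $\prod_{j\in\tau}(1-c_j)=0$; either way the polynomial vanishes on $c$, so it lies in $I(\C)=I_\C$.
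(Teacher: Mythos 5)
Your proof is correct and follows essentially the same route as the paper: the backward direction is read off from Theorem~\ref{thm:ideal-equivalence} exactly as in the text, and your contrapositive argument for the forward direction is just the paper's argument rephrased (a point of $U_\sigma\setminus\bigcup_{j\in\tau}U_j$ would yield a codeword of $\C(\U)$ on which the pseudo-monomial evaluates to $1$, contradicting membership in $I_\C=I(\C)$). The only cosmetic differences are that the contrapositive phrasing lets you avoid the paper's explicit case split on $\sigma\cap\tau$, and your closing remark that $(\Leftarrow)$ also follows by direct evaluation without invoking Theorem~\ref{thm:ideal-equivalence}.
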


\begin{proof}  ($\Leftarrow$) This is a direct consequence of Theorem~\ref{thm:ideal-equivalence}.
($\Rightarrow$)  We distinguish two cases, based on whether or not $\sigma$ and $\tau$ intersect.
If $x_\sigma \prod_{i \in \tau} (1-x_i) \in I_\C$ and $\sigma \cap \tau \neq \emptyset$, then $x_\sigma \prod_{i \in \tau} (1-x_i) \in \B$, where 
$\B = \langle \{x_i(1-x_i) \mid i \in [n]\} \rangle$ is the ideal generated by the Boolean relations.  Consequently, the relation does not give us any information about the code,
and $U_\sigma \subseteq \bigcup_{i \in \tau} U_i$ follows trivially from the observation that  $ U_i \subseteq U_i$ for any $i \in \sigma \cap \tau$.
If, on the other hand, $x_\sigma \prod_{i \in \tau} (1-x_i) \in I_\C$ and $\sigma \cap \tau = \emptyset$, then $\rho_v \in I_\C$ for each $v \in \{0,1\}^n$ such that $\supp(v) \supseteq \sigma$
and $\supp(v) \cap \tau = \emptyset$.  Since $\rho_v(v) = 1$, it follows that $v \notin \C$ for any $v$ with  $\supp(v) \supseteq \sigma$
and $\supp(v) \cap \tau = \emptyset$.  To see this,
recall from the original definition of $I_\C$ that for all $c \in \C$, $f(c) = 0$ for any $f \in I_\C$; it follows
that $\rho_v(c) = 0$ for all $c \in \C$.  Because $\C = \C(\U)$, the fact that $v \notin \C$ for any $v$ such that $\supp(v) \supseteq \sigma$
and $\supp(v) \cap \tau = \emptyset$ implies 
$\bigcap_{i \in \sigma} U_i \setminus \bigcup_{j \in \tau} U_j = \emptyset.$
We can thus conclude that $U_\sigma \subseteq \bigcup_{j \in \tau} U_j.$
\end{proof}

\noindent Lemma~\ref{lemma:corresp} allows us to extract RF structure from the different types of relations that appear in $I_\C$:
\begin{itemize}
\item Boolean relations: $\{x_i(1-x_i)\}$.  The relation
$x_i(1-x_i)$ corresponds to $U_i \subseteq U_i$, which does not contain any information about the code $\C$.
\item Type 1 relations: $\{x_\sigma\}$.  The relation $x_\sigma$ corresponds to $U_\sigma = \emptyset$.
\item Type 2 relations: $\big\{ x_\sigma \prod_{i \in \tau} (1-x_i) \mid \sigma,\tau \neq \emptyset, \;\sigma \cap \tau = \emptyset, \; U_\sigma \neq \emptyset \text{ and } 
\bigcup_{i \in \tau} U_i \neq X \big \}$.  \\The relation $x_\sigma \prod_{i \in \tau} (1-x_i)$
corresponds to $U_\sigma \subseteq \bigcup_{i \in \tau} U_i$.
\item Type 3 relations: $\big\{\prod_{i \in \tau} (1-x_i)\big\}$. The relation $\prod_{i \in \tau} (1-x_i)$ corresponds to $X \subseteq \bigcup_{i \in \tau} U_i$.
\end{itemize}
The somewhat complicated requirements on the Type 2 relations ensure that they do not include polynomials that are multiples of Type 1, Type 3, or Boolean relations.  Note that the constant polynomial $1$ may appear as both a Type 1 and a Type 3 relation, but only if $X = \emptyset$.  The four types of relations listed above are otherwise disjoint.   Type 3 relations only appear if $X$ is fully covered by the receptive fields, and there is thus no all-zeros codeword corresponding to an ``outside'' point.

Not all elements of $I_\C$ are one of the above types, of course, but we will see that these are sufficient to generate $I_\C$.  This follows from the observation (see Lemma~\ref{lemma:gen-types}) that the neural ideal $J_\C$ is generated by the Type 1, Type 2 and Type 3 relations, and recalling that $I_\C$ is obtained from $J_\C$ be adding in the Boolean relations (Lemma~\ref{lemma:explicit-relations}).
At the same time, not all of these relations are necessary to generate the neural ideal.
Can we eliminate redundant relations to come up with
a ``minimal'' list of generators for $J_\C$, and hence $I_\C$, 
that captures the essential RF structure of the code?
This is the goal of the next section.

\section{Pseudo-monomials \& a canonical form for the neural ideal} \label{sec:canonical-form}

The Type 1, Type 2, and Type 3 relations are all products of linear terms of the form $x_i$ and $1-x_i$, and are thus very similar to monomials.  By analogy with square-free monomials and square-free monomial ideals \cite{MillerSturmfels}, we define the notions of pseudo-monomials and pseudo-monomial ideals.
Note that we do not allow repeated indices in our definition of pseudo-monomial, so the Boolean relations are explicitly excluded.

\begin{definition}
If $f \in \F_2[x_1,\ldots,x_n]$ has the form $f = \prod_{i \in \sigma} x_i \prod_{j\in \tau} (1-x_j)$ for some $\sigma,\tau \subset [n]$ with $\sigma\cap \tau=\emptyset$, then we say that $f$ is a {\em pseudo-monomial}.  
\end{definition}

\begin{definition}
An ideal $J \subset \F_2[x_1,\ldots,x_n]$ is a {\em pseudo-monomial ideal} if $J$ can be generated by a finite set of pseudo-monomials.  
\end{definition}

\begin{definition}
Let $J \subset \F_2[x_1,\ldots,x_n]$ be an ideal, and $f \in J$ a pseudo-monomial.  We say that $f$ is a {\em minimal} pseudo-monomial of $J$ if there does not exist another pseudo-monomial $g \in J$ with $\deg(g) < \deg(f)$ such that $f = hg$ for some $h \in \F_2[x_1,\ldots,x_n]$.
\end{definition}

\noindent By considering the set of \textit{all} minimal pseudo-monomials in a pseudo-monomial ideal $J$, we obtain a unique and compact description of $J$, which we call the ``canonical form'' of $J$.

\begin{definition}
We say that a pseudo-monomial ideal $J$ is in {\em canonical form} if we present it as $J = \langle f_1,\ldots,f_l \rangle$, where the set $CF(J) \od \{f_1, \ldots, f_l\}$ is the set of \textit{all} minimal pseudo-monomials of $J$.  Equivalently, we refer to $CF(J)$ as the {\em canonical form} of $J$.  
\end{definition}

\noindent Clearly, for any pseudo-monomial ideal $J \subset \F_2[x_1,\ldots,x_n]$, $CF(J)$ is unique and $J = \langle CF(J) \rangle$.  On the other hand, it is important to keep in mind that although $CF(J)$ consists of minimal pseudo-monomials, it is not necessarily a minimal set of generators for $J$.  To see why, consider the pseudo-monomial ideal $J = \langle x_1(1-x_2), x_2(1-x_3) \rangle.$ 
This ideal in fact contains a third minimal pseudo-monomial:
$x_1(1-x_3)=(1-x_3)\cdot[x_1(1-x_2)]+x_1\cdot[x_2(1-x_3)].$
It follows that $CF(J) = \{ x_1(1-x_2), x_2(1-x_3), x_1(1-x_3) \}$, but clearly we can remove $x_1(1-x_3)$ from this set and still generate $J$.

For any code $\C$, the neural ideal $J_\C$ is a pseudo-monomial ideal because $J_\C = \langle \{\rho_v \mid v \notin \C\}\rangle$, and each of the $\rho_v$s is a pseudo-monomial.  (In contrast, $I_\C$ is rarely a pseudo-monomial ideal, because it is typically necessary to include the Boolean relations as generators.)
Theorem~\ref{thm:canonical-form} describes the canonical form of $J_\C$.
In what follows, we say that $\sigma \subseteq [n]$ is \textit{minimal with respect to} property $P$ if $\sigma$ satisfies $P$, but $P$ is not satisfied for any $\tau \subsetneq \sigma$.  For example, if $U_\sigma = \emptyset$ and for all $\tau \subsetneq \sigma$ we have $U_\tau \neq \emptyset$, then we say that ``$\sigma$ is minimal w.r.t. $U_\sigma = \emptyset$.''

\begin{theorem}\label{thm:canonical-form}
Let $\C \subset \{0,1\}^n$ be a neural code, and let $\U = \{U_1,\ldots,U_n\}$ be any collection of open sets (not necessarily convex) in a nonempty stimulus space $X$ such that $\C = \C(\U)$.   The canonical form of $J_\C$ is:
\begin{eqnarray*}
J_{\C}  &=& \big \langle \big\{x_\sigma \mid \sigma \text{ is minimal w.r.t. } U_\sigma = \emptyset \big\},\\
&& \big\{x_\sigma \prod_{i\in \tau} (1-x_i) \mid \sigma,\tau \neq \emptyset,\; \sigma \cap \tau =\emptyset,\; U_\sigma \neq \emptyset, \; \bigcup_{i \in \tau} U_i \neq X , \text{ and } \sigma, \tau 
\text{ are each minimal }\\ && \text{ w.r.t. } U_\sigma\subseteq \bigcup_{i\in \tau}U_i  \big\},
 \big\{\prod_{i \in \tau} (1-x_i) \mid \tau \text{ is minimal w.r.t. } X \subseteq \bigcup_{i\in \tau} U_i\big\}  \big \rangle.
\end{eqnarray*}
We call the above three (disjoint) sets of relations comprising $CF(J_\C)$ the minimal Type 1 relations, the minimal Type 2 relations, and the minimal Type 3 relations, respectively.
\end{theorem}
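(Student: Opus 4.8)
The plan is to show that the three listed sets of pseudo-monomials are exactly the minimal pseudo-monomials of $J_\C$, using Lemma~\ref{lemma:corresp} to translate membership of a pseudo-monomial in $J_\C = I_\C \cap (\text{pseudo-monomials, modulo Boolean relations})$ into a receptive-field containment $U_\sigma \subseteq \bigcup_{i \in \tau} U_i$. The first step is to establish what pseudo-monomials lie in $J_\C$ at all. Any pseudo-monomial $f = x_\sigma \prod_{i \in \tau}(1-x_i)$ with $\sigma \cap \tau = \emptyset$ satisfies $f \in I_\C$ iff $U_\sigma \subseteq \bigcup_{i\in\tau} U_i$ (Lemma~\ref{lemma:corresp}); and since such an $f$ has no repeated indices it lies in $\B$ only if $\sigma \cap \tau \neq \emptyset$, so in fact $f \in J_\C \iff f \in I_\C \iff U_\sigma \subseteq \bigcup_{i\in\tau}U_i$. (Here one uses $I_\C = J_\C + \B$ from Lemma~\ref{lemma:explicit-relations} together with the fact that $J_\C$ is a pseudo-monomial ideal and $\B$ the Boolean ideal, so the pseudo-monomials in $I_\C$ not divisible by any $x_i^2 - x_i$ are precisely those in $J_\C$; one should also note the edge case $\sigma = \tau = \emptyset$, i.e. $1 \in J_\C$, which cannot happen since $X \neq \emptyset$.) This identifies the pseudo-monomials of $J_\C$ with the containments $U_\sigma \subseteq \bigcup_{i\in\tau}U_i$, $\sigma \cap \tau = \emptyset$.

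Next I would characterize the divisibility partial order on these pseudo-monomials. For pseudo-monomials $f = x_\sigma \prod_{i\in\tau}(1-x_i)$ and $g = x_{\sigma'}\prod_{i\in\tau'}(1-x_i)$ with $\sigma \cap \tau = \sigma' \cap \tau' = \emptyset$, one checks that $g \mid f$ (i.e. $f = hg$ for some $h \in \F_2[x_1,\dots,x_n]$) iff $\sigma' \subseteq \sigma$ and $\tau' \subseteq \tau$ — this is the analogue of divisibility of squarefree monomials and follows because the linear forms $x_i$, $1 - x_j$ involved are distinct irreducibles in the UFD $\F_2[x_1,\dots,x_n]$. Hence $f$ is a minimal pseudo-monomial of $J_\C$ exactly when the pair $(\sigma,\tau)$ is minimal (under componentwise inclusion) among all pairs with $\sigma \cap \tau = \emptyset$ and $U_\sigma \subseteq \bigcup_{i\in\tau}U_i$. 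The remaining work is to argue that minimality of the \emph{pair} $(\sigma,\tau)$ is equivalent to the separate minimality conditions stated in the theorem, split by the cases $\tau = \emptyset$ (Type 1: minimality of $\sigma$ w.r.t. $U_\sigma = \emptyset$), $\sigma = \emptyset$ (Type 3: minimality of $\tau$ w.r.t. $X \subseteq \bigcup_{i\in\tau}U_i$, using $U_\emptyset = X$), and $\sigma,\tau$ both nonempty (Type 2). For Type 1 and Type 3 the equivalence is immediate since one component is empty and fixed. For Type 2 the content is: if $U_\sigma \subseteq \bigcup_{i\in\tau} U_i$ with $\sigma,\tau$ both nonempty and disjoint, $U_\sigma \neq \emptyset$, $\bigcup_{i\in\tau} U_i \neq X$, then "$(\sigma,\tau)$ minimal as a pair" is equivalent to "$\sigma$ is minimal w.r.t. $U_\sigma \subseteq \bigcup_{i\in\tau}U_i$ (for this fixed $\tau$) \emph{and} $\tau$ is minimal w.r.t. $U_\sigma \subseteq \bigcup_{i\in\tau}U_i$ (for this fixed $\sigma$)." The forward direction is trivial; the reverse requires showing that if one can shrink $\sigma$ to $\sigma'$ \emph{and} separately shrink $\tau$ to $\tau'$ while keeping containment, one can actually shrink to a pair simultaneously — and this needs care because shrinking $\sigma$ could in principle require a different $\tau$.

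I expect this last point — closure of the containment relation under "simultaneous shrinking" — to be the main obstacle, and the key observation resolving it is monotonicity: $U_\sigma \subseteq \bigcup_{i\in\tau}U_i$ is preserved under \emph{enlarging} $\tau$ and under \emph{shrinking} $\sigma$ on the left? No — the subtlety is the opposite: enlarging $\sigma$ shrinks $U_\sigma$ and makes the containment easier, while the theorem's Type 2 conditions ask for $\sigma$ minimal. The correct monotonicity is: $U_\sigma \subseteq \bigcup_{i\in\tau} U_i$ and $\sigma \subseteq \sigma''$ imply $U_{\sigma''} \subseteq \bigcup_{i\in\tau} U_i$ (since $U_{\sigma''} \subseteq U_\sigma$); and $U_\sigma \subseteq \bigcup_{i\in\tau} U_i$ and $\tau \subseteq \tau''$ imply $U_\sigma \subseteq \bigcup_{i\in\tau''} U_i$. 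So the set of valid pairs is an up-set in the product order, its minimal elements are well-defined, and "minimal as a pair" $\Rightarrow$ "cannot shrink $\sigma$ keeping $\tau$" and "cannot shrink $\tau$ keeping $\sigma$" — that direction is what I actually need for showing the theorem's list contains $CF(J_\C)$, while the converse direction (theorem's list $\subseteq$ minimal pseudo-monomials) is what needs the simultaneous-shrinking argument. For that converse: suppose $(\sigma,\tau)$ satisfies the Type 2 conditions but is \emph{not} minimal as a pair, so there is $(\sigma',\tau') \subsetneq (\sigma,\tau)$ with $U_{\sigma'} \subseteq \bigcup_{i\in\tau'}U_i$; then by monotonicity $U_{\sigma'} \subseteq \bigcup_{i\in\tau'}U_i \subseteq \bigcup_{i\in\tau}U_i$, so $\sigma'$ witnesses non-minimality of $\sigma$ w.r.t. $\tau$ unless $\sigma' = \sigma$, and symmetrically $U_\sigma \subseteq U_{\sigma'}$? no, $\sigma' \subseteq \sigma$ gives $U_\sigma \subseteq U_{\sigma'} \subseteq \bigcup_{i\in\tau'}U_i$, so $\tau'$ witnesses non-minimality of $\tau$ w.r.t. $\sigma$ unless $\tau' = \tau$ — but $(\sigma',\tau') \subsetneq (\sigma,\tau)$ means at least one containment is strict, contradiction. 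This clean argument dissolves the apparent obstacle. I would also need to check the side conditions ($U_\sigma \neq \emptyset$ rules out Type 2 relations that are multiples of Type 1; $\bigcup_{i\in\tau}U_i \neq X$ rules out multiples of Type 3) are exactly what is needed for the three sets to be disjoint and for each listed relation to genuinely be minimal rather than divisible by a relation of another type — a short check using the divisibility criterion from step two.
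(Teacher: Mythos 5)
Your proposal is correct, and it reaches the same destination as the paper, but it is organized differently enough to be worth comparing. The paper first proves that the listed relations generate $J_\C$ by leaning on the earlier machinery (Lemma~\ref{lemma:JU}, Theorem~\ref{thm:ideal-equivalence}, and Lemma~\ref{lemma:gen-types}, followed by the three generator-reduction steps), and only then checks that these generators are minimal pseudo-monomials and that no others exist. You instead work directly with the poset of all pseudo-monomials of $J_\C$: membership is translated into containments $U_\sigma \subseteq \bigcup_{i\in\tau}U_i$, divisibility is identified with componentwise inclusion of $(\sigma,\tau)$ via unique factorization into the non-associate linear irreducibles $x_i$, $1-x_j$, and $CF(J_\C)$ is read off as the set of minimal elements of the resulting up-set; generation then comes for free from the general fact (noted after the definition of canonical form) that a pseudo-monomial ideal is generated by its minimal pseudo-monomials. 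What your route buys is an explicit resolution of the point the paper treats tersely, namely that minimality of the \emph{pair} $(\sigma,\tau)$ is equivalent to the separate minimality of $\sigma$ and of $\tau$ in the theorem statement; your monotonicity argument for the ``simultaneous shrinking'' direction is exactly the missing bookkeeping, and it also handles cross-type divisibility (a Type~2 candidate divisible by a smaller Type~1 or Type~3 relation) uniformly. One caveat: your justification that a pseudo-monomial $f$ with $\sigma\cap\tau=\emptyset$ and $f\in I_\C$ already lies in $J_\C$ is shaky as written --- membership in $I_\C = J_\C + \B$ does not split across the sum (the paper itself warns of this failure for pseudo-monomial ideals in Chapter~6), and ``not divisible by $x_i^2-x_i$'' is automatic for any pseudo-monomial, so that phrase carries no force. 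The fact you need is true, but the clean way to get it is to note that such an $f$ with $U_\sigma \subseteq \bigcup_{i\in\tau}U_i$ is literally one of the defining generators of $J_\U$ in Lemma~\ref{lemma:JU}, and $J_\U = J_{\C(\U)} = J_\C$ by Theorem~\ref{thm:ideal-equivalence} (Lemma~\ref{lemma:corresp} alone only gives $I_\C$-membership). With that one-line repair, your argument is complete.
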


\noindent The proof is given in Section~\ref{sec:proof2}.
Note that, because of the uniqueness of the canonical form, if we are given $CF(J_\C)$ then Theorem~\ref{thm:canonical-form} allows us to
read off the corresponding (minimal) relationships that must be satisfied by any receptive field representation 
of the code as $\C = \C(\U)$:

\begin{itemize}
\item Type 1: $x_\sigma \in CF(J_\C)$ implies that $U_\sigma = \emptyset$, but all lower-order intersections $U_\gamma$ with $\gamma \subsetneq \sigma$ are non-empty.
\item Type 2: $x_\sigma \prod_{i\in \tau} (1-x_i)\in CF(J_\C)$ implies that $U_\sigma\subseteq \bigcup_{i\in \tau}U_i$, but
no lower-order intersection is contained in $\bigcup_{i\in \tau}U_i$, and all the $U_i$s are necessary for $U_\sigma\subseteq \bigcup_{i\in \tau}U_i$.
\item Type 3: $\prod_{i \in \tau} (1-x_i) \in CF(J_\C)$ implies that $X \subseteq \bigcup_{i \in \tau} U_i,$ but $X$ is not contained in any lower-order union $\bigcup_{i \in \gamma} U_i$ for $\gamma \subsetneq \tau$.
\end{itemize}
The canonical form $CF(J_\C)$ thus provides a minimal description of the RF structure dictated by the code $\C$.

The Type 1 relations in $CF(J_\C)$ can be used to obtain a (crude) lower bound on the minimal embedding dimension of the neural code, as defined in Section~\ref{sec:beyond}.
Recall Helly's theorem (Section~\ref{sec:helly-nerve}), and observe that if $x_\sigma \in CF(J_\C)$ then $\sigma$ is minimal with respect to $U_\sigma=\emptyset$; this in turn implies that $|\sigma|\leq d+1$.   (If $|\sigma| > d+1$, by minimality all $d+1$ subsets intersect and by Helly's theorem we must have $U_\sigma \neq \emptyset.$)
We can thus obtain a lower bound on the minimal embedding dimension $d$ as 
$$d \geq \max_{\{\sigma \mid x_\sigma \in CF(J_\C)\}} |\sigma|-1,$$ 
where the maximum is taken over all $\sigma$ such that $x_\sigma$ is a Type 1 relation in $CF(J_\C)$.  
This bound only depends on $\Delta(\C)$, however, and
does not provide any insight regarding the different minimal embedding dimensions observed in the examples of Figure 3.  
These codes have no Type 1 relations in their canonical forms, but they are nicely differentiated by their minimal Type 2 and Type 3 relations.
From the receptive field arrangements depicted in Figure 3, we can easily write down $CF(J_\C)$ for each of these codes.
\begin{itemize}
\item[A.] $CF(J_\C) = \{0\}.$  There are no relations here because $\C = 2^{[3]}$.
\item[B.]  $CF(J_\C) = \{1-x_3\}.$  This Type 3 relation reflects the fact that $X = U_3$.
\item[C.] $CF(J_\C) = \{x_1(1-x_2), x_2(1-x_3), x_1(1-x_3)\}.$  These Type 2 relations correspond to $U_1 \subset U_2$, $U_2 \subset U_3$, and $U_1 \subset U_3$.
Note that the first two of these receptive field relationships imply the third; correspondingly, the third canonical form relation satisfies:
$x_1(1-x_3)=(1-x_3)\cdot[x_1(1-x_2)]+x_1\cdot[x_2(1-x_3)].$
\item[D.] $CF(J_\C) = \{(1-x_1)(1-x_2)\}.$ This Type 3 relation reflects $X = U_1 \cup U_2$, and implies $U_3 \subset U_1 \cup U_2$.
\end{itemize}

\section{Proof of Theorem~\ref{thm:canonical-form}}\label{sec:proof2}

We begin by showing that $J_\U,$ first defined in Lemma~\ref{lemma:JU}, can be generated using the Type 1, Type 2 and Type 3 relations introduced in Section~\ref{sec:types}.   From the proof of Theorem~\ref{thm:ideal-equivalence},
we know that $J_\U = J_{\C(\U)},$ so the following lemma in fact shows that $J_{\C(\U)}$ is generated by the Type 1, 2 and 3 relations as well.

\begin{lemma} \label{lemma:gen-types}
For $\U = \{U_1,\ldots, U_n\}$ a collection of sets in a stimulus space $X$,
\begin{eqnarray*}
J_{\U}  &=& \big \langle \{x_\sigma \mid U_\sigma = \emptyset \}, \big\{\prod_{i \in \tau} (1-x_i) \mid X \subseteq \bigcup_{i\in \tau} U_i\big\},
\\
&& \big\{x_\sigma \prod_{i\in \tau} (1-x_i) \mid \sigma,\tau \neq \emptyset,\; \sigma \cap \tau =\emptyset,\; U_\sigma \neq \emptyset,\; \bigcup_{i \in \tau} U_i \neq X ,  \text{ and } U_\sigma\subseteq \bigcup_{i\in \tau}U_i  \big\} \big \rangle.
\end{eqnarray*}
$J_\U$ (equivalently, $J_{\C(\U)}$) is thus generated by the Type 1, Type 3 and Type 2 relations, respectively.
\end{lemma}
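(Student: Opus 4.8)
The plan is to show that the ideal on the right-hand side — call it $\widetilde{J}_\U$ — coincides with $J_\U$ as defined in Lemma~\ref{lemma:JU}. The inclusion $\widetilde{J}_\U \subseteq J_\U$ is immediate: every listed generator is of the form $x_\sigma\prod_{i\in\tau}(1-x_i)$ with $\sigma\cap\tau=\emptyset$ and $U_\sigma\subseteq\bigcup_{i\in\tau}U_i$ (the Type~1 case being $\tau=\emptyset$, the Type~3 case being $\sigma=\emptyset$, and the pure Boolean generators $x_i(1-x_i)$ — which appear in $J_\U$ when $\sigma\cap\tau\ne\emptyset$ — are not actually among the $J_\U$ generators once we restrict to $\sigma\cap\tau=\emptyset$ as in Lemma~\ref{lemma:JU}, so in fact $J_\U$'s generators are literally a subset of $\widetilde J_\U$'s). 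So the content is the reverse inclusion $J_\U \subseteq \widetilde{J}_\U$.

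For that, I take an arbitrary generator $g = x_\sigma\prod_{i\in\tau}(1-x_i)$ of $J_\U$, so $\sigma\cap\tau=\emptyset$ and $U_\sigma\subseteq\bigcup_{i\in\tau}U_i$, and I must exhibit $g\in\widetilde J_\U$. I would split into cases exactly matching the three types. First, if $U_\sigma=\emptyset$, then $x_\sigma$ is a Type~1 generator of $\widetilde J_\U$, and $g = \big(\prod_{i\in\tau}(1-x_i)\big)\,x_\sigma$ is a multiple of it, hence in $\widetilde J_\U$. Second, if $U_\sigma\ne\emptyset$ but $\bigcup_{i\in\tau}U_i = X$ (equivalently, noting $U_\sigma\subseteq X$ automatically, this is the degenerate inclusion), then $X\subseteq\bigcup_{i\in\tau}U_i$, so $\prod_{i\in\tau}(1-x_i)$ is a Type~3 generator and $g = x_\sigma\prod_{i\in\tau}(1-x_i)$ is a multiple of it. Third, in the remaining case $U_\sigma\ne\emptyset$, $\bigcup_{i\in\tau}U_i\ne X$, and also $\sigma\ne\emptyset$ (else we are in the Type~3 situation just handled, since $U_\emptyset=X\subseteq\bigcup_\tau U_i$ forces $\bigcup_\tau U_i=X$) and $\tau\ne\emptyset$ (else $g=x_\sigma$ and $U_\sigma=\emptyset$, the Type~1 case): then $g$ is literally a Type~2 generator of $\widetilde J_\U$. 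In all cases $g\in\widetilde J_\U$, so $J_\U\subseteq\widetilde J_\U$, completing the proof.

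I do not anticipate a serious obstacle here; the lemma is essentially a bookkeeping statement that sorts the defining generators of $J_\U$ into the three mutually exclusive types according to whether $U_\sigma=\emptyset$, whether the union already equals $X$, and whether $\sigma$ or $\tau$ is empty. The only point requiring a little care is making the case split genuinely exhaustive and the three resulting generating sets genuinely disjoint — in particular checking that when $\sigma=\emptyset$ one always lands in the Type~3 bucket (because $U_\emptyset=X$ and $U_\sigma\subseteq\bigcup_\tau U_i$ then forces $\bigcup_\tau U_i=X$) and that when $\tau=\emptyset$ one always lands in Type~1 (because $U_\sigma\subseteq\bigcup_{i\in\emptyset}U_i=\emptyset$). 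The Boolean relations play no role because the $\sigma\cap\tau=\emptyset$ restriction in the definition of $J_\U$ already excludes them, which is precisely why the statement is about $J_\U$ rather than $I_\U$.
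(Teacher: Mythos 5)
Your proof is correct and takes essentially the same route as the paper: both arguments simply sort the defining generators of $J_\U$ (from Lemma~\ref{lemma:JU}) into the three types according to whether $U_\sigma=\emptyset$, whether $\bigcup_{i\in\tau}U_i=X$, and whether $\sigma$ or $\tau$ is empty, observing that every discarded generator is a multiple of a retained one. The only blemish is the parenthetical claim that ``$J_\U$'s generators are literally a subset of $\widetilde{J}_\U$'s,'' which has the inclusion backwards (a generator such as $x_\sigma\prod_{i\in\tau}(1-x_i)$ with $U_\sigma=\emptyset$ and $\tau\neq\emptyset$ is a generator of $J_\U$ but only a multiple of a Type~1 generator of $\widetilde{J}_\U$); this aside plays no role, since your second paragraph handles exactly these multiples correctly.
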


\begin{proof} Recall that in Lemma~\ref{lemma:JU} we defined $J_\U$ as:
$$J_\U \od  \big \langle \{ x_\sigma \prod_{i \in \tau} (1-x_i) \mid \sigma \cap \tau = \emptyset \;\;\mathrm{ and }\;\; U_\sigma \subseteq \bigcup_{i \in \tau}U_i \} \big \rangle.$$
Observe that if $U_\sigma = \emptyset$, then we can take $\tau = \emptyset$ to
obtain the Type 1 relation $x_\sigma$, where we have used the fact that $\prod_{i \in \emptyset}(1-x_i) = 1$.  Any other relation with $U_\sigma = \emptyset$ and $\tau \neq \emptyset$ would be a multiple of $x_\sigma$.  We can thus write:
$$J_\U = \big \langle \{x_\sigma \mid U_\sigma = \emptyset\}, \{ x_\sigma \prod_{i \in \tau} (1-x_i) \mid  \tau \neq \emptyset, \; \sigma \cap \tau = \emptyset, \;U_\sigma \neq \emptyset, \;\mathrm{ and }\;\; U_\sigma \subseteq \bigcup_{i \in \tau}U_i \} \big \rangle.$$
Next, if $\sigma = \emptyset$ in the second set of relations above, then we have the relation
 $\prod_{i \in \tau} (1-x_i)$ with
$U_\emptyset = X \subseteq \bigcup_{i \in \tau}U_i.$  Splitting off these Type 3 relations, and removing multiples of them that occur if $\bigcup_{i \in \tau} U_i = X $, we obtain the desired result.
\end{proof}

Next, we show that $J_\U$ can be generated by reduced sets of  the Type 1, Type 2 and Type 3 relations given above.
First, consider the Type 1 relations in Lemma~\ref{lemma:gen-types}, and observe that if $\tau\subseteq \sigma$, then $x_\sigma$ is a multiple of $x_\tau$.  We can thus reduce the set of Type 1 generators needed by taking only those corresponding to minimal $\sigma$ with $U_\sigma=\emptyset$:
$$\langle \{x_\sigma \mid U_\sigma = \emptyset\}\rangle = \langle \{x_\sigma \mid \sigma \text{ is minimal w.r.t. } U_\sigma = \emptyset\}\rangle.$$
Similarly, we find for the Type 3 relations:
$$\big\langle \big\{\prod_{i \in \tau} (1-x_i) \mid X \subseteq \bigcup_{i\in \tau} U_i\big\}\big \rangle = 
\big\langle \big\{\prod_{i \in \tau} (1-x_i) \mid \tau  \text{ is minimal w.r.t. } X \subseteq \bigcup_{i\in \tau} U_i\big\}\big \rangle. $$
Finally, we reduce the Type 2 generators.  If $\rho\subseteq \sigma$ and $x_\rho\prod_{i\in \tau} (1-x_i) \in J_\U$, then we also have $x_\sigma\prod_{i\in \tau} (1-x_i) \in J_\U$.  So we can restrict ourselves to only those generators for which $\sigma$ is minimal with respect to $U_\sigma\subseteq \bigcup_{i\in \tau}U_i$.   Similarly, we can reduce to minimal $\tau$ such that $U_\sigma\subseteq \bigcup_{i\in \tau}U_i$.  In summary:
\begin{eqnarray*}
&&\big\langle \big\{x_\sigma \prod_{i\in \tau} (1-x_i) \mid \sigma,\tau \neq \emptyset,\; \sigma \cap \tau =\emptyset,\; U_\sigma \neq \emptyset,\;\bigcup_{i \in \tau} U_i \neq X, \text{ and } U_\sigma\subseteq \bigcup_{i\in \tau}U_i  \big\} \big \rangle = \\
&&\big\langle \big\{x_\sigma \prod_{i\in \tau} (1-x_i) \mid \sigma,\tau \neq \emptyset,\; \sigma \cap \tau =\emptyset,\; U_\sigma \neq \emptyset,\;\bigcup_{i \in \tau} U_i \neq X, \text{ and } \sigma, \tau \text{ are each minimal }\\
&&\text{ w.r.t. } U_\sigma\subseteq \bigcup_{i\in \tau}U_i  \big\} \big \rangle.
\end{eqnarray*}

\noindent We can now prove Theorem~\ref{thm:canonical-form}.

\begin{proof}[Proof of Theorem~\ref{thm:canonical-form}]

Recall that $\C = \C(\U)$, and that by the proof of Theorem~\ref{thm:ideal-equivalence} we have $J_{\C(\U)} = J_\U$. By the reductions given above for the Type 1, 2 and 3 generators, we also know that $J_\U$ can be reduced to the form given in the statement of Theorem~\ref{thm:canonical-form}.  We conclude that $J_\C$ can be expressed in the desired form.

To see that $J_\C$, as given in the statement of Theorem~\ref{thm:canonical-form}, is in canonical form, we must show that the given set of generators is exactly the complete set of minimal
pseudo-monomials for $J_\C$.  First, observe that the generators are all pseudo-monomials.   If $x_\sigma$ is one of the Type 1 relations, and $x_\sigma \in \langle g \rangle$
with $\langle x_\sigma \rangle \neq \langle g \rangle$, then $g = \prod_{i \in \tau} x_i$ for some $\tau \subsetneq \sigma$.  Since $U_\tau \neq \emptyset$, however, it follows that $g \notin J_\C$ and hence $x_\sigma$ is a minimal pseudo-monomial of $J_\C$.  By a similar argument, the Type 2 and Type 3 relations above are also minimal pseudo-monomials in $J_\C$.   

It remains only to show that there are no additional minimal pseudo-monomials in $J_\C$.  Suppose $f = x_\sigma \prod_{i \in \tau}(1-x_i)$ is a minimal pseudo-monomial in $J_\C$.    By Lemma~\ref{lemma:corresp}, $U_\sigma \subseteq \bigcup_{i \in \tau} U_i$ and $\sigma \cap \tau = \emptyset$, so $f$ is a generator in the original definition of $J_\U$ (Lemma~\ref{lemma:JU}).  Since $f$ is a minimal pseudo-monomial of $J_\C$, 
there does not exist a $g \in J_\C$ such that $g =  x_{\sigma'} \prod_{i \in \tau'}(1-x_i)$ with either $\sigma' \subsetneq \sigma$ or $\tau' \subsetneq \tau$.  Therefore, $\sigma$ and $\tau$ are each minimal with respect to $U_\sigma \subseteq \bigcup_{i \in \tau} U_i$.  We conclude that $f$ is one of the generators for $J_\C$ given in the statement of Theorem~\ref{thm:canonical-form}.  It is a minimal Type 1 generator if $\tau = \emptyset$, a minimal Type 3 generator if $\sigma = \emptyset$, and is otherwise a minimal Type 2 generator.  The three sets of minimal generators are disjoint because the Type 1, Type 2 and Type 3 relations are disjoint, provided $X \neq \emptyset$.
\end{proof}
Nevertheless, we do not yet know how to infer the minimal embedding dimension from $CF(J_\C)$.  
In Appendix 2 (Section~\ref{sec:appendix2}), we provide a complete list of neural codes on three neurons, up to permutation, and their
respective canonical forms.

\section{Comparison to the Stanley-Reisner ideal}

Readers familiar with the Stanley-Reisner ideal \cite{MillerSturmfels,StanleyBook} will recognize that this kind of ideal is generated by the Type 1 relations of a neural code $\C$.  The corresponding simplicial complex is $\Delta(\C)$, the smallest simplicial complex that contains the code.

\begin{lemma}  Let $\C = \C(\U)$.  The ideal generated by the Type 1 relations,
$\langle x_\sigma \mid U_\sigma=\emptyset\rangle,$ is the Stanley-Reisner ideal of $\Delta(\C)$.
Moreover, if $\supp\C$ is a simplicial complex, then $CF(J_\C)$ contains no Type 2 or Type 3 relations, and $J_\C$ is thus the Stanley-Reisner ideal for $\supp\C$.
\end{lemma}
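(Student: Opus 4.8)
The plan is to treat the two assertions in turn, leaning on the combinatorial identity $N(\U)=\Delta(\C(\U))$ from Section~\ref{sec:helly-nerve} and on Theorem~\ref{thm:canonical-form}. For the first claim I would first record that $\Delta(\C(\U)) = \{\sigma\subseteq[n]\mid U_\sigma\neq\emptyset\}$: a point $p\in U_\sigma$ produces a codeword $c\in\C(\U)$ with $\supp(c)=\{i\mid p\in U_i\}\supseteq\sigma$, so $\sigma\in\Delta(\C(\U))$; conversely, if $\sigma\subseteq\supp(c)$ for some $c\in\C(\U)$, any witness point of $c$ lies in $U_\sigma$, so $U_\sigma\neq\emptyset$. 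Hence $\{\sigma\mid U_\sigma=\emptyset\}$ is exactly the set of non-faces of $\Delta(\C)$, and $\langle x_\sigma\mid U_\sigma=\emptyset\rangle$ is the monomial ideal generated by all non-face monomials of $\Delta(\C)$, which is by definition the Stanley--Reisner ideal $I_{\Delta(\C)}$. Restricting to minimal such $\sigma$, i.e. to the minimal Type 1 relations, recovers the usual minimal generating set of $I_{\Delta(\C)}$ by minimal non-faces.

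For the ``moreover'' part, assume $\supp\C$ is a nonempty simplicial complex, so $\supp\C=\Delta(\C)$ and in particular $\emptyset\in\supp\C$, i.e. $\mathbf 0\in\C$. I would rule out Type 2 and Type 3 relations from $CF(J_\C)$ as characterized by Theorem~\ref{thm:canonical-form}. For Type 2: if $\sigma\neq\emptyset$ and $U_\sigma\neq\emptyset$, then $\sigma\in\Delta(\C)=\supp\C$, so there is $c\in\C$ with $\supp(c)=\sigma$; a witness point $p$ for $c$ lies in $U_\sigma$ but outside $U_j$ for every $j\notin\sigma$, hence outside $\bigcup_{i\in\tau}U_i$ for every $\tau$ disjoint from $\sigma$. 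Therefore no pair $\sigma,\tau$ with $\sigma\neq\emptyset$, $\sigma\cap\tau=\emptyset$, $U_\sigma\neq\emptyset$ can satisfy $U_\sigma\subseteq\bigcup_{i\in\tau}U_i$, so there are no minimal Type 2 relations.

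For Type 3: since $\mathbf 0\in\C=\C(\U)$, there is a point $p\in X\setminus\bigcup_{j=1}^n U_j$, and then $X\not\subseteq\bigcup_{i\in\tau}U_i$ for every $\tau\subseteq[n]$, so there are no minimal Type 3 relations either. Combining this with the first paragraph and Theorem~\ref{thm:canonical-form} yields $CF(J_\C)=\{x_\sigma\mid\sigma\text{ minimal w.r.t. }U_\sigma=\emptyset\}$, so $J_\C=\langle x_\sigma\mid U_\sigma=\emptyset\rangle=I_{\Delta(\C)}=I_{\supp\C}$, the Stanley--Reisner ideal of $\supp\C$.

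I expect no serious obstacle here: once the nerve identity and Theorem~\ref{thm:canonical-form} are available, the argument is essentially bookkeeping. The only points needing a little care are the degenerate case $\C=\emptyset$ (equivalently $X=\emptyset$), which is already excluded by the nonemptiness hypothesis on $X$ in Theorem~\ref{thm:canonical-form} and should be flagged, and the elementary observation that a nonempty simplicial complex must contain $\emptyset$ — it is precisely this fact, via $\mathbf 0\in\C$, that forces the absence of Type 3 relations.
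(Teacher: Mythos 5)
Your proposal is correct and follows essentially the same route as the paper: both rest on the identification $\Delta(\C)=\{\sigma\subseteq[n]\mid U_\sigma\neq\emptyset\}$ for the first claim, and for the second use the all-zeros codeword (forced by $\emptyset\in\supp\C$) to exclude Type 3 relations and the simplicial-complex property of $\supp\C$ to exclude Type 2 relations. The only difference is cosmetic: you rule out Type 2 relations contrapositively (a codeword with support exactly $\sigma$ gives a witness point in $U_\sigma$ avoiding $\bigcup_{i\in\tau}U_i$), while the paper assumes such a relation and derives the contradiction $\sigma\notin\supp\C$ with $\sigma\subsetneq\supp(\omega)$ for some $\omega\in\C$; your explicit verification of the $\Delta(\C)$ characterization and of the degenerate cases is a welcome bit of extra care.
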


\begin{proof}  
To see the first statement, observe that the {\em Stanley-Reisner ideal} of a simplicial complex $\Delta$ is the ideal 
$$I_\Delta \od \langle x_\sigma \mid \sigma \notin \Delta \rangle,$$
and recall that $\Delta(\C)=\{\sigma\subseteq[n] \mid \sigma\subseteq\supp(c)$ for some $c\in \C\}$.  As $\C=\C(\U)$, an equivalent characterization is $\Delta(\C)=\{\sigma\subseteq[n]\mid U_\sigma\neq \emptyset\}$.  
Since these sets are equal, so are their complements in $2^{[n]}$: 
$$\{\sigma\subseteq [n]\mid \sigma\notin\Delta(\C)\}=\{\sigma\subseteq [n] \mid U_\sigma = \emptyset\}.$$
Thus, $\langle x_\sigma\mid U_\sigma=\emptyset\rangle= \langle x_\sigma\mid \sigma\notin\Delta(\C)\rangle$, which is the Stanley-Reisner ideal for $\Delta(\C)$.

\
To prove the second statement, suppose that $\supp\C$ is a simplicial complex.  Note that $\C$ must contain the all-zeros codeword, so  $X \supsetneq \bigcup_{i=1}^n U_i$ and there can be no Type 3 relations.
Suppose the canonical form of $J_{\C}$ contains a Type 2 relation $x_\sigma \prod_{i\in \tau} (1-x_i)$, for some $\sigma,\tau \subset [n]$ satisfying $\sigma, \tau \neq \emptyset$,  $\sigma \cap \tau = \emptyset$ and
 $U_\sigma \neq \emptyset$.  The existence of this relation indicates that $\sigma \notin \supp \C$, while there does exist an $\omega \in \C$ such that $\sigma \subset \omega.$
 This contradicts the assumption that $\supp \C$ is a simplicial complex.  We conclude that $J_{\C}$ has no Type 2 relations.
 \end{proof}

The canonical form of $J_\C$ thus enables us to immediately read off, via the Type 1 relations, the minimal forbidden faces of the simplicial complex $\Delta(\C)$ associated to the code, and also the minimal deviations of $\C$ from being a simplicial complex, which are captured by the Type 2 and Type 3 relations.

\chapter{Algorithms for the Canonical Form }

Now that we have established that a minimal description of the RF structure can be extracted from the canonical form of the neural ideal, the most pressing question is the following:
\medskip
 
\noindent \textbf{Question:} How do we find the canonical form $CF(J_\C)$ if all we know is the code $\C$, and we are \textit{not} given a representation of the code as $\C = \C(\U)$?
\medskip

\section{Algorithm \#1}
\noindent In this section we describe an algorithmic method for finding $CF(J_\C)$ from knowledge only of $\C$.
%It turns out that computing the primary decomposition of $J_\C$ is a key step towards
finding the minimal pseudo-monomials.  %This parallels the situation for monomial ideals, 
%although there are some additional subtleties in the case of pseudo-monomial ideals. 
% As previously discussed,
%from the canonical form we can read off the RF structure of the code, so the overall workflow is as follows:
%\begin{small}
%$$\text{Workflow:} \;\;\; \begin{array}{c}\text{neural code}\\ \C \subset \{0,1\}^n \end{array} \rightarrow 
%\begin{array}{c} \text{neural ideal}\\ J_\C = \langle \{\rho_v \mid v \notin \C \} \rangle \end{array} \rightarrow  
%\begin{array}{c}\text{}\\ \text{decomposition}\\ \text{of}\;J_\C \end{array} \rightarrow 
%\begin{array}{c}\text{canonical}\\ \text{form}\\ CF(J_\C) \end{array} \rightarrow 
%\begin{array}{c} \text{minimal}\\ \text{RF structure}\\ \text{ of } \C \end{array}$$ 
%\end{small}

\noindent \textbf{Canonical form algorithm 1}
\medskip

\noindent\textbf{Input:} A neural code $\C \subset \{0,1\}^n$.\medskip

\noindent\textbf{Output:} The canonical form of the neural ideal, $CF(J_\C)$.

\begin{itemize}
\item[Step 1:]  From $\C \subset \{0,1\}^n$,  for each $c\in \C$ take the ideal $\p_c = \langle x_i - c_i \mid i=1,...,n\rangle$. 

%\item[Step 2:]  Compute the primary decomposition of $J_\C$.  It turns out (see Theorem~\ref{thm:prim-decomp} in the next section) that this decomposition yields a unique representation of the ideal as
%$$J_\C = \bigcap_{a \in \mathcal{A}} \p_a,$$
%where each $a \in \mathcal{A}$ is an element of $\{0,1,*\}^n$, and $\p_a$ is defined as
%$$\p_a \od \big\langle \{x_i - a_i \mid a_i \neq *\}\big\rangle =  \big\langle \{x_i \mid a_i = 0\}, \{1-x_j \mid a_j = 1\} \big\rangle.$$
%Note that the $\p_a$s are all prime ideals.  We will see later how to compute this primary
%decomposition algorithmically, in Section~\ref{sec:prim-decomp-algorithm}.

\item[Step 2:]  Observe that any pseudo-monomial $f \in J_\C$ is a multiple of one of the linear generators of $\p_c$ for each $c \in \mathcal{C}$.
Compute the following set of elements of $J_\C$:
$$\mathcal{M}(J_\C) = \big\{\prod_{c \in \C} g_c \mid g_c = x_i - c_i \text{ for some } i =1,...,n \big\}.$$
$\mathcal{M}(J_\C)$ consists of all polynomials obtained as a product of linear generators $g_a$, 
one for each prime ideal $\p_c$.  $\mathcal{M}(J_\C)$ is therefore the set of generators for the product of the ideals $\prod_{c\in \C}\p_c$.  
\item[Step 3:] Reduce the elements of $\mathcal{M}(J_\C)$ by imposing $x_i(1-x_i) = 0$.  This eliminates elements that are not pseudo-monomials.  It also reduces the degrees of some of the remaining elements, as it implies $x_i^2 = x_i$ and $(1-x_i)^2 = (1-x_i)$.  
We are left with a set of pseudo-monomials of the form $f = \prod_{i \in \sigma} x_i \prod_{j\in \tau} (1-x_j)$ with $\tau \cap \sigma = \emptyset.$
Call this new reduced set $\mathcal{\tilde M}(J_\C).$ 
\item[Step 4:] Finally, remove all elements of $\mathcal{\tilde M}(J_\C)$ that are multiples of lower-degree elements in $\mathcal{\tilde M}(J_\C).$ 
\end{itemize}

\begin{proposition}\label{prop:prop1} The resulting set is the canonical form $CF(J_\C)$.
\end{proposition}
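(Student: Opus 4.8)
The plan is to show that the algorithm's output equals $CF(J_\C)$ by tracking what each step produces and comparing to the characterization already available. First I would establish that $\mathcal{M}(J_\C)$ generates the product ideal $\prod_{c\in\C}\p_c$, and that this product ideal is precisely $J_\C + \B = I_\C$. The containment $\prod_{c\in\C}\p_c \subseteq I_\C$ is immediate since each $\p_c = m_c$ (by Lemma~\ref{lemma:mv}) contains $I_\C$... wait, that's backwards; rather, a product of elements each vanishing on a different codeword vanishes on all of $\C$, so each generator of $\prod \p_c$ lies in $I_\C = I(\C)$. For the reverse containment $I_\C \subseteq \prod_{c\in\C}\p_c$, I would use that $\bigcap_{c\in\C} m_c = I(\C) = I_\C$ (shown in the text), combined with the fact that the $m_c$ are pairwise comaximal (for $c\neq c'$ there is an index $i$ with $c_i\neq c_i'$, so $(x_i-c_i)-(x_i-c_i') = c_i'-c_i$ is a unit, hence $m_c + m_{c'} = \F_2[x_1,\dots,x_n]$), which gives $\prod_{c}\p_c = \bigcap_c m_c = I_\C$. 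So after Step 2 we have a generating set for $I_\C$.

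Next I would handle Steps 3 and 4. Step 3 reduces each generator modulo the Boolean relations $x_i^2-x_i$. Since $\B \subseteq I_\C$, replacing any element of $\mathcal{M}(J_\C)$ by its reduction modulo $\B$ changes the generated ideal only up to adding $\B$; so $\langle \mathcal{\tilde M}(J_\C)\rangle + \B = I_\C$. A product of linear terms $x_i - c_i$, after imposing $x_i^2 = x_i$ and $(1-x_i)^2 = (1-x_i)$, either collapses to a genuine pseudo-monomial $\prod_{i\in\sigma}x_i\prod_{j\in\tau}(1-x_j)$ with $\sigma\cap\tau=\emptyset$ (discarding repeated indices), or contains a factor $x_i(1-x_i)$ and hence reduces to $0$; the zero elements and the pure Boolean-multiple elements can be dropped, leaving $\mathcal{\tilde M}(J_\C)$ a finite set of pseudo-monomials with $\langle\mathcal{\tilde M}(J_\C)\rangle + \B = I_\C$. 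I would then argue that in fact $\langle \mathcal{\tilde M}(J_\C)\rangle = J_\C$ (not merely $J_\C$ up to $\B$): since $\mathcal{\tilde M}(J_\C)$ consists of pseudo-monomials, $\langle\mathcal{\tilde M}(J_\C)\rangle$ is a pseudo-monomial ideal contained in $I_\C$; one shows every pseudo-monomial in $I_\C$ lies in $J_\C$ (a pseudo-monomial has no Boolean factor, so it is not ``used'' by the Boolean generators in the sum $J_\C+\B$; more carefully, reducing a pseudo-monomial modulo $\B$ does nothing, and $I_\C\cap\{\text{pseudo-monomials}\}$ all lie in $J_\C$ by Lemma~\ref{lemma:explicit-relations} together with the fact that $J_\C$ is generated by the $\rho_v$, which are pseudo-monomials, and the argument in Lemma~\ref{lemma:corresp}/Theorem~\ref{thm:canonical-form}). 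Hence $\langle \mathcal{\tilde M}(J_\C)\rangle = J_\C$.

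Finally, Step 4 removes all elements that are polynomial multiples of lower-degree elements of $\mathcal{\tilde M}(J_\C)$; the claim is that what remains is exactly the set of \emph{minimal} pseudo-monomials of $J_\C$, i.e.\ $CF(J_\C)$. One direction: every surviving element is a pseudo-monomial in $J_\C$ that is not a multiple of a lower-degree pseudo-monomial \emph{in the set}, and I must upgrade ``in the set'' to ``in all of $J_\C$'' — this is the step I expect to be the main obstacle. The key point is that $\mathcal{\tilde M}(J_\C)$ already contains, for every codeword-indexed choice, enough pseudo-monomials that any pseudo-monomial of $J_\C$ divides one of them; more usefully, I would show directly that every \emph{minimal} pseudo-monomial $f\in J_\C$ actually appears in $\mathcal{\tilde M}(J_\C)$. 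Indeed, if $f = \prod_{i\in\sigma}x_i\prod_{j\in\tau}(1-x_j) \in J_\C$, then for each $c\in\C$, since $f(c)=0$, some factor of $f$ vanishes at $c$, i.e.\ $f$ is a multiple of some $x_i-c_i$ with $i\in\sigma$ (giving factor $x_i$, needs $c_i=0$) or $i\in\tau$ (giving factor $1-x_i$, needs $c_i=1$); choosing that generator $g_c$ for each $c$ and forming $\prod_c g_c$, its Step-3 reduction is a pseudo-monomial dividing $f$ and built only from the linear factors already present in $f$, hence equal to $f$ by minimality. Thus $f\in\mathcal{\tilde M}(J_\C)$, and since $f$ is minimal in $J_\C$ it is not a multiple of any lower-degree element of $\mathcal{\tilde M}(J_\C)\subseteq J_\C$, so it survives Step 4. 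Conversely a survivor that were \emph{not} minimal in $J_\C$ would be a proper multiple of some minimal $g\in J_\C$ of strictly smaller degree; by what we just showed $g\in\mathcal{\tilde M}(J_\C)$, so the survivor is a multiple of a lower-degree element of $\mathcal{\tilde M}(J_\C)$, contradicting survival. Therefore the output of Step 4 is exactly $CF(J_\C)$, proving Proposition~\ref{prop:prop1}. I would close by noting the only genuinely delicate point is the interplay between ``multiple of a pseudo-monomial'' and ``multiple by an arbitrary polynomial'': one must check that if a pseudo-monomial $f$ is divisible by a polynomial $h$ with $\deg h < \deg f$ and $f/h$ a polynomial, then $f$ is divisible by a \emph{pseudo-monomial} of degree $\le \deg h$ — which follows because over $\F_2$ modulo the Boolean relations every such $h$ dividing a pseudo-monomial is itself (a unit times) a pseudo-monomial dividing it, a fact worth stating as a small lemma before Step 4.
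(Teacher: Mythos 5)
Your argument is correct, and its overall skeleton is the same as the paper's: show that every element of $\mathcal{\tilde M}(J_\C)$ is a pseudo-monomial lying in $J_\C$, show that every \emph{minimal} pseudo-monomial of $J_\C$ actually appears in $\mathcal{\tilde M}(J_\C)$, and then observe that Step 4 removes precisely the non-minimal ones, leaving $CF(J_\C)$. Where you genuinely diverge is in how the key ``matching'' step is justified. The paper proves and invokes Lemma~\ref{lemma:matching}: a pseudo-monomial lying in an ideal generated by linear terms must have a linear factor equal to one of the generators (proved by exhibiting a point where all generators vanish but the pseudo-monomial does not), applied to $f\in\p_c$ for each $c\in\C$. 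You instead evaluate $f$ directly at each codeword: since $f\in J_\C\subseteq I_\C$ we have $f(c)=0$, and a product of field elements vanishes only if some factor does, which over $\F_2$ immediately exhibits a linear factor of $f$ equal to a generator $x_i-c_i$ of $\p_c$. This is more elementary and is perfectly valid here because the $\p_c$ are maximal ideals of actual points of $\C$; the paper's lemma is the more general statement and gets reused later (e.g.\ in the correctness proof for Algorithm~2). Two further remarks: your opening identification $\langle\mathcal M(J_\C)\rangle=\prod_{c\in\C}\p_c=\bigcap_{c\in\C}\p_c=I_\C$ via pairwise comaximality is correct but not needed for the proposition, and the asserted equality $\langle\mathcal{\tilde M}(J_\C)\rangle=J_\C$ is only half-justified where you state it (the inclusion $\mathcal{\tilde M}(J_\C)\subseteq J_\C$, which does follow from Lemma~\ref{lemma:corresp} together with Theorem~\ref{thm:ideal-equivalence} since a pseudo-monomial vanishing on $\C$ is a generator of $J_\U=J_\C$, is all that the final step uses; the reverse inclusion only becomes available after your last-paragraph argument). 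Neither point affects the correctness of the proof.
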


\noindent Note that every polynomial obtained by the canonical form algorithm is a pseudo-monomial of $J_\C$.  This is because the algorithm constructs products of factors of the form $x_i$ or $1-x_i$, and then reduces them in such a way that no index is repeated in the final product, and there are no powers of any $x_i$ or $1-x_i$ factor; we are thus guaranteed to end up with pseudo-monomials.  Moreover, since the products each have at least one factor in each prime ideal of the primary decomposition of $J_\C$, the pseudo-monomials are all in $J_\C$.   Proposition~\ref{prop:prop1} states that this set of pseudo-monomials is precisely the canonical form $CF(J_\C)$.

To prove Proposition~\ref{prop:prop1}, we will make use of the following technical lemma.
Here $z_i, y_i\in \{x_i, 1-x_i\}$, and thus any pseudo-monomial in $\F_2[x_1,\ldots,x_n]$ is of the form $\prod_{j\in \sigma}  z_j$ for some index set $\sigma \subseteq [n]$.  

\begin{lemma}\label{lemma:matching}
If $y_{i_1}\cdots y_{i_m}\in \langle z_{j_1},\ldots, z_{j_\ell}\rangle$ where $\{i_k\}$ and $\{j_r\}$ are each distinct sets of indices, then $y_{i_k}=z_{j_r}$ for some $k\in [m]$ and $r\in [\ell]$.
\end{lemma}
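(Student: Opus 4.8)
The plan is to prove the contrapositive by a point-evaluation argument. Assuming $y_{i_k}\neq z_{j_r}$ for every $k\in[m]$ and $r\in[\ell]$, I will construct a point $p\in\F_2^n$ at which every generator $z_{j_r}$ vanishes but the pseudo-monomial $y_{i_1}\cdots y_{i_m}$ does not. Since any identity $y_{i_1}\cdots y_{i_m}=\sum_r h_r z_{j_r}$ with $h_r\in\F_2[x_1,\ldots,x_n]$ would force $(y_{i_1}\cdots y_{i_m})(p)=\sum_r h_r(p)\,z_{j_r}(p)=0$, exhibiting such a $p$ shows $y_{i_1}\cdots y_{i_m}\notin\langle z_{j_1},\ldots,z_{j_\ell}\rangle$, which is the required contradiction.

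First I would translate the two target conditions into coordinate constraints on $p$. Because $z_{j_r}\in\{x_{j_r},1-x_{j_r}\}$, the equation $z_{j_r}(p)=0$ fixes the single coordinate $p_{j_r}$ (to $0$ if $z_{j_r}=x_{j_r}$, to $1$ if $z_{j_r}=1-x_{j_r}$); since the $j_r$ are distinct, these $\ell$ constraints involve distinct coordinates and hence are mutually consistent. Dually, $y_{i_k}(p)=1$ fixes the single coordinate $p_{i_k}$, and the distinctness of the $i_k$ makes these $m$ constraints mutually consistent.

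The key step is checking that the $z$-constraints and the $y$-constraints are jointly consistent under our standing assumption. A conflict can occur only at an index $t$ that is simultaneously some $j_r$ and some $i_k$. At such $t$ we have $z_{j_r},y_{i_k}\in\{x_t,1-x_t\}$ with $z_{j_r}\neq y_{i_k}$, so one of them is $x_t$ and the other is $1-x_t$; a two-case check then shows that $z_{j_r}(p)=0$ and $y_{i_k}(p)=1$ demand the \emph{same} value of $p_t$ (equivalently: a genuine clash at $t$ would force $z_{j_r}=y_{i_k}$, which is excluded). Hence we may choose $p\in\F_2^n$ satisfying all constraints, with any coordinates constrained by neither family set to, say, $0$. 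By construction $z_{j_r}(p)=0$ for every $r$ while $y_{i_k}(p)=1$ for every $k$, so $(y_{i_1}\cdots y_{i_m})(p)=1\neq 0$, contradicting membership in the ideal. This establishes the lemma.

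I expect the only delicate point to be the joint-consistency check in the previous paragraph — precisely the observation that a clash between the vanishing constraint from $z_{j_r}$ and the non-vanishing constraint from $y_{i_k}$ at a shared index is equivalent to $z_{j_r}=y_{i_k}$. Everything else is routine bookkeeping; in particular no induction is needed, since ideal membership over $\F_2$ is obstructed by a single common zero in $\F_2^n$ that fails to be a zero of the target polynomial.
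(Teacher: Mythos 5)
Your proof is correct and takes essentially the same approach as the paper: both argue by constructing a point of $\F_2^n$ at which every $z_{j_r}$ vanishes while every $y_{i_k}$ evaluates to $1$, using the assumption $y_{i_k}\neq z_{j_r}$ to resolve the only possible conflicts at shared indices. The paper phrases the final contradiction through varieties ($V(\langle P\rangle)=V(\langle P,f\rangle)$) whereas you evaluate the ideal-membership identity directly, but this is only a cosmetic difference.
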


\begin{proof} 
Let $f = y_{i_1}\cdots y_{i_m}$ and $P = \{ z_{j_1},\ldots,z_{j_\ell} \}$.  
Since $f\in \langle P \rangle$, then $\langle P \rangle  = \langle P , f \rangle$, and so $V(\langle P \rangle ) = V(\langle P , f  \rangle)$.  
We need to show that $y_{i_k} = z_{j_r}$ for some pair of indices $i_k, j_r.$
Suppose by way of contradiction that there is no $i_k, j_r$ such that $y_{i_k}=z_{j_r}$.  

Select $a\in \{0,1\}^n$ as follows: for each $j_r \in \{j_1,\ldots,j_\ell\}$, let $a_{j_r} = 0$ 
if $z_{j_r} = x_{j_r}$, and let $a_{j_r} = 1$ if $z_{j_r} = 1-x_{j_r}$; when evaluating at $a$, we thus have $z_{j_r}(a) = 0$ for all $r \in [\ell]$.
Next, for each $i_k\in \omega \od \{i_1,\ldots,i_m\}\backslash \{j_1,..,j_\ell\}$, let $a_{i_k} = 1$ if $y_{i_k} = x_{i_k}$, and let $a_{i_k} = 0$ if $y_{i_k} = 1-x_{i_k}$, so that
$y_{i_k}(a)=1$ for all $i_k \in \omega$.  For any remaining indices $t$, let $a_t=1$.  
Because we have assumed that $y_{i_k}\neq z_{j_r}$ for any $i_k, j_r$ pair, we have
 for any $i \in \{i_1,\ldots,i_m\} \cap \{j_1,\ldots,j_\ell\}$ that $y_i(a) = 1-z_i(a) = 1.$  It follows that $f(a) = 1$.

Now, note that $a\in V(\langle P\rangle )$ by construction.  We must therefore have $a\in V(\langle P, f \rangle )$, and hence $f(a) = 0$, a contradiction.
We conclude that there must be some $i_k, j_r$ with $y_{i_k}=z_{j_r},$ as desired.
\end{proof}

\begin{proof}[Proof of Proposition~\ref{prop:prop1}]
It suffices to show that after Step 4 of the algorithm, the reduced set $\tilde{\mathcal{M}}(J_\C)$ consists entirely of pseudo-monomials of $J_\C$, and includes all \textit{minimal} pseudo-monomials of $J_\C$.  If this is true, then after removing multiples of lower-degree elements in Step 5 we are guaranteed to obtain the set of minimal pseudo-monomials, $CF(J_\C)$, since it is precisely the non-minimal pseudo-monomials that will be removed in the final step of the algorithm.

Recall that $\mathcal{M}(J_\C)$, as defined in Step 3 of the algorithm, is precisely the set of all polynomials $g$ that are obtained by choosing one linear factor
from the generating set of each $\p_c$:
$$\mathcal{M}(J_\C) = \{g = z_{p_1}\cdots z_{p_s} \mid z_{p_i} \text{ is a linear generator of } \p_{c^i} \}.$$ 
Furthermore, recall that $\tilde{\mathcal{M}}(J_\C)$ is obtained from $\mathcal{M}(J_\C)$ by the reductions in Step 4 of the algorithm.
Clearly, all elements of $\tilde{\mathcal{M}}(J_\C)$ are pseudo-monomials that are contained in $J_\C$.

To show that $\tilde{\mathcal{M}}(J_\C)$ contains all \textit{minimal} pseudo-monomials of $J_\C$, we will show that if $f \in J_\C$ is a pseudo-monomial, then there exists another pseudo-monomial $h \in \tilde{\mathcal{M}}(J_\C)$ (possibly the same as $f$) such that $h | f$.  To see this, let
 $f = y_{i_1}\cdots y_{i_m}$ be a pseudo-monomial of $J_\C$.  Then,
$f \in P_i$ for each $i \in [s].$  For a given $P_i = \langle z_{j_1},\ldots,z_{j_\ell}\rangle,$ by Lemma~\ref{lemma:matching} we have $y_{i_k} = z_{j_r}$ for some $k \in [m]$ and $r \in [\ell]$.  In other words, each prime ideal $P_i$ has a generating term, call it $z_{p_i},$ that appears as one of the linear factors of 
$f$.  Setting $g = z_{p_1}\cdots z_{p_s}$, it is clear that $g \in \mathcal{M}(J_\C)$ and that either $g | f$, or $z_{p_i} = z_{p_j}$ for some distinct pair $i,j$.  By removing repeated factors in $g$ one obtains a pseudo-monomial $h \in \tilde{\mathcal{M}}(J_\C)$ such that $h | g$ and $h | f$.  If we take $f$ to be a minimal pseudo-monomial, we find
$f = h  \in \tilde{\mathcal{M}}(J_\C)$.  
\end{proof}

\section{Algorithm \# 2}
In practice, we have found it is more practical to perform an inductive version of this algorithm. 
Given a code $\C$, let $|\C| = k$.  Then order the codewords $c^1,...,c^k$.  Let $\C_i = \{c^1,...,c^i\}$. Note that as $J_\C = \bigcap_{i=1}^k \p_{c^i}$, we have $J_{\C_i} = J_{\C_{i-1}}\cap \p_{c^i}$. Using this fact, we proceed inductively, finding each canonical form from the previous one.

\noindent \textbf{Canonical form algorithm 2}
\medskip

\noindent\textbf{Input:} A neural code $\C \subset \{0,1\}^n$.\medskip

\noindent\textbf{Output:} The canonical form of the neural ideal, $CF(J_\C)$.

\begin{enumerate}
\item[Step 1:] Note $\C_1 = \{c^1\}$, so $CF(J_{\C_1}) = \p_{c^1}$. Set $i=1$.
\item[Step 2:] If $i=n$, we are done; $CF(J_\C) = CF(J_{\C_n})$. \\
If $i<n$, let $i=i+1$.  Take the set $M_i = \{ f z_j \mid z_j = x_j - c^i_j, f\in CF(J_{\C_{i-1}})\}$ (the set of products of an element of the canonical form $CF(J_{\C_{i-1}})$ with a generator for $\p_{c^i})$. 
\item[Step 3:] Reduce the set by imposing $x_i (1-x_i) = 0$ as in CF Algorithm 1 to get a set $M_i'$ of pseudo-monomials.
\item[Step 4:] Reduce the set $M_i'$ by removing all elements of $M_i'$ that are multiples of lower degree elements to form a new set $\tilde M_i$.  Then $\tilde M_i = CF(J_{\C_i})$. 
Go back to Step 2.

\end{enumerate}

\begin{proposition} The resulting set $CF(\C_n)$ is  $CF(J_\C)$.
\end{proposition}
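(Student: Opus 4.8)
The plan is to prove by induction on $i$ that the set $\tilde M_i$ produced after Step 4 at stage $i$ equals $CF(J_{\C_i})$, the set of all minimal pseudo-monomials of $J_{\C_i}$ (here $J_{\C_i}$ is the neural ideal of the code $\C_i = \{c^1,\ldots,c^i\}$, hence a pseudo-monomial ideal, so its canonical form is defined); the proposition is then the case $i = |\C|$. Throughout I would use the identity $J_{\C_i} = J_{\C_{i-1}} \cap \p_{c^i}$ recorded just before the algorithm, together with $\p_{c^i} = m_{c^i} = \{f \mid f(c^i) = 0\}$ from Lemma~\ref{lemma:mv}. For the base case $i = 1$ we have $J_{\C_1} = \p_{c^1}$, and I would check directly that $\{x_j - c^1_j \mid j \in [n]\}$ is its full set of minimal pseudo-monomials: each generator is a degree-one pseudo-monomial, minimal because $1 \notin \p_{c^1}$; and any pseudo-monomial $f \in \p_{c^1}$ vanishes at $c^1$, so one of its linear factors must be $x_j - c^1_j$, whence $f$ is a multiple of that generator (this is also immediate from Lemma~\ref{lemma:matching}).

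For the inductive step, assume $\tilde M_{i-1} = CF(J_{\C_{i-1}})$. I would first show that every element of $\tilde M_i$ is a pseudo-monomial lying in $J_{\C_i}$. An element of $M_i$ is $f z_j$ with $f \in CF(J_{\C_{i-1}}) \subseteq J_{\C_{i-1}}$ and $z_j = x_j - c^i_j \in \p_{c^i}$, so $f z_j \in J_{\C_{i-1}}\,\p_{c^i} \subseteq J_{\C_{i-1}} \cap \p_{c^i} = J_{\C_i}$. The Boolean reduction of Step 3 either leaves $f z_j$ unchanged, when $j$ is not an index of $f$ (still a pseudo-monomial in $J_{\C_i}$), kills it, when $f$ carries the factor $1 - z_j$ at index $j$, or reduces it to $f$, when $f$ already carries the factor $z_j = x_j - c^i_j$. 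The last case is the delicate one: there $f$ vanishes at $c^i$, so $f \in m_{c^i} = \p_{c^i}$, and together with $f \in J_{\C_{i-1}}$ this gives $f \in J_{\C_i}$. Hence every survivor in $M_i'$, and so in $\tilde M_i$, is a pseudo-monomial of $J_{\C_i}$.

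Next I would show that every pseudo-monomial $f \in J_{\C_i}$ is divisible by some element of $\tilde M_i$; this forces $\tilde M_i$ to contain every minimal pseudo-monomial of $J_{\C_i}$. Given such an $f$: since $f \in J_{\C_{i-1}}$, the inductive hypothesis together with iterating the definition of minimality gives $g \in CF(J_{\C_{i-1}})$ with $g \mid f$; and since $f \in \p_{c^i}$, $f$ vanishes at $c^i$, so some linear factor of $f$ equals $z_j = x_j - c^i_j$ for an index $j$. If $j$ is an index of $g$, then $g$ too carries the factor $z_j$ (as $g \mid f$), so $g z_j$ reduces in Step 3 to $g$, putting $g \in M_i'$, and $g \mid f$. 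If $j$ is not an index of $g$, then $g z_j$ is already a reduced pseudo-monomial, lies in $M_i'$, and divides $f$ since $z_j \mid f/g$. Either way some element of $M_i'$ divides $f$. Finally, a short bookkeeping argument for Step 4 shows that every element of $M_i'$ is divisible by some element of $\tilde M_i$: among the elements of $M_i'$ dividing a fixed element, one of minimal degree is divisible by no strictly lower-degree element of $M_i'$ (otherwise transitivity of divisibility contradicts minimality), hence survives into $\tilde M_i$. Composing, some element of $\tilde M_i$ divides $f$.

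To conclude I would combine the two halves. If $f$ is a minimal pseudo-monomial of $J_{\C_i}$ it is divisible by some $h \in \tilde M_i$, which is itself a pseudo-monomial of $J_{\C_i}$, so minimality forces $h = f$, giving $f \in \tilde M_i$. Conversely, if some $e \in \tilde M_i$ were not minimal, it would be a proper multiple of a pseudo-monomial $g \in J_{\C_i}$, which in turn is divisible by some $h \in \tilde M_i \subseteq M_i'$ with $\deg h \le \deg g < \deg e$; then $e$ is a multiple of a strictly lower-degree element of $M_i'$ and would have been discarded in Step 4, a contradiction. Hence $\tilde M_i = CF(J_{\C_i})$, completing the induction. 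I expect the main obstacle to be exactly the delicate case in the second paragraph: the Boolean reduction operates modulo the ideal $\B$ of Boolean relations, which is not contained in $J_\C$, so one cannot simply claim the reduced products remain in $J_{\C_i}$; the observation that any pseudo-monomial carrying a factor $x_j - c^i_j$ automatically lies in $\p_{c^i}$ is what makes this step — and the matching step in the third paragraph — go through.
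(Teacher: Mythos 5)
Your proposal is correct and follows essentially the same route as the paper's proof: induction on the codewords using the identity $J_{\C_i}=J_{\C_{i-1}}\cap\p_{c^i}$, extraction of a linear factor $z_j=x_j-c^i_j$ via Lemma~\ref{lemma:matching} (or direct evaluation at $c^i$), a case split on whether that index already occurs in a minimal divisor drawn from $CF(J_{\C_{i-1}})$, and the Step-4 bookkeeping identifying $\tilde M_i$ with the minimal pseudo-monomials. The only organizational difference is that you first establish that every pseudo-monomial of $J_{\C_i}$ has a divisor in $\tilde M_i$ (mirroring the paper's proof of Proposition~\ref{prop:prop1} for Algorithm~1) and deduce both inclusions from that, whereas the paper argues directly on $h\in CF(J_{\C_i})$; you are also a bit more explicit than the paper at the point where one must see that the reduced products still lie in $J_{\C_i}$.
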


\begin{proof} It suffices to show that $\tilde M_i = CF(J_{\C_i})$.

Suppose $h \in CF(J_{\C_i})$. Then $h\in J_{\C_{i-1}}$ also, as $J_{\C_i} =  J_{\C_{i-1}}\cap \p_{c^i}$.  As $h$ is a pseudo-monomial in $J_{\C_{i-1}}$, this means one of the following two cases holds:

- $h\in CF(J_{\C_{i-1}})$:  Then, as $h\in \p_{c^i}$ and $h$ is a pseudo-monomial, we must have $z_j\big| h$ for some $z_j=x_i-c_j^i$ by Lemma \ref{lemma:matching}.  So $z_j h \in M_i$, and when we reduce by $x_j = x_j^2$ or $(1-x_j) = (1-x_j)^2$, we get $h\in M'_i$.  

- $h\notin CF(J_{\C_{i-1}})$: then as $h$ is a pseudo-monomial in $J_{\C_{i-1}}$, we have $h=gf$ for some pseudo-monomials $g,f$ with $f\in CF(J_{\C_{i-1}})$.  And as $h$ is a pseudo-monomial, then $g,f$ share no indices.  As $h\in \p_{c^i}$, then by the above technical lemma there is some linear factor $z_j = x_j-c^i_j$ with $z_j \big|h$.  It can't be that $z_j\big| f$, or else $f(c^i) = 0$, and thus $f(c) = 0$ for all $c\in \C_i$, so $f\in J_{\C_2}$ would be more minimal than $h$, contradicting $h\in CF(J_{\C_i})$.  Thus, $ z_j\big| g$.  But the pseudo monomial $z_jf$ is in $\tilde M_i$, so if $g\neq z_j$ then $h$ is not minimal.  So $h=z_if$, and thus $h\in M_i'$.

Thus in both cases, $h\in M_i'$. The only way that $h$ would be removed in Step 4 and not appear in $\tilde M_i$ is if there were some pseudo-monomial $f\in \tilde M_i$ with $f\big| h$, but as $f\in J_{\C_i}$, this would contradict the minimality given by $h\in CF(J_{\C_i})$.  Thus $h\in \tilde M_i$.

Now, if $h\in \tilde M_i$  then clearly $h(c) = 0$ for all $c\in \C_i$, so $h\in J_{\C_i}$.  As $h$ has been reduced by $x_i = x_i^2$ and $(1-x_i)=(1-x_i)^2$, then $h$ is a pseudo-monomial.  As shown above, all pseudo-monomials in $CF(J_{\C_i})$ appear in $\tilde M_i$, so if $h$ is not minimal, $h$ will be removed by the reduction step.  Thus, if $h\in \tilde M_i$, then $h\in CF(J_{\C_i})$.

\end{proof}

The MATLAB code to compute the canonical form using this strategy is found in the Appendix.

\section{An example}
Now we are ready to use the canonical form algorithm in an example, illustrating how to obtain a possible arrangement of convex receptive fields from a neural code.

Suppose a neural code $\C$ has the following 13 codewords, and 19 missing words: 
\begin{small}
\begin{eqnarray*}
\C&=&\begin{array}{ccccc} \{ 00000, & 10000, &  01000, & 00100, & 00001, \\ 
  11000, & 10001, & 01100, & 00110, & 00101, \\ 
 00011, &  11100, & 00111 \} & & \end{array}\\
  \{0,1\}^5\backslash \C &=& \begin{array}{ccccc}
 \{ 00010, & 10100, & 10010, & 01010, & 01001, \\
 11010, & 11001, & 10110, & 10101, & 10011,\\
 01110, & 01101, & 01011, & 11110, & 11101,\\
 11011, &    10111,& 01111,& 11111 \}.& \\
 \end{array}
\end{eqnarray*}
 \end{small}
 
 Thus, the neural ideal $J_{\C}$ has 19 generators, using the original definition $J_\C = \langle \{ \rho_v \mid v \notin \C\}\rangle$:
 
 \begin{small}
 $$J_\C = \big \langle x_4(1-x_1)(1-x_2)(1-x_3)(1-x_5),
 x_1x_3(1-x_2)(1-x_4)(1-x_5), x_1x_4(1-x_2)(1-x_3)(1-x_5),$$ $$
 x_2x_4(1-x_1)(1-x_3)(1-x_5), x_2x_5(1-x_1)(1-x_3)(1-x_4), x_1x_2x_4(1-x_3)(1-x_5),
 $$ $$x_1x_2x_5(1-x_3)(1-x_4),
 x_1x_3x_4(1-x_2)(1-x_5), x_1x_3x_5(1-x_2)(1-x_4),
 x_1x_4x_5(1-x_2)(1-x_3), $$ $$x_2x_3x_4(1-x_1)(1-x_5),
 x_2x_3x_5(1-x_1)(1-x_4), x_2x_4x_5(1-x_1)(1-x_3),
  x_1x_2x_3x_4(1-x_5)$$ $$x_1x_2x_3x_5(1-x_4), x_1x_2x_4x_5(1-x_3), x_1x_3x_4x_5(1-x_2), 
  x_2x_3x_4x_5(1-x_1), x_1x_2x_3x_4x_5\big \rangle.$$
  \end{small}

\noindent Despite the fact that we are considering only five neurons, this looks like a complicated ideal. Considering the canonical form of $J_\C$ will help us to extract the relevant combinatorial information and allow us to create a possible 
arrangement of receptive fields $\U$ that realizes this code as $\C = \C(\U)$.
Following Step 1-2 of our canonical form algorithm, we take the products of linear generators of the ideals $\p_c$. % of $J_\C$:
 %$$J_\C=\langle x_1, x_2, x_4\rangle\cap \langle x_1, x_2, 1-x_3\rangle \cap \langle x_1, x_2, 1-x_5\rangle \cap \langle x_2, x_3, x_4\rangle\cap \langle x_3, x_4, x_5 \rangle \cap \langle x_1, x_4, x_5\rangle \cap \langle 1-x_2, x_4, x_5\rangle.$$ 
 Then, as described in Steps 3-4 of the algorithm, we reduce by the relation $x_i(1-x_i)=0$ (note that this gives us $x_i=x_i^2$ and hence we can say $x_i^k=x_i$ for any $k>1$).  We also remove any polynomials that are multiples of smaller-degree pseudo-monomials in our list.  This process leaves us with six minimal pseudo-monomials, yielding the canonical form:  
$$J_{\C} = \langle CF(J_\C) \rangle = \langle x_1x_3x_5,\, x_2x_5,\, x_1x_4,\, x_2x_4,\, x_1x_3(1-x_2),\, x_4(1-x_3)(1-x_5)\rangle.$$
Note in particular that every  generator we originally put
  in $J_{\mathcal C}$ is a multiple of one of the six
  relations in $CF(J_\C)$.  Next, we consider what the relations in $CF(J_\C)$ tell us 
  about the arrangement of receptive fields that would be needed to realize the code as $\C = \C(\U)$.

\begin{enumerate}

\item $x_1x_3x_5 \in CF(J_\C) \Rightarrow U_1\cap U_3\cap U_5=\emptyset$, while $U_1\cap U_3, U_3\cap U_5$ and $U_1\cap U_5$ are all nonempty.

\item $x_2x_5 \in CF(J_\C) \Rightarrow U_2\cap U_5=\emptyset$, while $U_2, U_5$ are both nonempty.  

\item $x_1x_4 \in CF(J_\C) \Rightarrow U_1\cap U_4=\emptyset$, while $U_1, U_4$ are both nonempty.

\item $x_2x_4 \in CF(J_\C) \Rightarrow U_2\cap U_4=\emptyset$, while $U_2, U_4$ are both nonempty.

\item $x_1x_3(1-x_2)\in CF(J_\C) \Rightarrow U_1\cap U_3\subseteq U_2$, while $U_1\not\subseteq U_2, U_3\not\subseteq U_2$, and $U_1\cap U_3\neq \emptyset$.

\item $x_4(1-x_3)(1-x_5)\in CF(J_\C) \Rightarrow U_4\subseteq U_3\cup U_5$, while $U_4\neq \emptyset$, and that $U_4\not\subseteq U_3, U_4\not\subseteq U_5$.  

\end{enumerate}

The minimal Type 1 relations (1-4) tell us that we should draw $U_1, U_3$ and $U_5$ with all pairwise intersections, 
but leaving a ``hole'' in the middle since the triple intersection is empty.  Then $U_2$ should be drawn to intersect $U_1$ and $U_3$, but not $U_5$.
Similarly, $U_4$ should intersect $U_3$ and $U_5$, but not $U_1$ or $U_2$.
The minimal Type 2 relations (5-6) tell us that $U_2$ should be drawn to contain the intersection $U_1 \cap U_3$, while $U_4$ lies in the union $U_3 \cup U_5$, but
is not contained in $U_3$ or $U_5$ alone.  There are no minimal Type 3 relations, as expected for a code that includes the all-zeros codeword.

Putting all this together, and assuming convex receptive fields, we can completely infer the receptive field structure, and draw
the corresponding picture (see Figure 4).
It is easy to verify that the code $\C(\U)$ of the pictured arrangement indeed coincides with $\C$.  

%\begin{wrapfigure}{r}{.4\linewidth}
\begin{figure}[h]
\centering
   \vspace{-.1in}
   \includegraphics[width=2.5in]{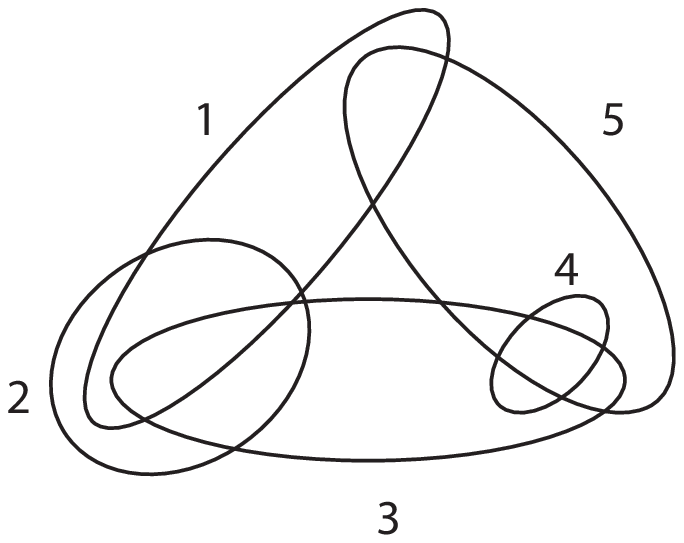}    
    \vspace{-.3in}
   \caption{\small An arrangement of five sets that realizes $\C$ as $\C(\U)$. }
% \vspace{-.1in}
 \end{figure}
%\end{wrapfigure}

\chapter{Primary Decomposition} \label{chap:prim-decomp}

Let $\C \subset \{0,1\}^n$ be a neural code.  The primary decomposition of $I_\C$ is boring: $$I_\C = \bigcap_{c \in \C} m_c,$$ 
where $m_v$ for any $v \in \{0,1\}^n$ is the maximal ideal $I(v)$ defined in Section~\ref{sec:spec}.
This simply expresses $I_\C$ as the intersection of all maximal ideals $m_c$ for $c \in \C$, because
the variety $\C = V(I_\C)$ is just a finite set of points and the primary decomposition reflects no additional structure of the code.

On the other hand, the primary decomposition of the neural ideal $J_\C$ retains the full combinatorial structure of $\C$.  Indeed, we have seen that computing this decomposition is a critical step towards obtaining $CF(J_\C)$, which captures the receptive field structure of the neural code.  In this section, we describe the primary decomposition of $J_\C$ and discuss its relationship to some natural decompositions of the neural code.  We end with an algorithm for obtaining primary decomposition of any pseudo-monomial ideal.

\section{Primary decomposition of the neural ideal}

We begin by defining some objects related to $\F_2[x_1,\ldots,x_n]$ and $\{0,1\}^n$, without reference to any particular neural code.
For any $a \in \{0,1,*\}^n$, we define the variety
$$V_a \od \{v \in \{0,1\}^n \mid v_i = a_i \text{ for all } i \text{ s.t. } a_i \neq *\} \subseteq \{0,1\}^n.$$
This is simply the subset of points compatible with the word ``$a$'', where $*$ is viewed as a ``wild card'' symbol.  Note that $V_v = \{v\}$ for any $v \in \{0,1\}^n$.
We can also associate a prime ideal to $a$,
$$\p_a \od \langle \{x_i - a_i \mid a_i \neq *\}\rangle \subseteq \F_2[x_1,\ldots,x_n],$$
consisting of polynomials in $\F_2[x_1,\ldots,x_n]$ that vanish on all points compatible with $a$.  To obtain all such polynomials,
we must add in the Boolean relations (see Section~\ref{sec:lemma-proofs}):
$$\q_a \od I(V_a) = \p_a + \langle x_1^2-x_1,\ldots,x_n^2-x_n \rangle.$$
Note that $V_a = V(\p_a) = V(\q_a)$.

Next, let's relate this all to a code $\C \subset \{0,1\}^n$. Recall the definition of the neural ideal,
$$J_\C \od  \langle \{ \rho_v \mid v \notin \C\}\rangle = \langle \{ \prod_{i=1}^n((x_i-v_i)-1) \mid v \notin \C\}\rangle.$$
We have the following correspondences.

 \begin{lemma}\label{lemma:lemma6}  
 $J_\C \subseteq \p_a \Leftrightarrow V_a \subseteq \C.$
 \end{lemma}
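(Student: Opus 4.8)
The plan is to prove both implications directly from the definitions, using the characteristic-function properties of the $\rho_v$ together with the description of the prime ideals $\p_a$. The key observation to establish at the outset is a pointwise criterion: for $v \in \{0,1\}^n$, $\rho_v \in \p_a$ if and only if $v \notin V_a$. Indeed, since $\p_a = I(V_a) \cap (\text{non-Boolean part})$ — more precisely $\q_a = I(V_a)$ and $\p_a \subseteq \q_a$ with $V(\p_a) = V_a$ — a polynomial lies in $\p_a$ exactly when it vanishes on all of $V_a$ (for $\q_a$ this is immediate; for $\p_a$ one uses that $\rho_v$ vanishes on $V_a$ iff it lies in $\q_a = \p_a + \B$, and since $\rho_v$ is already reduced modulo the Boolean relations, $\rho_v \in \q_a \Leftrightarrow \rho_v \in \p_a$). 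Now $\rho_v(w) = 1$ if $w = v$ and $0$ otherwise, so $\rho_v$ vanishes on all of $V_a$ precisely when $v \notin V_a$. This is the technical heart of the argument; everything else is bookkeeping.

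For the direction ($\Leftarrow$): suppose $V_a \subseteq \C$. The generators of $J_\C$ are the $\rho_v$ for $v \notin \C$. For any such $v$ we have $v \notin \C \supseteq V_a$, so $v \notin V_a$, and hence $\rho_v \in \p_a$ by the pointwise criterion above. Since all generators of $J_\C$ lie in $\p_a$, we get $J_\C \subseteq \p_a$.

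For the direction ($\Rightarrow$): suppose $V_a \not\subseteq \C$, so there exists $w \in V_a$ with $w \notin \C$. Then $\rho_w$ is one of the generators of $J_\C$, so $\rho_w \in J_\C$. But $w \in V_a$, so $\rho_w(w) = 1 \neq 0$, meaning $\rho_w$ does not vanish on $V_a$ and therefore $\rho_w \notin \p_a$. Hence $J_\C \not\subseteq \p_a$. Taking the contrapositive gives $J_\C \subseteq \p_a \Rightarrow V_a \subseteq \C$, completing the proof.

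The main obstacle is the care needed in the opening step to move between $\p_a$ and $\q_a = I(V_a)$: one must justify that for the specific polynomials $\rho_v$ (which are products of the linear forms $x_i$ and $1-x_i$ with no repeated indices, hence already ``Boolean-reduced''), membership in $\q_a$ is equivalent to membership in $\p_a$. This is where the Strong Nullstellensatz in Finite Fields and Lemma~\ref{lemma:mv}-style reasoning enter; alternatively, one can sidestep this by working directly: show $\rho_v \in \p_a = \langle x_i - a_i : a_i \neq * \rangle$ by checking that $\rho_v$, as a polynomial, lies in this ideal when $v \notin V_a$ (because then $v_i \neq a_i$ for some $i$ with $a_i \neq *$, and the corresponding factor of $\rho_v$ is either $x_i$ or $1 - x_i$, one of which is divisible by $x_i - a_i$ modulo nothing — in fact equals $\pm(x_i - a_i)$ or $\pm(x_i - a_i) + \text{const}$; specifically if $a_i = 0$ then $x_i - a_i = x_i$ and $\rho_v$ contains the factor $x_i$, and if $a_i = 1$ then $x_i - a_i = x_i - 1$ and $\rho_v$ contains the factor $1 - x_i = -(x_i-1)$). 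This direct divisibility argument is the cleanest route and avoids invoking the Nullstellensatz at all.
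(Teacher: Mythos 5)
Your proof is correct, but it takes a different route from the paper's. The paper argues through the ideal--variety correspondence: for ($\Rightarrow$) it applies $V(\cdot)$ to $J_\C \subseteq \p_a$ and uses $V(\p_a)=V_a$, $V(J_\C)=\C$; for ($\Leftarrow$) it passes to $I(\C)\subseteq I(V_a)$, i.e.\ $I_\C \subseteq \q_a$, and then asserts that since $I_\C$ and $\q_a$ differ from $J_\C$ and $\p_a$ by the Boolean relations, $J_\C\subseteq\p_a$ follows. You instead work entirely at the level of generators: when $v\notin V_a$ there is an $i$ with $a_i\neq *$ and $v_i\neq a_i$, so the corresponding linear factor of $\rho_v$ is exactly $x_i-a_i$ (over $\F_2$, $1-x_i=x_i-1$), giving $\rho_v\in\p_a$ directly; and when some $w\in V_a\setminus\C$, the generator $\rho_w$ satisfies $\rho_w(w)=1$ while every element of $\p_a$ vanishes on $V_a$, so $\rho_w$ is an explicit witness that $J_\C\not\subseteq\p_a$. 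What your route buys is elementarity and rigor precisely where the paper is loosest: it avoids the Nullstellensatz entirely, and it replaces the paper's ``strip the Boolean relations'' step (which is not formally immediate from $J_\C+\B\subseteq\p_a+\B$) with a concrete divisibility argument. One caution: your first attempted justification of the pointwise criterion --- ``$\rho_v\in\q_a\Leftrightarrow\rho_v\in\p_a$ because $\rho_v$ is Boolean-reduced'' --- is itself a nontrivial claim and should not be stated as obvious; but since you explicitly fall back on the direct divisibility argument, which is complete on its own, the proof stands.
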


 \begin{proof}
 ($\Rightarrow$)  $J_\C \subseteq \p_a \Rightarrow V(\p_a) \subseteq V(J_\C).$  Recalling that  $V(\p_a) = V_a$ and   $V(J_\C) = \C$, this gives $V_a \subseteq \C.$\\
($\Leftarrow$) $V_a \subseteq \C \Rightarrow I(\C) \subseteq I(V_a) \Rightarrow I_\C \subseteq \q_a.$  Recalling that both $I_\C$ and $\q_a$ differ
from $J_\C$ and $\p_a$, respectively, by the addition of the Boolean relations, we obtain $J_\C \subseteq \p_a$.
 \end{proof}

\begin{lemma}\label{lemma:lemma7}
 For any $a,b\in\{0,1,*\}^n$,
$V_a\subseteq V_b \Leftrightarrow \mathbf{p}_b\subseteq\mathbf{p}_a.$
\end{lemma}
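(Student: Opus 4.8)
The plan is to prove the two implications separately. The direction $(\Leftarrow)$ is essentially free: if $\p_b\subseteq\p_a$, then by the order-reversing ideal--variety correspondence recorded above we get $V(\p_a)\subseteq V(\p_b)$, and since $V(\p_a)=V_a$ and $V(\p_b)=V_b$ (as noted just before the statement), this is exactly $V_a\subseteq V_b$. So all of the content is in the direction $(\Rightarrow)$, and the one thing to be careful about is that over $\F_2$ we do \emph{not} have $I(V_a)=\p_a$; rather $I(V_a)=\q_a=\p_a+\B$ is strictly larger, so we cannot simply apply $I(-)$ to the inclusion $V_a\subseteq V_b$ and be done.

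For $(\Rightarrow)$ I would instead argue combinatorially at the level of the defining patterns. Write $S=\{i\mid a_i\neq *\}$ and $T=\{i\mid b_i\neq *\}$, so that $\p_a=\langle x_i-a_i\mid i\in S\rangle$ and $\p_b=\langle x_i-b_i\mid i\in T\rangle$. It suffices to show that $V_a\subseteq V_b$ forces $T\subseteq S$ together with $a_i=b_i$ for every $i\in T$: once we know this, the generating set of $\p_b$ is literally a subset of the generating set of $\p_a$, hence $\p_b\subseteq\p_a$. (One could instead try to push the inclusion $\q_b\subseteq\q_a$, i.e.\ $\p_b+\B\subseteq\p_a+\B$, back down to $\p_b\subseteq\p_a$, but that would itself require checking that a linear form lying in $\p_a+\B$ already lies in $\p_a$; the pattern-level argument sidesteps this.)

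To establish the combinatorial claim, fix $i\in T$. The key observation is that $V_a$ is a nonempty subcube of $\{0,1\}^n$ on which the coordinates indexed by $[n]\setminus S$ range freely. If $i\notin S$, then $V_a$ contains a point with $i$-th coordinate $0$ and a point with $i$-th coordinate $1$; but every point of $V_a$ lies in $V_b$ and hence has $i$-th coordinate equal to $b_i$, a contradiction. Therefore $i\in S$, and now every $v\in V_a$ satisfies both $v_i=a_i$ and (since $v\in V_b$) $v_i=b_i$, giving $a_i=b_i$. This yields $T\subseteq S$ with $a_i=b_i$ on $T$, which completes the proof.

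The step I expect to be the main obstacle, if one insists on a purely algebraic route, is precisely this gap between $\p_a$ and $I(V_a)=\q_a$ over the finite field $\F_2$; the subcube / coordinate-flip observation is what lets us handle it cleanly, and everything else is routine bookkeeping about which generators occur. It is worth noting that this lemma is the exact analogue for patterns of the already-proved Lemma~\ref{lemma:lemma6}, and could alternatively be phrased as a consequence of it if one first verified $J_{V_b}=\p_b$, but the direct argument above is shorter.
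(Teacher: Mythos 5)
Your proof is correct and follows essentially the same route as the paper: the $(\Leftarrow)$ direction via the order-reversing ideal--variety correspondence, and the $(\Rightarrow)$ direction by showing that $V_a\subseteq V_b$ forces $a_i=b_i$ whenever $b_i\neq *$, so that every generator of $\p_b$ lies in $\p_a$. Your coordinate-flip argument on the nonempty subcube $V_a$ simply makes explicit a step the paper states without justification, and your caution about $I(V_a)=\q_a\neq\p_a$ over $\F_2$ is apt but does not change the argument.
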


\begin{proof}
($\Rightarrow$) Suppose $V_a\subseteq V_b$. Then, for any $i$ such that $b_i\neq *$ we have $a_i=b_i$. It follows that each generator of $\mathbf{p}_b$ is also in $\mathbf{p}_a$, so $\p_b \subseteq \p_a$.
($\Leftarrow$) Suppose $\mathbf{p}_b\subseteq\mathbf{p}_a$. Then, $V_a=V(\mathbf{p}_a)\subseteq V(\mathbf{p}_b)=V_b.$
\end{proof}

Recall that a an ideal $\p$ is said to be a {\em minimal prime} over $J$ if $\p$ is a prime ideal that contains $J$, and there is no other prime ideal $\p'$ such that $\p \supsetneq \p' \supseteq J$.  Minimal primes $\p_a \supseteq J_\C$ correspond to maximal varieties $V_a$ such that $V_a \subseteq \C$.
Consider the set
 $$\A_\C \od \{ a \in \{0,1,*\}^n \mid V_a \subseteq \C\}.$$
We say that $a \in \A_\C$ is {\em maximal} if there does not exist another element $b \in \A_\C$ such that $V_a \subsetneq V_b$ (i.e., $a \in \A_\C$ is maximal if $V_a$ is maximal such that $V_a \subseteq \C$).

\begin{lemma}\label{lemma:correspondence}
The element $a \in \A_\C$ is maximal if and only if $\p_a$ is a minimal prime over $J_\C$.
\end{lemma}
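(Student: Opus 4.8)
The plan is to translate the statement entirely through the two order-reversing correspondences already established, namely Lemma~\ref{lemma:lemma6} ($J_\C \subseteq \p_a \Leftrightarrow V_a \subseteq \C$) and Lemma~\ref{lemma:lemma7} ($V_a \subseteq V_b \Leftrightarrow \p_b \subseteq \p_a$), together with the basic fact that every prime ideal of $\F_2[x_1,\ldots,x_n]$ containing the Boolean relations is of the form $\p_a$ for some $a \in \{0,1,*\}^n$. I would first record this last fact: if $\p$ is prime and $\p \supseteq J_\C$, then $\p \supseteq \B$ (since $\B \subseteq I_\C$, and actually $\B \subseteq J_\C + \B = I_\C$, but more directly each $x_i^2 - x_i = x_i(1-x_i) \in \p$ forces $x_i \in \p$ or $1-x_i \in \p$ by primality), so $\p$ contains, for each $i$, either $x_i$ or $x_i - 1$, or neither; collecting these choices defines an $a \in \{0,1,*\}^n$ with $\p_a \subseteq \p$. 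Since $V(\p) \subseteq V(\p_a) = V_a$ and $\p$ is prime hence its variety over $\F_2$ is a single point set only if\ldots — more carefully, one shows $\p = \p_a$ because $\p_a$ is already maximal among primes with that support pattern, or simply because $\F_2[x_1,\ldots,x_n]/\p_a \cong \F_2[\{x_i : a_i = *\}]$ and any further prime-generating relation would have to be among the remaining variables, which a pseudo-monomial-free argument rules out. Actually the cleanest route: $\p$ prime containing $\p_a$ corresponds to a prime of the quotient $\F_2[x_j : a_j=*]$ containing $J_\C$'s image; but we only need that the \emph{minimal} such $\p$ is $\p_a$ itself, which follows since $\p_a$ is prime.

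Granting that every minimal prime over $J_\C$ has the form $\p_a$, the proof is a short chain of equivalences. Suppose $a \in \A_\C$ is maximal. Then $V_a \subseteq \C$, so by Lemma~\ref{lemma:lemma6} $\p_a \supseteq J_\C$, and $\p_a$ is prime. If $\p_a$ were not minimal over $J_\C$, there would be a prime $\p'$ with $J_\C \subseteq \p' \subsetneq \p_a$; by the structural fact, $\p' = \p_b$ for some $b \in \{0,1,*\}^n$, and $\p_b \subsetneq \p_a$ gives (Lemma~\ref{lemma:lemma7}) $V_a \subsetneq V_b$, while $\p_b \supseteq J_\C$ gives (Lemma~\ref{lemma:lemma6}) $V_b \subseteq \C$, i.e.\ $b \in \A_\C$ — contradicting maximality of $a$. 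Conversely, suppose $\p_a$ is a minimal prime over $J_\C$. Then $J_\C \subseteq \p_a$, so $V_a \subseteq \C$ and $a \in \A_\C$. If $a$ were not maximal, there is $b \in \A_\C$ with $V_a \subsetneq V_b$; then $\p_b \subsetneq \p_a$ (Lemma~\ref{lemma:lemma7}, using that strict containment is preserved — $\p_b = \p_a$ would force $V_a = V_b$) and $\p_b \supseteq J_\C$ (Lemma~\ref{lemma:lemma6}), so $\p_b$ is a prime strictly between $J_\C$ and $\p_a$, contradicting minimality of $\p_a$. Hence the two notions coincide.

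The one genuine point requiring care — and the only real obstacle — is the ``structural fact'' that every prime ideal containing $J_\C$ (equivalently, containing the Boolean relations) is of the form $\p_a$; everything else is a formal consequence of the two lemmas plus the observation that the correspondence $a \leftrightarrow \p_a$ sends strict inclusions of varieties to strict inclusions of ideals (injectivity of $a \mapsto V_a$ is immediate since $V_v = \{v\}$ recovers $a$ on its non-$*$ coordinates via which single points lie in $V_a$). To nail the structural fact I would argue: let $\p$ be prime with $\B \subseteq \p$. For each $i$, $x_i(1-x_i) \in \p$ forces $x_i \in \p$ or $1 - x_i \in \p$ (not necessarily exclusive). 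Define $a_i = 0$ if $x_i \in \p$, $a_i = 1$ if $(1-x_i) \in \p$ and $x_i \notin \p$, and $a_i = *$ otherwise; then $\p_a \subseteq \p$. Passing to $R = \F_2[x_1,\ldots,x_n]/\p_a$, which is a polynomial ring in the variables $\{x_i : a_i = *\}$ modulo the Boolean relations on exactly those variables, one sees $R$ is reduced with finitely many points, and the image of $\p$ is a prime avoiding, by construction, every $x_i$ and every $1-x_i$ for $a_i = *$ (if $x_i \in \p$ then $a_i$ would be $0$, not $*$; similarly for $1-x_i$). But in a Boolean ring the only prime containing none of these is the zero ideal — because any nonzero element is a sum of characteristic functions and lies above some $x_i$ or $1-x_i$ factor after reduction. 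Hence the image of $\p$ is zero, i.e.\ $\p = \p_a$. With this in hand the theorem follows as above; I would present it in the order: (1) structural fact, (2) the $a$ maximal $\Rightarrow \p_a$ minimal direction, (3) the converse, each using only Lemmas~\ref{lemma:lemma6} and~\ref{lemma:lemma7}.
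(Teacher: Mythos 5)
Your main chain of equivalences is exactly the paper's argument: both directions run through Lemma~\ref{lemma:lemma6} and Lemma~\ref{lemma:lemma7}, with the same strictness bookkeeping, and your backward direction is correct as written. The genuine gap is the ``structural fact'' you make the centerpiece of the forward direction, namely that every prime ideal containing $J_\C$ is of the form $\p_b$. The route you take to it is broken at the first step: from $\p \supseteq J_\C$ you cannot conclude $\B \subseteq \p$ (that would follow from $\p \supseteq I_\C = J_\C + \B$, which is not given), and your ``more directly'' justification already assumes $x_i(1-x_i) \in \p$, so it is circular. The statement you actually need is false. Concretely, take $n=3$ and $\C = \{0,1\}^3 \setminus \{111\}$, so $J_\C = \langle x_1x_2x_3 \rangle$. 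The ideal $\p' = \langle x_1,\, x_2+x_3 \rangle$ is prime (the quotient is $\F_2[x_3]$), contains $J_\C$, does not contain $x_2(1-x_2)$, and is not any $\p_b$ (its zero set in $\F_2^3$ is $\{000,011\}$, which is not of the form $V_b$); moreover $J_\C \subseteq \p' \subsetneq \p_{000}$, so primes of exactly the kind your contradiction argument produces need not lie in the family $\{\p_b\}$. Your closing attempt to ``nail'' the fact only treats primes that already contain $\B$ (for those there are in fact no $*$'s at all, and $\p = m_v$), and it misdescribes $\F_2[x_1,\ldots,x_n]/\p_a$: that quotient is a polynomial ring in the starred variables, not a Boolean ring — the Boolean quotient would be by $\q_a = \p_a + \B$.

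The repair is a weaker bridging statement, which is all your argument needs: every prime $\p' \supseteq J_\C$ contains some $\p_b$ with $J_\C \subseteq \p_b \subseteq \p'$. Indeed, each generator $\rho_v$ ($v \notin \C$) is a product of linear factors $z_i \in \{x_i, 1-x_i\}$, so primality of $\p'$ lets you choose one factor $z_{i(v)} \in \p'$ for each such $v$; the ideal generated by these choices is proper (it sits inside $\p'$), so it never contains both $x_i$ and $1-x_i$ and is therefore $\p_b$ for a well-defined $b$, and it contains every $\rho_v$, hence $J_\C$. With this, a prime $\p'$ with $J_\C \subseteq \p' \subsetneq \p_a$ yields $J_\C \subseteq \p_b \subseteq \p' \subsetneq \p_a$, and your Lemma~\ref{lemma:lemma6}/Lemma~\ref{lemma:lemma7} argument forces $b = a$, a contradiction. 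This is, in effect, how the paper proceeds: its proof only ever compares $\p_a$ against primes of the form $\p_b$, with the reduction of arbitrary primes to that family underwritten by the pseudo-monomial structure of $J_\C$ (cf.\ Proposition~\ref{prop:primarydec}). So your overall plan is sound and close to the paper's, but as written the forward direction rests on a false claim with a circular proof and needs the patch above.
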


\begin{proof}
Recall that $a \in \A_\C \Rightarrow V_a \subseteq \C$, and hence $J_\C \subseteq \p_a$ (by Lemma~\ref{lemma:lemma6}).
($\Rightarrow$) Let $a \in \A_\C$ be maximal, and choose $b \in \{0,1,*\}$ such that 
$J_\C \subseteq \p_b \subseteq \p_a$.
By Lemmas~\ref{lemma:lemma6} and~\ref{lemma:lemma7}, $V_a \subseteq V_b \subseteq \C$.  Since $a$ is maximal, we conclude that $b = a$, and hence $\p_b = \p_a$.  
It follows that $\p_a$ is a minimal prime over $J_\C$.  ($\Leftarrow$) Suppose $\p_a$ is a minimal prime over $J_\C$.  Then by Lemma~\ref{lemma:lemma6}, $a \in \A_\C$.  Let
$b$ be a maximal element of $\A_\C$ such that $V_a \subseteq V_b \subseteq \C$.  Then $J_\C \subseteq \p_b \subseteq \p_a$.  Since $\p_a$ is a minimal prime over $J_\C$, $\p_b = \p_a$ and hence $b = a$.  Thus $a$ is maximal in $\A_\C$.
\end{proof}

\noindent We can now describe the primary decomposition of $J_\C$.  Here we assume the neural code $\C \subseteq \{0,1\}^n$ is non-empty, so that $J_\C$ is a proper pseudo-monomial ideal.

\begin{theorem}\label{thm:prim-decomp}
$J_\C = \bigcap_{i=1}^\ell \p_{a_i}$ is the unique irredundant primary decomposition of $J_\C$, where $\p_{a_1},\ldots,\p_{a_\ell}$ are the minimal primes over $J_\C$. 
\end{theorem}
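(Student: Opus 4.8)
The plan is to prove Theorem~\ref{thm:prim-decomp} in three stages: first establish that the intersection $\bigcap_{i=1}^\ell \p_{a_i}$ of the minimal primes over $J_\C$ actually equals $J_\C$; then verify this is a legitimate primary decomposition; and finally argue irredundancy and uniqueness. For the first stage, one containment is free: since each $\p_{a_i}$ is a prime containing $J_\C$, we have $J_\C \subseteq \bigcap_i \p_{a_i}$. For the reverse, I would pass to varieties. We know $V(J_\C) = \C$ from the earlier discussion, and by Lemma~\ref{lemma:correspondence} the $a_i$ are exactly the maximal elements of $\A_\C = \{a \in \{0,1,*\}^n \mid V_a \subseteq \C\}$. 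Since every codeword $c \in \C$ satisfies $\{c\} = V_c \subseteq \C$, each $c$ is contained in some maximal $V_{a_i}$, so $\C = \bigcup_{i=1}^\ell V_{a_i} = \bigcup_i V(\p_{a_i}) = V(\bigcap_i \p_{a_i})$. Thus $V(J_\C) = V(\bigcap_i \p_{a_i})$. To upgrade this equality of varieties to an equality of ideals, I would invoke the Strong Nullstellensatz in Finite Fields: both $J_\C$ and $\bigcap_i \p_{a_i}$ become radical once we add the Boolean relations $\B$, so $I(V(J_\C)) = J_\C + \B$ and similarly $I(V(\bigcap_i \p_{a_i})) = \bigcap_i \p_{a_i} + \B$ (using that $\B$ distributes over finite intersections, since each $\q_{a_i} = \p_{a_i} + \B$). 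Hence $J_\C + \B = \bigcap_i \p_{a_i} + \B$. The one subtlety here — and the step I expect to be the main obstacle — is removing the $+\B$ from both sides to conclude $J_\C = \bigcap_i \p_{a_i}$; this needs a pseudo-monomial argument showing that adjoining the Boolean relations does not change a pseudo-monomial ideal's membership of pseudo-monomials, or alternatively a direct generator-chasing argument that each $\rho_v$ for $v \notin \C$ lies in every $\p_{a_i}$ (immediate, since $v \notin V_{a_i}$ forces some coordinate mismatch, giving a linear generator $x_j - (a_i)_j$ dividing $\rho_v$) together with the reverse inclusion handled by a minimal-pseudo-monomial / Lemma~\ref{lemma:matching}-style argument.

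For the second stage, I would observe that each $\p_{a_i}$ is prime (stated in the definition of $\p_a$, and easy to see since $\F_2[x_1,\ldots,x_n]/\p_{a_i} \cong \F_2[x_j : (a_i)_j = *]$, a polynomial ring over a field, hence an integral domain), and every prime ideal is primary. So $\bigcap_{i=1}^\ell \p_{a_i}$ is automatically a primary decomposition. Moreover the associated primes of this decomposition are the $\p_{a_i}$ themselves, and since they are pairwise incomparable (being the distinct maximal elements of $\A_\C$ transported via the order-reversing correspondence of Lemma~\ref{lemma:lemma7}), none can be omitted: dropping $\p_{a_i}$ would enlarge the intersection because $V_{a_i} \not\subseteq \bigcup_{j \neq i} V_{a_j}$ (otherwise $a_i$ would not be maximal, or rather $V_{a_i}$ would be redundant — here I would pick a point of $V_{a_i}$ not in any other $V_{a_j}$; such a point exists since the $a_i$ are distinct maximal wildcard-words and no $V_{a_i}$ is contained in a union of the others, which one checks by a counting or direct coordinate argument). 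This gives irredundancy.

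For the third stage, uniqueness: because all the associated primes $\p_{a_i}$ are minimal primes over $J_\C$ (none is embedded — indeed they are pairwise incomparable), the Lasker–Noether uniqueness theorem guarantees that not only the set of associated primes but also the primary components themselves are uniquely determined. Concretely, the isolated (minimal) primary components of any primary decomposition are unique, and here every component is isolated, so the decomposition $J_\C = \bigcap_{i=1}^\ell \p_{a_i}$ is the unique irredundant primary decomposition. Alternatively, one can argue directly: any irredundant primary decomposition of $J_\C$ must have associated primes equal to the minimal primes over $J_\C$ (since $J_\C$ is radical — being an intersection of primes — every primary component is itself prime, equal to its radical, and these radicals are forced to be precisely the minimal primes), so the decomposition is forced to be exactly $\bigcap_i \p_{a_i}$. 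I would present whichever of these is cleanest; the radical-ideal observation makes the whole uniqueness argument quite short, since a radical ideal's irredundant primary decomposition is just its (unique) expression as an irredundant intersection of its minimal primes.
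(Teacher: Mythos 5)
Your overall architecture (easy containment, then primality/irredundancy, then uniqueness via the radical-ideal form of Lasker--Noether) is reasonable, and your Stage 3 matches what the paper does with Lemmas~\ref{lemma:inter} and~\ref{lemma:rad}. But the step you yourself flag as ``the main obstacle'' --- proving $\bigcap_i \p_{a_i} \subseteq J_\C$, equivalently that the pseudo-monomial ideal $J_\C$ is radical in $\F_2[x_1,\ldots,x_n]$ itself and not merely after adding $\B$ --- is a genuine gap, and neither repair you gesture at closes it. No variety-level argument can: over $\F_2$ the variety determines an ideal only up to adding the Boolean relations, so $V(J_\C)=V(\bigcap_i\p_{a_i})$ yields exactly $J_\C+\B=\bigcap_i\p_{a_i}+\B$ and nothing finer. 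Your ``generator-chasing'' alternative ($\rho_v\in\p_{a_i}$ for $v\notin\C$) is just the easy containment you already had, and the ``pseudo-monomial membership'' claim would only help if you also knew that $\bigcap_i\p_{a_i}$ is generated by pseudo-monomials --- which is essentially the thing to be proved (note that intersections of ideals of the form $\p_a$ are not formally pseudo-monomial, e.g.\ $\langle x_1\rangle\cap\langle 1-x_1\rangle=\langle x_1(1-x_1)\rangle$). This is precisely what the paper's splitting machinery supplies: Lemmas~\ref{new-lemma}, \ref{lemma:z-decomp}, \ref{lemma:x_i-reduction} and the algorithm of Proposition~\ref{prop:prim-decomp} constructively write $J_\C$ as an irredundant intersection of primes of the form $\p_a$; radicality then follows from Lemma~\ref{lemma:inter}, and Lemma~\ref{lemma:rad} gives uniqueness and identifies the components as the minimal primes. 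That route also shows every minimal prime over $J_\C$ has the form $\p_a$, a fact your appeal to Lemma~\ref{lemma:correspondence} quietly presupposes, since that lemma only concerns primes already known to be of wildcard form.

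There is also a concrete error in your irredundancy argument: the claim that for each maximal $a_i$ there is a point of $V_{a_i}$ lying in no other $V_{a_j}$ is false. The paper's own example in Section~\ref{sec:boolean-lattice}, $\C=\{000,001,011,111\}$, has maximal elements $00*$, $0{*}1$, $*11$ with $V_{0*1}\subseteq V_{00*}\cup V_{*11}$ (the decomposition of Lemma~\ref{lemma:code-decomp} is explicitly noted there to be non-minimal). Yet $\p_{0*1}$ cannot be dropped from the ideal intersection: for instance $x_2(1-x_2)$ lies in $\langle x_1,x_2\rangle\cap\langle 1-x_2,1-x_3\rangle$ but not in $\langle x_1,1-x_3\rangle$. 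Again the culprit is that finite varieties cannot detect the difference. Irredundancy must be argued algebraically --- the minimal primes are pairwise incomparable, and a prime containing a finite intersection contains one of the factors --- or simply read off from Lemma~\ref{lemma:rad} once radicality is established; it cannot be obtained by covering considerations on the $V_{a_i}$.
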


\noindent The proof is given in Section~\ref{sec:prim-decomp-proof}.

  Combining this theorem
with Lemma~\ref{lemma:correspondence}, we have:

\begin{corollary}
$J_\C = \bigcap_{i=1}^\ell \p_{a_i}$ is the unique irredundant primary decomposition of $J_\C$, where 
$a_1,\ldots,a_\ell$ are the maximal elements of $A_\C$.
\end{corollary}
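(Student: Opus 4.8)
The plan is to deduce this directly from Theorem~\ref{thm:prim-decomp} together with Lemma~\ref{lemma:correspondence}, so that the corollary is essentially a re-indexing of an already-established decomposition. Theorem~\ref{thm:prim-decomp} gives $J_\C = \bigcap_{i=1}^\ell \p_{a_i}$ as the unique irredundant primary decomposition, where $\p_{a_1},\ldots,\p_{a_\ell}$ is an enumeration of the minimal primes over $J_\C$. Thus the only thing left to do is identify this set of minimal primes with the set $\{\p_a \mid a \text{ maximal in } \A_\C\}$, after which substituting into the statement of Theorem~\ref{thm:prim-decomp} yields the claim and its uniqueness.

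The key step is Lemma~\ref{lemma:correspondence}, which states that $a \in \A_\C$ is maximal if and only if $\p_a$ is a minimal prime over $J_\C$. This immediately gives the equality of sets $\{\p_a \mid a \in \A_\C \text{ maximal}\} = \{\text{minimal primes over } J_\C\}$. To be careful that these are genuinely the same list (so that no maximal element is counted twice or omitted), I would observe that $a \mapsto \p_a$ is injective: if $\p_a = \p_b$ then $V_a = V(\p_a) = V(\p_b) = V_b$, and since $a \in \{0,1,*\}^n$ is recovered from $V_a$ (namely $a_i$ is the common value $v_i$ over $v \in V_a$ when that value is constant, and $a_i = *$ otherwise), we conclude $a = b$; alternatively this follows from applying Lemma~\ref{lemma:lemma7} in both directions. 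Hence the maximal elements of $\A_\C$ are in bijection with the minimal primes over $J_\C$.

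I do not expect any real obstacle here: the corollary is a bookkeeping consequence of Theorem~\ref{thm:prim-decomp} and Lemma~\ref{lemma:correspondence}, and the only point deserving an explicit word is the injectivity of $a \mapsto \p_a$, which ensures ``the minimal primes over $J_\C$'' and ``the ideals $\p_a$ for $a$ maximal in $\A_\C$'' describe exactly the same collection. Once the bijection is in hand, the uniqueness and irredundancy are simply inherited from Theorem~\ref{thm:prim-decomp}.
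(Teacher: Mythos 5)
Your proposal is correct and follows the same route as the paper, which obtains the corollary precisely by combining Theorem~\ref{thm:prim-decomp} with Lemma~\ref{lemma:correspondence}; your extra remark on the injectivity of $a \mapsto \p_a$ (via Lemma~\ref{lemma:lemma7}) is a harmless refinement the paper leaves implicit.
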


Proof of Theorem~\ref{thm:prim-decomp}:

Recall that $J_\C$ is always a proper pseudo-monomial ideal for any nonempty neural code $\C \subseteq \{0,1\}^n$. Theorem~\ref{thm:prim-decomp} is thus a direct consequence of the following proposition.

\begin{proposition}\label{prop:primarydec}
Suppose $J \subset \F_2[x_1,\ldots,x_n]$ is a proper pseudo-monomial ideal. Then, $J$ has a unique irredundant primary decomposition of the form
$J = \bigcap_{a \in \A} \p_a,$
where $\{\p_a\}_{a \in \A}$ are the minimal primes over $J$.
\end{proposition}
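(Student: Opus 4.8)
The plan is to derive the proposition from two facts: (1) every proper pseudo-monomial ideal $J$ is \emph{radical}; and (2) every minimal prime over $J$ has the form $\p_a$ for some $a\in\{0,1,*\}^n$. Granting these, the conclusion follows from standard commutative algebra: $\F_2[x_1,\ldots,x_n]$ is Noetherian, so the radical ideal $J$ equals the intersection of the finitely many minimal primes over it, and for a radical ideal this is the unique irredundant primary decomposition (the primary components of a radical ideal are the minimal primes themselves, and no minimal prime contains another, so the decomposition is irredundant). By (2) these primes are among the $\p_a$, which gives the asserted form.

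Fact (2) is straightforward. Write $J=\langle f_1,\ldots,f_r\rangle$ with $f_k=\prod_{i\in\sigma_k}x_i\prod_{j\in\tau_k}(1-x_j)$ and $\sigma_k\cap\tau_k=\emptyset$, and let $\p$ be any prime containing $J$. Each $f_k\in\p$, so, as $\p$ is prime, some linear factor of $f_k$ --- an $x_i$ with $i\in\sigma_k$, or a $1-x_j$ with $j\in\tau_k$ --- lies in $\p$; and $\p$ cannot contain both $x_i$ and $1-x_i$, since their sum is $1$. These chosen factors therefore define a consistent $a\in\{0,1,*\}^n$ with $\p_a\subseteq\p$, and $J\subseteq\p_a$ since each $f_k$ is a multiple of the factor selected from it. As $\p_a$ is prime (its quotient is a polynomial ring), minimality of $\p$ over $J$ forces $\p=\p_a$. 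Hence the minimal primes over $J$ are exactly those $\p_a$ minimal, with respect to inclusion, among the finitely many $\p_a$ that contain $J$.

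The substance is Fact (1), which I would prove by induction on $n$ (the case $n=0$ being trivial). Put $S=\F_2[x_2,\ldots,x_n]$ and partition the pseudo-monomial generators of $J$ according to whether they are divisible by $x_1$, by $1-x_1$, or by neither; this produces pseudo-monomial ideals $A,B,K\subseteq S$ with $J=x_1 A^e+(1-x_1)B^e+K^e$, where $(\,\cdot\,)^e$ denotes extension to $R=S[x_1]$. Reducing coefficients modulo $K$ yields an isomorphism
$$R/J\;\cong\;T[x_1]\big/\big(x_1\bar A+(1-x_1)\bar B\big),\qquad T=S/K,\ \ \bar A=(A+K)/K,\ \ \bar B=(B+K)/K.$$
Now $K$, $A+K$, $B+K$, and $A+B+K$ are pseudo-monomial ideals in $n-1$ variables, so by the inductive hypothesis they are radical; equivalently the rings $T$, $T/\bar A\cong S/(A+K)$, $T/\bar B\cong S/(B+K)$, and $T/(\bar A+\bar B)\cong S/(A+B+K)$ are reduced (when one of these ideals equals $S$ the corresponding quotient is the zero ring, which is harmless). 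The key elementary point is that for any commutative ring $T$ and ideals $P,Q\subseteq T$, a polynomial $f\in T[x_1]$ lies in $x_1P+(1-x_1)Q$ if and only if $f(0)\in Q$, $f(1)\in P$, and every $x_1$-coefficient of $f$ lies in $P+Q$ (proved by splitting each coefficient and dividing by the monic $1-x_1$). Applying the ``only if'' direction to $f^m$ and the ``if'' direction to $f$, and using that evaluation is multiplicative and that a polynomial ring over a reduced ring is reduced, one gets that $R/J$ has no nonzero nilpotent; that is, $J$ is radical.

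The main obstacle is precisely this radicality. Unlike for genuine monomial ideals, there is no single change of variables turning all pseudo-monomial generators into monomials at once, so one cannot simply cite the square-free monomial case; the inductive splitting on a single variable, combined with the coefficient criterion, is what carries the argument. The remaining steps --- justifying the displayed isomorphism and handling the degenerate cases where $A+K$, $B+K$, or $A+B+K$ is the unit ideal --- are routine bookkeeping.
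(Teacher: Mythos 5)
Your argument is correct, but it reaches the conclusion by a genuinely different route than the paper. The paper never proves radicality of $J$ ``by hand'': it first runs the splitting algorithm of Section~\ref{sec:prim-decomp-algorithm} (Proposition~\ref{prop:prim-decomp}, resting on Lemmas~\ref{new-lemma} and~\ref{lemma:z-decomp}) to exhibit $J$ as an irredundant intersection of primes of the form $\p_a$, deduces that $J$ is radical because it is an intersection of primes (Lemma~\ref{lemma:inter}), and then invokes the Lasker--Noether consequence (Lemma~\ref{lemma:rad}) to identify that intersection as the unique irredundant primary decomposition by minimal primes. You instead prove the two structural facts directly and then quote the same standard commutative algebra: your identification of the minimal primes as $\p_a$'s (choose a linear factor of each pseudo-monomial generator inside a prime $\p$, note $\p$ cannot contain both $x_i$ and $1-x_i$) is a clean argument in the spirit of the paper's Lemma~\ref{lemma:matching}, only run in the opposite direction; and your radicality proof is new relative to the paper --- an induction on the number of variables via the decomposition $J = x_1A^e+(1-x_1)B^e+K^e$ and the coefficient criterion for membership in $x_1P+(1-x_1)Q$ over $T[x_1]$, which I checked goes through (including the monic-division step and the degenerate cases where a quotient is the zero ring). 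What each approach buys: the paper's proof is essentially free once the algorithm is developed, and that algorithm is needed anyway to compute the decomposition and hence $CF(J_\C)$ in practice; your proof is self-contained and non-algorithmic, gives ``pseudo-monomial ideals are radical'' as a standalone structural statement exactly parallel to the square-free monomial case, and does not depend on the correctness or termination of the recursive splitting procedure.
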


\begin{proof}
By Proposition~\ref{prop:prim-decomp}, we can always (algorithmically) obtain an irredundant set $\PP$ of prime ideals such that $J = \bigcap_{I \in \PP} I$.
Furthermore, each $I \in \PP$ has the form $I = \langle z_{i_1},\ldots,z_{i_k}\rangle$, where $z_i \in \{ x_i, 1-x_i\}$ for each $i$.
Clearly, these ideals are all prime ideals of the form $\p_a$ for $a \in \{0,1,*\}$.  
It remains only to show that this primary decomposition is unique, and that
the ideals $\{\p_a\}_{a \in \A}$ are the minimal primes over $J$.  
This is a consequence of some well-known facts summarized in Lemmas~\ref{lemma:inter} and~\ref{lemma:rad}, below.  First, observe by Lemma~\ref{lemma:inter} that $J$ is a radical ideal.  Lemma~\ref{lemma:rad} then tells us that the decomposition in terms of minimal primes is the unique irredundant primary decomposition for $J$.
\end{proof}

\begin{lemma}\label{lemma:inter}
If $J$ is the intersection of prime ideals, $J=\bigcap_{i=1}^\ell \mathbf{p}_i$, 
then $J$ is a radical ideal. 
\end{lemma}
\begin{proof}
Suppose $p^n\in J$. Then $p^n\in \mathbf{p}_i$ for all $i \in [\ell]$, and hence $p\in \mathbf{p}_i$ for all $i  \in [\ell]$. Therefore, $p\in J$.
\end{proof}

The following fact about the primary decomposition of radical ideals is true over any field, as a consequence of the Lasker-Noether theorems 
\cite[pp. 204-209]{cox-little-oshea}.

\begin{lemma}\label{lemma:rad}
If $J$ is a proper radical ideal, then it has a unique irredundant primary decomposition consisting of the minimal prime ideals over $J$.
\end{lemma}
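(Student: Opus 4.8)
The plan is to deduce the statement from the Lasker--Noether theorems, using radicality to collapse every primary component onto its associated prime. Since our polynomial ring is Noetherian, $J$ admits \emph{some} irredundant primary decomposition $J = \bigcap_{i=1}^{s}\mathfrak q_i$ with $\sqrt{\mathfrak q_i} = \mathfrak p_i$ prime and the $\mathfrak p_i$ pairwise distinct. Taking radicals and using $J = \sqrt{J}$ gives $J = \sqrt{\bigcap_i \mathfrak q_i} = \bigcap_i \sqrt{\mathfrak q_i} = \bigcap_{i=1}^{s}\mathfrak p_i$. Now discard any $\mathfrak p_i$ that contains one of the others; what remains is $J = \bigcap_{i\in T}\mathfrak p_i$ with the $\mathfrak p_i$ ($i\in T$) pairwise incomparable. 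This intersection is irredundant, since if $\bigcap_{i \in T\setminus\{j\}}\mathfrak p_i \subseteq \mathfrak p_j$ then, because a finite intersection of ideals contained in a prime forces one of them to lie in that prime, some $\mathfrak p_i \subseteq \mathfrak p_j$ with $i \neq j$, contradicting incomparability.

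Next I would identify the $\mathfrak p_i$, $i \in T$, with the minimal primes over $J$. Each contains $J$; conversely, any prime $\mathfrak p \supseteq J$ contains $\bigcap_{i \in T}\mathfrak p_i = J$, hence contains some $\mathfrak p_i$. So every prime over $J$ lies above one of the $\mathfrak p_i$, and in particular the minimal primes over $J$ are exactly $\{\mathfrak p_i : i \in T\}$. This already exhibits $J$ as an irredundant intersection of its minimal primes, and since the set of minimal primes is an intrinsic invariant of $J$, this is the only such decomposition.

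It remains to upgrade this to uniqueness among \emph{all} irredundant primary decompositions, i.e.\ to show the primary components are forced to equal the minimal primes themselves. Given $J = \bigcap_{j}\mathfrak q_j$ irredundant, the first uniqueness theorem identifies $\{\sqrt{\mathfrak q_j}\}$ with the associated primes of $J$, and for a radical ideal these are all minimal (a reduced Noetherian ring has no embedded primes). Hence each $\mathfrak q_j$ is the component at an isolated prime $\mathfrak p_i$, and by the second uniqueness theorem $\mathfrak q_j$ equals the contraction of $J R_{\mathfrak p_i}$; but localizing the radical ideal $J$ at $\mathfrak p_i$ yields a radical ideal of $R_{\mathfrak p_i}$ whose only associated prime is the maximal ideal, so $J R_{\mathfrak p_i} = \mathfrak p_i R_{\mathfrak p_i}$ and therefore $\mathfrak q_j = \mathfrak p_i$. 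Thus the $\mathfrak q_j$ are precisely the minimal primes over $J$, and the decomposition is unique. The only step carrying genuine content is this last localization argument (equivalently, the appeal to the second uniqueness theorem together with the fact that a reduced ring has no embedded primes); everything else is routine bookkeeping with radicals and prime containments.
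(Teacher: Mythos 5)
Your argument is correct. Note, though, that the paper does not actually prove this lemma: it simply records it as a known fact, ``true over any field, as a consequence of the Lasker--Noether theorems,'' with a citation to Cox--Little--O'Shea. What you have written is essentially the standard argument that the citation stands in for, spelled out in full: Noetherianity gives some irredundant primary decomposition, radicality collapses it (after taking radicals and discarding comparable primes) to an irredundant intersection of pairwise incomparable primes, a prime-avoidance argument shows these are exactly the minimal primes over $J$, and then the first uniqueness theorem (the associated primes of $J$ are exactly these minimal primes, so there are no embedded components) together with the second uniqueness theorem (components at isolated primes are the contractions of $J R_{\mathfrak p}$, which your localization computation identifies with $\mathfrak p$ itself) forces every irredundant primary decomposition to coincide with the minimal-prime one. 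Each step checks out, including the prime-avoidance justification of irredundance and the computation $J R_{\mathfrak p_i} = \mathfrak p_i R_{\mathfrak p_i}$, whose contraction is $\mathfrak p_i$ since $\mathfrak p_i$ is prime. So relative to the paper your proposal buys a self-contained proof where the paper offers only an appeal to the literature; the cost is that you must invoke both uniqueness theorems, whereas for the paper's purposes (Proposition~\ref{prop:primarydec}) the bare statement suffices.
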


\section{Decomposing the neural code via intervals of the Boolean lattice}\label{sec:boolean-lattice}

From the definition of $\A_\C$, it is easy to see that the maximal elements yield a kind of ``primary'' decomposition
of the neural code $\C$ as a union of maximal $V_a$s.

\begin{lemma} \label{lemma:code-decomp}
$\C = \bigcup_{i=1}^\ell V_{a_i}$, where $a_1,\ldots,a_\ell$ are the maximal elements of $\A_\C$.  (I.e., $\p_{a_1},\ldots,\p_{a_\ell}$
are the minimal primes in the primary decomposition of $J_\C$.)
\end{lemma}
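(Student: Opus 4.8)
The plan is to prove the set equality by a straightforward double inclusion, and then to note that the parenthetical ``i.e.'' clause is not an additional claim but an immediate application of Lemma~\ref{lemma:correspondence}.

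First I would dispose of the easy inclusion $\bigcup_{i=1}^\ell V_{a_i} \subseteq \C$: each $a_i$ is by assumption a (maximal) element of $\A_\C$, so by the very definition $\A_\C \od \{a \in \{0,1,*\}^n \mid V_a \subseteq \C\}$ we have $V_{a_i} \subseteq \C$, and taking the union over $i = 1,\ldots,\ell$ preserves containment.

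For the reverse inclusion I would take an arbitrary codeword $c \in \C$. Identifying $c \in \{0,1\}^n$ with the wildcard-free word in $\{0,1,*\}^n$, we have $V_c = \{c\} \subseteq \C$, so $c \in \A_\C$. The key observation is that $\A_\C$, ordered by $a \preceq b \Leftrightarrow V_a \subseteq V_b$, is a \emph{finite} poset: it is a subset of the finite set $\{0,1,*\}^n$, and the relation is genuinely a partial order because $a$ is recovered from $V_a$ (the index $i$ has $a_i \neq *$ precisely when the $i$-th coordinate is constant on $V_a$, and $a_i$ is then that constant value), which gives antisymmetry. In a finite poset every element lies below a maximal element, so there is a maximal $a_i \in \A_\C$ with $V_c \subseteq V_{a_i}$; hence $c \in V_{a_i} \subseteq \bigcup_{i=1}^\ell V_{a_i}$. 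This yields $\C \subseteq \bigcup_{i=1}^\ell V_{a_i}$ and therefore equality.

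Finally, the ``i.e.'' clause follows with no extra work: Lemma~\ref{lemma:correspondence} says that $a \in \A_\C$ is maximal iff $\p_a$ is a minimal prime over $J_\C$, so $\p_{a_1},\ldots,\p_{a_\ell}$ are exactly the minimal primes appearing in the irredundant primary decomposition of $J_\C$ from Theorem~\ref{thm:prim-decomp}. I do not anticipate a real obstacle here; the only point that deserves a sentence of care is the finiteness/antisymmetry remark guaranteeing that a maximal element of $\A_\C$ sits above each codeword, and, if one wishes to be exhaustive, the degenerate case $\C = \emptyset$ (where both sides are empty) — though, consistently with the standing assumption in this section, one may simply take $\C$ to be nonempty.
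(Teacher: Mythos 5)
Your proof is correct and follows essentially the same route as the paper: the easy inclusion from the definition of $\A_\C$, and the reverse inclusion by noting $c \in V_c \subseteq V_{a_i}$ for some maximal $a_i \in \A_\C$. The extra remarks on finiteness/antisymmetry of the poset and the appeal to Lemma~\ref{lemma:correspondence} for the parenthetical clause are fine elaborations of points the paper leaves implicit.
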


\begin{proof}
Since $V_a \subseteq \C$ for any $a \in \A_\C$, clearly $\bigcup_{i=1}^\ell V_{a_i} \subseteq \C$.
To see the reverse inclusion, note that for any $c \in \C$, $c \in V_c \subseteq V_a$ for some maximal $a \in \A_\C$.
Hence, $\C \subseteq \bigcup_{i=1}^\ell V_{a_i}.$
\end{proof}

Note that Lemma~\ref{lemma:code-decomp}
could also be regarded as a corollary of Theorem~\ref{thm:prim-decomp}, since $\C = V(J_\C) = V(\bigcap_{i=1}^\ell \p_{a_i}) = \bigcup_{i=1}^\ell V(\p_{a_i}) = \bigcup_{i=1}^\ell V_{a_i}$, and the maximal $a \in \A_\C$ correspond to minimal primes $\p_a \supseteq J_\C$.
Although we were able to prove Lemma~\ref{lemma:code-decomp} directly, in practice we use the primary decomposition in order to find (algorithmically) the maximal elements $a_1,\ldots,a_\ell \in \A_\C$, and thus determine the $V_a$s for the above decomposition of the code.

It is worth noting here that the decomposition of $\C$ in Lemma~\ref{lemma:code-decomp} is not necessarily \textit{minimal}.  This is because one can have fewer $\q_a$s such that
$$\bigcap_{i \in \sigma \subsetneq [\ell]} \q_{a_i} = \bigcap_{i \in [\ell]} \p_{a_i}.$$
Since $V(\q_{a_i}) = V(\p_{a_i}) = V_{a_i}$, this would lead to a decomposition of $\C$ as a union of fewer $V_{a_i}$s.  In contrast, the primary decomposition of $J_\C$ in Theorem~\ref{thm:prim-decomp} is irredundant, and hence
none of the minimal primes can be dropped from the intersection.

\subsection*{Neural activity ``motifs'' and intervals of the Boolean lattice}

We can think of an element $a \in \{0,1,*\}^n$ as a neural activity ``motif''.  That is, $a$ is a pattern of activity and silence for a subset of the neurons,
while $V_a$ consists of all activity patterns on the full population of neurons that are consistent with this motif (irrespective of what the code is).
For a given neural code $\C$, the set of maximal $a_1,\ldots,a_l \in \A_\C$ corresponds to a set of minimal motifs that define the code (here ``minimal'' is used in the sense of having the fewest number of neurons that are constrained to be ``on'' or ``off'' because $a_i \neq *$).
If $a \in \{0,*\}^n$, we refer to $a$ as a neural {\em silence} motif, since it corresponds to a pattern of silence.  In particular, silence motifs correspond to simplices in $\supp \C$,
since $\supp V_a$ is a simplex in this case.  If $\supp \C$ is a simplicial complex, then Lemma~\ref{lemma:code-decomp} gives the decomposition of $\C$ as a union of minimal silence motifs (corresponding to \textit{facets}, or maximal simplices, of $\supp \C$).

More generally, $V_a$ corresponds to an {\em interval} of the Boolean lattice $\{0,1\}^n$.  Recall the poset structure of the Boolean lattice: for any pair of elements $v_1,v_2 \in \{0,1\}^n$, we have $v_1 \leq v_2$ if and only if $\supp(v_1) \subseteq \supp(v_2)$.  An \textit{interval} of the Boolean lattice is thus a subset of the form:
$$[u_1,u_2] \od \{ v \in \{0,1\}^n \mid u_1 \leq v \leq u_2 \}.$$
Given an element $a \in \{0,1,*\}^n$, we have a natural interval consisting of all Boolean lattice elements ``compatible'' with $a$.  Letting $a^0 \in \{0,1\}^n$ be the element
obtained from $a$ by setting all $*$s to $0$, and $a^1 \in \{0,1\}^n$ the element obtained by setting all $*$s to $1$, we find that
$$V_a = [a^0,a^1] = \{ v \in \{0,1\}^n \mid a^0 \leq v \leq a^1 \}.$$
Simplices correspond to intervals of the
form $[0,a^1]$, where $0$ is the bottom ``all-zeros'' element in the Boolean lattice.

While the primary decomposition of $J_\C$ allows a neural code $\C \subseteq \{0,1\}^n$ to be decomposed as a union of intervals of the Boolean lattice,
as indicated by Lemma~\ref{lemma:code-decomp}, the canonical form $CF(J_\C)$ provides a decomposition of the \textit{complement} of $\C$ as a union of intervals.
First, notice that to any pseudo-monomial $f \in CF(J_\C)$ we can associate an element $b \in \{0,1,*\}$ as follows: $b_i = 1$ if $x_i | f$, $b_i = 0$ if $(1-x_i) | f$, and $b_i = *$ otherwise.  In other words,
$$f = f_b \od \prod_{\{i \mid b_i = 1\}} x_i  \prod_{\{j \mid b_j = 0\}} (1-x_j).$$
As before, $b$ corresponds to an interval $V_b = [b^0,b^1]\subset \{0,1\}^n$.  Recalling the $J_\C$ is generated by pseudo-monomials corresponding to non-codewords,
it is now easy to see that the complement of $\C$ in $\{0,1\}^n$ can be expressed as the union of $V_b$s, where each $b$ corresponds to a pseudo-monomial in the canonical form.   The canonical form thus provides an alternative description of the code, nicely complementing Lemma~\ref{lemma:code-decomp}.

\begin{lemma}
$\C = \{0,1\}^n \setminus \bigcup_{i=1}^k V_{b_i}$, where $CF(J_\C) = \{f_{b_1},\ldots,f_{b_k}\}$.
\end{lemma}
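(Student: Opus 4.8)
The plan is to identify, for each pseudo-monomial $f_b$ occurring in $CF(J_\C)$, exactly the points of $\{0,1\}^n$ on which $f_b$ vanishes, and then to intersect these vanishing loci using the fact that $J_\C = \langle CF(J_\C)\rangle$.

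First I would record the elementary evaluation fact. For $b \in \{0,1,*\}^n$ and $v \in \{0,1\}^n$,
$$f_b(v) = \prod_{\{i \mid b_i = 1\}} v_i \;\prod_{\{j \mid b_j = 0\}} (1-v_j),$$
and this equals $1$ in $\F_2$ precisely when $v_i = 1$ for every $i$ with $b_i = 1$ and $v_j = 0$ for every $j$ with $b_j = 0$ (with no constraint on the coordinates where $b_t = *$) — that is, precisely when $v \in V_b$ — and equals $0$ for every other $v$. Since we are working over the finite field $\F_2$, all varieties are literally subsets of $\{0,1\}^n = \F_2^n$, so this says exactly that $V(\langle f_b \rangle) = \{0,1\}^n \setminus V_b$.

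Next I would apply this to the full canonical form. Writing $CF(J_\C) = \{f_{b_1},\ldots,f_{b_k}\}$, we have $J_\C = \langle f_{b_1},\ldots,f_{b_k}\rangle$, and the standard fact that the variety of an ideal generated by several polynomials is the intersection of their individual varieties gives
$$V(J_\C) = \bigcap_{i=1}^k V(\langle f_{b_i}\rangle) = \bigcap_{i=1}^k \big(\{0,1\}^n \setminus V_{b_i}\big) = \{0,1\}^n \setminus \bigcup_{i=1}^k V_{b_i}.$$
Finally I would invoke the already-established identity $V(J_\C) = \C$ (shown in the definition of the neural ring, via $V(J_\C) = V(\langle\{\rho_v \mid v \notin \C\}\rangle) = \C$), which combined with the display above yields $\C = \{0,1\}^n \setminus \bigcup_{i=1}^k V_{b_i}$, as desired.

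There is essentially no obstacle here: the argument is a direct unwinding of definitions. The only points that merit a second of care — and neither is genuinely difficult — are (i) that everything is happening over $\F_2$, so one must \emph{not} invoke an algebraically closed field and the vanishing loci are honest finite subsets of $\{0,1\}^n$; and (ii) that $CF(J_\C)$ does generate $J_\C$, which is immediate from the definition of the canonical form.
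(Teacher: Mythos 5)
Your proposal is correct, and it is essentially the argument the paper intends: the paper states this lemma without a formal proof, justifying it by exactly the observations you use — each $f_{b_i}$ vanishes precisely off $V_{b_i}$, the canonical form generates $J_\C$, and $V(J_\C)=\C$. Your variety-theoretic phrasing is just a clean unwinding of that same reasoning, with the correct caveat that no Nullstellensatz is needed since only direct evaluation over $\F_2$ is involved.
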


\begin{wrapfigure}{r}{.5\linewidth}
\vspace{-.35in}
   \centering
   \includegraphics[width=1.8in]{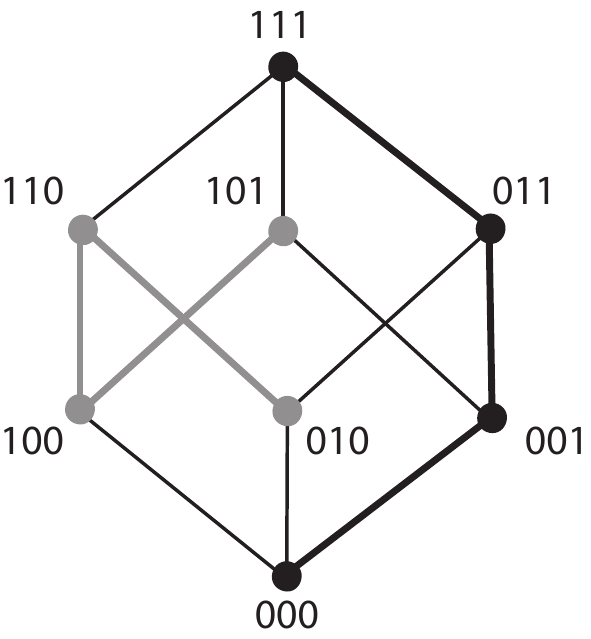} 
   \vspace{-.1in}
   \caption{\small Boolean interval decompositions of the code $\C = \{000, 001, 011, 111\}$ (in black) and of its complement (in gray), arising
   from the primary decomposition and canonical form of $J_\C$, respectively.}
  %\vspace{-.05in}
\end{wrapfigure}

\noindent We now illustrate both decompositions of the neural code with an example.
\medskip

\noindent \textbf{Example.} Consider the neural code $\C = \{000, 001, 011, 111\} \subset \{0,1\}^3$ corresponding to a set of receptive fields satisfying $U_1 \subsetneq U_2 \subsetneq U_3 \subsetneq X$.  The primary decomposition of $J_\C \subset \F_2[x_1,x_2,x_3]$ is given by 
$$\langle x_1,x_2 \rangle \cap \langle x_1,1-x_3 \rangle \cap \langle 1-x_2, 1-x_3 \rangle,$$
while the canonical form is
$$CF(J_\C) = \langle x_1(1-x_2), x_2(1-x_3), x_1(1-x_3) \rangle.$$

From the primary decomposition, we can write $\C = V_{a_1} \cup V_{a_2} \cup V_{a_3}$ for $a_1 = 00*$, $a_2 = 0{*}1$, and $a_3 = *11$.
The corresponding Boolean lattice intervals are $[000,001]$, $[001,011]$, and $[011,111]$, respectively, and are depicted in black in Figure 5.
As noted before, this decomposition of the neural code need not be minimal; indeed, we could also write $\C = V_{a_1} \cup V_{a_3}$, as the 
middle interval is not necessary to cover all codewords in $\C$.

From the canonical form, we obtain $\C = \{0,1\}^3 \setminus (V_{b_1} \cup V_{b_2} \cup V_{b_3})$, where $b_1 = 10*$, $b_2 = *10$, and $b_3=1{*}0.$
The corresponding Boolean lattice intervals spanning the complement of $\C$ are $[100,101]$, $[010,110]$, and $[100,110]$, respectively; these are 
depicted in gray in Figure 5.  Again, notice that this decomposition is not minimal -- namely, $V_{b_3} = [100,110]$ could be dropped.

\section{An algorithm for primary decomposition of pseudo-monomial ideals}\label{sec:prim-decomp-algorithm}

We have already seen that computing the primary decomposition of the neural ideal $J_\C$ is a critical step towards extracting the canonical form $CF(J_\C)$, and that it also yields a meaningful decomposition of $\C$ in terms of neural activity motifs.
Recall from  Section~\ref{sec:canonical-form} that $J_\C$ is always a \textit{pseudo-monomial ideal} -- i.e., $J_\C$ is generated by pseudo-monomials, which are polynomials $f \in \F_2[x_1,\ldots,x_n]$ of the form
$$f = \prod_{i \in \sigma} z_i, \;\; \text{where} \;\; z_i \in \{x_i, 1-x_i\} \;\; \text{for any} \;\; i \in [n].$$
In this section, we provide an explicit algorithm for finding
the primary decomposition of such ideals.

In the case of \textit{monomial} ideals, there are many algorithms for obtaining the primary decomposition, and there are already fast implementations of such algorithms in algebraic geometry software packages such as Singular and Macaulay2 \cite{macaulay-book}.  
Pseudo-monomial ideals are closely related to square-free monomial ideals, but there are some differences which require a bit of care.  
In particular, if $J \subseteq F_2[x_1,\ldots,x_n]$ is a pseudo-monomial ideal  and $z \in \{x_i, 1-x_i\}$ for some $i \in [n]$, then
for $f$ a pseudo-monomial:
$$f \in \langle J, z \rangle \not\Rightarrow f \in J \text{ or } f \in \langle z \rangle.$$
To see why, observe that $x_1 \in \langle x_1(1-x_2), x_2 \rangle$, because
$x_1 = 1 \cdot x_1(1-x_2) + x_1 \cdot x_2,$ but $x_1$ is not a multiple of either $x_1(1-x_2)$ or $x_2$.
We can nevertheless adapt ideas from (square-free) monomial ideals to obtain an
algorithm for the primary decomposition of pseudo-monomial ideals.  The following lemma allows us to handle the above complication.

\begin{lemma}\label{new-lemma}
Let $J\subset \F_2[x_1,\ldots,x_n]$ be a pseudo-monomial ideal, and let $z \in \{x_i, 1-x_i\}$ for some $i \in [n]$.  For any pseudo-monomial $f$,
$$f \in \langle J,z \rangle \Rightarrow f \in J \text{ or } f \in \langle z \rangle \text{ or } (1-z)f \in J.$$
\end{lemma}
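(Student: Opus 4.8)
The plan is to reduce to the case $z=x_i$ by symmetry, dispose of the trivial case $x_i\mid f$, and then in the remaining case use an explicit representation $f=g+x_iq$ with $g\in J$, pushing the substitution $x_i=0$ through a fixed expression of $g$ in terms of pseudo-monomial generators of $J$.

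First I would observe that the substitution $x_i\mapsto 1-x_i$ (fixing the other variables) is an involutive ring automorphism of $\F_2[x_1,\ldots,x_n]$ carrying pseudo-monomials to pseudo-monomials and pseudo-monomial ideals to pseudo-monomial ideals; it interchanges $z=x_i$ with $z=1-x_i$, sends $1-z$ to $1-z$, and preserves all three conclusions. So I may assume $z=x_i$. Write $f=\prod_{j\in\sigma}x_j\prod_{j\in\tau}(1-x_j)$ with $\sigma\cap\tau=\emptyset$. If $i\in\sigma$, then $x_i\mid f$, so $f\in\langle x_i\rangle=\langle z\rangle$ and we are done. Hence assume $i\notin\sigma$, and factor $f=(1-x_i)^{\epsilon}h$ where $\epsilon\in\{0,1\}$ ($\epsilon=1$ exactly when $i\in\tau$) and $h$ is a pseudo-monomial not involving the index $i$.

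Since $f\in\langle J,x_i\rangle$, fix an expression $f=g+x_iq$ with $g\in J$ and $q\in\F_2[x_1,\ldots,x_n]$. Applying the ring homomorphism $x_i\mapsto 0$ and using that $f=(1-x_i)^{\epsilon}h$ becomes $h$ under this substitution gives $h=g|_{x_i=0}$; and since $g-g|_{x_i=0}$ vanishes at $x_i=0$, it lies in $\langle x_i\rangle$, so $h=g-(g-g|_{x_i=0})\in J+\langle x_i\rangle=\langle J,x_i\rangle$. Now I claim $(1-x_i)h\in J$. Write $J=\langle p_1,\ldots,p_r\rangle$ with each $p_k$ a pseudo-monomial, and $h=\sum_k a_kp_k+x_iq'$ with $\sum_k a_kp_k\in J$. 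Partition the indices $k$ into those with $p_k$ not involving index $i$, those with $x_i\mid p_k$, and those with $(1-x_i)\mid p_k$ (writing $p_k=(1-x_i)\tilde p_k$ in the last case, where $\tilde p_k$ is $x_i$-free because $p_k$ is a pseudo-monomial). Setting $x_i=0$ in the identity for $h$ kills $x_iq'$ and the $x_i$-divisible $p_k$, leaves the $x_i$-free $p_k$ and $h$ unchanged, and turns each $p_k=(1-x_i)\tilde p_k$ into $\tilde p_k$; multiplying the resulting identity by $(1-x_i)$ converts every $\tilde p_k$ back into $p_k\in J$ and leaves the $x_i$-free terms as multiples of $p_k\in J$, so $(1-x_i)h\in J$. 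Finally, if $\epsilon=1$ then $f=(1-x_i)h\in J$, and if $\epsilon=0$ then $f=h$, so $(1-x_i)f=(1-x_i)h\in J$; in either case the conclusion holds.

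The main obstacle I expect is the bookkeeping in the last step: verifying carefully that after substituting $x_i=0$ and then remultiplying by $1-x_i$, every term of the resulting expression is genuinely an $\F_2[x_1,\ldots,x_n]$-multiple of one of the generators $p_k$ — in particular that a generator with $(1-x_i)$ as a factor really reconstitutes as $p_k$ rather than $(1-x_i)^2\tilde p_k$, which is where the hypothesis that $J$ is a \emph{pseudo-monomial} ideal (so the factors of each $p_k$ carry distinct indices) is used. Everything else is routine: evaluation at $x_i=0$ is a ring homomorphism, and a polynomial vanishing there lies in $\langle x_i\rangle$.
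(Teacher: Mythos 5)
Your proof is correct and follows essentially the same route as the paper's: after discarding the case $z \mid f$, you split the pseudo-monomial generators of $J$ according to whether they involve $z$, $1-z$, or neither, evaluate at $z=0$, multiply back by $1-z$ to land in $J$, and finish by casing on whether $1-z$ divides $f$. The only differences (reducing to $z=x_i$ via the automorphism $x_i\mapsto 1-x_i$, and routing the substitution through $h=f|_{x_i=0}$) are cosmetic.
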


\begin{proof}[Proof of Lemma~\ref{new-lemma}]  Assume $f \in \langle J,z \rangle$ is a pseudo-monomial.  Then $f = z_{i_1}z_{i_2}\cdots z_{i_r}$, where $z_i \in \{x_i, 1-x_i\}$ for each $i$, and the $i_k$ are distinct.  Suppose $f \notin \langle z \rangle.$  This implies $z_{i_k} \neq z$ for all factors appearing in $f$.  We will show that either $f \in J$ or $(1-z)f \in J$.

Since $J$ is a pseudo-monomial ideal, we can write 
$$J = \langle z g_1, \ldots, z g_k, (1-z) f_1, \ldots, (1-z) f_l, h_1, \ldots, h_m \rangle,$$
where the $g_j, f_j$ and $h_j$ are pseudo-monomials that contain no $z$ or $1-z$ term.  This means
$$f = z_{i_1}z_{i_2}\cdots z_{i_r} = z \sum_{j=1}^k u_j g_j  + (1-z) \sum_{j=1}^l v_j f_j + \sum_{j=1}^m w_j h_j + y z,$$
for polynomials $u_j, v_j, w_j,$ and $y \in \F_2[x_1,\ldots,x_n]$. 
Now consider what happens if we set $z = 0$ in $f$:
$$f|_{z=0} =  z_{i_1}z_{i_2}\cdots z_{i_r}|_{z=0} = \sum_{j=1}^l v_j|_{z=0} f_j + \sum_{j=1}^m w_j|_{z=0} h_j.$$
Next, observe that after multiplying the above by $(1-z)$ we obtain an element of $J$:
$$(1-z) f|_{z=0} = (1-z) \sum_{j=1}^l v_j|_{z=0} f_j + (1-z) \sum_{j=1}^m w_j|_{z=0} h_j \in J,$$
since $(1-z)f_j \in J$ for $j = 1,\ldots,l$ and $h_j \in J$ for $j=1,\ldots,m$.
There are two cases: 
\begin{itemize}

\item[Case 1:] If $1-z$ is a factor of $f$, say $z_{i_1} = 1-z$, then $f|_{z=0} =  z_{i_2}\cdots z_{i_r}$ and thus
$f = (1-z) f|_{z=0} \in J.$

\item[Case 2:] If $1-z$ is {\em not} a factor of $f$, then $f = f|_{z=0}.$  Multiplying by $1-z$ we obtain 
$(1-z) f  \in J.$

\end{itemize}
We thus conclude that $f \notin \langle z \rangle$ implies $f \in J$ or $(1-z)f \in J$.
\end{proof}

Using Lemma~\ref{new-lemma} we can prove the following key lemma for our algorithm, which mimics the case of square-free monomial ideals.

\begin{lemma} \label{lemma:z-decomp}
Let $J \subset \F_2[x_1,\ldots,x_n]$ be a pseudo-monomial ideal, and let
$\prod_{i \in \sigma} z_i$ be a pseudo-monomial, with $z_i \in \{x_i, 1-x_i\}$ for each $i$.  Then,
$$\langle J, \prod_{i \in \sigma} z_i \rangle = \bigcap_{i \in \sigma} \langle J, z_i \rangle.$$
\end{lemma}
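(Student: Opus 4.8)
The plan is to prove the two inclusions separately, with the forward inclusion being essentially formal and the reverse inclusion relying on Lemma~\ref{new-lemma}. For the inclusion $\langle J, \prod_{i \in \sigma} z_i \rangle \subseteq \bigcap_{i \in \sigma} \langle J, z_i \rangle$, I would simply note that for each fixed $i \in \sigma$, the generator $\prod_{j \in \sigma} z_j$ is a multiple of $z_i$, hence lies in $\langle J, z_i \rangle$; since $J \subseteq \langle J, z_i \rangle$ as well, the whole ideal $\langle J, \prod_{i \in \sigma} z_i \rangle$ is contained in $\langle J, z_i \rangle$ for every $i \in \sigma$, and therefore in the intersection.

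For the reverse inclusion $\bigcap_{i \in \sigma} \langle J, z_i \rangle \subseteq \langle J, \prod_{i \in \sigma} z_i \rangle$, I would proceed by induction on $|\sigma|$. The base case $|\sigma| = 1$ is trivial. For the inductive step, pick some $k \in \sigma$ and write $\sigma' = \sigma \setminus \{k\}$. Take $f \in \bigcap_{i \in \sigma} \langle J, z_i \rangle$; I would first want to reduce to the case where $f$ is a pseudo-monomial, using that $\langle J, \prod_{i\in\sigma}z_i\rangle$ is again a pseudo-monomial ideal and that membership of an arbitrary polynomial can be tested on its pseudo-monomial "pieces" after reducing modulo the Boolean relations — here one must be slightly careful, but since every ideal in sight contains $\mathcal B$, it suffices to prove the statement for pseudo-monomials $f$. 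So assume $f$ is a pseudo-monomial lying in $\langle J, z_i\rangle$ for all $i \in \sigma$. Applying Lemma~\ref{new-lemma} with $z = z_k$: either $f \in \langle z_k \rangle$, or $f \in J$, or $(1-z_k)f \in J$. In the second and third cases $f$ already lies in $\langle J, \prod_{i \in \sigma}z_i\rangle$ easily (in the third case, $f = z_k f + (1-z_k)f$, and $z_k f$ is a multiple of $z_k$; but I need it to be a multiple of $\prod_{i\in\sigma}z_i$ — so actually in these cases I should instead observe $f\in J \subseteq \langle J,\prod z_i\rangle$, and if $(1-z_k)f\in J$ then since $f\in\langle J,z_i\rangle$ for $i\in\sigma'$ too, I recurse). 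The main case is $f \in \langle z_k \rangle$, so $f = z_k g$ for a pseudo-monomial $g$ not involving $z_k$ or $1-z_k$; then for each $i \in \sigma'$, from $z_k g \in \langle J, z_i\rangle$ I need to extract $g \in \langle J, z_i \rangle$ (valid because $z_i \ne z_k, 1-z_k$, so the $z_k$ factor cannot help membership — this again is a Lemma~\ref{lemma:matching}-style argument). Then by the inductive hypothesis $g \in \langle J, \prod_{i \in \sigma'} z_i\rangle$, so $g = a + b\prod_{i\in\sigma'}z_i$ with $a \in J$, whence $f = z_k g = z_k a + b\prod_{i\in\sigma}z_i \in \langle J, \prod_{i\in\sigma}z_i\rangle$.

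The main obstacle, and the place where I expect the real work to be, is the bookkeeping in the case split from Lemma~\ref{new-lemma}: handling the branch $(1-z_k)f \in J$ cleanly, and justifying the "cancellation" step that $z_k g \in \langle J, z_i\rangle$ implies $g \in \langle J, z_i\rangle$ when $z_i$ is a linear form in a different variable. Both of these are variants of the Lemma~\ref{lemma:matching} / Lemma~\ref{new-lemma} circle of ideas about how pseudo-monomials sit inside pseudo-monomial ideals, so the right framing is probably to first prove a small auxiliary cancellation statement and then let the induction run smoothly. An alternative, possibly cleaner, route would be to pass to varieties: $V(\langle J, \prod_{i\in\sigma}z_i\rangle) = V(J) \cap \bigcup_{i\in\sigma}V(z_i) = \bigcup_{i\in\sigma}(V(J)\cap V(z_i)) = \bigcup_{i\in\sigma}V(\langle J,z_i\rangle) = V(\bigcap_{i\in\sigma}\langle J,z_i\rangle)$, and then invoke that all these ideals are radical (by Lemma~\ref{lemma:radical}, since each contains the Boolean relations) together with the Nullstellensatz-type correspondence $I(V(\cdot))$ to conclude the ideals are equal. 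This variety-theoretic argument sidesteps the combinatorial bookkeeping entirely, at the cost of invoking the finite-field Nullstellensatz; I would likely present this as the primary proof and relegate the direct manipulation to a remark.
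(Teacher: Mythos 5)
The decisive gap is the ``cancellation'' step in your main case: from $z_k g \in \langle J, z_i \rangle$ with $g$ not involving index $k$ you cannot conclude $g \in \langle J, z_i \rangle$, because the membership of $z_k g$ may be produced by generators of $J$ that themselves contain the factor $z_k$. Concretely, take $J = \langle x_1(1-x_2)x_3 \rangle$, $\sigma = \{1,2\}$ with $z_1 = x_1$, $z_2 = x_2$, and $f = x_1x_3$. Then $f \in \langle J, x_1\rangle \cap \langle J, x_2\rangle$ (indeed $x_1x_3 = x_1(1-x_2)x_3 + x_1x_3\cdot x_2$), while $f \notin J$ and $(1-x_1)f \notin J$; so choosing $k=1$ forces you into your main case with $g = x_3$, yet $x_3 \notin \langle J, x_2 \rangle$ (both generators vanish at the point $(0,0,1)$ but $x_3$ does not). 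The lemma still holds here, via $x_1x_3 = x_1(1-x_2)x_3 + x_3\cdot(x_1x_2)$, and this is exactly what the paper's proof produces: it applies Lemma~\ref{new-lemma} once for \emph{each} $i \in \sigma$, splits $\sigma$ into the set $\tau$ of indices with $(1-z_i)f \in J$ and the set of indices with $f \in \langle z_i\rangle$, and then uses the telescoping identity $f = (1-z_1)f + z_1(1-z_2)f + \cdots + z_1\cdots z_{m-1}(1-z_m)f + z_1\cdots z_m f$, whose first terms lie in $J$ and whose last term lies in $\langle \prod_{i\in\sigma} z_i\rangle$. Your handling of the branch $(1-z_k)f \in J$ (apply the inductive hypothesis on $\sigma'$ and write $f = (1-z_k)f + z_k f$) is sound and is essentially this telescoping done one index at a time, but the case $z_k \mid f$ has to be folded into the same scheme rather than ``cancelled''.

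The variety-theoretic argument you propose to present as the primary proof also fails over $\F_2$. The ideals $J$, $\langle J, z_i\rangle$, $\langle J, \prod_{i\in\sigma} z_i\rangle$ are pseudo-monomial ideals and in general do \emph{not} contain the Boolean relations (e.g.\ $\langle x_1 \rangle$ contains no $x_j(1-x_j)$), so Lemma~\ref{lemma:radical} does not apply to them as you invoke it; and even granting radicality, over a finite field equality of varieties does not force equality of ideals: the Strong Nullstellensatz gives $I(V(K)) = K + \langle x_1^2-x_1,\ldots,x_n^2-x_n\rangle$, so your argument would only show that the two sides of the lemma agree after adding $\B$, which is strictly weaker than the claimed equality (this is precisely the subtlety flagged in Section~\ref{sec:lemma-proofs}). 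Your preliminary reduction of the reverse inclusion to pseudo-monomial $f$ rests on the same false premise that ``every ideal in sight contains $\B$,'' so it too needs a different justification.
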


\begin{proof}[Proof of Lemma~\ref{lemma:z-decomp}]
Clearly, $\langle J, z_\sigma \rangle \subseteq \bigcap_{i \in \sigma} \langle J, z_i \rangle.$  To see the reverse inclusion, consider
$f \in \bigcap_{i \in \sigma} \langle J, z_i \rangle.$  We have three cases.
\begin{itemize}
\item[Case 1:] $f \in J$.  Then, $f \in \langle J, z_\sigma \rangle.$
\item[Case 2:] $f \notin J$, but $f \in \langle z_i \rangle$ for all $i \in \sigma$.  Then $f \in \langle z_\sigma \rangle$, and hence $f \in \langle J, z_\sigma \rangle.$
\item[Case 3:] $f \notin J$ and $f \notin \langle z_i \rangle$ for all $i \in \tau \subset \sigma$, but $f \in \langle z_j \rangle$ for all $j \in \sigma \setminus \tau$.  
Without loss of generality, we can rearrange indices
so that $\tau = \{1,\ldots,m\}$ for $m \geq 1$.  By Lemma~\ref{new-lemma}, we have $(1-z_i)f \in J$ for all $i \in \tau$.
We can thus write:
$$f = (1-z_1)f + z_1(1-z_2)f + \ldots + z_1\cdots z_{m-1}(1-z_m) f + z_1 \cdots z_m f.$$
Observe that the first $m$ terms are each in $J$.  On the other hand, $f \in  \langle z_j \rangle$ for each $j \in \sigma \setminus \tau$ implies that the last term is in 
$\langle z_\tau \rangle \cap \langle z_{\sigma \setminus \tau} \rangle = \langle z_\sigma \rangle.$  Hence, $f \in \langle J, z_\sigma \rangle.$
\end{itemize}
We may thus conclude that $\bigcap_{i \in \sigma} \langle J, z_i \rangle \subseteq \langle J, z_\sigma \rangle$, as desired.
\end{proof}

  Note that if $\prod_{i \in \sigma} z_i \in J$, then this lemma implies  $J =  \bigcap_{i \in \sigma} \langle J, z_i \rangle,$
which is the key fact we will use in our algorithm.
This is similar to Lemma 2.1 in \cite[Monomial Ideals Chapter]{macaulay-book},
and suggests a recursive algorithm along similar lines to those that exist for monomial ideals.

The following observation will add considerable efficiency to our algorithm for pseudo-monomial ideals.

\begin{lemma}\label{lemma:x_i-reduction}
Let $J \subset \F_2[x_1,\ldots,x_n]$ be a pseudo-monomial ideal.  For any $z_i \in \{x_i,1-x_i\}$ we can write
$$J = \langle z_i g_1, \ldots ,z_i g_k, (1-z_i)f_1,\ldots,(1-z_i)f_\ell,h_1,\ldots,h_m\rangle,$$
where the $g_j$, $f_j$ and $h_j$ are pseudo-monomials that contain no $z_i$ or $1-z_i$ term.  (Note that
$k, \ell$ or $m$ may be zero if there are no generators of the corresponding type.)
Then, 
\begin{eqnarray*}
\langle J, z_i \rangle = \langle J|_{z_i=0}, z_i \rangle &=& \langle z_i, f_1,\ldots,f_\ell,h_1,\ldots,h_m \rangle.
\end{eqnarray*}
\end{lemma}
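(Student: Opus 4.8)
The plan is to reduce everything to elementary manipulations of ideal generators. First I would justify the claimed grouped form of the generators: since $J$ is a pseudo-monomial ideal, it has a finite generating set consisting of pseudo-monomials $p = \prod_{j\in\sigma} w_j$ with $w_j \in \{x_j, 1-x_j\}$, and each such $p$ is exactly one of three types according to whether $i\in\sigma$ with $w_i = z_i$ (then $p = z_i g$ with $g$ a pseudo-monomial containing no $x_i$ factor), $i\in\sigma$ with $w_i = 1-z_i$ (then $p = (1-z_i)f$ with $f$ containing no $x_i$ factor), or $i \notin \sigma$ (then $p = h$ contains no $x_i$ factor). Regrouping the finitely many generators accordingly gives $J = \langle z_i g_1,\ldots,z_i g_k,\, (1-z_i)f_1,\ldots,(1-z_i)f_\ell,\, h_1,\ldots,h_m\rangle$ with every $g_j, f_j, h_j$ free of both $z_i$ and $1-z_i$, which is the form asserted in the lemma (with $k,\ell,m$ possibly zero).

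Next I would compute $J|_{z_i=0}$, where the substitution means $x_i \mapsto 0$ if $z_i = x_i$ and $x_i \mapsto 1$ if $z_i = 1-x_i$, so that $z_i \mapsto 0$ and $1-z_i \mapsto 1$ while every $g_j, f_j, h_j$ (having no $x_i$) is left unchanged. Under this substitution each $z_i g_j \mapsto 0$, each $(1-z_i)f_j \mapsto f_j$, and each $h_j \mapsto h_j$, so $J|_{z_i=0} = \langle f_1,\ldots,f_\ell,h_1,\ldots,h_m\rangle$ and hence $\langle J|_{z_i=0}, z_i\rangle = \langle z_i, f_1,\ldots,f_\ell,h_1,\ldots,h_m\rangle$. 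This establishes the second equality of the lemma essentially by inspection.

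For the first equality $\langle J, z_i\rangle = \langle J|_{z_i=0}, z_i\rangle$, I would argue by double inclusion on generators. For the inclusion $\subseteq$: among the generators of $\langle J, z_i\rangle$, each $z_i g_j$ is a multiple of $z_i$, each $(1-z_i)f_j = f_j - z_i f_j$ lies in $\langle f_j, z_i\rangle$, and each $h_j$ as well as $z_i$ is already present on the right-hand side; hence every generator of $\langle J, z_i\rangle$ lies in $\langle z_i, f_1,\ldots,f_\ell,h_1,\ldots,h_m\rangle$. For the reverse inclusion $\supseteq$: the generators $z_i$ and $h_j$ are themselves generators of $\langle J, z_i\rangle$, and each $f_j = (1-z_i)f_j + z_i f_j$ is an $\F_2[x_1,\ldots,x_n]$-combination of $(1-z_i)f_j \in J \subseteq \langle J, z_i\rangle$ and $z_i \in \langle J, z_i\rangle$. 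Combining the two inclusions yields the claimed equality, completing the chain.

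Since every step is a routine rewriting of generators, there is no serious obstacle; the only point requiring a little care is bookkeeping the substitution $z_i = 0$ correctly when $z_i = 1-x_i$ (so that it is $x_i = 1$, not $x_i = 0$, being imposed), together with confirming that the partition of the pseudo-monomial generators into the three types is exhaustive and that the cofactors $g_j, f_j, h_j$ are genuinely free of both $x_i$ and $1-x_i$ so that they are fixed by the substitution.
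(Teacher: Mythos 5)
Your proof is correct and follows essentially the same route as the paper: the key identities are exactly those the paper uses (adding $z_i$ makes the $z_i g_j$ generators redundant, and $f_j = (1-z_i)f_j + z_i f_j$ lets the $(1-z_i)f_j$ generators be replaced by $f_j$), with your version simply writing out both inclusions and the bookkeeping for the substitution $z_i = 0$ more explicitly.
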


\begin{proof}
Clearly, the addition of $z_i$ in $\langle J,z_i \rangle$ renders the $z_ig_j$ generators unnecessary.
The $(1-z_i)f_j$ generators can be reduced to just $f_j$ because $f_j = 1\cdot(1-z_i)f_j + f_j \cdot z_i$.
\end{proof}

\noindent We can now state our algorithm.   Recall that an ideal $I \subseteq R$ is {\em proper} if $I \neq R$.

\subsubsection*{Algorithm for primary decomposition of pseudo-monomial ideals}

\noindent \textbf{Input}: A proper pseudo-monomial ideal $J \subset \F_2[x_1,\ldots,x_n]$.  This is presented as $J = \langle g_1,\ldots,g_r\rangle$ with each generator $g_i$ a pseudo-monomial.\\

\noindent \textbf{Output}: Primary decomposition of $J$.  This is returned as a set $\PP$ of prime ideals, with $J = \bigcap_{I \in \PP} I$.

\begin{itemize}
\item Step 1 (Initializion Step): Set $\PP = \emptyset$ and $D = \{J\}.$  Eliminate from the list of generators of $J$ those that are multiples of other generators.

\item Step 2 (Splitting Step): For each ideal $I \in D$ compute $D_I$ as follows.   
\begin{itemize}
\item[Step 2.1:] Choose a nonlinear generator $z_{i_1}\cdots z_{i_m} \in I$, 
where each $z_i \in \{x_i, 1-x_i\}$, and $m \geq 2$.  
(Note: the generators of $I$ should always be pseudo-monomials.)
\item[Step 2.2:] Set $D_I = \{ \langle I, z_{i_1} \rangle, \ldots, \langle I, z_{i_m} \rangle\}.$
By Lemma~\ref{lemma:z-decomp} we know that
$$I = \bigcap_{k=1}^m \langle I, z_{i_k} \rangle = \bigcap_{K \in D_I} K.$$
\end{itemize}

\item Step 3 (Reduction Step):  For each $D_I$ and each ideal  $\langle I, z_i\rangle \in D_I$, reduce the set of generators as follows.
\begin{itemize}
\item[Step 3.1:] Set $z_i = 0$ in each generator of $I$.  This yields a ``0'' for each multiple of $z_i$, and removes $1-z_i$ factors in each of the remaining generators.  
By Lemma~\ref{lemma:x_i-reduction}, $\langle I,z_i \rangle = \langle I |_{z_i = 0}, z_i \rangle$. 
\item[Step 3.2:] Eliminate $0$s and generators that are multiples of other generators.
\item[Step 3.3:] If there is a $``1"$ as a generator, eliminate $\langle I, z_i \rangle$ from $D_I$ as it is not a proper ideal.
\end{itemize}
 
 \item Step 4 (Update Step): Update $D$ and $\PP$, as follows.
 \begin{itemize}
 \item[Step 4.1:] Set $D = \bigcup D_I$, and remove redundant ideals in $D$.  That is, remove an ideal if it has the same set of generators as another ideal in $D$.
 \item [Step 4.2:] For each ideal $I \in D$, if $I$ has only linear generators (and is thus prime), move $I$ to $\PP$ by setting  $\PP = \PP \cup I$ and $D = D \setminus I$.
\end{itemize}

 \item Step 5 (Recursion Step): Repeat Steps 2-4 until $D = \emptyset$.
 
 \item Step 6 (Final Step): Remove redundant ideals of $\PP$.   That is, remove ideals that are not necessary to preserve the equality $J = \bigcap_{I \in \PP} I$.
 \end{itemize}

\begin{proposition}\label{prop:prim-decomp}
This algorithm is guaranteed to terminate, and the final $\PP$ is a set of irredundant prime ideals such that $J = \bigcap_{I \in \PP} I$.
\end{proposition}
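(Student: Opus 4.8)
The plan is to establish three things in turn: a \emph{correctness invariant} ensuring $J=\bigcap_{I\in D}I\cap\bigcap_{I\in\PP}I$ throughout the main loop, the \emph{primality} of every ideal ever placed in $\PP$, and \emph{termination} of the recursion. I would carry along the bookkeeping invariant that every ideal appearing in $D\cup\PP$ is a proper pseudo-monomial ideal containing $J$: proper because improper ideals are discarded in Step 3.3; pseudo-monomial because the Splitting Step only adjoins a linear generator $z_i\in\{x_i,1-x_i\}$, while the Reduction Step, which by Lemma~\ref{lemma:x_i-reduction} leaves the ideal unchanged, sends each pseudo-monomial either to $0$ or to a pseudo-monomial; and containing $J$ because $\langle I,z_i\rangle\supseteq I\supseteq J$ at every step. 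Granting this, the correctness invariant holds after initialization (with $\PP$ empty the $\PP$-intersection is vacuous, so the identity reads $J=J$) and is preserved by every step: the Splitting Step replaces $I\in D$ by $\langle I,z_{i_1}\rangle,\ldots,\langle I,z_{i_m}\rangle$, and since the chosen generator $z_{i_1}\cdots z_{i_m}$ lies in $I$, Lemma~\ref{lemma:z-decomp} gives $I=\bigcap_k\langle I,z_{i_k}\rangle$; the Reduction Step only rewrites generators; discarding an improper child is harmless since $R$ is the identity for intersection, and not all children of $I$ can be improper (else $I=\bigcap_k\langle I,z_{i_k}\rangle=R$, contradicting properness of $I$), so at least one survives; removing duplicates in Step 4.1 and shuttling ideals from $D$ to $\PP$ in Step 4.2 obviously preserve the identity; and Step 6 preserves $J=\bigcap_{I\in\PP}I$ by fiat. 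When the loop exits, $D=\emptyset$, so $J=\bigcap_{I\in\PP}I$.

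An ideal is moved into $\PP$ only in Step 4.2, and only once all of its generators are linear, i.e.\ of the form $z_j\in\{x_j,1-x_j\}$; such an ideal is exactly $\p_a$ for the corresponding $a\in\{0,1,*\}^n$, and $\F_2[x_1,\ldots,x_n]/\p_a$ is a polynomial ring over $\F_2$ in the unconstrained variables, hence an integral domain, so $\p_a$ is prime. Thus at exit $\PP$ is a set of prime ideals containing $J$ with $J=\bigcap_{I\in\PP}I$, and Step 6 has made it irredundant. (An ideal all of whose generators are linear has no nonlinear generator for Step 2.1 to act on; this is consistent, as it is simply carried through unchanged and absorbed into $\PP$ at Step 4.2.)

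The main obstacle is termination, and here the plan is to exhibit a strictly decreasing integer measure along each branch of the recursion. Let $\mu(I)\od\sum_g\deg(g)$, the sum taken over the multiple-free generating set of $I$; this is a nonnegative integer. I claim every surviving child of a Splitting Step has strictly smaller $\mu$ than its parent. Suppose $I$ is split on a nonlinear generator $g=z_{i_1}\cdots z_{i_m}$ with $m\ge 2$, and let $\langle I,z_{i_k}\rangle$ be a child surviving the properness check of Step 3.3. In the Reduction Step we set $z_{i_k}=0$; since $z_{i_k}$ divides $g$, the generator $g$ reduces to $0$ and vanishes, while the only generator adjoined is the degree-one $z_{i_k}$, and every remaining generator either vanishes, loses degree, or is unchanged; hence $\mu(\langle I,z_{i_k}\rangle)\le\mu(I)-m+1\le\mu(I)-1$. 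Therefore along any root-to-leaf branch of the recursion $\mu$ strictly decreases at each split, so each branch contains at most $\mu(J)$ splits; since each split produces at most $m\le n$ children (and the deduplication in Step 4.1 only prunes the tree further), the recursion is finite and the algorithm halts. Combining this with the two preceding paragraphs proves the proposition.
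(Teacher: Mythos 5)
Your proposal is correct and follows essentially the same route as the paper's proof: you maintain the invariant $J=\bigl(\bigcap_{I\in D}I\bigr)\cap\bigl(\bigcap_{I\in \PP}I\bigr)$ (justified via Lemma~\ref{lemma:z-decomp} and Lemma~\ref{lemma:x_i-reduction}), obtain termination from the strict decrease of the total generator degree after each split-and-reduce step, and get irredundancy from the final reduction step. The only difference is that you spell out details the paper leaves implicit (primality of ideals with only linear generators, harmlessness of discarding improper children), which strengthens rather than changes the argument.
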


\begin{proof}
For any pseudo-monomial ideal $I \in D$, let $\deg(I)$ be the sum of the degrees of all generating monomials of $I$.  
To see that the algorithm terminates, observe that for each ideal $\langle I, z_i \rangle \in D_I$,  $\deg(\langle I, z_i \rangle) < \deg(I)$ (this follows from
Lemma~\ref{lemma:x_i-reduction}).  The degrees of elements in $D$ thus steadily decrease with each recursive iteration, until they are removed as prime ideals
that are appended to $\PP$.  At the same time, the size of $D$ is strictly bounded at $|D| \leq 2^{n \choose 3}$, since there are only $n \choose 3$ pseudo-monomials in $\F_2[x_1,\ldots,x_n]$,
and thus at most $2^{n \choose 3}$ distinct pseudo-monomial ideals.

By construction, the final $\PP$ is an irredundant set of prime ideals.  Throughout the algorithm, however, it is always true that 
$J = \left(\bigcap_{I \in D} I\right) \cap \left(\bigcap_{I \in \PP} I \right)$.  Since the final $D = \emptyset$, the final $\PP$ satisfies $J = \bigcap_{I \in \PP} I$.
\end{proof}

\chapter{Neural Ring Homomorphisms and Maps between codes}

In the preceding chapters, we introduced the neural ring and neural ideal as algebraic objects associated to a neural code, and showed how to use them to extract information about the structure of receptive fields directly from the code.  In this chapter, we will examine homomorphisms between neural rings, and explore how they relate to maps between their corresponding neural codes.We find that for any pair of neural rings $R_\C, R_\D$, there is a natural bijection between ring homomorphisms $R_\D\rightarrow R_\C$ and maps between the codes $\C\rightarrow \D$.  Since any code map has a corresponding ring homomorphism, the existence of a ring homomorphism $R_\D\rightarrow R_\C$ doesn't guarantee any similarity in the structures of $\C$ and $\D$. Our ultimate goal is to arrive at a definition of \textit{neural ring homomorphism} which respects important structures in the codes and corresponds to ``nice" code maps. \\

In Section 7.1, we introduce the idea of maps between codes, and give some elementary examples.  In Section 7.2, we will show that ring homomorphisms of neural rings are in a natural bijection with code maps and show explicitly how to obtain one from the other.  In Section 7.3, we will work towards preserving the structure of the code by considering neural rings as modules, and showing how to relate ring homomorphisms with module homomorphisms. In Section 7.4, we define neural ring homomorphisms, a restricted class of homomorphisms which preserve the individuality of the neurons.  Our main result is Theorem \ref{thm:nrharecomps}, which describes the corresponding types of code maps. Finally, in Section 7.5, we examine the effect of the most basic code maps on the canonical form for the ideal $J_\C$.

A note on notation: We continue our common abuse of notation from the previous chapter, where we use $f$ or other polynomial notation (e.g. $x_i$) not only to denote a polynomial of $\F_2[x_1,...,x_n]$, but also for the equivalence class in $\F_2[x_1,...,x_n]/I_\C$ or $R[n]$ of which that polynomial is a representative, or even to represent the associated function $\C\rightarrow\{0,1\}$ which is given by evaluating the polynomial $f$ on $\C$. Wherever it appears that the choice of polynomial representative may affect the discussion or result, we have attempted to make clear that this choice of representative is not important and all relevant objects are well-defined.

\section{Elementary maps between codes}

A \textit{code map} is simply a function $q:\C\rightarrow \D$ which assigns to each element $c\in\C$ a unique image $q(c) \in \D$; this assignment need not be either injective or surjective.  Suppose $\C$ is a code on $n$ neurons, so $\C\subseteq \{0,1\}^n$.  We will denote the $i$th bit of the codeword $c$ as $c_i$, so we write $c=(c_1,c_2,...,c_n)$.  We now give some basic examples of code maps which can be used as building blocks to get any code map.

\begin{enumerate}

\item \textbf{Permutation of labels:} Two codes $\C$ and $\D$ on $n$ neurons which are identical up to relabeling of neurons are effectively the same.  To permute the labels of neurons, choose a permutation $\sigma\in \mathcal{S}_n$, and then define a code map $q:\C\rightarrow \D$ by $q(c) = d$, where $d=(c_{\sigma(1)},...,c_{\sigma(n)})$.  Here, $\D = q(\C)$.

\item \textbf{Dropping neuron:} We take one of our neurons and remove it entirely, via the projection map.  Suppose $\C$ is a code on $n$ neurons, and we wish to drop neuron $n$.  Define $q:\C\rightarrow \D$ by $q(c) = d$, where $d=(c_1,...,c_{n-1})$.  Here $\D=q(\C)$ is a code on $n-1$ neurons. 

\item \textbf{Adding a neuron}

The notion of {\em adding} a neuron is more complicated. It is not clear what adding a neuron means - are we adding a new neuron which is firing, or not firing, or some of each?  However, we can add a neuron if we make the map unambiguous.  We can easily make the new neuron a function of the original codeword.  That is, let $f\in \F_2[x_1,...,x_n]$.  Then define $q:\C\rightarrow \D$ by $q(c) = d$, where $d=(c_1,...,c_n, f(c))$.  Note that the same function $f$ defines the new neuron for all codewords.  Here $\D = q(\C)$ is a code on $n+1$ neurons.

\item \textbf{Adding a codeword:} Under this map, each codeword maps to itself, but unlike the previous examples, we have $q(\C)\subsetneq \D$.  Thus there are new codewords in the target code. That is, suppose $q:\C\hookrightarrow \D$ is the inclusion map, so that $\C\subsetneq \D$ and $q(c)=c$, but $\D$ contains codewords which are not in $\C$. These extra codewords have been ``added."

\end{enumerate}

\begin{proposition} All code maps can be written as compositions of these four elementary maps:
\begin{enumerate}
\item Permutation of labels
\item Dropping a neuron
\item Adding a neuron of the form $f(c)$
\item Adding new codewords
\end{enumerate}
\end{proposition}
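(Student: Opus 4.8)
The plan is to realize an arbitrary code map $q:\C\rightarrow\D$, with $\C\subseteq\{0,1\}^n$ and $\D\subseteq\{0,1\}^m$, in three stages: first append the $m$ coordinate functions of $q$ as new neurons, then delete the original $n$ neurons, then add whatever codewords of $\D$ are missing from the image $q(\C)$.

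The fact that does the \emph{real work} is that each coordinate function $q_j:\C\rightarrow\{0,1\}$, $j\in[m]$, is realized by a polynomial $f_j\in\F_2[x_1,\ldots,x_n]$; this follows immediately from the finiteness of $\C$ (take $f_j=\sum_{c\in\C:\,q_j(c)=1}\rho_c$, with $\rho_c$ the characteristic function of Section~\ref{sec:canonical-form}), or simply from the fact that $q_j$ is an element of the neural ring $R_\C$ and hence has a polynomial representative. First I would add these new neurons one at a time. Starting from $\C$, adding the neuron $f_1$ produces $\C_1=\{(c,f_1(c))\mid c\in\C\}$ on $n+1$ neurons; regarding $f_2$ as a polynomial in $x_1,\ldots,x_n$ only, adding the neuron $f_2$ to $\C_1$ appends the value $f_2(c)=q_2(c)$ to each codeword; iterating, after $m$ steps we reach the code
$$\C_m=\{(c_1,\ldots,c_n,q_1(c),\ldots,q_m(c))\mid c\in\C\}$$
on $n+m$ neurons. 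Each step is an instance of ``adding a neuron of the form $f(c)$'' and is a bijection on codewords.

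Next I would delete the original $n$ neurons. Deleting an arbitrary neuron $i$ from a code on $k$ neurons is a composition of a permutation of labels (a cyclic shift moving position $i$ to position $k$ while preserving the relative order of the remaining positions) followed by the ``dropping a neuron'' map, so it is again a composition of elementary maps. Deleting neurons $1,\ldots,n$ from $\C_m$, one at a time, leaves a code on the $m$ surviving neurons which, in their induced order, is exactly $q(\C)=\{(q_1(c),\ldots,q_m(c))\mid c\in\C\}\subseteq\{0,1\}^m$, and the composite built so far is precisely the map $c\mapsto q(c)$. Finally, if $q(\C)\subsetneq\D$ a single ``adding new codewords'' step realizes the inclusion $q(\C)\hookrightarrow\D$; if $q(\C)=\D$ this step is omitted. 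Composing all of these recovers $q:\C\rightarrow\D$.

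I do not expect a genuine obstacle here. The only substantive ingredient is the observation noted above that every function $\C\rightarrow\F_2$ is given by a polynomial, which holds because $\C$ is a finite set. What needs a little care is the bookkeeping in the deletion stage — verifying that the permutations used to remove the interior neurons leave the appended neurons in the intended order $q_1,\ldots,q_m$ — together with the degenerate case $\C=\emptyset$, which is handled by a lone ``adding codewords'' step.
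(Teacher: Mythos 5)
Your proposal is correct and follows essentially the same route as the paper: realize each coordinate of $q$ as a polynomial (possible since every function $\C\rightarrow\F_2$ is polynomial), append these as new neurons one at a time, permute and drop the original $n$ neurons, and finish with an inclusion map adding any codewords of $\D\setminus q(\C)$. The only cosmetic difference is that the paper uses a single permutation moving all original neurons to the end before dropping them, whereas you interleave a cyclic shift with each drop; this is just bookkeeping, not a different argument.
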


\begin{proof} Let $\C$ be a code on $n$ neurons and $\D$ a code on $m$ neurons.  Suppose $q:\C\rightarrow \D$ is a code map. For $i=1,...,m$, define the function $f_i\in \F_2[x_1,...,x_n]$ such that $f_i(c) = [q(c)]_i$ for all $c\in \C$.  We can always do this, as any Boolean function $\{0,1\}^n\rightarrow \{0,1\}$ can be represented as a polynomial in $\F_2[x_1,...,x_n]$.  

First we define some intermediate codes: let $\C_0=\C$.  For $i=1,...,m$, let $\C_i = \{ (c_1,...,c_n,d_1,...,d_i) \mid c\in \C, d=q(c)\}\subset\{0,1\}^{n+i}$. For $j=1,...,n$, let $\C_{m+j} = \{ (d_1,...,d_m,c_1,...,c_{n-j+1}) \mid c\in \C, d=q(c)\}\subset\{0,1\}^{m+n-j+1}$. Finally, define $\C_{m+n+1} = q(\C)\subset \D$.

Now, for $i=1,...,m$, let the code map $q_i:\C_{i-1}\rightarrow \C_i$ be defined by $q_i(v) = (v_1,...,v_{n+i-1}, f_i(v))\in \C_i$.   Note that if $v=(c_1,...,c_n, d_1,...,d_{i-1})$, then $f_i(v)= f_i(c)$, as only the first $n$ places matter.  Thus, if $v=(c_1,...,c_n,d_1,...,d_{i-1})$ with $d=q(c)$, then $q_i(v) = (c_1,...,c_n, d_1,...,d_i)$. Neuron by neuron, we add the digits of $q(c)$ on to $c$.  

Then, take the permutation map given by $\sigma = (n+1,...,n+m,1,...,n)$, so all the newly added neurons are at the beginning and all the originals are at the end.  That is,  define $q_\sigma:\C_m\rightarrow \C_{m+1}$  so if $v=(v_1,...,v_{n+m}),$ then $q_\sigma(v) = (v_{n+1},...,v_{n+m},v_1,...,v_n)$. 

We then drop the neurons $m+1$ through $n+m$ one by one in $n$ code maps.   That is, for $j=1,...,n$ define $q_{m+j}:\C_{m+j}\rightarrow \C_{m+j+1}$ by $q_{m+j}(v) = (v_1,...,v_{m+n-j})$.

Lastly, if $q(\C)\subsetneq \D$, then add one last inclusion code map $q_a:q(\C)\hookrightarrow \D$ to add the remaining codewords of $\D$.

Thus, given $c= (c_1,...,c_n)$ with $q(c) = d =(d_1,...,d_m)$, the first $m$ steps give us $q_m\circ\cdots\circ q_1(c) =  (c_1,...,c_n,d_1,...,d_m) = x$. The permutation then gives us $q_\sigma(x) = (d_1,...,d_m,c_1,...,c_n) = y$, and then we compose $q_{m+n}\circ\cdots\circ q_{m+1}(y) = (d_1,...,d_n) = d = q(c)$.   Finally, if $q(\C)\subsetneq \D$, we do our inclusion map, but as $q_a(d) = d$, the overall composition is a map $\C\rightarrow \D$ takes $c$ to $q(c)=d$ as desired.

\end{proof}

Here are some examples of other interesting code maps that we can build from these basic maps.  See Figure 7.1 for an example of some of these maps, along with the basic four listed above.

\begin{itemize}

\item \textbf{The reverse map:} Let $\C$ be a code on $n$ neurons, and define $q:\C\rightarrow \D$ by $q(c_1,...,c_n) = (1-c_1,...,1-c_n)$.  That is, change all 0s to 1s, and 1s to 0s. Here, $\D$ is the code on $n$ neurons defined by the range of $q$. We can build this as a composition by adding $n$ new neurons, where the $i$th neuron is given by the function $f(x) = 1-x_i$. We then permute so these new neurons are at the beginning, and then drop the $n$ original neurons from the end.

\item \textbf{The parity map:} We can add a new neuron which ensures that each codeword has even parity.  Given a code $\C$, define $q(c) = (c_1,...,c_n,\sum_{i=1}^n c_i)$.  Here, we are adding one new neuron, given by the function $f(x) = \sum_{i=1}^n x_i$.  Here, $\D = q(\C)$.

\item \textbf{Repetition of a neuron:} One natural way to add a neuron in a is to add a new neuron which copies one of the original neurons. For example, suppose $\C$ is a code on $n$ neurons, and we choose to repeat neuron $i$.  Define a code map $q:\C\rightarrow \D$ by $q(c) = d$ where $d=(c_1,...,c_n, c_i)$; the new neuron is given by the function $f(x) = x_i$.  Here, $\D=q(\C)$ is a code on $n+1$ neurons.  All the original neurons are kept exactly the way they are and continue to interact in the same ways, so we consider this to preserve neuron structure.

\item \textbf{Adding trivial neurons} Suppose we add a new neuron which is never firing, (always 0), or always firing (always 1).  That is, let $\C$ be a code on $n$ neurons, and define $q:\C\rightarrow \D$ by $q(c)=(c_1,...,c_n,0)$ (or $q(c) = (c_1,...,c_n,1)$ respectively). The function which adds the new neuron is given by $f(x) = 1$ (or $f(x) = 0$).  Here $\D=q(\C)$ is a code on $n+1$ neurons.  

\item \textbf{Merging two neurons:} In cases where data has been sorted incorrectly, so what appeared to be two neurons is actually one, we wish to combine those neurons. Under this map, we take two neurons and merge them into one which fires exactly when one or the other (or both) of the original two neurons fired.  For example, suppose $\C$ is a code on $n$ neurons and we wish to merge the last two neurons, $n-1$ and $n$.  Then define $q:\C\rightarrow \D$ by $q(c) = d$, where $d=(c_1,...,c_{n-2}, c_{n-1}+c_n+c_{n-1}c_n)$, so this last neuron is 1 if and only if $c_{n-1}=1$ or $c_n=1$ or both.  Here, $f(x) = x_{n-1}+x_{n} + x_{n-1}x_n$, and $\D = q(\C)$.  

\end{itemize}

Many of these maps coincide with common operations from coding theory, though the vocabulary and motivation are slightly different. The act of dropping a neuron is equivalent to the coding theory operation of puncturing a code. The parity map described above is much more natural in coding theory, but has little meaning for neural codes. The repetition map is used in coding theory to add repetitive bits which can reduce errors in decoding, but the method is very inefficient and thus is rarely used.  As a neural code, however we may see this map arising as a sorting error, when a single neuron's spikes are attributed erroneously to two neurons.

\begin{figure}[h]
\begin{center}
\includegraphics[width=5in]{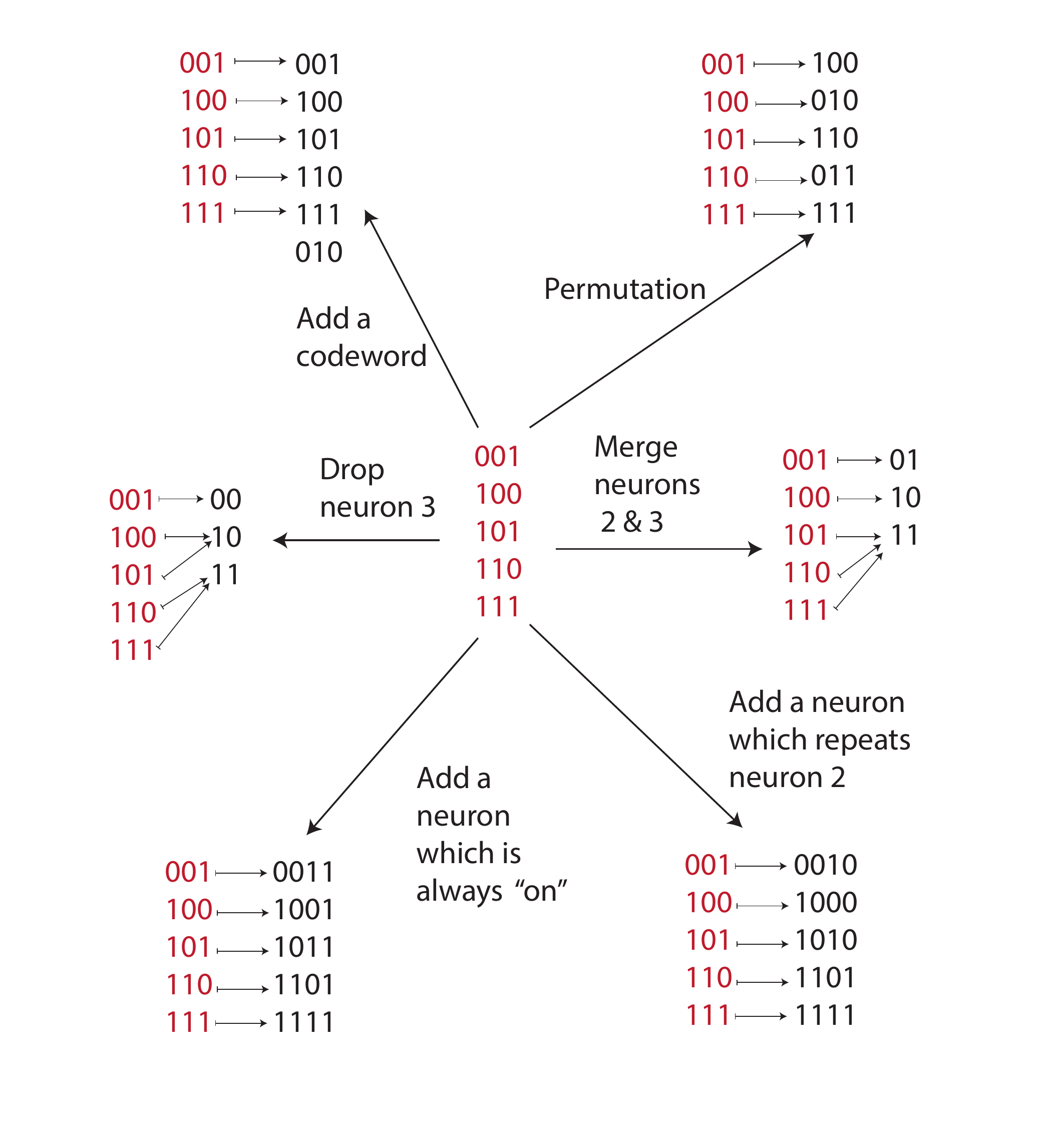}
\end{center}
\vspace{-.2in}
\caption{\small Several possible maps from the central code $\C = \{001,100,101,110,111 \}$, in red, to other related codes}
\end{figure}

\newpage

\section{Ring Homomorphisms between Neural Rings}
The most obvious type of map between neural rings is a ring homomorphism.  

\begin{definition} Let $R,$ be rings. A \textit{ring homomorphism} $\phi:R\rightarrow S$ is an assignment $\phi(r)\in S$ for every $r\in R$ so that the following properties hold:
\begin{itemize}
\item $\phi(a+b) = \phi(a) + \phi(b)$,
\item $\phi(ab) = \phi(a)\phi(b)$,
\item $\phi(1_R) = 1_S$.
\end{itemize}
\end{definition}

In this section, we show that there is a natural correspondence between ring homomorphisms of neural rings and code maps between the related codes.

 Let $f\in R_\D$, and recall that $f$ can be thought of as a function $f:\D\rightarrow \{0,1\}$.  Given a code map $q:\C\rightarrow \D$, we can ``pull back" $f$ to a function $q^*f: \C\rightarrow \{0,1\}$ by defining the pullback $q^*f = f\circ q$. 
 \[
 \xymatrix{ \C \ar[dr]_{q^*f = f\circ q} \ar[r]^q & \D\ar[d]^f\\  & \{0,1\}}
 \]
 
 Note that $q^*f \in R_\C$, and we can thus define a natural map between neural rings, $\phi_q:R_\D\rightarrow R_\C$, which takes each $f\in R_\D$ to its pullback by $q:\C\rightarrow \D$, so that $\phi_q(f) = q^*f = f\circ q$.  This leads us to the question: is the map $\phi_q$ a ring homomorphism?  Conversely, is every ring homomorphism $\phi:R_\D\rightarrow R_\C$ of the form $\phi_q$ for some code map $q:\C\rightarrow \D$?
 
 We first show that $\phi_q$ is always a ring homomorphism.

\begin{lemma}\label{lemma:codesgivehoms} For any code map $q:\C\rightarrow \D$, the map $\phi_q:R_\D\rightarrow R_\C$, where $\phi_q(f) = q^*f,$ is a ring homomorphism. 
\end{lemma}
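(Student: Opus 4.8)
The plan is to lean on the interpretation of the neural ring as a ring of $\F_2$-valued functions developed in Section~\ref{sec:spec}: $R_\D$ is identified with the ring of all functions $\D \to \F_2$ under pointwise addition and multiplication, with $0_{R_\D}$ the constant function $0$ and $1_{R_\D}$ the constant function $1$, and likewise $R_\C$ is the ring of functions $\C \to \F_2$. Under these identifications the ring operations are entirely pointwise, so it suffices to check that precomposition with $q$ commutes with pointwise addition, commutes with pointwise multiplication, and fixes the constant function $1$.

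First I would verify that $\phi_q$ is well defined, i.e.\ independent of the choice of polynomial representative. If $f, f'$ represent the same class in $R_\D$, then $f - f' \in I_\D$, which means $f(d) = f'(d)$ for every $d \in \D$. Since $q(c) \in \D$ for every $c \in \C$, we get $(f\circ q)(c) = f(q(c)) = f'(q(c)) = (f'\circ q)(c)$ for all $c \in \C$, so $f \circ q$ and $f' \circ q$ represent the same class in $R_\C$. Hence $q^*f = f\circ q$ is a well-defined element of $R_\C$, namely the function $c \mapsto f(q(c))$.

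The three homomorphism axioms then follow by evaluating at an arbitrary $c\in\C$: for addition, $\phi_q(f+g)(c) = (f+g)(q(c)) = f(q(c)) + g(q(c)) = \big(\phi_q(f) + \phi_q(g)\big)(c)$; for multiplication, $\phi_q(fg)(c) = (fg)(q(c)) = f(q(c))\,g(q(c)) = \big(\phi_q(f)\,\phi_q(g)\big)(c)$; and $\phi_q(1_{R_\D})(c) = 1_{R_\D}(q(c)) = 1 = 1_{R_\C}(c)$. As $c$ ranges over all of $\C$, these pointwise identities give $\phi_q(f+g) = \phi_q(f) + \phi_q(g)$, $\phi_q(fg) = \phi_q(f)\phi_q(g)$, and $\phi_q(1_{R_\D}) = 1_{R_\C}$, so $\phi_q$ is a ring homomorphism.

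The only point requiring any care is the well-definedness check, precisely because elements of the neural ring are equivalence classes of polynomials rather than genuine functions on all of $\F_2^n$; once that is settled, the axioms are immediate pointwise computations. An alternative, purely algebraic route avoids the function language entirely: lift $\phi_q$ to the polynomial rings by sending the $i$-th indeterminate of $R_\D$ to a polynomial $f_i$ with $f_i(c) = [q(c)]_i$ (such $f_i$ exist since every Boolean function is a polynomial over $\F_2$), invoke the universal property of polynomial rings to obtain a ring homomorphism between the two polynomial rings, and check that it carries $I_\D$ into $I_\C$ so that it descends to the quotient; but the pullback argument above is the shortest.
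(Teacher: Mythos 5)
Your proof is correct and follows essentially the same route as the paper's: identify $R_\D$ and $R_\C$ with rings of $\F_2$-valued functions and verify the three homomorphism axioms pointwise by evaluating at each $c\in\C$. Your explicit well-definedness check (independence of the polynomial representative) is a nice addition, but it is the same fact the paper invokes when noting that $f=g$ in $R_\C$ exactly when $f(c)=g(c)$ for all $c\in\C$.
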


\begin{proof}

To prove $\phi_q$ is a ring homomorphism, we need to show that addition, multiplication, and multiplicative unit are preserved.  Throughout this proof we will use the fact, discussed in the introduction, that two functions $f,g\in  R_\C$ are equivalent if and only if $f(c) = g(c)$ for all $c\in \C$.\\

 To show that $\phi_q(f+g) = \phi_q(f)+\phi_q(g)$, observe that for all $c\in \C$, $$\phi_q(f+g)(c) = q^*(f+g)(c) = (f+g)(q(c)) = f(q(c))+ g(q(c))$$ $$ = q^*f(c) + q^*g(c) = \phi_q(f)(c) )+ \phi_q(g)(c).$$ Since this is true for every $c\in \C$, we have $\phi_q(f+g) = \phi_q(f) + \phi_q(g)$.

 To show that $\phi_q(fg) = \phi_q(f)\phi_q(g)$, observe that for all $c\in \C$, $$\phi_q(fg)(c) = q^*(fg)(c)) = (fg)(q(c)) = f(q(c))g(q(c)) = (q^*f)(c)(q^*g)(c) = \phi_q(f)(c) \phi_q(g(c).$$  Since this is true for every $c\in \C$, we have $\phi_q(fg) = \phi_q(f)\phi_q(g)$.  

 Lastly, we must show $\phi_q(1_\D) = 1_\C$.  Again, observe that for all $c\in \C$, $$\phi_q(1_\D)(c) = (q^*1_\D)(c) = 1_\D(q(c)) = 1 = 1_\C(c).$$
Since this holds for all $c\in \C$, we get $\phi_q(1_\D) = 1_\C$.

Thus, $\phi_q$ is a ring homomorphism.
\end{proof}

\begin{example} 

Consider again the codes $\C = \{000,100,101\}$ and $\D = \{00,10,11\}$. If we look at the projection code map $q$ given by dropping the third neuron (so $000\mapsto 00, 100\mapsto 10$, and $101 \mapsto 10$) then the corresponding neural ring homomorphism $\phi_q:R_\D\rightarrow R_\C$ is given by $\phi(x_1) = x_1$ and $\phi(x_2)=x_2$, extending by linearity to all other elements.
\end{example}

It turns out that all ring homomorphisms of neural rings $\phi:R_\D\rightarrow R_\C$ are in fact of the form $\phi_q$, where $q:\C\rightarrow \D$ is a code map.  To prove this result, we introduce a useful basis.   

As we have seen, the neural ring $R_\C$ is equivalent to the ring of functions $f:\C\rightarrow \{0,1\}$.  Therefore, any element $f$ of $R_\C$ is completely determined by $f^{-1}(1)$, the set of codewords which $f$ ``detects."  If $f^{-1}(1) = \emptyset$,  then $f \equiv 0$.  As a natural basis for the neural ring $R_\C$, we therefore take the set of functions which detect a unique codeword of $\C$; that is, the set $\{f  \mid f^{-1}(1) = \{c\} \text{ for some } c\in \C\}$.  For each $c\in \C$, we denote the function which detects only $c$ by $\rho_c$; thus, our basis is exactly the set $\{\rho_c \mid c\in \C\}$. We write $\rho_c$ in polynomial notation as $$\rho_c = \prod_{c_i = 1} x_i \prod_{c_j = 0} (1-x_j).$$ With this polynomial representation, it's easy to see that $\rho_c$ acts as a characteristic function for $c$: that is, $\rho_c(v) = 1$ if and only if $v=c$.  We previously saw these elements in Chapter 3.

For any element $f\in R_\C$, we can write $f$ uniquely as a sum of these basis elements: 
\begin{eqnarray}
f &=\displaystyle{ \sum_{f(c) = 1} \rho_c}\,.
\end{eqnarray}

As these basis elements are characteristic functions in $\F_2$, we get the following properties immediately:
\begin{enumerate}
\item For any $c\in \C$, $\rho_c + \rho_c  = 0$.
\item For any $c\in\C$, $\rho_c\rho_c = \rho_c$.
\item For any $c,c'\in \C$ with $c\neq c'$, $\rho_c\rho_{c'} = 0$.
\item Combining properties 2 and 3, we get $\rho_c f = \rho_c$ if $f(c) = 1$, and $\rho_c f = 0$ if $f(c) =0$.
\end{enumerate}

 Under this notation, $0_\C$ will always be the empty sum, and $1_\C$ will always be the sum of all basis elements.  Note that while each $f\in R_\C$ may have many possible representations as a polynomial in $x_i$, the basis notation (1) is unique, and so we frequently prefer this notation in our proofs.

\begin{example}

Consider the code $\C = \{000,100,101\}$.  A basis for $R_\C$ is found by taking $\rho_{000} = (1-x_1)(1-x_2)(1-x_3), \rho_{100} = x_1(1-x_2)(1-x_3)$, and $\rho_{101} = x_1x_3(1-x_2)$.  Elements of $R_\C$ are given by taking all possible $\F_2$-combinations of these basis elements.  So, for example, we can take $\rho_{000}+\rho_{100} = (1-x_2)(1-x_3)$.  This function will evaluate to 1 on $000$ and $100$, but not on $101$.  We could also take the element $\rho_{000}+\rho_{100}+\rho_{001} = 1_{R_\C}$, which evaluates to $1$ on any of the three codewords.

 These choices of polynomial representatives are by no means unique, due to the relationships among variables which are particular to each ring. In this ring $R_\C$, we have $x_2=0$ and $x_1x_3 = x_3$.  Polynomial representations of $\rho_c$ thus include 
\begin{itemize}
\item  $\rho_{000} = (1-x_1)(1-x_2)(1-x_3)=(1-x_1)(1-x_3) = 1-x_1$, 
\item $\rho_{100} = x_1(1-x_2)(1-x_3) = x_1(1-x_3) = x_1-x_3$,
\item  $\rho_{101} = x_1x_3(1-x_2) = x_3$. 
\end{itemize}

\end{example}

Using this basis notation, we prove an important property of ring homomorphisms between neural rings.

\begin{lemma}\label{lemma:uniqueimage} Let $\phi:R_\D\rightarrow R_\C$ be a ring homomorphism between two neural rings.  Then for all $c\in \C$, there exists a unique $d\in \D$ such that $\phi(\rho_d) \rho_c = \rho_c$.
\end{lemma}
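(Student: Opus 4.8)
The plan is to work in the idempotent basis $\{\rho_d\}_{d\in\D}$ of $R_\D$ and push the decomposition of $1_\D$ through $\phi$. First I would apply $\phi$ to the identity $1_\D=\sum_{d\in\D}\rho_d$; since $\phi$ is a ring homomorphism with $\phi(1_\D)=1_\C$, this yields $1_\C=\sum_{d\in\D}\phi(\rho_d)$ in $R_\C$. Multiplying by $\rho_c$ and using $\rho_c\cdot 1_\C=\rho_c$ gives the key identity
\[
\rho_c=\sum_{d\in\D}\phi(\rho_d)\,\rho_c .
\]

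Next I would observe that each summand $\phi(\rho_d)\rho_c$ is either $0$ or $\rho_c$. Indeed, $\phi(\rho_d)\in R_\C$ can be written as a sum of basis elements $\rho_{c'}$ (equivalently, thought of as a function $\C\to\{0,1\}$), and property (4) of the $\rho_c$'s gives $\phi(\rho_d)\rho_c=\rho_c$ when $\phi(\rho_d)(c)=1$ and $\phi(\rho_d)\rho_c=0$ when $\phi(\rho_d)(c)=0$. So the right-hand side of the displayed identity is a sum of copies of $0$ and $\rho_c$. For uniqueness I would use orthogonality of the images: for distinct $d,d'\in\D$ we have $\rho_d\rho_{d'}=0$, hence $\phi(\rho_d)\phi(\rho_{d'})=\phi(0)=0$, and therefore $(\phi(\rho_d)\rho_c)(\phi(\rho_{d'})\rho_c)=\phi(\rho_d)\phi(\rho_{d'})\rho_c^2=0$. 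Thus if two distinct indices $d,d'$ both satisfied $\phi(\rho_d)\rho_c=\rho_c=\phi(\rho_{d'})\rho_c$, multiplying these relations would give $\rho_c^2=0$, i.e. $\rho_c=0$ (using $\rho_c^2=\rho_c$), contradicting $c\in\C$, since then $\rho_c\neq 0$. Hence at most one $d$ works.

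Finally I would combine the two observations: the sum $\sum_{d\in\D}\phi(\rho_d)\rho_c$ has at most one nonzero term, so it equals either $0$ or $\rho_c$; since it equals $\rho_c\neq 0$, exactly one term must be nonzero, which is precisely the desired statement that there is a unique $d\in\D$ with $\phi(\rho_d)\rho_c=\rho_c$. The argument is essentially routine once the identity $1_\C=\sum_d\phi(\rho_d)$ is available; the only point needing care is the step from ``each $\phi(\rho_d)\rho_c\in\{0,\rho_c\}$'' to ruling out a positive \emph{even} number of nonzero terms, which is exactly where orthogonality of the $\phi(\rho_d)$ (coming from $\rho_d\rho_{d'}=0$) is used. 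I do not expect a genuine obstacle.
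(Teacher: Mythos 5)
Your proposal is correct and follows essentially the same argument as the paper: both push $1_\D=\sum_{d\in\D}\rho_d$ through $\phi$, multiply by $\rho_c$, use the fact that each $\phi(\rho_d)\rho_c\in\{0,\rho_c\}$ together with $\rho_c\neq 0$ for existence, and use orthogonality $\rho_d\rho_{d'}=0$ for uniqueness. The extra care you take about an even number of nonzero terms over $\F_2$ is a fine touch but not a substantive departure.
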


\begin{proof}
To show existence, we observe the following: for each $c\in \C$,

$$\rho_c =  1_\C \rho_c= \phi(1_\D)\rho_c = \phi\left(\sum_{d\in \D}\rho_d\right)\rho_c = \sum_{d\in \D} \phi(\rho_d) \rho_c $$

Using Property 4 above, this means that for at least one $d\in \D$, we have $\phi(\rho_d)\rho_c = \rho_c$.

To show uniqueness, suppose by way of contradiction that $\phi(\rho_d)\rho_c = \phi(\rho_{d'})\rho_c = \rho_c$ with $d\neq d'$. Then we have the following:
$$\rho_c = \phi(\rho_d)\rho_c\phi(\rho_{d'})\rho_c = \phi(\rho_d)\phi(\rho_{d'})\rho_c = \phi(\rho_d\rho_{d'}) \rho_c = 0_\C\rho_\C =0,$$
which is a contradiction.
\end{proof}

Lemma \ref{lemma:uniqueimage} essentially shows that the sets $C_d = \{c\in \C \mid \phi(\rho_d)\rho_c = \rho_c\}$ partition $\C$.  This allows us to define a code map associated to $\phi$ as follows: let $q_\phi:\C\rightarrow \D$ be given by $q_\phi(c) = d$, where $d$ is the unique element of $\D$ such that $\phi(\rho_d)\rho_c = \rho_c$.  The previous lemma shows that this map is well-defined.

\begin{definition} Let $\C$ and $\D$ be neural codes.  We define the following two sets:
\begin{itemize}
\item[-] $\mathrm{Hom}(R_\D, R_\C) \stackrel{\text{def}}{=} \{\phi:R_\D\rightarrow R_\C \mid \phi$ a ring homomorphism$\}$
\item[-] $\mathrm{Map}(\C, \D) \stackrel{\text{def}}{=} \{q:\C\rightarrow \D\mid q$ a function $\}$
\end{itemize}
\end{definition}

We now have a map in each direction between these two sets.  In Lemma \ref{lemma:codesgivehoms}, we showed how to find the a ring homomorphism $\phi_q$ from a code map using the pullback:
\begin{align*}
\mathrm{Map}(\C, \D)& \rightarrow \mathrm{Hom}(R_\D, R_\C)\\
q & \mapsto \phi_q
\end{align*}

And using Lemma \ref{lemma:uniqueimage}, we know we can find a code map $q_\phi$ from a ring homomorphism:
\begin{align*}
\mathrm{Hom}(R_\D, R_\C ) & \rightarrow \mathrm{Map}(\C,\D)\\
\phi&\mapsto q_\phi 
\end{align*}

\begin{theorem}\label{thm:codemaphombijection} Let $\C$ and $\D$ be neural codes, with $R_\C$ and $R_\D$ the respective associated neural rings.  Then the above defined maps between $\mathrm{Hom}(R_\D, R_\C)$ and $\mathrm{Map}(\C,\D)$ are inverses, and thus the sets are in bijection.  
\end{theorem}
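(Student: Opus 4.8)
The plan is to prove that the two assignments $q \mapsto \phi_q$ and $\phi \mapsto q_\phi$ are mutually inverse, i.e.\ that both composites $q \mapsto \phi_q \mapsto q_{\phi_q}$ and $\phi \mapsto q_\phi \mapsto \phi_{q_\phi}$ are the identity. We already know from Lemma~\ref{lemma:codesgivehoms} that $\phi_q$ is a genuine ring homomorphism, and from Lemma~\ref{lemma:uniqueimage} that $q_\phi$ is a well-defined code map, so only these two composite identities remain. Throughout I would work with the basis $\{\rho_c \mid c \in \C\}$ of characteristic functions and the four properties listed after equation~(1), especially Property~4 ($\rho_c f = \rho_c$ if $f(c)=1$ and $\rho_c f = 0$ if $f(c)=0$).

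First I would check that $q_{\phi_q} = q$ for every code map $q\colon \C \to \D$. Fix $c \in \C$. By definition $q_{\phi_q}(c)$ is the unique $d \in \D$ with $\phi_q(\rho_d)\,\rho_c = \rho_c$, and by Property~4 this happens exactly when $\phi_q(\rho_d)(c) = 1$. But $\phi_q(\rho_d) = q^*\rho_d = \rho_d \circ q$, so $\phi_q(\rho_d)(c) = \rho_d(q(c))$, which equals $1$ iff $q(c) = d$ since $\rho_d$ is the characteristic function of $d$. Hence the unique such $d$ is $q(c)$; that is, $q_{\phi_q}(c) = q(c)$.

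Next I would check that $\phi_{q_\phi} = \phi$ for every ring homomorphism $\phi\colon R_\D \to R_\C$. Since both sides are additive and $\{\rho_d \mid d\in\D\}$ spans $R_\D$ over $\F_2$, it suffices to verify $\phi_{q_\phi}(\rho_d) = \phi(\rho_d)$ for each $d \in \D$, and, viewing these as $\F_2$-valued functions on $\C$, it is enough to show they take the same value at every $c \in \C$. On one side, $\phi_{q_\phi}(\rho_d)(c) = (\rho_d \circ q_\phi)(c) = \rho_d(q_\phi(c))$, which is $1$ iff $q_\phi(c) = d$. On the other side, the definition of $q_\phi$ together with the uniqueness in Lemma~\ref{lemma:uniqueimage} gives $q_\phi(c) = d \iff \phi(\rho_d)\rho_c = \rho_c$, and by Property~4 the latter holds iff $\phi(\rho_d)(c) = 1$. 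Chaining the equivalences yields $\phi(\rho_d)(c) = 1 \iff \phi_{q_\phi}(\rho_d)(c) = 1$, so the two functions agree, completing the proof that the maps are inverse bijections.

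The argument is essentially bookkeeping once the basis $\{\rho_c\}$ and Lemma~\ref{lemma:uniqueimage} are available; the only step needing a little care is the reduction in the last paragraph — that equality of the ring homomorphisms $\phi$ and $\phi_{q_\phi}$ can be tested on the spanning set $\{\rho_d\}$, which is legitimate because ring homomorphisms are in particular additive and $R_\D$ is the $\F_2$-span of these characteristic functions. Beyond correctly orienting the ``iff'' chain through Property~4 and the definition of $q_\phi$, I do not anticipate any real obstacle.
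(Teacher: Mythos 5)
Your proposal is correct and follows essentially the same route as the paper: verify $q_{\phi_q}=q$ and $\phi_{q_\phi}=\phi$ by evaluating on the characteristic-function basis $\{\rho_d\}$ and using the equivalence $\phi(\rho_d)\rho_c=\rho_c \Leftrightarrow \phi(\rho_d)(c)=1$ together with the uniqueness from Lemma~\ref{lemma:uniqueimage}. Your explicit remark that testing equality on the spanning set $\{\rho_d\}$ is justified by additivity is a small point the paper leaves implicit, but otherwise the arguments coincide.
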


\begin{proof}

We must show two things here: first, for any ring homomorphism $\phi:R_\D\rightarrow R_\C$, we have $\phi = \phi_{q_\phi}$; secondly, for any code map $q:\C\rightarrow \D$, we have $q=q_{\phi_q}$.\\

Key to both these proofs is the following fact: $\phi(\rho_d)\rho_c =\rho_c \Leftrightarrow \phi(\rho_d)(c) = 1$.
\begin{itemize}

\item $q=q_{\phi_q}$: Let $\phi = \phi_q$, so then $\phi(f) = q^*f$ for all $f\in R_\D$.  In particular, $\phi(\rho_d) = q^*\rho_d$, so $\phi(\rho_d)(c) = q^*\rho_d(c) = \rho_d(q(c)) = \left\{\begin{array}{ll} 1 & q(c) = d \\ 0 & q(c)\neq d\end{array}\right.$  

Thus, $\phi(\rho_d)\rho_c = \left\{\begin{array}{ll} \rho_c & q(c) = d\\ 0 & q(c)\neq d\end{array}\right.$ and so we define $q_\phi(c) = d \Leftrightarrow q(c)=d$; hence, $q_{\phi_q} = q$.\\

\item $\phi = \phi_{q_\phi}$: Let $q=q_\phi$, i.e., $q(c) = d$ for the unique $d$ with $\phi(\rho_d)\rho_c = \rho_c$. 

We must show $\phi_q(f) = \phi(f)$ for all $f\in R_\D$; it suffices to show $\phi_q(\rho_d) = \phi(\rho_d)$ for all $d\in \D$. These two functions are equal if they evaluate the same on all $c\in \C$; equivalently, they are the same if $\phi_q(\rho_d)\rho_c = \phi(\rho_d)\rho_c$ for all $c\in \C$.  By the definition of $\phi_q$, this means we must show that for all $c\in \C$,  $\phi(\rho_c) = q^*\rho_d\rho_c$.

To see this, observe that by the definition of $q=q_\phi$,
$$\phi(\rho_d)\rho_c = \left\{\begin{array}{ll} \rho_c & d=q(c)\\ 0 & d\neq q(c)\end{array}\right. .$$

On the other hand, $(q^*\rho_d)(c) = \rho_d(q(c))=\left\{\begin{array}{ll} 1 & d=q(c)\\ 0 & d\neq q(c)\end{array}\right.$ and therefore
$$q^*\rho_d\rho_c = \left\{\begin{array}{ll} \rho_c & d= q(c)\\ 0 & d\neq q(c)\end{array}\right. .$$
Thus, $\phi = \phi_{q_\phi}$.

\end{itemize}

\end{proof}

Theorem \ref{thm:codemaphombijection} gives us both good news and bad. On the positive side, we have discovered a very clean bijective correspondence between code maps and ring homomorphisms.  In particular, this theorem shows that not only does every code map $q$ induce a related ring homomorphism $\phi$ via pullbacks, but that {\em every} ring homomorphism between two neural rings can be obtained in this way; the relationship between the two notions is incredibly strong.On the other hand, the complete generality of this theorem iis not useful for our goal of selecting realistic maps which preserve neuron structure.  As we have just shown, any code map at all has a corresponding neural ring homomorphism. Even a random assignment $q:\C\rightarrow \D$ would have a related ring homomorphism. Another unsatisfying thing about this correspondence is that the notion of isomorphism captures very little actual similarity, but only the number of codewords, as the following lemma shows.

\begin{lemma}\label{lemma:iso} Two neural rings $R_\C$ and $R_\D$ are isomorphic if and only if $|C| = |D|$. 
\end{lemma}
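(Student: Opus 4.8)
The plan is to exploit the observation, made in the Remark following the definition of the neural ring, that $R_\C$ is a Boolean ring which, \emph{as an abstract ring}, is isomorphic to $\F_2^{|\C|}$; equivalently, $R_\C$ is precisely the ring of all functions $\C\to\F_2$ under pointwise addition and multiplication, with $\F_2$-basis $\{\rho_c\mid c\in\C\}$. Once this is in hand, both directions of the biconditional are short.

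For the forward direction, suppose $\phi\colon R_\C\to R_\D$ is a ring isomorphism. In particular $\phi$ is a bijection of underlying sets, so $|R_\C|=|R_\D|$; since $|R_\C|=|\F_2^{|\C|}|=2^{|\C|}$ and likewise $|R_\D|=2^{|\D|}$, we get $2^{|\C|}=2^{|\D|}$ and hence $|\C|=|\D|$. (Alternatively, and perhaps more in keeping with the earlier sections: $\phi$ induces a bijection $\mathrm{Spec}(R_\D)\to\mathrm{Spec}(R_\C)$ by taking preimages of prime ideals, and by Lemma~\ref{lemma:spec} these spectra are in bijection with $\D$ and $\C$ respectively, so $|\C|=|\D|$.)

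For the reverse direction, assume $|\C|=|\D|$ and fix any bijection $q\colon\C\to\D$. By Lemma~\ref{lemma:codesgivehoms}, the pullback $\phi_q\colon R_\D\to R_\C$, $\phi_q(f)=f\circ q$, is a ring homomorphism, so it remains only to check that it is bijective. It is injective because $\phi_q(f)=\phi_q(g)$ means $f\circ q=g\circ q$ on $\C$, and surjectivity of $q$ forces $f=g$ on all of $\D$. It is surjective because, given any $h\in R_\C$, the function $h\circ q^{-1}\colon\D\to\F_2$ is an element of $R_\D$ (every $\F_2$-valued function on $\D$ is) and satisfies $\phi_q(h\circ q^{-1})=h$. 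A bijective ring homomorphism is a ring isomorphism, so $R_\D\cong R_\C$. One could equally well skip $q$ altogether and just chain $R_\C\cong\F_2^{|\C|}=\F_2^{|\D|}\cong R_\D$ using the Remark.

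There is no genuine obstacle here; the only place warranting a moment's care is confirming in the reverse direction that $h\circ q^{-1}$ really does represent an element of $R_\D$, which is immediate once one recalls that $R_\D$ \emph{is} the full ring of $\F_2$-valued functions on $\D$. The real content of the lemma is the negative one: the plain ring-isomorphism type of $R_\C$, having forgotten the privileged indeterminates $x_1,\dots,x_n$, retains nothing about the code beyond the number of codewords. This is exactly why the stricter notion of \emph{equivalence} of neural rings (a bijection of indeterminates inducing a ring isomorphism), and later of neural ring homomorphism, is the appropriate object of study.
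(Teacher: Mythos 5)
Your proof is correct and rests on the same key fact as the paper's: $R_\C$ is (as an abstract ring) the full ring of $\F_2$-valued functions on $\C$, hence isomorphic to $\F_2^{|\C|}$, and $\F_2^n\cong\F_2^m$ exactly when $n=m$. You simply spell out the two directions (cardinality/Spec count and the explicit pullback along a bijection) that the paper's one-line argument leaves implicit.
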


\begin{proof}
As a neural ring $R_\C$ is exactly the ring of functions $f:\C\rightarrow \{0,1\}$, we have that $R_\C\cong \F_2^{|\C|}$ and $R_\D \cong \F_2^{|\D|}$, and then we use the fact that $\F_2^n \cong \F_2^m$ if and only if $n=m$.

\end{proof}

With these problems in mind, we consider another way to look at the neural rings which can preserve the structure given by the code: we consider them as modules.

%%%%%%MODULES%%%%%%%%%%%%

\section{Module Homomorphisms between Neural Rings}

In this section, we will show that we can consider each neural ring $R_\C$ as a carefully designed module under the following ring: $$R[n] \stackrel{\text{def}}{=} R_{\{0,1\}^n} = \F_2[x_1,...,x_n]/\mathcal B$$ where $\C\subset\{0,1\}^n$. The module action, as we will show, preserves the structure of the original code $\C$.  Furthermore modules, like rings, have a well-defined notion of homomorphism.

\subsection{Neural rings as modules}

For a code $\C$ on $n$ neurons, we consider the neural ring $R_\C$ as an $R[n] = R_{\{0,1\}^n}$-module.  $R[n]$ will be referred to as the `ambient ring' when $n$ is understood.  Considering $R_\C$ as an $R[n]$-module  allows us to store the combinatorial structure of the code and retain the information about the presentation even if $R_\C$ is given only as an abstract ring. The module action is as follows:  given $r \in R[n]$ and $f \in R_\C$, define $$[r \cdot f] (c) = r(c) f (c). $$  That is, $(r \cdot f)^{-1}(1) = r^{-1}(1) \cap f^{-1}(1)$.   Note particularly, this intersection will be a subset of $\C$.  In other words, $f$ detects a certain set of codewords; $r\cdot f$ detects only those which are detected by $r$ as well. 

This module action is exactly multiplication of polynomials, with the result considered as an element of $R_\C$.  In particular, in any $R_\C$ we have the relationships $x_i(1-x_i) = 0$, which also means $x_i^2 = x_i$ and $(1-x_i)^2 = (1-x_i)$.   The use of these relationships can be seen more clearly in the following example:

\begin{example}

Consider again the code $\C = \{000,100,101\}$.  $R_\C$ is a module under $R[3] = R_{\{0,1\}^3}$.
\begin{itemize}
\item Consider the element $1-x_1$ of  $R[3]$.   Then  $$(1-x_1)\cdot (1-x_1)(1-x_2)(1-x_3) = (1-x_1)(1-x_2)(1-x_3) $$ whereas $$(1-x_1)\cdot x_1(1-x_2)(1-x_3) = 0.$$ 

\item Consider the element $x_1x_2$ of $R[3]$.  Although $x_1x_2$ is a nonzero element in $R[3]$, it evaluates to $0$ for all codewords in $\C$, so for any element $f\in R_\C$, we have $x_1x_2\cdot f = 0$.

\end{itemize}

As another way to look at this action, note that $R[3]$ is itself a neural ring, and so has basis elements $\{\rho_c\,| \, c\in \{0,1\}^3\}$.  We can look at the action in terms of these basis elements:  

\begin{itemize}

\item Consider $\rho_{000} + \rho_{001}$ in $R[3]$. Then $$(\rho_{000}+\rho_{001}) \cdot \rho_{000} = \rho_{000}$$
whereas $$(\rho_{000}+\rho_{001})\cdot \rho_{100} = 0.$$

\item Consider the element $\rho_{110}$ of $R[3]$.  Although this is a nonzero element in $R[3]$, it detects no codewords of $\C$, so it is equivalent to $0$ in $R_\C$.  Thus, for any element $f\in R_\C$, we have $\rho_{110}\cdot f = 0$.
\end{itemize}

\end{example} 

\noindent\textbf{Recovering the code}

The most powerful property of this action is the ability to recover the codewords purely from the module action.  To do so, we use the canonical generators $x_i \in R[n]$.  Note that $(x_i)^{-1}(1) = \{c\in \{0,1\}^n \mid c_i =1\}$.  Thus we see $x_i$ detects exactly those codewords in which we see neuron $i$ firing. Using these special elements, we can recover our code.  Here are the steps we use to recover a single codeword:

\begin{enumerate}
\item Select a basis element $\rho$.

\item For each $i=1,...,n$, consider $x_i \cdot \rho$.  We know $\rho$ detects exactly one codeword $c$, so we have two possibilities: if $c_i=1$, then $(x_i\rho^{-1})(1) = \{c\}$, and thus  $x_i\cdot \rho = \rho$; if $c_i=0$, then  $(x_i\rho)^{-1}(1) = \emptyset$ and thus $x_i\cdot\rho = 0$.    

\item Form the codeword $c$ by setting  $c_i = 1$ if $x_i \rho  = \rho$, and $c_i = 0$ if $x_i \rho = 0$.  
\end{enumerate}

Taking the set of codewords given by repeating these steps for every basis element $\rho$, we obtain the original code $\C$.

\begin{example}
Consider once again the code $\C = \{000,100, 101\}$ and the ring $R_\C$.
As an example of how to recover the codewords, take just one basis element, $\rho_{101}$.  Note that $x_1\cdot \rho_{101} = x_1\cdot x_1x_3(1-x_2) = x_1x_3(1-x_2)= \rho_{101}$, which tells us $c_1 = 1$.   Similarly, $x_2\cdot \rho_{101} = 0$, and $x_3\cdot \rho_{101} = \rho_{101}$.  So we know this basis element corresponds to the codeword $101$.
\end{example}

\subsection{Note on modules under different rings}

Now that we have a framework to consider a neural ring $R_\C$ on $n$ neurons as an $R[n]$-module that preserves the code structure, we can consider module homomorphisms between neural rings.  However, this is complicated by the fact that two neural rings $R_\C$ and $R_\D$ on $n$ and $m$ neurons respectively are considered modules under \textit{different} rings $R[n]$ and $R[m]$.  In order to consider $R$-module homomorphisms between $R_\C$ and $R_\D$ for some ring $R$, we need some way to think of both rings as modules under the \textit{same} ring $R$.  For this, we use the following standard construction from commutative algebra.

Suppose $R,S$ are rings with $\tau:R\rightarrow S$ a ring homomorphism.  Given an $S$-module $M$, we can also view $M$ as an $R$-module via the homomorphism $\tau$, using the action $r\cdot m = \tau(r)\cdot m$ for any $r\in R, m\in M$.  In the neural ring setup, this says that given a ring homomorphism $\tau:R[m]\rightarrow R[n]$, we can consider the $R[n]$-module $R_\C$ as an $R[m]$-module.

However, our module maps will be inspired by maps between the neural rings, rather than the overarching rings $R[n], R[m]$. Therefore,  we need a vocabulary for when the situation (unusual in commutative algebra) where one is first given a map between two modules under different rings $R$ and $S$, and wants to look for which ring homomorphisms between $R$ and $S$ (if any) would allow that map to be a module homomorphism.

\begin{definition} Given an $R$-module $M$, an $S$-module $N$, and a group homomorphism $\phi:M\rightarrow N$ (so $\phi(x+y) = \phi(x) + \phi(y)$), we say that a ring homomorphism $\tau:R\rightarrow S$ is  \textit{compatible} with $\phi$ if $\phi$ is an $R$-module homomorphism, where $N$ is viewed as an $R$-module via $\tau$.  In other words, for every $r\in R$, we want the following diagram to commute:  
\[
\xymatrix{ M\ar[d]^\phi \ar[r]^{r\cdot} & M \ar[d]^\phi  \ar[d]^\phi \\  N \ar[r]^{\tau(r)\cdot} & N }
\]

That is, $\tau$ is compatible with $\phi$ if $\phi(r \cdot x ) = \tau(r) \cdot \phi(x)$ for all $r\in R, x\in M$. 
\end{definition}

It is worth noting that not every group homomorphism between two neural rings has a compatible ring homomorphism.  

\begin{example} Consider the codes $\C = \{000,100,101\}$ and $\D = \{00,10,11\}$ , and let the map $\phi:R_\D\rightarrow R_\C$ be given by $\phi(\rho_{00}) = \rho_{000} + \rho_{100}$, $\phi(\rho_{10}) = \rho_{100}$, and $\phi(\rho_{11})= 0$.  Extending by linearity to all elements of $R_\D$ gives us a group homomorphism, which is easy to check.  

As a polynomial map, this is the group homomorphism given by: $x_1\rightarrow y_1(1-y_2)(1-y_3)$, $x_2\rightarrow 0$.  There is, however, no compatible ring homomorphism $\tau:R[2]\rightarrow R[3]$ so that $\phi$ is an $R[2]$-module homomorphism. To see this, note that any such homomorphism $\tau$ would need the following properties: 

$\phi(\rho_{00}\cdot \rho_{00}) = \tau(\rho_{00})\phi(\rho_{00}) = \tau(\rho_{00}) \cdot[\rho_{000} + \rho_{100}]$. 

But as $\rho_{00} \cdot \rho_{00} = \rho_{00}$, this must equal $\rho_{000} + \rho_{100}$.  So $\tau(\rho_{00})$ must preserve $\rho_{000}$ and $\rho_{100}$.  Similarly, $\tau(\rho_{10})$ must preserve $\rho_{100}$.  So $\tau(\rho_{10}) \tau(\rho_{00})$ must preserve $\rho_{100}$ at least, so $\tau(\rho_{10})\tau(\rho_{00})\neq 0$.
Note $\tau(\rho_{00}\rho_{10}) = \tau(0)=0$, but as $\tau$ is a ring homomorphism, we also have $\tau(\rho_{00}\rho_{10}) = \tau(\rho_{00})\tau(\rho_{10}) \neq 0$.  So no such $\tau$ can exist; there is no compatible ring homomorphism for $\phi$.

\end{example}

Luckily, one class of group homomorphisms between neural rings which are guaranteed to have a compatible $\tau$ are those which are also ring homomorphisms.

For this result, we will use the idea that elements of $R_\D$ can also be thought of as elements of the ambient ring $R[m]$.  For example, each basis element $\rho_d$ of $R_\D$ is the function which detects only the codeword $d$; since $d\in \{0,1\}^m$, we know $R[m]$ has a basis element $\rho_d$ which detects only $d$ as well, and we consider these two $\rho_d$ to be essentially the same.  Likewise, any function $f\in R_\D$ corresponds the subset  $f^{-1}(1)\subset \D$ which it detects, so we can consider $f$ as a function in $R[m]$ which detects the same set of codewords.

\begin{proposition} \label{prop:compatible}Suppose $\C$ and $\D$ are neural codes on $n$ and $m$ neurons, respectively.  If $\phi:R_\D\rightarrow R_\C$ is a ring homomorphism, then there exists a ring homomorphism $\tau:R[m]\rightarrow R[n]$ which is compatible with $\phi$, and thus $\phi$ is an $R[m]$-module homomorphism. Furthermore, the set of compatible $\tau$ is exactly the set of ring homomorphisms $R[m]\rightarrow R[n]$ which are extensions of $\phi$, in the sense for all $f\in R_\D$, $(\phi(f))^{-1}(1) \subseteq (\tau(f))^{-1}(1)$.  

\end{proposition}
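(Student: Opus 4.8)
The plan is to reduce everything to the bijection of Theorem~\ref{thm:codemaphombijection}, applied twice: once to $\phi$ and once to the sought-after $\tau$. First I would observe that $R[m]=R_{\{0,1\}^m}$ and $R[n]=R_{\{0,1\}^n}$ are themselves neural rings, so by Theorem~\ref{thm:codemaphombijection} ring homomorphisms $\tau\colon R[m]\to R[n]$ are exactly the pullbacks $\tau=\phi_{\tilde q}$ of code maps $\tilde q\colon\{0,1\}^n\to\{0,1\}^m$, with $\tau(r)(v)=r(\tilde q(v))$. Likewise I would write $\phi=\phi_q$ for the unique code map $q\colon\C\to\D$, so that $\phi(f)(c)=f(q(c))$ for $c\in\C$. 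Throughout, an element $f\in R_\D$ is also read as the element of $R[m]$ detecting the same subset $f^{-1}(1)\subseteq\D$, so that $\tau(f)(v)=f(\tilde q(v))$ makes sense as a function on $\{0,1\}^n$.

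Next I would translate the compatibility condition $\phi(r\cdot f)=\tau(r)\cdot\phi(f)$ into a pointwise statement. Evaluating at $c\in\C$, and recalling that all three module actions in sight (of $R[m]$ on $R_\D$, of $R[n]$ on $R_\C$, and of $R[m]$ on $R_\C$ via $\tau$) are pointwise products of functions, the left side is $r(q(c))\,f(q(c))$ while the right side is $r(\tilde q(c))\,f(q(c))$. Substituting $f=\rho_{q(c)}$ (so $f(q(c))=1$) forces $r(q(c))=r(\tilde q(c))$ for every $r\in R[m]$, and then substituting $r=\rho_{q(c)}$ forces $\tilde q(c)=q(c)$, since $\rho_{q(c)}$ is the characteristic function of $q(c)$ in $R[m]$. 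Conversely, if $\tilde q|_\C=q$ the two sides coincide for all $r$ and $f$. Hence $\tau=\phi_{\tilde q}$ is compatible with $\phi$ if and only if $\tilde q$ restricts to $q$ on $\C$. This already yields existence: extend $q$ to any function $\tilde q\colon\{0,1\}^n\to\{0,1\}^m$ (say, constant on $\{0,1\}^n\setminus\C$) and put $\tau=\phi_{\tilde q}$, which is a ring homomorphism by Lemma~\ref{lemma:codesgivehoms}; compatibility then exhibits $\phi$ as an $R[m]$-module homomorphism.

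For the ``furthermore'' clause I would unwind the extension condition in the same spirit: for $f\in R_\D$ one computes $(\phi(f))^{-1}(1)=q^{-1}(f^{-1}(1))\subseteq\C$ and $(\tau(f))^{-1}(1)=\tilde q^{-1}(f^{-1}(1))\subseteq\{0,1\}^n$, so the requirement $(\phi(f))^{-1}(1)\subseteq(\tau(f))^{-1}(1)$ for all $f$ says precisely that $q(c)\in f^{-1}(1)\Rightarrow\tilde q(c)\in f^{-1}(1)$ for every $c\in\C$. Taking $f=\rho_{q(c)}$ once more gives $\tilde q(c)=q(c)$, and conversely $\tilde q|_\C=q$ trivially implies all these inclusions. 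Thus ``$\tau$ compatible with $\phi$'' and ``$\tau$ extends $\phi$'' both amount to the single condition $\tilde q|_\C=q$, so the two sets of ring homomorphisms coincide.

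I expect the main obstacle to be bookkeeping rather than any deep idea: one must consistently keep straight the several guises of an element $f\in R_\D$ — as a function on $\D$, as the element $\phi(f)$ of $R_\C$, and as an element of $R[m]$ on which $\tau$ acts — and make sure the comparison of both sides of the compatibility identity at a fixed codeword $c$ is legitimate by verifying that every module action involved really is pointwise multiplication. Once that dictionary is pinned down, the substitutions $f=\rho_{q(c)}$ and $r=\rho_{q(c)}$ do all the work, and the statement falls out of Theorem~\ref{thm:codemaphombijection}.
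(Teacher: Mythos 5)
Your proof is correct, but it takes a genuinely different route from the paper's. The paper works entirely inside the rings: it constructs a compatible $\tau$ explicitly on the basis $\{\rho_d\}$ of $R[m]$ (setting $\tau(\rho_d)=\phi(\rho_d)$ for all but one chosen $d$, dumping $\sum_{v\in\{0,1\}^n\setminus\C}\rho_v$ onto that one so the images sum to $1$, and killing $\rho_v$ for $v\notin\D$), and then proves ``compatible $\Leftrightarrow$ extension'' via the idempotent identity $f=f\cdot f$, which gives $\phi(f)=\tau(f)\cdot\phi(f)$ and hence $\tau(f)^{-1}(1)\supseteq\phi(f)^{-1}(1)$. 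You instead pass through the code-map dictionary of Theorem~\ref{thm:codemaphombijection} on both sides, writing $\phi=\phi_q$ and $\tau=\phi_{\tilde q}$, and show by pointwise evaluation with the characteristic functions $\rho_{q(c)}$ that both compatibility and the extension property collapse to the single condition $\tilde q\big|_\C=q$; existence then comes from extending $q$ arbitrarily to $\{0,1\}^n$ and invoking Lemma~\ref{lemma:codesgivehoms}. This is legitimate (the theorem and lemma you use precede the proposition, so there is no circularity), and it buys something: your argument simultaneously establishes Lemma~\ref{lemma:compmaps}, which the paper proves separately afterwards \emph{using} this proposition, and it arguably handles the converse direction (extension $\Rightarrow$ compatibility for \emph{all} $r\in R[m]$) more transparently than the paper's terse chain of equivalences. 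What the paper's route buys in exchange is an explicit recipe for a compatible $\tau$ in terms of basis elements, making visible the freedom in where the non-codewords of $\{0,1\}^n$ are sent, without ever leaving the ring-theoretic language. Your own caveat is the right one: the only delicate point is keeping the identification of $f\in R_\D$ with the element of $R[m]$ detecting the same set straight, and you do.
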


\begin{proof} 
Let $\phi:R_\D\rightarrow R_\C$ be a ring homomorphism.  To construct a compatible ring homomorphism $\tau:R[m]\rightarrow R[n]$, first select one basis element  $\rho_d$ of $R_\D$.  Note that $\rho_d$ (as the function which detects exactly the codeword $\{d\}$) is also a basis element of $R[m]$, and define $\tau(\rho_d ) = \phi(\rho_d) + \sum_{v\in \{0,1\}^n \backslash \C}\rho_v$; that is, $\tau(\rho_d)$ will detect all the same codewords as $\phi(\rho_d)$, but also all the codewords of $\{0,1\}^n$ which are not part of $\C$.  For all other $d\in \D$, define $\tau(\rho_d) = \phi(\rho_d)$, and for all $v\in \{0,1\}^m\backslash \D$, define $\tau(\rho_v ) = 0$.  Extend $\tau$ to all elements of $R[m]$ by linearity; that is, if $f = \sum \rho_c$, then $\tau(f) = \sum \tau(\rho_c)$. This gives a ring homomorphism.

Now, we will show that the property of compatibility is equivalent to the property of extensions. Let $\tau:R[m]\rightarrow R[n]$ be a ring homomorphism, and let $f=\sum_{d\in f^{-1}(1)} \rho_d \in R_\D$.   As $f\cdot f = f$, then $\tau$ is compatible with $\phi$ if and only if we have  $\phi(f)=\phi(f\cdot f) = \tau(f)\cdot \phi(f)$, which occurs if and only if $\tau(f)$ detects at least the same codewords as $\phi(f)$, which happens if and only if  $\tau(f)^{-1}(1)\supseteq \phi(f)^{-1}(1)$.
\end{proof}

\subsection{Compatible $\tau$ extend code maps}

It is important to note that each map $\tau:R[n]\rightarrow R[m]$ is in fact also a ring homomorphism between neural rings - in this case, the neural ring for the complete code -  as $R[n] = R_{\{0,1\}^n}$.  We have shown that code maps correspond to ring homomorphisms, and thus each ring homomorphism $\tau$ corresponds to a unique code map $q_\tau:\{0,1\}^m \rightarrow \{0,1\}^n$ between the complete codes.  Furthermore, Proposition \ref{prop:compatible} shows that $\tau$ is compatible with $\phi$ if and only if it is an extension of $\phi$, in that $\phi(f)^{-1}(1)\subseteq \tau(f)^{-1}(1)$. Therefore, we have the following lemma:

\begin{lemma} \label{lemma:compmaps} $\tau$ is compatible with $\phi$ if and only if $q_\phi = q_\tau\big|_\C$.

\end{lemma}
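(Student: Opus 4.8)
The plan is to deduce this from Proposition~\ref{prop:compatible}, which already replaces ``compatibility'' by a concrete ``extension'' condition, and then to reinterpret that condition through the code--map correspondence of Theorem~\ref{thm:codemaphombijection}. Recall from Proposition~\ref{prop:compatible} that $\tau$ is compatible with $\phi$ if and only if $(\phi(f))^{-1}(1) \subseteq (\tau(f))^{-1}(1)$ for every $f \in R_\D$ (with $f$ regarded as an element of $R[m]$ via the standard identification). So it suffices to prove that this inclusion holds for all $f$ exactly when $q_\phi = q_\tau|_\C$. Note that $q_\tau$ makes sense here: since $R[m] = R_{\{0,1\}^m}$ and $R[n] = R_{\{0,1\}^n}$ are themselves neural rings, Theorem~\ref{thm:codemaphombijection} associates to the ring homomorphism $\tau$ a code map $q_\tau$ with domain $\{0,1\}^n$, so $q_\tau|_\C$ is a map with the same source $\C$ and the same target as $q_\phi$.

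The key intermediate step is the following description of the support of $\psi(f)$ for a ring homomorphism $\psi\colon R_\E \to R_{\E'}$ between neural rings with associated code map $q_\psi$: for every $f \in R_\E$ one has $(\psi(f))^{-1}(1) = q_\psi^{-1}\big(f^{-1}(1)\big)$. To see this, recall from the proof of Theorem~\ref{thm:codemaphombijection} that $q_\psi(c) = d \iff \psi(\rho_d)(c) = 1$, so by Lemma~\ref{lemma:uniqueimage} each $c$ satisfies $\psi(\rho_d)(c) = 1$ for exactly one $d$, namely $d = q_\psi(c)$. Writing $f = \sum_{d \in f^{-1}(1)} \rho_d$ and using additivity, $\psi(f)(c) = \sum_{d \in f^{-1}(1)} \psi(\rho_d)(c)$; since the supports of the $\psi(\rho_d)$ are pairwise disjoint --- because $\psi(\rho_d)\psi(\rho_{d'}) = \psi(\rho_d\rho_{d'}) = 0$ for $d \neq d'$, so no cancellation mod $2$ occurs --- this sum equals $1$ precisely when $q_\psi(c) \in f^{-1}(1)$. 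Applying this with $\psi = \phi$ gives $(\phi(f))^{-1}(1) = q_\phi^{-1}(f^{-1}(1))$, and with $\psi = \tau$ gives $(\tau(f))^{-1}(1) = q_\tau^{-1}(f^{-1}(1))$.

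With these identities the extension condition reads: $q_\phi^{-1}(S) \subseteq q_\tau^{-1}(S)$ for every subset $S \subseteq \D$ (every such $S$ is $f^{-1}(1)$ for the pseudo-monomial sum $f = \sum_{d \in S}\rho_d \in R_\D$). If $q_\phi = q_\tau|_\C$ this is clear, since $q_\phi^{-1}(S) \subseteq \C$ and there $q_\tau(c) = q_\phi(c) \in S$. Conversely, fix $c \in \C$ and apply the inclusion with $S = \{q_\phi(c)\}$: then $c \in q_\phi^{-1}(S) \subseteq q_\tau^{-1}(S)$, so $q_\tau(c) = q_\phi(c)$; as $c$ was arbitrary, $q_\tau|_\C = q_\phi$. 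Combined with Proposition~\ref{prop:compatible}, this proves the lemma. The only point needing care is the support description --- specifically that the union over $d$ of the supports of the $\psi(\rho_d)$ is \emph{disjoint}, which is exactly where multiplicativity of $\psi$ (not merely additivity) is used --- together with the bookkeeping that $q_\tau$ is defined on all of $\{0,1\}^n$ while $q_\phi$ lives only on $\C$, so the statement genuinely concerns the restriction $q_\tau|_\C$.
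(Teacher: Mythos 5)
Your proof is correct and follows essentially the same route as the paper: both reduce compatibility to the extension criterion of Proposition~\ref{prop:compatible} and then translate it through the homomorphism--code-map correspondence, using that the support of $\psi(\rho_d)$ is exactly $q_\psi^{-1}(d)$. Your reformulation via the general identity $(\psi(f))^{-1}(1)=q_\psi^{-1}(f^{-1}(1))$ and singleton sets $S=\{q_\phi(c)\}$ is just a slightly more systematic packaging of the paper's argument with the basis elements $\rho_d$.
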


\begin{proof} 
Suppose $\tau$ is compatible with $\phi$.  Note that $q_\tau\big|_\C =q_\phi$ if and only if $q_\tau(c) = q_\phi(c)$ for all $c\in \C$.  So, suppose by way of contradiction that $q_\tau(c) \neq q_\phi(c)$ for some $c\in \C$.  Let $d=q_\phi(c)$.  Then, $\phi(\rho_d)(c) =1$.  But  $\phi(\rho_d) = \phi(\rho_d\cdot \rho_d) =  \tau(\rho_d) \cdot \phi(\rho_d)$, and we know $\tau(\rho_d)(c) = 0$, not $1$, so $\tau$ and $\phi$ cannot be compatible. This is a contradiction.

Now, suppose $q_\phi = q_\tau\big|_\C$.  Suppose $c\in \phi(\rho_d)^{-1}(1)$.  By our code map-homomorphism correspondence, this means that $q_\phi(c) = d$.  So then $q_\tau(c) = d$ also, and thus again by the proven correspondence, $c\in \tau(\rho_d)^{-1}(1)$.  Thus, for each $d\in \D$, we have $\tau(\rho_d)^{-1}(1) \supseteq \phi(\rho_d)^{-1}(1)$, and by Proposition 6, $\tau$ is compatible with $\phi$.
\end{proof}

We know based on our earlier work that given a ring homomorphism $\phi:R_\D\rightarrow R_\C$, we can always find a ring homomorphism $\tau:R[m]\rightarrow R[n]$ which is compatible with $\phi$.  We now confirm that idea from the code maps side:  we can generate a possible $\tau$ by taking any code map $q':\{0,1\}^n\rightarrow \{0,1\}^m$ which extends $q$ (so $q'(c)=q(c)$ for all $c\in \C$) and taking the corresponding ring homomorphism $\tau_q$.  Then $\tau_q$ will take each function to its pullback by $\tau$, and it will be compatible with $\phi$.

\begin{example}
Consider again the codes $\C = \{000,100,101\}$ and $\D = \{00,10,11\}$. Let the  map $\phi:R_\D\rightarrow R_\C$ be given by $\phi(\rho_{00}) = \rho_{000}$, $\phi(\rho_{10} )= \rho_{100} + \rho_{101}$, and $\phi(\rho_{11}) = 0$, and extend by linearity to all other elements. Then $\phi$ is a ring homomorphism.

First, consider $\tau_1:R[2]\rightarrow R[3]$ given by $\tau_1(\rho_{00}) = \rho_{000} + \rho_{001}+\rho_{010}+\rho_{110} + \rho_{011}+ \rho_{111}, \tau_1(\rho_{10}) = \rho_{100} + \rho_{101}$, and $\tau_1(\rho_{11}) = 0$. Extend again by linearity.  Then $\tau_1$ is a compatible ring homomorphism, which is not hard to check.

Now, consider $\tau_2:R[2]\rightarrow R[3]$ given by $\tau_2(\rho_{00}) = \rho_{000} + \rho_{001}+\rho_{010}+\rho_{110} + \rho_{011}+ \rho_{111}, \tau_2(\rho_{10}) = \rho_{100}$, and $\tau_2(\rho_{11}) = \rho_{101}$.  $\tau_2$ is not a compatible ring homomorphism, as if it were, we would have $$\rho_{100} + \rho_{101} = \phi(\rho_{10})  = \phi(\rho_{10}\rho_{10}) = \tau_2(\rho_{10})\phi(\rho_{10}) =\rho_{100}\cdot(\rho_{100} + \rho_{101}) = \rho_{100}$$ which is a contradiction.

\end{example}

%%%%%%%%%%%NeuralRingHomomorphisms%%%%%%%%%%%%%%%%%%
%%%%%%%%%%%%%%%%%%%%%%%%%%%%%%%%%%%%%%%%%%%
%%%%%%%%%%%%%%%%%%%%%%%%%%%%%%%%%%%%%%%%%%%

\section{Neural Ring Homomorphisms}

\subsection{Neuron-preserving homomorphisms}

We have now established that ring homomorphisms $\phi:R_\D\rightarrow R_\C$ between two neural rings are in correspondence with the set of possible functions $q:\C\rightarrow \D$.  But this is not entirely a satisfying definition for \textit{neural} ring homomorphism.  By looking at the neural rings as modules, we had hoped to preserve structure; this result makes it clear that additional restrictions are needed, since not every code map preserves structure, but every code map generates a related ring homomorphism and therefore a related module homomorphism.   So, using module properties we can extract code structure, but we cannot ensure that structure is preserved across maps.

This motivates us to consider preservation of neurons. We have seen that the activity of neuron $i$ is recovered by considering the action of the variable $x_i$. This allows us to figure out which basis elements correspond to which codewords. Under a general ring homomorphism, we place no unusual restriction on the images of these special elements of the ambient ring, and so we don't carry that structure over to the image ring.  What could be learned if we restricted to compatible maps where elements which detect neurons to map to other elements that detect neurons?    This motivates the following definition:

\begin{definition} Write $R[m] = \F_2[y_1,...,y_m]/\B$, and $R[n] = \F_2[x_1,..,x_n]/\B$.  A ring homomorphism $\tau:R[m]\rightarrow R[n]$ is called {\em neuron-preserving} if $\tau(y_i)\in \{x_1,...,x_n,0,1\}$ for all $i=1,...,m$.
\end{definition}

Not all ring homomorphisms are neuron-preserving, as shown in the following example:

\begin{example}
 Consider the map $\tau:R[1]\rightarrow R[2]$ given by $\tau(\rho_1) = \rho_{01} + \rho_{10}$ and $\tau(\rho_0) = \rho_{00} + \rho_{11}$.  Here $\tau(y_1) = x_1+x_2$, and $\tau(y_i)\notin \{x_1,x_2,0,1\}$ as would be required.

\end{example}

Observe that a neuron-preserving homomorphism $\tau:R[m]\rightarrow R[n]$  is defined by the vector $S=(s_1,...,s_m)$, where $S_i\in [n]\cup\{0,u\}$, so that $$\tau(y_i) = \left\{\begin{array}{ll} x_j & \text{if }s_i = j \\ 0 & \text{if }s_i = 0\\ 1 & \text{if }s_i=u\end{array}\right. .$$  $S$ is a vector which stores the pertinent information about $\tau$, and each possible $S$ with $s_i \in [n]\cup \{0,u\}$ defines a possible neuron-preserving $\tau$.  We refer to the $\tau$ defined by $S$ as $\tau_S$.

\begin{remark} We make a careful choice to define neuron-preserving as a property of maps between the Boolean rings only, and not between neural rings in general.  This is due in part to our notational conventions.  The polynomial representative $x_1$ means a different thing in $R[n]$ than it does in $R_\C$.  In particular, in a neural ring, we may have $x_i = x_j$ for $i\neq j$, whereas in the Boolean ring these are necessarily distinct.  This allows us to define a neuron-preserving homomorphism $\tau$ for any given $S$ without fear; we don't need to worry about relationships amongst the $x_i$ being preserved, since there are no relationships in $R[n]$ to speak of.  Thus, any choice of images $\tau(x_i)$ will give a ring homomorphism.
\end{remark}
 
 \begin{lemma}  The composition of two neuron-preserving homomorphisms is neuron-preserving. 
 
 \end{lemma}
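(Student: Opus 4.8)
The plan is to unwind the definition of \emph{neuron-preserving} directly on the distinguished generators. Suppose $\tau_1 : R[m] \to R[n]$ and $\tau_2 : R[\ell] \to R[m]$ are neuron-preserving ring homomorphisms, where I write $R[\ell] = \F_2[z_1,\ldots,z_\ell]/\B$, $R[m] = \F_2[y_1,\ldots,y_m]/\B$, and $R[n] = \F_2[x_1,\ldots,x_n]/\B$. Their composite $\tau_1 \circ \tau_2 : R[\ell] \to R[n]$ is certainly a ring homomorphism, being a composition of ring homomorphisms, so the only thing to verify is that $(\tau_1\circ\tau_2)(z_k) \in \{x_1,\ldots,x_n,0,1\}$ for every $k \in [\ell]$.

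First I would use that $\tau_2$ is neuron-preserving to conclude $\tau_2(z_k) \in \{y_1,\ldots,y_m,0,1\}$, and then split into three cases according to this value. If $\tau_2(z_k) = y_j$ for some $j \in [m]$, then $(\tau_1\circ\tau_2)(z_k) = \tau_1(y_j)$, which lies in $\{x_1,\ldots,x_n,0,1\}$ because $\tau_1$ is neuron-preserving. If $\tau_2(z_k) = 0$, then $(\tau_1\circ\tau_2)(z_k) = \tau_1(0) = 0$ since a ring homomorphism preserves the additive identity; similarly if $\tau_2(z_k) = 1$, then $(\tau_1\circ\tau_2)(z_k) = \tau_1(1_{R[m]}) = 1_{R[n]}$ since a ring homomorphism preserves the multiplicative identity. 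In every case the image of $z_k$ lies in $\{x_1,\ldots,x_n,0,1\}$, which is exactly the required condition, so $\tau_1 \circ \tau_2$ is neuron-preserving.

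Optionally, I could phrase this using the index vectors introduced above: if $\tau_1 = \tau_S$ with $S = (s_1,\ldots,s_m) \in ([n]\cup\{0,u\})^m$ and $\tau_2 = \tau_T$ with $T = (t_1,\ldots,t_\ell) \in ([m]\cup\{0,u\})^\ell$, then $\tau_1\circ\tau_2 = \tau_U$, where $U_k = s_{t_k}$ when $t_k \in [m]$, and $U_k = t_k$ (that is, $0$ or $u$) when $t_k \in \{0,u\}$; this $U$ lies in $([n]\cup\{0,u\})^\ell$, exhibiting the composite as neuron-preserving. There is essentially no obstacle here — the statement is a direct bookkeeping check — so the only care needed is to keep the three ambient rings and their distinguished variables notationally distinct and to invoke the (trivial) facts that a ring homomorphism fixes $0$ and $1$.
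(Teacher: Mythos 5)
Your proof is correct and takes essentially the same route as the paper: a direct case analysis on where each distinguished generator is sent, using that a ring homomorphism fixes $0$ and $1$; the paper simply carries out this same bookkeeping in the index-vector notation (constructing the vector $W$ with $w_i = t_{s_i}$), which is exactly your optional reformulation. No gaps.
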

 
 \begin{proof} Suppose $S=(s_1,...,s_n)$ with $s_i\in [m]\cup\{0,u\}$ and $T=(t_1,...,t_m)$ with $t_i \in [\ell]\cup\{0,u\}$ are given as above, with $\tau_{S}:R[n]\rightarrow R[m]$ and $\tau_{T}: R[m]\rightarrow R[\ell]$.  To prove the lemma, we need to find $W = (w_1,...,w_n)$ with $w_i\in [\ell]\cup\{0,u\}$ so $\tau_{W} = \tau_T\circ \tau_S:R[n]\rightarrow R[\ell]$.

Define the vector $W$ by $$w_i =\left\{\begin{array}{ll} t_{s_i} & \text{if }s_i \in [m]\\ 0 & \text{if }s_i = 0 \\ u &\text{if } s_i = u \end{array}\right.$$ 

Use variables $z_i$ for $R[n]$, $y_i$ for $R[m]$, $x_i$ for $R[\ell]$.  Then, unraveling the definitions, $$\tau_W(z_i) = \left\{\begin{array}{ll} x_j &\text{if } t_{s_i} = j \\ 0 &\text{if } t_{s_i} = 0 \\ & \text{or if }s_i = 0 \\ 1 & \text{if }t_{s_i} = u\\ & \text{or if } s_i = u\end{array}\right. 
= \left\{\begin{array}{ll} x_j &\text{if } s_i = k   \text{ and } t_k=j \\ 0 & \text{if }s_i = k \text{ and } t_k = 0\\ & \text{or if }s_i = 0 \\ 1 &\text{if } s_i = k \text{ and } t_k = u  \\ & \text{or if }s_i = u\end{array}\right. \hspace{1in}$$ $$
\hspace{1in} =  \left\{\begin{array}{ll} x_j &\text{if } \tau_S(z_i) = y_k   \text{ and } \tau_T(y_k ) = x_j \\ 0 &\text{if } \tau_S(z_i) = y_k \text{ and } \tau_T(y_k) = 0\\ &\text{or if } \tau_S(z_i) = 0 \\ 1 & \text{if }\tau_S(z_i) = y_k \text{ and } \tau_T(y_k) = 1  \\ & \text{or if }\tau_S(z_i) =1 \end{array}\right. =   \left\{\begin{array}{ll} x_j & \text{if }\tau_T\circ \tau_S(z_i) = x_j \\ 0 & \text{if }\tau_T\circ\tau_S(z_i) = 0 \\ 1 & \text{if }\tau_T\circ \tau_S(z_i) = 1 \end{array}\right. .$$

 \end{proof}

 \subsection{Neuron-preserving code maps}
 We now define what it means for a code map to be neuron-preserving, and relate the two notions.

  \begin{definition} Let $\C$ be a code on $n$ neurons and $\D$ a code on $m$ neurons.  A code map $q:\C\rightarrow \D$ is \textit{neuron-preserving} if there exists some $S=(s_1,...,s_m)$, $s_i\in [n]\cup\{0,u\}$ such that $q(c) = d$ if and only if $d_i = \left\{\begin{array}{ll} c_j &\text{if } s_i = j \\ 0 &\text{if } s_i = 0\\ 1 & \text{if }s_i = u \end{array}\right. .$ 
  
   If $q$ is neuron-preserving with vector $S$, we write $q=q_S$.  ($S$ may not be unique.)
 \end{definition}

  In particular, given $S=(s_1,...,s_m), s_i\in [n]\cup\{0,u\}$ we can always define a neuron-preserving code map $q_S:\{0,1\}^n \rightarrow \{0,1\}^m$ by $q_S(c)=d$ where $$d_i = \left\{\begin{array}{ll} c_j & \text{if }s_i = j \\ 0 & \text{if }s_i = 0 \\ 1 & \text{if }s_i  = u \end{array}\right. .$$

\begin{lemma}\label{lemma:ambmaps} Let $\C = \{0,1\}^n$ and $\D = \{0,1\}^m$, and suppose $q_S:\C\rightarrow \D$ is neuron-preserving.  Then $\phi_{q_S} = \tau_S$, and thus $q_S= q_{\tau_S}$ by Lemma \ref{lemma:compmaps}.
\end{lemma}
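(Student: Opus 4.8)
The plan is to verify directly, by chasing the definitions, that the two ring homomorphisms $\phi_{q_S}$ and $\tau_S$ agree on the generators $y_1,\dots,y_m$ of $R[m] = \F_2[y_1,\dots,y_m]/\B$, from which equality as ring homomorphisms follows immediately. Both maps go from $R[m]$ to $R[n] = \F_2[x_1,\dots,x_n]/\B$, and a ring homomorphism out of $R[m]$ is determined by where it sends each $y_i$, so this is all that needs checking. The conclusion $q_S = q_{\tau_S}$ will then follow from Lemma~\ref{lemma:compmaps} (or, more precisely, from Theorem~\ref{thm:codemaphombijection}, once we know $\phi_{q_S} = \tau_S$: the code map recovered from $\phi_{q_S}$ is $q_S$ itself by $q = q_{\phi_q}$, and the code map recovered from $\tau_S$ is $q_{\tau_S}$ by definition, so $q_S = q_{\tau_S}$).

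First I would recall that $\tau_S$ is \emph{defined} by $\tau_S(y_i) = x_j$ if $s_i = j$, $\tau_S(y_i) = 0$ if $s_i = 0$, and $\tau_S(y_i) = 1$ if $s_i = u$. So the task reduces to computing $\phi_{q_S}(y_i)$ and showing it equals this. Recall that $\phi_{q_S}$ is the pullback: for $f \in R[m]$ thought of as a function $\{0,1\}^m \to \{0,1\}$, we have $\phi_{q_S}(f) = f \circ q_S$, a function on $\{0,1\}^n = \C$. Now $y_i \in R[m]$ is the function sending $d \mapsto d_i$. Therefore $\phi_{q_S}(y_i)$ is the function $c \mapsto (q_S(c))_i$. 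By the definition of the neuron-preserving code map $q_S$, we have $(q_S(c))_i = c_j$ if $s_i = j$, $(q_S(c))_i = 0$ if $s_i = 0$, and $(q_S(c))_i = 1$ if $s_i = u$. Hence $\phi_{q_S}(y_i)$ is, respectively, the function $c \mapsto c_j$ (which is exactly $x_j \in R[n]$), the constant function $0$, or the constant function $1$ — matching $\tau_S(y_i)$ in all three cases.

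Since $\phi_{q_S}$ and $\tau_S$ are both ring homomorphisms $R[m] \to R[n]$ agreeing on the generating set $\{y_1,\dots,y_m\}$, and every element of $R[m]$ is a polynomial in the $y_i$, they agree everywhere: $\phi_{q_S} = \tau_S$. Finally, applying the bijection of Theorem~\ref{thm:codemaphombijection} (with both codes equal to the relevant complete codes), the code map associated to $\phi_{q_S} = \tau_S$ is on one hand $q_{\phi_{q_S}} = q_S$ (by the identity $q = q_{\phi_q}$ proved there) and on the other hand $q_{\tau_S}$ by definition; therefore $q_S = q_{\tau_S}$, as claimed. I do not anticipate a genuine obstacle here — the only mild subtlety is bookkeeping around the three-way case split on $s_i \in [n] \cup \{0, u\}$ and being careful that the function $c \mapsto c_j$ on $\{0,1\}^n$ is indeed the polynomial $x_j$ in $R[n]$, which is immediate from the definition of the indeterminates as coordinate functions.
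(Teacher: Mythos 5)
Your proposal is correct and follows essentially the same route as the paper: both verify $\phi_{q_S}(y_i) = \tau_S(y_i)$ on the generators by evaluating at codewords $c$ (splitting on $s_i \in [n]\cup\{0,u\}$) and then conclude equality everywhere. Your justification of $q_S = q_{\tau_S}$ via the identity $q = q_{\phi_q}$ from Theorem~\ref{thm:codemaphombijection} rather than via Lemma~\ref{lemma:compmaps} is a harmless and equally valid bookkeeping choice, since here both codes are complete.
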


\begin{proof}  Write $q_S= q$ and $\tau_S = \tau$.   Note $\phi_q(y_i) = q^*y_i$, so for any $c\in \C$, $$q^*y_i(c) = y_i(q(c)) = d_i = \left\{\begin{array}{ll} c_j &\text{if } s_i = j  \\ 0 & \text{if }s_i = 0 \\ 1 & \text{if }s_i  = u\end{array}\right. .$$

On the other hand, we also have

$$\tau(y_i)(c) = \left\{\begin{array}{ll} x_j(c) &\text{if } \tau(y_i) = x_j \\ 0 & \text{if }\tau(y_i) = 0 \\ 1 & \text{if }\tau(y_i) = 1 \end{array}\right. = \left\{\begin{array}{ll} x_j(c) & \text{if }s_i = j\\ 0 &\text{if } s_i = 0 \\ 1 & \text{if }s_i = u   \end{array}\right. = \left\{\begin{array}{ll} c_j &\text{if } s_i = j \\ 0 & \text{if }s_i = 0 \\ 1 &\text{if } s_i = 1  \end{array}\right..$$

Thus $\tau$ and $\phi_q$ are identical on $\{y_i\}$, and hence are identical everywhere.
\end{proof}

If $\C$ is a code on $n$ neurons and $\D$ a code on $m$ neurons so $q:\C\rightarrow \D$ is neuron-preserving with vector $S$, and if $q_S:\{0,1\}^n\rightarrow \{0,1\}^m$ is the neuron-preserving map defined above, observe that $q = q_S\big|_\C$, and therefore that $q=q_{\tau_S}\big|_\C$ where $\tau_S$ is the neuron-preserving ring homomorphism defined above.  As $S$ is not necessarily unique, there are often many such possible $\tau_S$.

\begin{lemma} The composition of two neuron-preserving code maps is neuron-preserving.

\end{lemma}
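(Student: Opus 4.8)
The plan is to write down an explicit index vector $W$ for the composite and verify that it works, then note that the same conclusion also drops out of the composition result for neuron-preserving \emph{ring} homomorphisms already established above, via the dictionary of Lemmas~\ref{lemma:ambmaps} and~\ref{lemma:compmaps} and Theorem~\ref{thm:codemaphombijection}.

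Concretely, let $\C,\D,\E$ be codes on $n$, $m$, $\ell$ neurons, let $q\colon\C\to\D$ be neuron-preserving with vector $S=(s_1,\dots,s_m)$, $s_i\in[n]\cup\{0,u\}$, and let $r\colon\D\to\E$ be neuron-preserving with vector $T=(t_1,\dots,t_\ell)$, $t_i\in[m]\cup\{0,u\}$. I would define $W=(w_1,\dots,w_\ell)$, $w_i\in[n]\cup\{0,u\}$, by
$$w_i=\begin{cases} s_{t_i} & \text{if } t_i\in[m],\\ 0 & \text{if } t_i=0,\\ u & \text{if } t_i=u.\end{cases}$$
First I would give the direct argument: fix $c\in\C$, set $d=q(c)$ and $e=r(d)=(r\circ q)(c)$, and compute $e_i$ from the defining formulas for $q$ and $r$. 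If $t_i=0$ or $t_i=u$ then $e_i=0$ or $e_i=1$ respectively, matching $w_i=0$ or $w_i=u$. If $t_i=k\in[m]$ then $e_i=d_k$, and the three sub-cases $s_k=0$, $s_k=u$, $s_k=j\in[n]$ give $d_k=0$, $d_k=1$, $d_k=c_j$, i.e. $e_i=0$, $e_i=1$, $e_i=c_j$, again matching $w_i=s_k$. Hence $(r\circ q)(c)=q_W(c)$ for every $c\in\C$, so $r\circ q=q_W\big|_\C$ is neuron-preserving.

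Alternatively, and more structurally, one can avoid the case analysis entirely: extend $q,r$ to the neuron-preserving maps $q_S\colon\{0,1\}^n\to\{0,1\}^m$ and $q_T\colon\{0,1\}^m\to\{0,1\}^\ell$ of Section~7.4, observe that $q=q_S\big|_\C$, $r=q_T\big|_\D$, and $q(\C)\subseteq\D$, so that $r\circ q=(q_T\circ q_S)\big|_\C$. Since pullback is contravariant, $\phi_{q_T\circ q_S}=\phi_{q_S}\circ\phi_{q_T}$, and by Lemma~\ref{lemma:ambmaps} this equals $\tau_S\circ\tau_T$. The composition lemma for neuron-preserving ring homomorphisms identifies $\tau_S\circ\tau_T$ as $\tau_W$ for exactly the vector $W$ displayed above; hence $\phi_{q_T\circ q_S}=\tau_W=\phi_{q_W}$, and the bijection of Theorem~\ref{thm:codemaphombijection} forces $q_T\circ q_S=q_W$. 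Restricting to $\C$ gives $r\circ q=q_W\big|_\C$.

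There is no genuine obstacle here; the only thing to be careful about is bookkeeping of indices and composition order—the entries of $W$ are obtained by feeding the entries of $T$ through $S$ (not the reverse), and the pullback swaps the order of composition. I would present the direct case-analysis version as the self-contained proof, with a remark that it is the exact code-map analogue of the preceding lemma on $\tau$.
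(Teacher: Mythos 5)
Your direct case-analysis argument is correct and is essentially the paper's own proof: the paper likewise defines the composite vector by feeding the second map's indices through the first map's vector (in its notation $w_i = t_{s_i}$, with $t_0=0$, $t_u=u$) and verifies the three cases for $e_i$. Your alternative route through $\tau_S\circ\tau_T$ and the bijection of Theorem~\ref{thm:codemaphombijection} is also sound, but it is just a repackaging of the same index bookkeeping via the composition lemma for neuron-preserving ring homomorphisms.
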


\begin{proof}

Let $q_T:\C\rightarrow \D$ and $q_S:\D \rightarrow \E$ be neuron-preserving. Suppose $q_T(c) = d$ and $q_S(d) = e$. Let $W$ be defined so $w_i = t_{s_i}$ (where $t_0 = 0$ and $t_u = u$) Then,

$$e_i = \left\{\begin{array}{ll} d_j & \text{if }s_i = j\\ 0 &\text{if } s_i = 0\\ 1 & \text{if }s_i = u \\ \end{array}\right.
 = \left\{\begin{array}{ll}  c_k &\text{or if } s_i = j \text{ and } t_j = k  \\ 0 & \text{if }s_i = j \text{ and } t_j = 0\\ &\text{or if } s_i = 0 \\ 1 &\text{if } s_i = j \text{ and } t_j = u\\ & s_i = u\end{array}\right. $$
  $$ = \left\{\begin{array}{ll} c_k &\text{if } t_{s_i} = k \\  0 &\text{if } t_{s_i} = 0 \\ 1 &\text{if } t_{s_i} = u\\\end{array}\right. 
  = \left\{\begin{array}{ll} c_k &\text{if } w_i = k \\ 0 & \text{if }w_i = 0 \\ 1 &\text{if } w_i = u\end{array}\right.$$

  Thus, $q_S(q_T(c)) = q_W(c)$, and so $q_S\circ q_T$ is neuron-preserving.

 \end{proof}

\noindent\textbf{Elementary neuron-preserving code maps}

Of our four original elementary code maps, three of them are neuron-preserving without any restrictions. The only problem is adding a neuron, which may or may not be neuron-preserving, depending on the definition of the function $f(x)$ which defines the new neuron. Here we list the elementary code maps which are neuron preserving, and give their respective $S$-vectors. Throughout, let $\C$ be a code on $n$ neurons and $c=(c_1,...,c_n)$ an element of $\C$.

\begin{enumerate}

\item \textbf{Dropping the last neuron}: Let $S=(1,2,...,n-1)$. Then $q_S(c) = d$, where  $d=(c_1,...,c_{n-1})$.  We require $q_S(\C) = \D$.
\item \textbf{Adding a 1 (respectively,  0)} to the end of each codeword: $S= (1,2,...,n,1)$ (respectively $S=(1,2,...,n,0)$ ). Then $q_S(c) = d$ where $d=(c_1,...,c_n,1)$ [respectively $d=(c_1,...,c_n,0)$].
We require $q_S(\C) = \D$.
\item \textbf{Adding a neuron which repeats neuron $i$} to the end of each word: $S = (1,2,...,n,i)$. Then $q_S(c) = d$, where $d = (c_1,...,c_n, c_i)$.  We require $q_S(\C) = \D$.
\item \textbf{Permuting the labels}: Let $\sigma\in \mathcal S_n$ be a permutation. To relabel the code so neuron $i$ is relabeled $\sigma(i)$, we use $S = (\sigma(1),...,\sigma(n))$.  Then $q_S(c) = d$, where $d=(c_{\sigma(1)},...,c_\sigma(n))$. We require $q_S(\C) = \D$.
\item \textbf{Adding a codeword}: Let $S = (1,2,...,n)$.  This defines an inclusion map, so $q(c) = c$. We use this anytime we have $q(\C)\subsetneq \D$.  Then all codewords in $\D\setminus q(\C)$ are ``added."

\end{enumerate}

\begin{proposition}\label{prop:comps} All neuron-preserving code maps $q$ are compositions of these elementary neuron-preserving code maps:
\begin{enumerate}
\item Permutation of labels
\item Dropping the last neuron
\item Adding a 1 to the end of each codeword
\item Adding a 0 to the end of each codeword
\item Adding a new neuron which repeats another neuron
\item Adding a codeword
\end{enumerate}
\end{proposition}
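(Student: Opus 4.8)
The plan is to follow the template of the earlier proposition that decomposes an arbitrary code map into the four basic moves, but replacing each basic move by a neuron-preserving one. Let $q = q_S : \C \to \D$ be neuron-preserving, with $\C \subseteq \{0,1\}^n$, $\D \subseteq \{0,1\}^m$, and $S = (s_1,\dots,s_m)$, $s_i \in [n] \cup \{0,u\}$. The idea is: first grow $\C$ by appending, one neuron at a time, the $m$ output coordinates prescribed by $S$; then permute those new coordinates to the front; then drop the $n$ original coordinates off the end; and finally, if needed, enlarge the code by adding the missing codewords of $\D$. Every one of these moves is on the list of six elementary neuron-preserving maps, and the composite is automatically neuron-preserving by the closure lemma already proved; the content of the proposition is that the composite \emph{equals} $q$.

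Concretely, I would set $\C_0 = \C$ and, for $i = 1, \dots, m$, build $\C_i \subseteq \{0,1\}^{n+i}$ from $\C_{i-1}$ by appending one neuron at the end: a repeat of neuron $s_i$ when $s_i = j \in [n]$ (elementary map 5), a constant $0$ when $s_i = 0$ (map 4), or a constant $1$ when $s_i = u$ (map 3), in each case taking the new code to be the image of the preceding one. Because every step in this phase only modifies a coordinate past the first $n$, the original bits $c_1,\dots,c_n$ stay in place, so an induction on $i$ gives $\C_i = \{(c_1,\dots,c_n,d_1,\dots,d_i) : c \in \C,\ d_k = [q(c)]_k\}$; hence $\C_m = \{(c, q(c)) : c \in \C\}$. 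Next I would apply the label permutation taking $(1,\dots,n,n+1,\dots,n+m)$ to $(n+1,\dots,n+m,1,\dots,n)$ (map 1) to reach $\{(q(c),c) : c \in \C\}$, then drop the last neuron $n$ times (map 2) to strip off the trailing copy of $c$, landing on $q(\C) = \{q(c) : c \in \C\}$. If $q(\C) = \D$ we are finished; otherwise compose with the inclusion $q(\C) \hookrightarrow \D$ (map 6). Tracing $c \in \C$ through the whole chain gives $c \mapsto (c,q(c)) \mapsto (q(c),c) \mapsto q(c) \mapsto q(c)$, so the composite is $q$.

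The bulk of the write-up is routine, but the one thing to be careful about is the index bookkeeping: I must verify that each coordinate $s_i$ points at during Step 1 is still the untouched original neuron, which is precisely why the appends are all done before the permutation and the deletions, and I must confirm that maps 1--5 are surjective onto their images (so each really is a legitimate neuron-preserving code map with $q_S(\C) = \D$) and that the only non-surjective step is the terminal inclusion. The $S$-vectors realizing each elementary step are exactly those tabulated just before the proposition, so this check is mechanical. I would also note in one line that the degenerate cases---$n = 0$, $q(\C) = \D$, repeated values among the $s_i$, or $\C = \emptyset$---are absorbed by the construction with no extra argument. I expect no genuine obstacle here; the only "hard" part is resisting a clumsier neuron-by-neuron reshuffling that would break the in-place referencing of the original neurons.
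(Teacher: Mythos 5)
Your proposal is correct and follows essentially the same route as the paper: the paper's proof simply cites the construction from the earlier decomposition proposition (append the $m$ output neurons one at a time, permute, drop the originals, then include into $\D$) and observes that the appending functions are of the restricted form $x_j$, $0$, or $1$, which is exactly the decomposition you spell out. Your additional remark that each $s_i$ references an untouched original coordinate matches the observation already made in the earlier proof that "only the first $n$ places matter."
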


\begin{proof}
Suppose $\C$ is a code on $n$ neurons and $\D$ is a code on $m$ neurons.  Let $q:\C\rightarrow \D$ be a neuron-preserving code map, with $q=q_S$ for  $S=(s_1,...,s_m)$ , with $s_i \in [n]\cup \{0,u\}$.  

We use the same process as we used in the proof of Proposition 5.  Since permutation, dropping neurons, and adding codewords are all neuron-preserving code maps, It is enough to show that the functions $f$ which we use to define the new neurons we add do indeed correspond to the three operations above: adding a 1, a 0, or a repeating neuron.  That is, we must show $f_i \in \{x_1,...,x_n, 0,1\}$ for $i=1,...,m$.  This is easy; simply define $$f_i(x) = \left\{\begin{array}{ll} x_j & \text{ if } s_i = j\\ 0 & \text{ if } s_i = 0\\ 1 &  \text{ if } s_i =1\end{array}\right.$$
Then, we have  that $$f_i(c) = \left\{\begin{array}{ll} c_j & \text{ if } s_i = j\\ 0 & \text{ if } s_i = 0\\ 1 &  \text{ if } s_i =u\end{array}\right.$$ just as we wish.

\end{proof}

\subsection{Neural ring homomorphisms}

With the idea of neuron-preserving in mind, we define a better notion for neural ring homomorphism.

\begin{definition} Suppose $\C$ is a code on $n$ neurons and $\D$ a code on $m$ neurons.  A ring homomorphism $\phi:R_\D\rightarrow R_\C$ is called a \textit{neural ring homomorphism} if there exists a neuron-preserving compatible ring homomorphism $\tau:R[m]\rightarrow R[n]$.

A ring isomorphism $\phi:R_\D\rightarrow R_\C$ is called a {\em neural ring isomorphism} if there exists a neuron-preserving compatible ring isomorphism $\tau:R[m]\rightarrow R[n]$.
\end{definition}

As we have now proven that ring homomorphisms are in correspondence with code maps, we immediately have the following natural question:

\begin{question} Which code maps correspond to neural ring homomorphisms? to neural ring isomorphisms?
\end{question}

\begin{lemma} \label{lemma:nrharenp} $\phi$ is a neural ring homomorphism if and only if $q_\phi$ is a neuron-preserving code map.

\end{lemma}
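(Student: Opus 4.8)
The plan is to prove both directions by directly manipulating the vector $S$ that encodes a neuron-preserving map, using the two key bridging facts established earlier: Lemma~\ref{lemma:ambmaps} (for the ambient codes $\{0,1\}^n, \{0,1\}^m$, the neuron-preserving code map $q_S$ satisfies $\phi_{q_S} = \tau_S$) and Lemma~\ref{lemma:compmaps} ($\tau$ is compatible with $\phi$ iff $q_\phi = q_\tau|_\C$). The strategy throughout is to pass between a map on the restricted codes and its extension to the full Boolean cubes, since the notion of ``neuron-preserving'' is really a statement about a defining $S$-vector and hence lives naturally at the level of $R[n], R[m]$.

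($\Leftarrow$) Suppose $q_\phi : \C \to \D$ is neuron-preserving, so $q_\phi = q_S|_\C$ for some $S = (s_1,\ldots,s_m)$ with $s_i \in [n] \cup \{0,u\}$, where $q_S : \{0,1\}^n \to \{0,1\}^m$ is the associated map on the full cubes. First I would take $\tau_S : R[m] \to R[n]$, the neuron-preserving ring homomorphism defined by $S$. By Lemma~\ref{lemma:ambmaps}, $q_S = q_{\tau_S}$, and restricting to $\C$ gives $q_{\tau_S}|_\C = q_S|_\C = q_\phi$. By Lemma~\ref{lemma:compmaps}, this equality $q_\phi = q_{\tau_S}|_\C$ is exactly the statement that $\tau_S$ is compatible with $\phi$. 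Since $\tau_S$ is neuron-preserving by construction, we have exhibited a neuron-preserving compatible $\tau$, so $\phi$ is a neural ring homomorphism by definition.

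($\Rightarrow$) Suppose $\phi$ is a neural ring homomorphism, so there is a neuron-preserving compatible $\tau : R[m] \to R[n]$. Being neuron-preserving, $\tau = \tau_S$ for some $S$ with $s_i \in [n] \cup \{0,u\}$. Compatibility means, via Lemma~\ref{lemma:compmaps}, that $q_\phi = q_{\tau_S}|_\C$; and by Lemma~\ref{lemma:ambmaps}, $q_{\tau_S} = q_S$, the neuron-preserving map on the full cubes determined by $S$. Hence $q_\phi = q_S|_\C$, which is precisely the definition of $q_\phi$ being a neuron-preserving code map (with the same vector $S$). This closes the proof.

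The argument is essentially a bookkeeping chase through the two earlier lemmas, so I do not expect a genuine obstacle; the one point requiring a little care is making sure the $S$-vector used to witness ``neuron-preserving'' for the code map is literally the same one witnessing ``neuron-preserving'' for $\tau$ — but Lemma~\ref{lemma:ambmaps} is stated in exactly the form $\phi_{q_S} = \tau_S$, so the vectors match up on the nose and no reconciliation is needed. If one wanted to be scrupulous, one should also note that restriction of a code map on $\{0,1\}^n$ to $\C$ is well-defined and that $q_S(\C) \subseteq \{0,1\}^m$ automatically, but these are immediate.
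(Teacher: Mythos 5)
Your proof is correct and follows essentially the same route as the paper: both directions are bookkeeping chases through Lemma~\ref{lemma:ambmaps} ($q_{\tau_S}=q_S$) and Lemma~\ref{lemma:compmaps} (compatibility $\Leftrightarrow$ $q_\phi=q_\tau|_\C$), with the same $S$-vector witnessing both notions of neuron-preserving. If anything, you spell out the converse direction more explicitly than the paper does, but the underlying argument is identical.
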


\begin{proof} Suppose $\C$ is a code on $n$ neurons and $\D$ a code on $m$ neurons.  

$\phi:R_\D\rightarrow R_\C$ is a neural ring homomorphism if and only if there exists some neuron-preserving  $\tau:R[m]\rightarrow R[n]$ compatible with $\phi$.  Let $\tau=\tau_S$, and let $q_S:\{0,1\}^n\rightarrow \{0,1\}^m$ be the neuron-preserving map defined at the beginning of the previous section.   $q_{\tau_S} = q_S$ by Lemma \ref{lemma:ambmaps} and $q_\phi = q_{\tau_S}\big|_\C$ by Lemma \ref{lemma:compmaps} so $q_\phi  = q_S\big|_\C =q$ is a neuron-preserving code map.
\end{proof}

\begin{theorem}\label{thm:nrharecomps} $\phi$ is a neural ring homomorphism if and only if $q_\phi$ is a composition of the following elementary code maps:
\begin{enumerate}
\item Dropping a neuron
\item Permutation of labels
\item Repeating a neuron
\item Adding a trivial (1 or 0) neuron
\item Adding codewords
\end{enumerate}
\end{theorem}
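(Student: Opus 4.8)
The plan is to reduce the theorem to three facts already in hand: Lemma~\ref{lemma:nrharenp}, which identifies the neural ring homomorphisms $\phi:R_\D\rightarrow R_\C$ with the neuron-preserving code maps $q_\phi:\C\rightarrow\D$; Proposition~\ref{prop:comps}, which factors every neuron-preserving code map into the six elementary neuron-preserving pieces (permutation of labels, dropping the last neuron, appending a $1$, appending a $0$, appending a repeated neuron, and adding a codeword); and the lemma that a composition of neuron-preserving code maps is again neuron-preserving. The five maps in the statement are a regrouping of those six: ``adding a trivial ($1$ or $0$) neuron'' merges appending a $1$ and appending a $0$; ``dropping a neuron'' in an arbitrary position is a permutation of labels followed by dropping the last neuron; and ``repeating a neuron'' inserted in an arbitrary position is appending a repeated neuron followed by a permutation. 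So the code maps expressible as compositions of the five listed elementary maps are exactly those expressible as compositions of the six elementary neuron-preserving maps of Proposition~\ref{prop:comps}.

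For the forward implication I would take a neural ring homomorphism $\phi$, apply Lemma~\ref{lemma:nrharenp} to get that $q_\phi$ is a neuron-preserving code map, apply Proposition~\ref{prop:comps} to write $q_\phi=q_k\circ\cdots\circ q_1$ with each $q_j$ one of the six elementary neuron-preserving maps, and then note that each such $q_j$ is one of the five maps in the statement (dropping the last neuron is a drop; appending a $1$ or $0$ is adding a trivial neuron; appending a repeated neuron is repeating a neuron; permutations and codeword additions appear verbatim). This direction is pure bookkeeping.

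For the reverse implication I would assume $q_\phi$ is a composition of the five listed elementary code maps and first verify that each of the five is itself neuron-preserving by exhibiting an explicit $S$-vector in the sense of the definition: $S=(\sigma(1),\ldots,\sigma(n))$ for a permutation; $S=(1,\ldots,k-1,k+1,\ldots,n)$ for dropping neuron $k$; $S=(1,\ldots,n,i)$ for appending a copy of neuron $i$; $S=(1,\ldots,n,u)$ or $S=(1,\ldots,n,0)$ for a trivial neuron; and $S=(1,\ldots,n)$ for the inclusion that adds codewords (composing with a permutation if a new neuron is to sit in a non-terminal position). The composition lemma then gives that $q_\phi$ is neuron-preserving, and the other half of Lemma~\ref{lemma:nrharenp} gives that $\phi$ is a neural ring homomorphism. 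One may also recall here that $\phi=\phi_{q_\phi}$ by Theorem~\ref{thm:codemaphombijection}, so nothing is lost in passing between $\phi$ and $q_\phi$.

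The only point that needs care — the main obstacle, such as it is — is the regrouping in the first paragraph: one must check that dropping or repeating a neuron in an arbitrary position, and composing any neuron-preserving map with the codeword-adding inclusion, all remain inside the neuron-preserving class. The first two are handled by conjugating with permutations; the last holds because adding codewords leaves the $S$-vector unchanged and merely enlarges the domain and codomain. Granting these observations, the theorem follows immediately from Lemma~\ref{lemma:nrharenp}, Proposition~\ref{prop:comps}, and the composition lemma.
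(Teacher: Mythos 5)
Your proposal is correct and follows essentially the same route as the paper: the paper's proof of Theorem~\ref{thm:nrharecomps} is exactly the combination of Lemma~\ref{lemma:nrharenp} with Proposition~\ref{prop:comps}. Your write-up is in fact somewhat more complete, since the paper states only the forward implication explicitly and leaves the converse (each listed map is neuron-preserving, and compositions of neuron-preserving maps remain neuron-preserving) implicit, which you spell out via the explicit $S$-vectors and the composition lemma.
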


\begin{proof}

By Lemma \ref{lemma:nrharenp}, $\phi$ is a neural ring homomorphism if and only if $q_\phi$ is neuron-preserving, and by Proposition \ref{prop:comps}, $q_\phi$ is neuron preserving implies it is a composition of the elementary code maps.
\end{proof}

\begin{lemma} Neural ring isomorphisms correspond exactly to those code maps which permute the labels on neurons.
\end{lemma}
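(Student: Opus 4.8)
The plan is to prove, under the correspondence $\phi\leftrightarrow q_\phi$ of Theorem~\ref{thm:codemaphombijection}, that a ring homomorphism $\phi\colon R_\D\to R_\C$ is a neural ring isomorphism if and only if $q_\phi\colon\C\to\D$ is a permutation-of-labels code map. Throughout I would use the $S$-vector description of neuron-preserving maps together with Lemmas~\ref{lemma:ambmaps} and~\ref{lemma:compmaps}.

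For one direction, suppose $q\colon\C\to\D$ permutes labels via $\sigma\in\mathcal{S}_n$, so $q(c)=(c_{\sigma(1)},\dots,c_{\sigma(n)})$ and $\D=q(\C)$. Then $q$ is a bijection, and from the defining formula $\phi_q(f)=f\circ q$ one checks directly that $\phi_{q^{-1}}$ is a two-sided inverse of $\phi_q$, so $\phi_q$ is a ring isomorphism. To see it is a \emph{neural} ring isomorphism I would exhibit the required compatible $\tau$: set $S=(\sigma(1),\dots,\sigma(n))$. Since any assignment of images to the variables $y_i$ yields a ring homomorphism out of $R[m]$, $\tau_S\colon R[n]\to R[n]$ is a neuron-preserving ring homomorphism, and $\tau_{S'}$ with $S'=(\sigma^{-1}(1),\dots,\sigma^{-1}(n))$ is a two-sided inverse, so $\tau_S$ is a neuron-preserving ring isomorphism. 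By Lemma~\ref{lemma:ambmaps}, $q_{\tau_S}=q_S$, and $q_S|_\C=q=q_{\phi_q}$, so by Lemma~\ref{lemma:compmaps} $\tau_S$ is compatible with $\phi_q$; hence $\phi_q$ is a neural ring isomorphism.

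Conversely, let $\phi\colon R_\D\to R_\C$ be a neural ring isomorphism with neuron-preserving compatible ring isomorphism $\tau\colon R[m]\to R[n]$. Since $R[m]=R_{\{0,1\}^m}$ and $R[n]=R_{\{0,1\}^n}$ are isomorphic as rings, Lemma~\ref{lemma:iso} gives $2^m=2^n$, so $m=n$. Write $\tau=\tau_S$ with $S=(s_1,\dots,s_n)$, $s_i\in[n]\cup\{0,u\}$. The image of $\tau_S$ is the $\F_2$-subalgebra of $R[n]$ generated by $\{x_{s_i}\mid s_i\in[n]\}$ (entries equal to $0$ or $u$ contribute nothing), and such a subalgebra is proper whenever it omits some $x_k$, since each of its elements, viewed as a function on $\{0,1\}^n$, is independent of the $k$-th coordinate. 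Surjectivity of $\tau_S$ therefore forces $\{s_i\mid s_i\in[n]\}=[n]$; and since the image of the assignment $i\mapsto s_i$ then contains the $n$-element set $[n]$ while its domain has only $n$ elements, this assignment must be a bijection onto $[n]$, so no $s_i$ equals $0$ or $u$ and $i\mapsto s_i$ is a permutation $\sigma\in\mathcal{S}_n$. By Lemma~\ref{lemma:ambmaps}, $q_{\tau_S}=q_S$ is the map $c\mapsto(c_{\sigma(1)},\dots,c_{\sigma(n)})$, and by Lemma~\ref{lemma:compmaps}, $q_\phi=q_S|_\C$. Finally, $\phi=\phi_{q_\phi}$ is injective, which forces $q_\phi$ to be surjective onto $\D$: if some $d\in\D$ were missing from $q_\phi(\C)$, then $\phi(\rho_d)$ would vanish on all of $\C$, that is, $\phi(\rho_d)=0=\phi(0)$, a contradiction. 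Hence $\D=q_\phi(\C)$, and $q_\phi\colon\C\to\D$ is precisely a permutation-of-labels code map.

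The step I expect to be the crux is the middle of the converse: proving that a neuron-preserving compatible \emph{isomorphism} must send each $y_i$ to a distinct $x_j$, with no $0$'s or $1$'s among the images. This relies on the fact that in $R[n]=\F_2[x_1,\dots,x_n]/\B$ the variables behave like independent generators, so a subalgebra missing one of them is proper, combined with the equality $m=n$ that fixes the length of the $S$-vector. Everything else is routine bookkeeping through the correspondences established earlier in this chapter.
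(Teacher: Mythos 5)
Your proof is correct, and it is in fact more complete than the paper's own argument, which only treats the direction ``neural ring isomorphism $\Rightarrow$ permutation map'' explicitly and leaves the converse implicit; you supply the converse by exhibiting $\tau_S$ with $S=(\sigma(1),\dots,\sigma(n))$ and $\tau_{S'}$, $S'=\sigma^{-1}$, as a two-sided inverse and checking compatibility through Lemmas~\ref{lemma:ambmaps} and~\ref{lemma:compmaps}, exactly as one should. Where you genuinely diverge is in the crux of the forward direction: the paper forces $S$ to be a permutation by an injectivity (kernel) argument --- $\tau(x_i)=0$ would put $x_i\in\ker\tau$, $\tau(x_i)=1$ would put $1-x_i\in\ker\tau$, and $\tau(x_i)=\tau(x_j)$ would put $x_i-x_j\in\ker\tau$ --- whereas you argue by surjectivity, observing that the image of $\tau_S$ is the subalgebra generated by $\{x_{s_i}\mid s_i\in[n]\}$, that this subalgebra is proper whenever some $x_k$ is omitted (its elements are functions independent of the $k$-th coordinate), and then counting slots to conclude $i\mapsto s_i$ is a permutation with no $0$'s or $u$'s. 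Both routes are sound and of comparable length; the kernel argument is slightly more economical, while yours makes the role of each hypothesis (surjectivity vs.\ injectivity of $\tau$) transparent. You are also more careful at the last step: the paper simply asserts that $q_\phi$ is a bijection because $\phi$ is an isomorphism, while you derive $\D=q_\phi(\C)$ explicitly from injectivity of $\phi$ via $\phi(\rho_d)=0$ for any $d$ missed by the image --- a worthwhile clarification, since the definition of a permutation-of-labels code map requires $\D=q(\C)$ and not merely the coordinate formula.
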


\begin{proof}

Suppose $\tau = \tau_S:R[n]\rightarrow R[m]$ is an isomorphism. Then since $R[n]$ and $R[m]$ are finite, they must have the same size. So we must have $2^{2^n} = |R[n]| = |R[m]| = 2^{2^m}$, and thus $n=m$. Thus we can write $\tau:R[n]\rightarrow R[n]$. 

 As $\tau$ is an isomorphism, we know $\ker \tau = \{0\}$. So we cannot have $\tau(x_i) = 0$ since then $x_i \in \ker \tau_S$, and similarly we cannot have $\tau(x_i) = 1$, since then $1-x_i \in \ker\tau$. As $\tau$ is neuron-preserving, this means that  $\tau(x_i) \in \{x_1,...,x_n\}$ for all $i=1,...,n$; as $\tau=\tau_S$ for some $S$, this means that $s_i\in [n]$ for all $i$.
 
   If we had $\tau(x_i) = \tau(x_j) = x_k$ for $i\neq j$, then $x_i-x_j \in \ker\tau$, which is a contradiction as $x_i-x_j\neq 0$.  So $\tau$ induces a bijection on the set of variables $\{x_1,...,x_n\}$; i.e., $S$ contains each index in $[n]$ exactly once.

Now,consider the corresponding code map $q_\tau$. Let $c\in \{0,1\}^n$, and $q_\tau(c) = d$. We must have $\tau(f)(c)  = f(d)$.  In particular, we must have $x_j(c) = x_i(d)$, or rather, $c_j = d_i$.  So $q_\tau$ takes each codeword $c$ to its permutation where $j\rightarrow i$ iff $\tau(x_i) =x_j$. 

Now, we know that if $\tau$ is compatible with $\phi$, then $\phi$ is merely a restriction of the map $\tau$, and so $q_\phi(c) = q_\tau(c)$ for all $c\in \C$.  And as $\phi$ is an isomorphism, $q_\phi$ is a bijection, so every codeword in $\D$ is the image of some $c\in \C$; thus, $q_\phi$ is a permutation map on $\C$, and no codewords are added.
\end{proof}

\section{The effect of elementary code maps on the canonical form}

Here we look at the effect of the elementary moves on the canonical form of the ring.  Throughout, $\C$ is a code on $n$ neurons with canonical form $CF(J_\C)$, and $z_i \in \{x_i, 1-x_i\}$ represents a linear term.

\begin{enumerate}
\item Permutation: as this map simply permutes the labels on the variables, the canonical form stays nearly the same, but the labels are permuted using the reverse permutation $\sigma^{-1}$. That is, let $\D$ be the code obtained by applying the permutation $\sigma\in \mathcal S_n$ to $\C$.  Then  $f=z_{i_1}\cdots z_{i_k} \in CF(J_\C)$ if and only if $f_\sigma = z_{\sigma^{-1}(i_1)}\cdots z_{\sigma^{-1}(i_k)} \in CF(J_\D)$.

\item Dropping a neuron: Let $\C$ be a code on $n$ neurons, and $\D$ the code on $n-1$ neurons obtained by dropping the $n$th neuron. Then $CF(J_\D) = CF(J_\C)\setminus\{ f \mid f=g z_n, g $ a pseudo-monomial $\}$.  That is, we simply remove all pseudo-monomials which involved the variable $x_n$.

\item Adding a new neuron which is always 1 (always 0): Let $\D$ be the code on $n+1$ neurons obtained by adding a 1 (respectively 0) to the end of each codeword in $\C$.  Then $CF(J_\D) = CF(J_\C)\cup \{1-x_{n+1}\}$ (respectively $CF(J_\C) \cup \{x_n\}$).

\item Adding a new neuron which repeats another neuron: Let $\D$ be the code on $n+1$ neurons obtained from $\C$ by adding a new neuron which repeats the action of neuron $i$ for all codewords. 

Let $F = \{f\in CF(J_\C)\,|\, f=z_i\cdot g$ for $ g\text{ a pseudo monomial} \}$; let $H$ be the set formed by replacing $x_i$ with $x_{n+1}$ for all $f\in F$.   Then in most cases, $CF(J_\D) = CF(J_\C) \cup\{x_i(1-x_{n+1}), x_{n+1}(1-x_i)\} \cup H$. The only exception is if $z_i\in CF(J_\C)$; then $CF(J_\D)$ is simply $CF(J_\C)\cup H$.

\item Adding codewords: This is by far the most complicated.  The algorithmic process for obtaining $CF(\C\cup\{v\})$ from $CF(\C)$ is described in  Algorithm 2. 

\end{enumerate}

%% backmatter is needed at the end of the main body of your thesis to
%% set up page numbering correctly for the remainder of the thesis
\backmatter

%% Start the correct formatting for the appendices
\appendix

%% Appendices go here (if you have them)

\chapter{Neural codes on three neurons}\label{sec:appendix2}

\vspace{-.2in}

\begin{table*}[!h]
%\begin{footnotesize}
\begin{small}
\begin{tabular}{l | l | l}
Label & Code  $\C$ & Canonical Form $CF(J_\C)$ \\

 \hline
A1 & 000,100,010,001,110,101,011,111 & $\emptyset$\\
A2 & 000,100,010,110,101,111 & $x_3(1-x_1)$\\
A3 & 000,100,010,001,110,101,111 & $x_2x_3(1-x_1)$\\
A4 & 000,100,010,110,101,011,111 & $x_3(1-x_1)(1-x_2)$\\
A5 & 000,100,010,110,111 & $x_3(1-x_1), \,x_3(1-x_2)$\\
A6 & 000,100,110,101,111 & $x_2(1-x_1), \,x_3(1-x_1)$\\
A7 & 000,100,010,101,111 & $x_3(1-x_1),\, x_1x_2(1-x_3)$\\
A8 & 000,100,010,001,110,111 & $x_1x_3(1-x_2), \, x_2x_3(1-x_1)$\\
A9 & 000,100,001,110,011,111 & $x_3(1-x_2), \, x_2(1-x_1)(1-x_3)$\\
A10 & 000,100,010,101,011,111& $x_3(1-x_1)(1-x_2), \,x_1x_2(1-x_3)$\\
A11 & 000,100,110,101,011,111 & $x_2(1-x_1)(1-x_3), \,x_3(1-x_1)(1-x_2)$\\
A12 & 000,100,110,111 & $x_3(1-x_1),\, x_3(1-x_2), \,x_2(1-x_1)$\\
A13 & 000,100,010,111 & $x_3(1-x_1), \,x_3(1-x_2), \,x_1x_2(1-x_3)$\\
A14 & 000,100,010,001,111 & $x_1x_2(1-x_3), \, x_2x_3(1-x_1), \, x_1x_3(1-x_2)$\\
A15 & 000,110,101,011,111  & $x_1(1-x_2)(1-x_3), \, x_2(1-x_1)(1-x_3),$\\
& & \hspace{.2in} $  x_3(1-x_1)(1-x_2)$ \\
A16* & 000,100,011,111 & $x_2(1-x_3),\, x_3(1-x_2)$\\
A17* & 000,110,101,111 & $x_2(1-x_1), \,x_3(1-x_1),\, x_1(1-x_2)(1-x_3)$\\
A18* & 000,100,111 & $x_2(1-x_1), \, x_2(1-x_3), \, x_3(1-x_1), \, x_3(1-x_2)$\\
A19* & 000,110,111 & $x_3(1-x_1), \, x_3(1-x_2), \, x_1(1-x_2), \, x_2(1-x_1)$\\
A20* & 000,111 & $x_1(1-x_2), \, x_2(1-x_3), \,x_3(1-x_1), \, x_1(1-x_3),$\\
&  &\hspace{.2in} $x_2(1-x_1), \,x_3(1-x_2)$\\

\hline
\end{tabular}
\caption{\small Continues next page}
%\end{footnotesize}
\end{small}

\end{table*}

\begin{table*}[!h]
%\begin{footnotesize}
\begin{small}
\begin{tabular}{l | l | l}

Label & Code  $\C$ & Canonical Form $CF(J_\C)$ \\
\hline
B1 & 000,100,010,001,110,101 & $x_2 x_3$ \\
B2 & 000,100,010,110,101& $x_2x_3, \,x_3(1-x_1)$\\
B3 & 000,100,010,101,011 & $x_1x_2, \, x_3(1-x_1)(1-x_2)$\\
B4 & 000,100,110,101 & $x_2x_3, \, x_2(1-x_1), \, x_3(1-x_1)$\\
B5 & 000,100,110,011 & $x_1x_3, \, x_3(1-x_2), \, x_2(1-x_1)(1-x_3)$ \\
B6* & 000,110,101 & $x_2x_3, \, x_2(1-x_1), \, x_3(1-x_1), \,x_1(1-x_2)(1-x_3)$\\

\hline
C1 & 000,100, 010,001, 110 & $x_1x_3, \, x_2x_3$\\
C2 & 000,100,010,101 & $x_1x_2, \, x_2x_3, \, x_3(1-x_1)$\\
C3* & 000,100,011 & $x_1x_2, \, x_1x_3, \, x_2(1-x_3), \, x_3(1-x_2)$\\
\hline
D1 & 000,100,010,001 & $x_1x_2, \, x_2x_3, \, x_1x_3$\\
\hline
E1& 000,100,010,001,110,101,011 & $x_1x_2x_3 $\\
E2 & 000,100,010,110,101,011 & $x_1x_2x_3, \, x_3(1-x_1)(1-x_2)$ \\
E3 & 000,100,110,101,011 & $x_1x_2x_3, \, x_2(1-x_1)(1-x_3), \, x_3(1-x_1)(1-x_2)$\\
E4 & 000,110,011,101 & $x_1x_2x_3, \, x_1(1-x_2)(1-x_3), \, x_2(1-x_1)(1-x_3),$\\
& & \hspace{.2in} $ x_3(1-x_1)(1-x_2)$\\
\hline
F1* & 000,100,010,110 & $x_3$ \\
F2* & 000,100,110 & $x_3, \, x_2(1-x_1)$\\
F3* & 000,110 & $x_3, \,x_1(1-x_2), \, x_2(1-x_1)$\\
\hline 
G1* & 000,100 & $x_2, \, x_3$\\
\hline
H1* & 000 & $x_1, \, x_2, \,  x_3$\\
\hline
I1* & 000,100,010 & $x_3, \, x_1x_2$\\
\end{tabular}
%\end{footnotesize}
\end{small}
\caption*{Table A.1: \small
Forty permutation-inequivalent codes, each containing $000$, on three neurons.  Labels A--I indicate the various families of Type 1 relations present in $CF(J_\C)$, organized as follows (up to permuation of indices):
(A) None,
(B) $\{x_1x_2\}$,
(C) $\{x_1x_2, x_2x_3\}$,
(D) $\{x_1x_2, x_2x_3, x_1x_3\}$,
(E) $\{x_1x_2x_3\}$,
(F) $\{x_1\}$,
(G) $\{x_1, x_2\}$,
(H) $\{x_1, x_2, x_3\}$,
(I) $\{x_1, x_2x_3\}$.
All codes within the same A--I series share the same simplicial complex, $\Delta(\C)$.
The $*$s denote codes that have $U_i = \emptyset$ for at least one receptive field (as in the F, G, H and I series) as well as codes that require $U_1 = U_2$ or $U_1 = U_2 \cup U_3$ (up to permutation of indices); these are considered to be highly degenerate.  The remaining 27 codes are depicted with receptive field diagrams (Figure 6) and Boolean lattice diagrams (Figure 7).
}
\end{table*}

\newpage

\begin{figure}[!h]
\begin{center}
\includegraphics[width=6in]{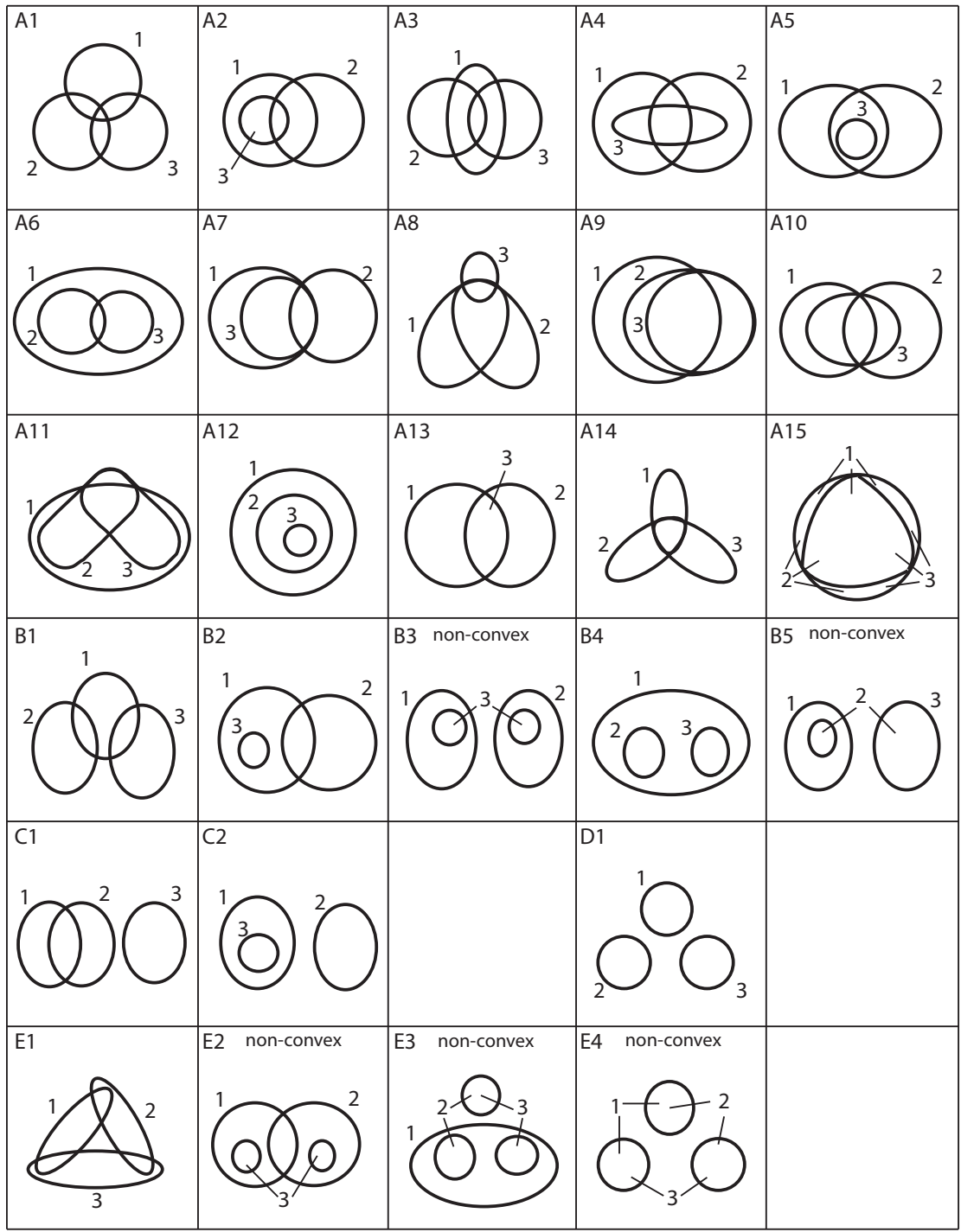}
\end{center}
\vspace{-.2in}
\caption{\small Receptive field diagrams for the 27 non-$*$ codes on three neurons listed in Table 1.  
Codes that admit no realization as a convex RF code are labeled ``non-convex.''  The code E2 is the one from Lemma~\ref{lemma:convex-counterexample}, while
A1 and A12 are permutation-equivalent to the codes in Figure 3A and 3C, respectively.  Deleting the all-zeros codeword from A6 and A4 yields codes permutation-equivalent 
to those in Figure 3B and 3D, respectively.}
\end{figure}

\newpage

\begin{figure}[!h]
\begin{center}
\includegraphics[width=6in]{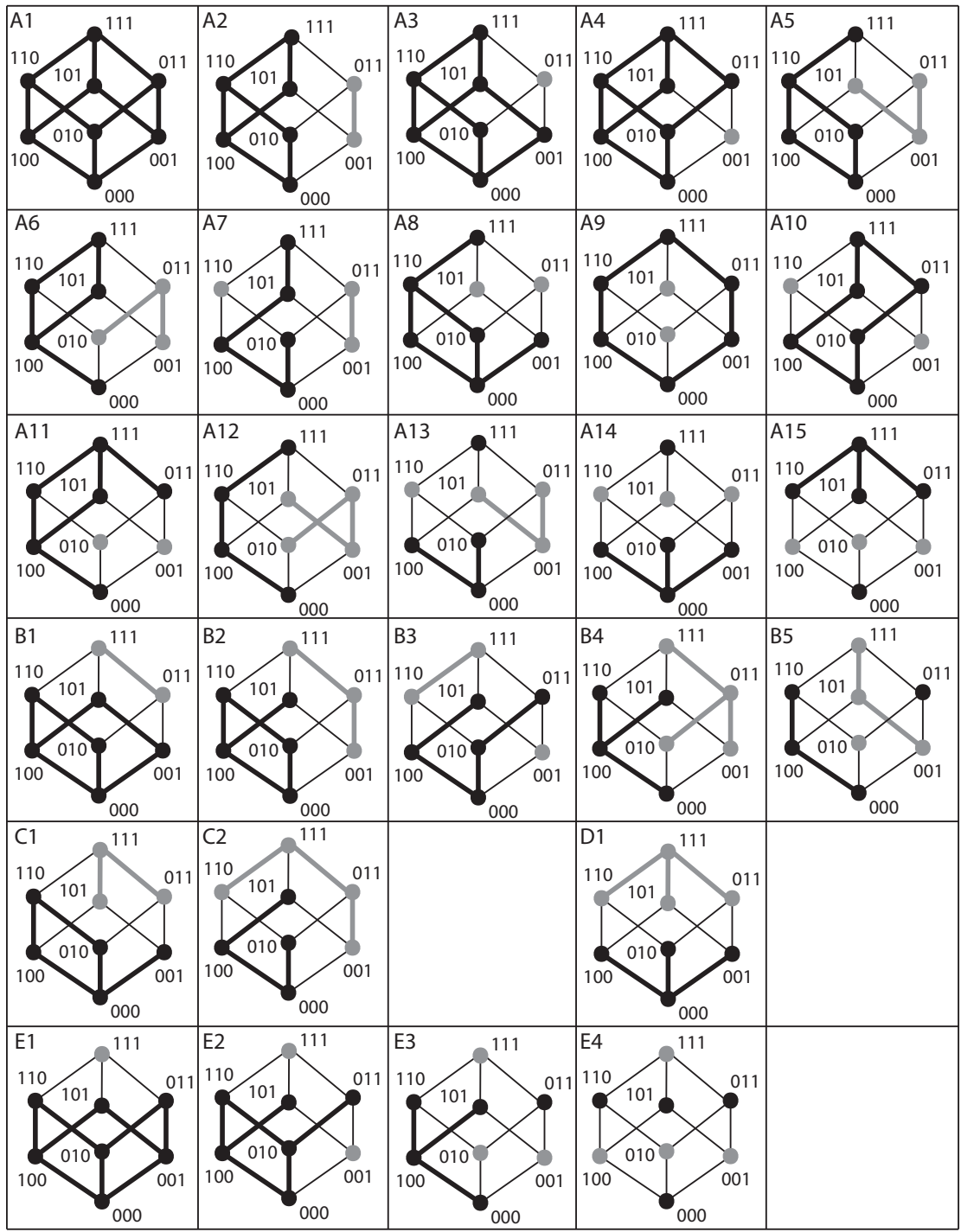}
\end{center}
\vspace{-.2in}
\caption{\small Boolean lattice diagrams for the 27 non-$*$ codes on three neurons listed in Table 1.  Interval decompositions (see Section~\ref{sec:boolean-lattice}) 
for each code are depicted in black,
while decompositions of code complements, arising from $CF(J_\C)$, are shown in gray.  Thin black lines connect elements of the Boolean lattice
that are Hamming distance 1 apart.  Note that the lattice in A12 is permutation-equivalent to the one depicted in Figure 5.}
\end{figure}

\chapter{ MATLAB Code}
The following code is used to obtain the canonical form of the ideal $J_\C$, given a code $\C$.  
We follow the inductive algorithm given in Chapter 5.  Although an algebraic language such as Macaulay2 would be more natural for an algorithm involving rings and ideals, we use Matlab.  Primarily, this is because the data the algorithm is built to analyze is most likely to be in Matlab data, and this removes the necessity of a translation step between the two programs.  Furthermore, because of the simple (indeed, binary) nature of most of our data, and the properties of pseudo-monomials, it is not difficult to translate the necessary information and operations into matrix format in this instance.

A note on this translation: pseudo-monomials in $\F_2[x_1,...,x_n]$ are stored as the rows of a matrix of 1s, 0s, and 5s via the following rule: the row $(v_1,...,v_n)$ corresponds to the pseudo monomial $\prod_{v_i=0} x_i\prod_{v_i = 1} (1-x_i)$.  The selection of 5 as the `empty' slot is not random; we wanted information about the similarity between two pseudo-monomials to be extracted by subtracting the rows. Choosing a 2, though it seems more natural, leads to confusion as $2-1 = 1-0$; therefore we select 5 instead since it is large enough to not have this problem.

The function FinalIntersectIdeals finds the canonical form of $CF(J_{\C_i})$ from $CF(J_{\C_{i-1}})$ and the next codeword $c^i$; the overarching function Code2CF takes the code, and calls FinalIntersectIdeals with each new $CF(J_{\C_i})$ until all codewords are used; it then outputs $CF(J_{\C_n}) = CF(J_\C)$.\\

{\small
\begin{lstlisting}
function [CF,t] = Code2CF(C)
%---------------------------------------------------------
%Input:	 C, a binary matrix representing the code;
%		 each row is a codeword.  
%Output: a matrix CF of 0s, 1s, and 5s, representing
%    		   the canonical form CF(J_C).  
%		 a number t: the number of operations it performed, 
%		   used for time complexity analysis
%-----------------------------------------------------------
L=diag(C(1,:))+ones(size(C,2),size(C,2))*5-eye(size(C,2))*5;
[I,t]=FinalIntersectIdeals(L,C(2,:));

for i=3:size(C,1)
    r=C(i,:);
    [I,t0]=FinalIntersectIdeals(I,r);
    t=t+t0;
end
CF=I;
\end{lstlisting}

\begin{lstlisting}

function [Ideal,t] = FinalIntersectIdeals(L,r)
%-------------------------------------------------------------------------
%Input:  L, a matrix with rows representing
%		 the pseudo-monomials of CF(J_{C_{i-1}})
%		 r, a vector representing the next codeword c^i
%Output: Ideal, a matrix Ideal of 1s, 0s, and 5s, which
% 		 gives a set of pseudo-monomials 
%		for CF(C\cup r).
%	 	 t, the number of operations that occurred.
%-------------------------------------------------------------------------


n = size(L,2);
m = size(L,1);
I = ones(m*n,n)*5;
k=1;
w=1;
L0=ones(0,n);

%if any element in any row of L matches a monomial in r, 
%put that row in I. otherwise, multiply it by
% each monomial in r in the next step
for i=1:m
    if (size(find((L(i,:)+r)==2),2)>0 || size(find((L(i,:)+r)==0),2)>0)
        I(k,:)=L(i,:);
        k=k+1;  
    else
        L0(w,:)=L(i,:);
        w=w+1;
    end    
end

k1=k-1; 
w=w-1; 
t=0;

%multiply each p-m in L0 by linear terms in p_r
%check for mutliples
for i=1:w
    for j=1:n
        if L0(i,j)==5   %otherwise, we get xi(1-xi)
            I(k,:)=L0(i,:); %put the i'th row of L0 in the new ideal
            I(k,j)=r(j);    %multiply it by the j'th monomial of r
            if k>1
                M=0; %is it a multiple of something?
                for l=1:k1;
                    t=t+1; 
                    diff = I(l,:)-I(k,:);
                    if size(find(abs(diff)==1),2)==0 
                        if size(find(diff<-1),2)==0 
                            M=1; %yes, it's a multiple
                            break; 
                            %definitely not going to put it in, stop
                        end
                    end
                end
                if M==0 %no multiples
                    k=k+1;
                end
            else % if k is 1 or 2, 
                k=k+1;
            end
        end
    end
end
Ideal = I(1:k-1,:);


\end{lstlisting} }
%% fliography goes here (You better have one)
%% BibTeX is your friend

%%%%%%%%%%  Begin Bibliography  %%%%%%%%%%%%%%%%%%%%%%%%%%
%\newpage
%\pagestyle{empty}
\bibliographystyle{unsrt}
\bibliography{thesis-references}

%%%%%%%%%%  End  Bibliography  %%%%%%%%%%%%%%%%%%%

%% Index go here (if you have one)
\end{document}